\numberwithin{equation}{subsection}
\newtheorem{theorem}{Theorem}[section]
\newtheorem{lemma}[theorem]{Lemma}
\theoremstyle{remark}
\newtheorem{definition}[theorem]{Definition}
\newcommand{\A}{\ensuremath{{\bf A}}} 
\newcommand{\Kt}{\ensuremath{ \bf{K}}} 
\newcommand{\At}{\ensuremath{\mathbb A}} 
\newcommand{\Mt}{\ensuremath{\mathbb M}} 
\newcommand{\An}{\ensuremath{\overline{\bf A}}} 
\newcommand{\Ubf}{\ensuremath{{\bf U}}} 
\newcommand{\Abar}{\ensuremath{\overline{\mathbb A}}} 
\newcommand{\Atilde}{\ensuremath{\widetilde{\mathbb A}}} 
\newcommand{\Sn}{\ensuremath{\overline{\bf S}}} 
\newcommand{\Kn}{\ensuremath{\overline{\bf K}}} 
\newcommand{\Id}{\ensuremath{{\bf I}_{d \times d}}} 
\newcommand{\Q}{\ensuremath{\textbf{Q}}} 
\newcommand{\MdR}{\ensuremath{{\mathcal{M}_{d}(\R)}}} 
\newcommand{\MdB}{\ensuremath{{\mathcal{M}_{d}(\{0,1\})}}} 
\newcommand{\vectorise}{\ensuremath{\mathrm{vec}}}
\newcommand{\stableconv}{\ensuremath{\xrightarrow{\ \mathcal{D}\text{-st}\ }}}
\newcommand{\probconv}{\ensuremath{\xrightarrow{\ P \ }}}
\newcommand{\N}{\ensuremath{\mathbb{N}}}
\newcommand{\AL}{\ensuremath{\textnormal{AL}}}
\newcommand{\supp}{\ensuremath{\textnormal{supp}}}
\newcommand{\diag}{\ensuremath{\mathrm{diag}}}
\newcommand{\trace}{\ensuremath{\mathrm{tr}}}
\newcommand{\RV}{\ensuremath{\mathrm{RV}}}
\newcommand{\indicator}{\ensuremath{\mathbb{I}}}
\newcommand{\proba}{\ensuremath{{\text{P}}}} 
\newcommand{\determinant}{\ensuremath{{\text{det}}}} 
\newcommand{\YY}{\ensuremath{\mathbb{Y}}} 
\newcommand{\DD}{\ensuremath{\mathbb{D}}} 
\newcommand{\HH}{\ensuremath{\mathbb{H}}} 
\newcommand{\XX}{\ensuremath{\mathbb{X}}} 
\newcommand{\ZZ}{\ensuremath{\mathbb{Z}}} 
\newcommand{\JJ}{\ensuremath{\mathbb{J}}} 
\newcommand{\Ss}{\ensuremath{\mathbb{S}_d}} 
\newcommand{\Si}{\ensuremath{ {\boldsymbol \Sigma}}}
\newcommand{\boldTheta}{\ensuremath{ {\boldsymbol{\theta}}}}
\newcommand{\boldPsi}{\ensuremath{ {\boldsymbol{\psi}}}}
\newcommand{\boldB}{\ensuremath{ {\boldsymbol{b}}}}
\newcommand{\LL}{\ensuremath{\mathbb{L}}}
\newcommand{\WW}{\ensuremath{\mathbb{W}}}
\DeclareFontFamily{U}{mathx}{\hyphenchar\font45}
\DeclareFontShape{U}{mathx}{m}{n}{
      <5> <6> <7> <8> <9> <10>
      <10.95> <12> <14.4> <17.28> <20.74> <24.88>
      mathx10
      }{}
\DeclareSymbolFont{mathx}{U}{mathx}{m}{n}
\newcommand{\Levy}{L\'{e}vy\ }
\newcommand{\levy}{L\'{e}vy}
\newcommand{\Ito}{It\^{o}\ }
\newcommand{\ito}{It\^{o}}
\newtheorem{assumption}{Assumption}
\newtheorem{corollary}[theorem]{Corollary}
\newtheorem{notation}[theorem]{Notation}
\newtheorem{remark}[theorem]{Remark}
\begin{document}

\begin{frontmatter}
\title{High-frequency Estimation of the \levy-driven Graph Ornstein-Uhlenbeck process}
\runtitle{High-frequency Graph Ornstein-Uhlenbeck process}


\author{\fnms{Valentin} \snm{Courgeau}\thanksref{t1}\corref{}\ead[label=e1]{valentin.courgeau15@imperial.ac.uk}}
 \thankstext{t1}{Correspond author} 
\address{Department of Mathematics,
Imperial College London,\\ London, United Kingdom.\\ \printead{e1}}
\and
\author{\fnms{Almut E.D.} \snm{Veraart}\ead[label=e2]{a.veraart@imperial.ac.uk}}
 \address{Department of Mathematics,
Imperial College London,\\ London, United Kingdom.\\ \printead{e2}}

\runauthor{CV}




\begin{abstract}
We consider the Graph Ornstein-Uhlenbeck (GrOU) process observed on a non-uniform discrete timegrid and introduce discretised maximum likelihood estimators with parameters specific to the whole graph or specific to each component of the graph. Under a high-frequency sampling scheme, we study the asymptotic behaviour of those estimators as the mesh size of the observation grid goes to zero. We prove two stable central limit theorems to the same distribution as in the continuously-observed case under both finite and infinite jump activity for the \Levy driving noise. In addition to providing the consistency of the estimators, the stable convergence allows us to consider probabilistic sparse inference procedures on the edges themselves when a graph structure is not explicitly available. It also preserves its asymptotic properties. In particular, we also show the asymptotic normality and consistency of an Adaptive Lasso scheme. We apply the new estimators to wind capacity factor measurements, i.e.\ the ratio between the wind power produced locally compared to its rated peak power, across fifty locations in Northern Spain and Portugal. We compare those estimators to the standard least squares estimator through a simulation study extending known univariate results across graph configurations, noise types and amplitudes. 
\end{abstract}

\begin{keyword}[class=MSC]
\kwd[Primary ]{62F12}
\kwd[; secondary ]{62F07}
\end{keyword}

\begin{keyword}
\kwd{Ornstein-Uhlenbeck process}
\kwd{\Levy process}
\kwd{high-frequency inference}
\kwd{graphical model}
\kwd{stable convergence}
\kwd{Adaptive Lasso}
\end{keyword}

\end{frontmatter}

\section{Introduction}

Ornstein-Uhlenbeck (OU) models, and in particular those driven by \Levy processes, form a class of continuous-time processes with a broad range of applications, e.g.\ in finance for pairs trading \citep{hol2018estimation, endres2019tradingLevyDrivenOU} and volatility modelling \citep{barndorff2002bnsmodel, pigorsch2009DefinitionSemiPositiveMultOU}, in electricity management \citep{Longoria2019OULevyElectricityPortfolios} or neuroscience \citep{Melanson2019DataDrivenStationaryJumpDiffusions}. On the other hand, the availability of high-dimensional time series datasets gave rise to sparse inference for OU-type processes \citep{Boninsegna2018SparseLearningDynamical, Gaiffas2019SparseOUProcess, matulewicz2017statistical} as a way to control interactions within complex systems.

In this article, we consider the multivariate \levy-driven Graph Ornstein-Uhlenbeck (GrOU) process on a given graph structure and the maximum likelihood estimators (MLE) introduced in \cite{courgeau2020likelihood}. This process was originally designed to quantify OU-type relationships between \emph{nodes} of a continuously-observed graph. However, in this article, we suppose that observations are placed on a \emph{discrete} time grid as an extension of \cite{zhu2017network} and we derive the asymptotic distributions of those discretised MLEs under a high-frequency sampling scheme \citep{mai2014efficient}. Using a jump-filtering approach, we also discretise the continuous-time results from \cite{courgeau2020likelihood} to prove the stable convergence (in distribution) of the estimators to the same Gaussian distribution as in the continuous-time case when the mesh of the observation grid shrinks to zero. The stable convergence allows to use a complementary inference procedure to obtain a sparse structure, if not directly available (as in \cite{matulewicz2017statistical}), in parallel to the GrOU inference.

OU processes observed on a discrete uniformly-spaced time grid are multivariate AR(1) processes and have been well studied in this case: a least squares (LS) estimator \citep{fasen2013, Hu2009LeastSquares} as well as other inference strategies such as moment-based estimators \citep{Wu2019MoMCompoundPoissonOU} or specific to nonnegative noise increments \citep{Brockwell2007} have been considered. Similarly to \cite{mai2014efficient}, we consider the case of a non-uniform time grid with a double asymptotic assumption on the sample size, the mesh size and the time horizon. This extends the results to event-based datasets \citep{Simonov2017EventBased}.
As suggested by empirical analysis \citep{aitsahalia2011testingFiniteInfinite}, we ought to consider noise types that allow, in short, for both finitely and infinitely many jumps in finite time intervals---called finite and infinite jump activities respectively---and we prove that asymptotic distribution result holds in both cases. As a result, the stable asymptotic convergence holds for a large collection of \Levy driving noises, may it be the sum of Wiener and compound Poisson processes for the former, or say, generalised hyperbolic \Levy motions \citep{Eberlein2001Ghyp} for the latter. By property of the stable convergence, this also opens the door to the stochastic and/or time-dependent modelling of the graph structure.

To broaden the applicability of the GrOU process on high-dimensional but non-graphical datasets, we extend the regularisation approach from \cite{Gaiffas2019SparseOUProcess, courgeau2020likelihood} to the high-frequency framework. We show the asymptotic normality and consistency in variable selection  of their Adaptive Lasso scheme, a sparse inference approach that infers a lean graph structure on which the GrOU model can be applied. 

We also compare quantitively the discretised MLEs with the LS estimator on a network-based electricity dataset, namely, the RE-Europe dataset \citep{jensen2017re}. A simulation study shows that the MLEs are stable across noise types, graph configurations as well as superior to the LS estimator across a variety of graph configurations and driving noises of reasonable amplitudes which corroborates and extends the results from \cite{mai2014efficient}.

The high-frequency stable convergence is proved with a driftless \Levy noise and under second (resp.\ fourth) moment integrability for finite (resp.\ infinite) jump activity. The infinite case also requires another stability assumption: the second moment of the \Levy measure restricted to the hypercube $[-v,v]^d$ should be $O(v^{2-\alpha})$ for some small $v > 0$ where $\alpha$ is the Blumenthal-Getoor index \citep{BlumenthalGetoor1961} of the driving \Levy noise.

We validate the inference power of the MLEs on a subset of the RE-Europe data set \cite{jensen2017re} where hourly electricity-related data has been collected in 2012--2014 and projected on close to 1,500 locations of the electricity network across Europe. More precisely, we focus on the hourly wind capacity factor measurements at 50 of those locations where 24 are located in Portugal and 26 are in Northern Spain along the Atlantic coast. We recall that the wind capacity factor is the proportion of energy produced compared to the peak power production at this location hence between 0 and 1. The data represents the local supply of wind energy that ought to be distributed across the networks: the GrOU model captures the geographical relationships between nodes while maintaining the graphical structure of the electricity grid on which the energy transfer take place. We will infer the model parameters and noise distributions on this dataset and use those parameters in the simulation study. This study focuses on testing the reliability of the estimation across noisy types, graph structures and mesh sizes by recovering the model parameters obtained from parametrically bootstrapped time series.

In Section \ref{section:network-ou}, we recall the definition of the GrOU process when observed on a discrete time grid and introduce the corresponding discretised MLEs. In Section \ref{section:discretised-high-frequency-data}, we describe the high-frequency sampling scheme which allows the asymptotic stable convergence of the discretised GrOU MLEs given therein---with separate treatments of the finite and infinite jump activity cases. Section \ref{section:adaptive-lasso} is devoted the definition of the Adaptive Lasso scheme and the study of its asymptotic properties. Finally, in Section \ref{section:applications}, we apply the discretised MLEs on the subset of the RE-Europe data set \citep{jensen2017re}. Additionally, we perform the recovery of \Levy increments \citep{brockwell2013carma-recovery} to infer the noise distribution via likelihood maximisation or the generalised method of moments. Those fitted distributions are used in the simulation study presented in Section \ref{section:simulation-study}. Finally, the technical lemmas and propositions are stated in Appendix \ref{section:proofs} whilst the proofs are relegated to Appendix \ref{appendix:proofs-finite} \& \ref{appendix:proofs-infinite} for a clearer layout of the argument. Finally, implementation details, R code and reproducible plots are available on GitHub (numerical routines\footnote{\url{https://github.com/valcourgeau/ntwk}}, end-to-end scripts \footnote{\url{https://github.com/valcourgeau/r-continuous-network}}).

\section{The Graph Ornstein-Uhlenbeck process and estimators}
\label{section:network-ou}
In this section, we recall the continuous-time framework of the Graph Ornstein-Uhlenbeck (GrOU) process with two parametrisations as defined in Sections 2--4, \cite{courgeau2020likelihood}. 

\subsection{Notations}
We consider a filtered probability space $(\Omega, \mathcal{F}, (\mathcal{F}_t,\ t \geq 0), \proba_0)$ to which all stochastic processes are adapted. We consider a \Levy process $(\LL_t,\ t \geq 0)$  (i.e.\ a stochastic process with stationary and independent increments and continuous in probability and $\LL_0 = \boldsymbol{0}_{\boldsymbol{d}}, \ \proba_0-a.s.$)  \citep[Remark 1]{Brockwell2009} without loss of generality assumed to be c{\`a}dl{\`a}g. For a stochastic process $(\YY_t,\ t \geq 0)$, we write $\YY_{t-}:= \lim_{s \uparrow t} \YY_s$ for any $t \geq 0$. For any probability measure $\proba$, we denote by $\proba_t$ its restriction to the $\sigma$-field $\mathcal{F}_t$ for any $t \geq 0$. Regarding the convergence of random variables, we write $\xrightarrow{\ a.s. \ }$ the almost-sure convergence, $\xrightarrow{\ L^p\ }$ for the convergence in $L^p$, $\probconv$ for the convergence in probability and $\xrightarrow{\ \mathcal{D} \ }$ for the convergence in distribution and $\stableconv$ the $\mathcal{F}-$stable convergence in distribution.

We denote the matrix determinant by $\determinant$ , the space of $\{0,1\}$-valued $d \times d$ matrices by $\MdB$, the space of real-valued $d \times d$ matrices by $\MdR$, the linear subspace of $d\times d$ symmetric matrices by $\Ss$, the (closed in $\Ss$) positive semidefinite cone (i.e.\ with the real parts of their eigenvalues non-negative) by $\Ss^+$ and the (open in $\Ss$) positive definite cone (i.e.\ with the real parts of their eigenvalues positive) by $\Ss^{++}$. In particular, $\Id \in \MdR$ denotes the $d\times d$ identity matrix.

We denote by $\lambda^{leb}$ the one-dimensional Lebesgue measure. For a non-empty topological space, $\mathcal{B}(S)$ is the Borel $\sigma$-algebra on $S$ and $\pi$ is some probability measure on $(S,\mathcal{B}(S))$. The collection of all Borel sets in $S \times \R$ with finite $\pi \otimes \lambda^{leb}$-measure is written as $\mathcal{B}_b(S \times \R)$. Also, the norms of vectors and matrices are denoted by $\|\cdot \|$. We usually take the Euclidean (or Frobenius) norm but due to the equivalence between norms, our results are not norm-specific and are valid under any norm in $\R^d$ or $\MdR$.
In addition, for an invertible matrix $\boldsymbol{M} \in \MdR$, we define $\langle \boldsymbol{x}, \boldsymbol{y} \rangle_{\boldsymbol{M}} := \boldsymbol{x}^\top \boldsymbol{M^{-1}} \boldsymbol{y}$ for $\boldsymbol{x}, \boldsymbol{y} \in \R^d$. Finally, for a process $\XX_t= (X^{(1)}_t,\dots,X^{(d)}_t)^\top\in\R^d$ we denote by $[\XX]_t$ the matrix $([X^{(i)},X^{(j)}]_t)$ of quadratic co-variations on $[0,t]$ for $t \geq 0$.

 In this article, $\otimes$ denotes the Kronecker matrix product, $\odot$ is for the Hadamard (element-wise) matrix product and $\vectorise$ is the vectorisation transformation where columns are stacked on one another. We denote by $\vectorise^{-1}(\boldsymbol{x}) := (\vectorise(\Id)^\top \otimes \Id)(\Id \otimes \boldsymbol{x}) \in \MdR$ for $\boldsymbol{x}\in \R^{d^2}$ the inverse vectorisation transformation.

\subsection{The \levy-driven Ornstein-Uhlenbeck process}
We consider a $d$-dimensional Ornstein-Uhlenbeck (OU) process $\YY_t = (Y^{(1)}_t,\dots,Y^{(d)}_t)^\top$ for $t \geq 0$ satisfying the stochastic differential equation (SDE) for a \emph{dynamics} matrix $\Q\in\Ss^{++}$
\begin{equation}
	\label{eq:sde}
	d\YY_t = -\rmQ \YY_{t-}dt + d\LL_t,
\end{equation}
where $\LL_t = (L^{(1)}_t, \dots, L^{(d)}_t)^\top$ is a $d$-dimensional \Levy process independent of $\YY_0$ \citep[Section 1]{Masuda2004}. As presented in Section 2.1.2, \cite{courgeau2020likelihood}, the \Levy process $\LL$ in Equation \eqref{eq:sde} is defined by the \levy-Khintchine triplet $(\boldB, \Si, \nu)$ with respect to the truncation function  $\tau(\boldsymbol{z}) := \mathbb{I}_{\{\boldsymbol{x} \in \R^d : \|\boldsymbol{x}\| \leq 1\}}(\boldsymbol{z})$ where $\mathbb{I}$ denotes the indicator function and $\boldB \in \R^d$, $\Si \in \Ss^{++}$. Also, $\nu$ is a \Levy measure on $\R^d$ satisfying $\int_{\R^d\symbol{92}\{\boldsymbol{0}\}} (1 \wedge \|\boldsymbol{z}\|^2)\nu(d\boldsymbol{z}) < \infty$, see additional details in Appendix \ref{appendix:levy-ito-decomposition}. We consider \emph{strictly stationary} OU processes \citep{Masuda2004, masuda2007ergodicity, Brockwell2007} and we recall a standard assumption in this context:
\begin{assumption}
\label{assumption:strong-solution-vector}
Suppose that $\rmQ \in \Ss^{++}$ and that the \Levy measure $\nu(\cdot)$ satisfies the log moment condition:
\begin{equation*}
    \int_{\|\boldsymbol{z}\|>1}\ln \|\boldsymbol{z}\|\nu(d\boldsymbol{z}) < \infty.
\end{equation*}
\end{assumption}
Then, under Assumption \ref{assumption:strong-solution-vector}, there exists a unique strictly stationary solution to Equation \eqref{eq:sde} given by
\begin{equation}
\label{eq:stationary-solution-vec}
\YY_t = e^{-(t-s)\Q}\YY_s + \int_s^t e^{-(t-u)\Q}d\LL_u, \qquad \text{for any $t \geq s \geq 0$,}
\end{equation}
 Similarly to \cite{fasen2013}, we consider a \levy-driven OU process $\YY$ observed on a discrete time grid $0=t_{1,N} < \cdots < t_{N,N}:=T_N$ where $N \in \N$ which leads to multivariate AR(1) process representation given by
  $$\YY_{t_{k+1,N}} = e^{-\Q(t_{k+1,N}-t_{k,N})}\YY_{t_k,N} + \int_{t_{k,N}}^{t_{k+1,N}}e^{-\Q(s-t_{k,N})}d\LL_s, \quad \text{for} \ k \in \{1,\dots,N\}.$$
  
  \begin{notation}
When there is no ambiguity, we use $t_k$ for $t_{k,N}$ and $\YY_{k}$ for $\YY_{t_k}$.
\end{notation}

We study the asymptotic properties of a couple of discretised estimators for $\Q$ under two parametrisations specific to graphs which we describe in the following section.

\subsection{A graphical interpretation of the OU process}
\label{section:graphical-interpretation}
The components of $\YY$ are interpreted as the \emph{nodes} of a graph structure and the undirected links, or \emph{edges}, between them are given by a so-called \emph{adjacency} matrix (or \emph{graph topology} matrix) $\A = (a_{ij}) \in \MdB$ where $a_{ii} = 0$ for $i \in \{1,\dots,d\}$ and, for $i\neq j$, $a_{ij} = 1$ if nodes $i$ and $j$ are linked and $0$ otherwise. We define the degree of any node $i$ as $n_i := 1 \vee \sum_{j\neq i}a_{ij}$ and we also define the row-normalised adjacency matrix 
$$\An := \diag(n_1^{-1},\dots, n_d^{-1})\A.$$

\begin{assumption}
	\label{assumption:A-known}
	We assume that $\A$ is deterministic, constant in time and known.
\end{assumption}
 Following \cite{courgeau2020likelihood}, the \emph{dynamics} matrix $\Q$ is parametrised in two different ways specific to graphs, i.e.\ they include the adjacency matrix $\A$ and model interactions within the graph. 
 \subsubsection{The two-parameter formulation}
 The $\boldTheta$-GrOU parameter $\boldTheta = (\theta_1, \theta_2)^\top \in \R^2$ give way to the parametrisation
\begin{equation}
\label{eq:Q-using-theta}
\rmQ  = \Q(\boldTheta) :=\theta_2 \Id + \theta_1 \An.
\end{equation}
This yields single \emph{network} and \emph{momentum} effects for the whole graph respectively parametrised by $\theta_1$ and $\theta_2$. 
The resulting autoregressive form is given by
$$Y_{k+1}^{(i)} = e^{-\theta_2 (t_{k+1}-t_{k})} Y^{(i)}_k + e^{-\theta_1 \An (t_{k+1}-t_{k})}\YY_{k}+ \int_{t_{k,N}}^{t_{k+1,N}}e^{-\Q(\boldTheta)(s-t_{k,N})}d\LL_s,$$
for $k \in \{1,\dots,N\}$ and $i\in \{1,\dots,d\}$. For this process to be well-defined, $\Q(\boldTheta)$ must be positive definite \citep{Masuda2004} and we consider the following sufficient condition \citep[Proposition 2.2.2]{courgeau2020likelihood}:
\begin{assumption}
\label{assumption:theta-grou}
	Suppose that $\boldTheta \in \boldsymbol{\Theta}:= \{ (\theta_1, \theta_2)^\top \in \R^2: \ \theta_2 > 0, \ \theta_2 > |\theta_1|\}$.
\end{assumption}

\subsubsection{A more detailed formulation}
We also consider the $\boldPsi$-GrOU parameter $\boldPsi \in \R^{d^2}$ which in turn yields the parametrisation
\begin{equation}
\label{eq:Q-using-psi}
\rmQ = \Q(\boldPsi) := (\Id + \An) \odot \vectorise^{-1}(\boldPsi),
\end{equation}
where $\odot$ is the Hadamard product. In this case, \emph{momentum} and \emph{network} effects are \emph{specific} to each node and its neighbours. 
Similarly, the autoregressive form is given by
$$Y_{k+1}^{(i)} = e^{-\psi_{ii} (t_{k+1}-t_{k})} Y^{(i)}_k + e^{- \An \odot \vectorise^{-1}(\boldPsi) (t_{k+1}-t_{k})}\YY_{k}+ \int_{t_{k,N}}^{t_{k+1,N}}e^{-\Q(\boldPsi)(s-t_{k,N})}d\LL_s,$$
 for any $i \in \{1,\dots,d\}$. For this process to be well-defined, $\Q(\boldPsi)$ ought to be positive definite \citep{Masuda2004} and we consider the following sufficient condition \citep[Proposition 2.2.3]{courgeau2020likelihood}:
\begin{assumption}
\label{assumption:psi-grou}
Suppose that $$\boldPsi \in \boldsymbol{\Psi} := \Big\{ \phi \in \R^{d^2}: \phi_{d(i-1)+i} > 0, \ \phi_{d(i-1)+i} > n_i^{-1}\sum_{j\neq i}|\phi_{d(j-1)+i}|, \  \forall i \in \{1,\dots,d\}\Big\}.$$
\end{assumption}

\subsection{The Graph Ornstein-Uhlenbeck process}
 
Here, we recall the definition of the  Graph Ornstein-Ulhenbeck (GrOU) process in its two different configurations.
\begin{definition}
	\label{definition:grou}
	The \emph{Graph Ornstein-Uhlenbeck (GrOU)} process is a c{\`a}dl{\`a}g process $(\YY_t,\ t \geq 0)$ satisfying Equation \eqref{eq:sde} with a \Levy process $(\LL_t,\ t \geq 0)$ where $\Q$ is given by either Equation \eqref{eq:Q-using-theta} or by Equation \eqref{eq:Q-using-psi} such that $\Q$ is positive definite. This process is then called a $\boldTheta$-GrOU process or a $\boldPsi$-GrOU process, respectively.
\end{definition}
Since we consider the inference of GrOU processes observed on a discrete time grid, we first introduce the corresponding continuous-time estimators.
\subsubsection{Continuous-time estimators}
To estimate the parameters $\boldTheta$ and $\boldPsi$, \cite{courgeau2020likelihood} propose a likelihood framework under Assumptions \ref{assumption:strong-solution-vector} \& \ref{assumption:A-known} and study the corresponding maximum likelihood estimators (MLE). More precisely, for a $\boldTheta$-GrOU process which is observed in continuous time, has finite second moments and satisfies Assumption \ref{assumption:theta-grou}, the MLE for $\boldTheta = (\theta_1, \theta_2)^\top\in \R^2$ is given by
	$$\widehat{\boldTheta}_t := [\HH^\Si]_t^{-1} \cdot \HH^\Si_t, \quad \text{for any $t \geq 0$,}$$
	where we define
	$$
\HH^\Si_t := - \begin{pmatrix}
\int_{0}^t\langle \An\YY_{s}, d\YY^c_s \rangle_\Si \\
\int_{0}^t\langle \YY_{s}, d\YY^c_s \rangle_\Si
\end{pmatrix} \ \text{such that} \ [\HH^\Si]_t = \begin{pmatrix}
		\int_0^t\langle \An \YY_{s},\An\YY_{s} \rangle_\Si ds & \int_0^t\langle \An\YY_{s},\YY_{s} \rangle_\Si ds\\
		 \int_0^t\langle \An\YY_{s},\YY_{s} \rangle_\Si ds & \int_0^t\langle \YY_{s},\YY_{s} \rangle_\Si ds
	\end{pmatrix}
,$$
with $\YY^c$, the continuous $\proba_0$-martingale part of $\YY$, defined by $d\YY^c_t = -\rmQ \YY_{t-}dt + d\WW_t$ for a $d$-dimensional Brownian motion with covariance matrix $\Si$.
Similarly, for a $\boldPsi$-GrOU process which is observed in continuous time, has finite second moments and satisfies Assumption \ref{assumption:psi-grou}, the MLE for $\boldPsi \in \R^{d^2}$ is given by
	\begin{equation*}
	\widehat{\boldsymbol{\psi}}_t := [\At^\Si]_t^{-1} \cdot \At^\Si_t, \quad \text{for any $t \geq 0$,}
\end{equation*}
where $\At^\Si_t := \: - \int_0^t \YY_s \otimes \Si^{-1}d\YY_s^c$ such that $[\At^\Si]_t = \: {\Kt}_t \otimes \Si^{-1}$ with ${\Kt}_t := \int_0^t \YY_s \YY_s^\top ds$.


\subsubsection{Discretised estimators}
\label{section:jump-filtered-quantities}

Recall that we denote by $\YY^c$ the  $\proba_0$-martingale part of $\YY$. Note that the parameters sets $\boldsymbol{\Theta}$ and $\boldsymbol{\Psi}$ are equivalent for the $\boldTheta$-GrOU process and Assumption \ref{assumption:psi-grou} boils down to Assumption \ref{assumption:theta-grou} in this case. Therefore, in general, we only mention Assumption \ref{assumption:psi-grou}. We use the jump-filtering approach from \cite{mai2014efficient} and discretise the continuous-time estimators defined in \cite{courgeau2020likelihood} as follows:
\begin{definition}
\label{definition:discretised-estimators}
Assume that Assumptions \ref{assumption:strong-solution-vector}, \ref{assumption:A-known} \& \ref{assumption:psi-grou}
hold for a GrOU process $\YY$. Define the increments and their continuous martingale equivalent 
$\Delta_k \YY  := \YY_{{k+1}} - \YY_{{k}}$, and $\Delta_k \YY^{c}  := \YY^{c}_{{k+1}} - \YY^{c}_{{k}}$, componentwise, for any $k \in \{0,\dots,N-1\}$. Define the discretised matrices $$\Sn_N := \Kn_N \otimes \Id \quad \text{where} \quad \Kn_N := \sum_{k=0}^{N-1} \YY_k \YY^\top_k \cdot (t_{k+1} - t_{k}).$$
Also, for a collection $(v_N^{(i)},\ i \in \{1,\dots,d\}) \subset (0,\infty)$, we introduce the $\R^{d^2}$-valued random vectors given by:
\begin{align*}
\begin{cases}
(\Abar_N)_{d(j-1)+i} &= - \sum_{k=0}^{N-1} Y^{(i)}_{k} \cdot \Delta_{k}  Y^{(j),c},\\
(\Atilde_N)_{d(j-1)+i} &= - \sum_{k=0}^{N-1} Y^{(i)}_{k} \cdot \Delta_{k}  Y^{(j)} \cdot \mathbb{I}\{|\Delta Y^{(j)}_{k}| \leq v_N^{(j)}\} \label{eq:estimator-A-jump-filtered},
\end{cases}
\end{align*}
where $i,j \in \{1,\dots,d\}$.
We define the discretised unfiltered estimator as
    \begin{equation*}
    \label{eq:discretised-estimator}
    	\boldsymbol{\overline{\psi}}_N := \Sn_N^{-1} \cdot \Abar_N,
    \end{equation*}
    and the discretised jump-filtered estimator by
    \begin{equation*}
    \label{eq:discretised-filtered-estimator}
        \boldsymbol{\widetilde{\psi}}_N :=  \Sn_N^{-1} \cdot \Atilde_N.
    \end{equation*}
\end{definition}

\begin{remark}
\label{remark:discretised-estimators}
	Both $\widehat{\boldTheta}_t$ and $\widehat{\boldPsi}_t$ feature the diffusion matrix $\Si \in \Ss^{++}$ which is generally not available and we have reformulated them  without $\Si$ as in the proof of Proposition 3.3.2, \cite{courgeau2020likelihood}. 
	Remark that
	\begin{align*}
		\int_0^t\langle \rmQ \YY_s , d\YY^{c}_s \rangle_\Si  &= \vectorise(\Q)^\top \cdot (\Id \otimes \Si^{-1}) \cdot \int_0^t \YY_{s} \otimes d\YY^{c}_s,\\
		\int_0^t\langle \rmQ \YY_s , \Q\YY \rangle_\Si ds  &= \vectorise(\Q)^\top \cdot (\Id \otimes \Si^{-1}) \cdot (\boldsymbol{K}_t \otimes \Id) \cdot \vectorise(\Q),
	\end{align*}
as given in the proof of Proposition 3.3.2, \cite{courgeau2020likelihood}.
We can rewrite 
$$\begin{cases}
	\At^\Si_t &= \ -(\Id \otimes \Si^{-1})\cdot(\int_0^t \YY_s \otimes d\YY_s^c)\\ [\At^\Si]_t &= \ (\Id \otimes \Si^{-1})\cdot (\boldsymbol{K}_t \otimes \Id)
\end{cases},$$
such that the estimator $\widehat{\boldsymbol{\psi}}_t$ is given by
\begin{equation*}
	\label{eq:reformulation-bold-psi-without-sigma}
	\widehat{\boldsymbol{\psi}}_t := \boldsymbol{S}_t^{-1} \cdot \At_t, \quad \text{where} \quad
\begin{cases}
\At_t &:= \: - \int_0^t \YY_s \otimes d\YY_s^c\\
\boldsymbol{S}_t &:= \: {\Kt}_t \otimes \Id
\end{cases}.
\end{equation*}
We proceed similarly for $\widehat{\boldsymbol{\theta}}_t$.
\end{remark}

\begin{remark}
\label{remark:jump-threshold-delta_n}
	The collection $(v_N^{(i)},\ i \in \{1,\dots,d\})$ consists of jump-filtering threshold values: it relies on the assumption that when noise increments are small, they are most likely due to the continuous martingale part of the \Levy process (i.e.\ the Brownian motion part). On the other hand, larger values would correspond to the contribution of the jump part $\JJ$ (see Appendix \ref{appendix:levy-ito-decomposition}) and should be discarded as such. However, as explored in Section \ref{section:discretised-high-frequency-data} and beyond, the exact values chosen for these thresholds are crucial as they balance the Brownian increments against the jumps. This simple methodology is instrumental for those estimators to be consistent and asymptotically normal as well as usable in practice (Sections \ref{section:adaptive-lasso} \& \ref{section:applications}).
	\end{remark}

Since the continuous martingale part is not available in practice, the discretised jump-filtered estimator $\widetilde{\boldPsi}_N$ is the only one usable for applications. Informally, the main result of this article is that the discretised jump-filtered estimator satisfies the \emph{stable} central limit theorem given by 
$$T_N^{1/2}(\boldsymbol{\widetilde{\psi}}_N - \boldsymbol{\psi}) \stableconv \mathcal{N}\left(\boldsymbol{0}_{d^2},\ \E \left(\YY_\infty \YY_\infty^\top \right)^{-1} \otimes \Si \right), \quad \text{as $N \rightarrow \infty$,}$$
under a collection of assumptions. We also obtain a similar estimator for $\boldTheta$.
\begin{definition}
\label{definition:theta-discretised-estimator}
Assume that Assumptions \ref{assumption:strong-solution-vector}--\ref{assumption:theta-grou} hold for a $\boldTheta$-GrOU process $\YY$. We define the $\boldTheta$-GrOU discretised jump-filtered estimator
$$\widetilde{\boldTheta}_N := d^{-1}\begin{pmatrix}
	\rho(\An) \cdot \vectorise(\A)^\top \\
	\vectorise(\Id)^\top
\end{pmatrix} \cdot  \Sn_N^{-1} \cdot \Atilde_N,$$
where $\rho(\An) := d \cdot (\sum_{i,j}a_{ij}/n_i)^{-1}\cdot \indicator_{\{\|\boldsymbol{X}\| \neq 0\}}(\An)$.
\end{definition}

This estimator is directly obtained from the formulation of $\widehat{\boldPsi}_t$ above
	since $\vectorise(\Q(\boldTheta)) = \theta_2\vectorise(\Id) +\theta_1\vectorise(\An)$. We have
\begin{equation*}
	\widehat{\boldTheta}_t = d^{-1}\begin{pmatrix}
	\rho(\An) \cdot \vectorise(\A)^\top \\
 \vectorise(\Id)^\top
\end{pmatrix} \cdot  \boldsymbol{S}_t^{-1} \cdot \At_t, \quad \text{for any $t \geq 0$,}
\end{equation*}
where $\rho(\An) := d \cdot (\sum_{i,j}a_{ij}/n_i)^{-1}\cdot \indicator_{\{\|\boldsymbol{X}\| \neq 0\}}(\An)$.
%
%
%

 Note that if each node has at least one neighbour, we have $\rho(\An) =\indicator_{\{\|\boldsymbol{X}\| \neq 0\}}(\An)$. 
We deduce that it suffices to prove the asymptotic convergence properties for the $\boldPsi$-GrOU and deduce the result for the $\boldTheta$-GrOU process using Definition \ref{definition:theta-discretised-estimator}.

To convert the results involving the continuous martingale part of $\YY$ to the practical jump-filtered estimator, we split the convergence study into two sections for both finite and infinite jump activities.

\section{Asymptotic theory with discretised high-frequency observations}
\label{section:discretised-high-frequency-data}
We extend the high-frequency sampling scheme from Section 1, \cite{mai2014efficient}, to discretise the observations of the GrOU process and study the asymptotic distributions of $\widetilde{\boldPsi}_N$ and $\widetilde{\boldTheta}_N$.

\subsection{High-frequency framework}
\label{section:high-frequency-setup}
The framework consists of \emph{two} long-span asymptotic assumptions:
\begin{assumption}
\label{assumption:high-frequency-asymptotics}
	Let $(\YY_t,\ t \geq 0)$ be a GrOU process. Suppose that we are given observations $\YY_{t_{1,N}},\dots,\YY_{t_{N,N}}$ at times $0=t_{1,N} < \dots < t_{N,N} := T_N$. Also, suppose that, as $N \rightarrow \infty$, we have $T_N \rightarrow \infty$, a smaller mesh size $\Delta_N := \max_i |t_{i+1,N}-t_{i,N}| \rightarrow 0$ and a relative growth such that 
	$$\Delta_N^{1/2}T_N = o(1) \quad \text{and} \quad N \Delta_N T_N^{-1} = O(1), \quad \text{as $N \rightarrow \infty$.}$$
\end{assumption}
The framework described in Assumption \ref{assumption:high-frequency-asymptotics} is said to be a \emph{high-frequency sampling scheme} because $\Delta_N$ goes faster to zero than $T_N^2$ goes to $\infty$. The number of samples $N$ is curbed since $N\Delta_N$ must grow as fast as $T_N$ as $N\rightarrow \infty$. Note that this assumption implies that data is synchronously available across all nodes or components of $\YY$. The asynchronous case is not relevant for our application of interest, but is important for other applications; see \cite[Chapters 3 \& 4]{cheang2018three} and \cite{AitSahalia2010CovarianceAsynchronous}  for respectively autoregressive and log-Gaussian statistical approaches whilst \cite{bikowski2017autoregressive} proposes a recurrent neural network approach. All of these models are applied on \emph{asynchronous} financial market closing prices. However, this is beyond the scope of this article and we leave this open problem for future research.

By extending Assumption 3.1, \cite{mai2014efficient} to the multivariate case, we provide the following additional sets of assumptions to obtain the asymptotic consistency and normality of the discretised estimators. First, regarding the GrOU process and the \Levy process $\LL$: 
\begin{assumption}
\label{assumption:mle-convergence}
Let $(\YY_t,\ t \geq 0)$ be a GrOU process. Assume that:
\begin{enumerate}[(i)]
    \item \label{assumption:mle-convergence:sq-integrability} the square integrability of the GrOU process \YY 
    , i.e. $\E\left(\|\YY_t\|^2\right) < \infty$\label{assumption:enum-sq-integrability}
    \item \label{assumption:mle-convergence:drift} the underlying \Levy noise drift w.r.t the truncation function $\tau(\boldsymbol{z}) := \mathbb{I}_{\{\boldsymbol{x} \in \R^d : \|\boldsymbol{x}\| \leq 1\}}(\boldsymbol{z})$
    is zero, that is $\boldB = \boldsymbol{0}_d$. 
\end{enumerate}
\end{assumption}

We impose an additional assumption on the high-frequency sampling scheme:
\begin{assumption}
\label{assumption:cv-rate-beta}
We assume that there exists $\beta^{(i)} \in (0,1/2)$ such that 
$$T_N \Delta_N^{(1-2\beta^{(i)}) \wedge 1/2} = o(1), \quad \text{for any $i \in \{1,\dots,d\}$.}$$ 
\end{assumption}

\subsection{Asymptotic convergence of discretised estimators}
 A key result about the asymptotic distribution of the discretised \emph{unfiltered} estimator $\boldsymbol{\overline{\psi}}_N$ is given as follows:
 \begin{lemma}{(Discretised unfiltered estimator convergence)\\}
\label{lemma:cv-estimator-continuous-component}
Assume Assumptions \ref{assumption:strong-solution-vector}--\ref{assumption:A-known} \& \ref{assumption:psi-grou}-\ref{assumption:cv-rate-beta} hold. The discretised unfiltered estimator $\boldsymbol{\overline{\psi}}_N$ satisfies 
$$T_N^{1/2}\left(\boldsymbol{\overline{\psi}}_N - \boldsymbol{\psi}\right) \stableconv \mathcal{N}\left(\boldsymbol{0}_{d^2},\ \E\left(\YY_\infty \YY_\infty^\top \right)^{-1} \otimes \Si \right), \quad \text{as $N\rightarrow\infty$.}
$$
\end{lemma}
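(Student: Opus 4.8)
The plan is to \emph{reduce} the discretised unfiltered estimator to the continuous-time estimator $\widehat{\boldPsi}_{T_N} = \boldsymbol{S}_{T_N}^{-1}\At_{T_N}$ from Remark~\ref{remark:discretised-estimators}, evaluated at the terminal time $T_N$, and to import the continuous-time stable central limit theorem of \cite{courgeau2020likelihood}, which along any sequence $T_N \to \infty$ gives $T_N^{1/2}(\widehat{\boldPsi}_{T_N} - \boldPsi) \stableconv \mathcal{N}(\boldsymbol{0}_{d^2}, \E(\YY_\infty \YY_\infty^\top)^{-1} \otimes \Si)$. Since $\mathcal{F}$-stable convergence is preserved under perturbations vanishing in probability, it then suffices to show $T_N^{1/2}(\boldsymbol{\overline{\psi}}_N - \widehat{\boldPsi}_{T_N}) \probconv \boldsymbol{0}_{d^2}$, and the stated limit follows. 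All of the work therefore lies in controlling the error between the Riemann and \ito\ sums defining $\boldsymbol{\overline{\psi}}_N$ and their continuous-time counterparts.

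I would write the error as
$$\boldsymbol{\overline{\psi}}_N - \widehat{\boldPsi}_{T_N} = \Sn_N^{-1}\left(\Abar_N - \At_{T_N}\right) + \left(\Sn_N^{-1} - \boldsymbol{S}_{T_N}^{-1}\right)\At_{T_N},$$
and first establish the ergodic normalisation $T_N^{-1}\Kn_N \to \E(\YY_\infty \YY_\infty^\top)$, a Riemann-sum approximation to $T_N^{-1}\int_0^{T_N}\YY_s\YY_s^\top\,ds$ which converges by stationarity and ergodicity of $\YY$. Consequently $T_N\,\Sn_N^{-1}$ converges (a.s., for $N$ large) to the deterministic invertible matrix $\E(\YY_\infty \YY_\infty^\top)^{-1}\otimes\Id$, and $\At_{T_N} = \boldsymbol{S}_{T_N}\widehat{\boldPsi}_{T_N} = O_p(T_N)$. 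For the first summand I use $\Abar_N - \At_{T_N} = \sum_{k}\int_{t_k}^{t_{k+1}}(\YY_{s-}-\YY_k)\otimes d\YY^c_s$, splitting along $d\YY^c_s = -\rmQ\YY_{s-}\,ds + d\WW_s$ into a finite-variation drift part $R_N := \sum_k \YY_k \otimes \int_{t_k}^{t_{k+1}}\rmQ(\YY_{s-}-\YY_k)\,ds$ and a martingale part. For the second summand I use $\Sn_N^{-1} - \boldsymbol{S}_{T_N}^{-1} = \Sn_N^{-1}(\boldsymbol{S}_{T_N} - \Sn_N)\boldsymbol{S}_{T_N}^{-1}$ with $\boldsymbol{S}_{T_N} - \Sn_N = (\Kt_{T_N} - \Kn_N)\otimes\Id$, reducing everything to the Riemann error $\Kt_{T_N} - \Kn_N$.

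Every bound rests on the increment estimate $\E\|\YY_s - \YY_k\|^2 = O(\Delta_N)$ for $s \in [t_k, t_{k+1}]$, which follows from the stationary representation \eqref{eq:stationary-solution-vec} and the square-integrability in Assumption~\ref{assumption:mle-convergence}. A Cauchy--Schwarz argument then gives $\E\|\Kt_{T_N} - \Kn_N\| = O(T_N\Delta_N^{1/2})$ and $\E\|R_N\| = O(T_N\Delta_N^{1/2})$, while the \ito\ isometry gives the martingale part an $L^2$-size of order $(T_N\Delta_N)^{1/2}$. Combining with $T_N\,\Sn_N^{-1} = O_p(1)$ and $\At_{T_N} = O_p(T_N)$, each contribution to $T_N^{1/2}(\boldsymbol{\overline{\psi}}_N - \widehat{\boldPsi}_{T_N})$ is of order $T_N^{1/2}\Delta_N^{1/2} = (T_N\Delta_N)^{1/2}$ or $\Delta_N^{1/2}$. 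Since $\Delta_N^{1/2}T_N = o(1)$ forces $T_N\Delta_N = o(1)$, and $N\Delta_N T_N^{-1} = O(1)$ from Assumption~\ref{assumption:high-frequency-asymptotics}, all of them are $o_p(1)$, yielding the claimed reduction.

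The main obstacle is not any single estimate but the \emph{interplay of the two long-span limits}: the discretisation errors are intrinsically of size $\asymp N\Delta_N^{3/2} \asymp T_N\Delta_N^{1/2}$, so the central-limit scaling $T_N^{1/2}$ is absorbed only because the mesh shrinks faster than $T_N$ grows, exactly as encoded in $\Delta_N^{1/2}T_N = o(1)$. Verifying that this single rate condition simultaneously kills the drift bias, the martingale discretisation error and the Riemann error --- all under merely second moments, rather than the fourth moments one would need for a direct conditional-Lindeberg martingale CLT --- is the delicate point; routing through the imported continuous-time stable limit is precisely what allows second moments to suffice. I note finally that the driftless condition $\boldB = \boldsymbol{0}_d$ of Assumption~\ref{assumption:mle-convergence} and the jump-filtering rate $\beta^{(i)}$ of Assumption~\ref{assumption:cv-rate-beta} play no role for $\boldsymbol{\overline{\psi}}_N$, whose increments $\Delta_k\YY^c$ already exclude both drift and jumps; they are required only when $\Delta_k\YY^c$ is replaced by the filtered increment in the genuinely applicable estimator $\boldsymbol{\widetilde{\psi}}_N$, treated subsequently.
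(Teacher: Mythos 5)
Your proof is correct, and its analytic core coincides with the paper's: the increment bound $\E\|\YY_s - \YY_{\lfloor s\rfloor_N}\|^2 = O(\Delta_N)$ (with $\lfloor s\rfloor_N := t_k$ for $s \in [t_k,t_{k+1})$), Cauchy--Schwarz/Fubini bounds of order $T_N\Delta_N^{1/2}$ on the Riemann errors, the It\^o-isometry bound of order $(T_N\Delta_N)^{1/2}$ on the martingale discretisation error, the ergodic limit $T_N^{-1}\Kn_N \rightarrow \E(\YY_\infty\YY_\infty^\top)$, and the observation that only the rates of Assumption \ref{assumption:high-frequency-asymptotics} are needed to make every error term $o_p(1)$. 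The genuine difference is the top-level architecture: you compare $\boldsymbol{\overline{\psi}}_N$ with the continuous-time estimator $\widehat{\boldPsi}_{T_N}$ and import the continuous-time stable CLT of \cite{courgeau2020likelihood} as a black box, whereas the paper never invokes that theorem --- it expands $T_N^{1/2}(\boldsymbol{\overline{\psi}}_N - \boldsymbol{\psi}) = \ZZ^1_N + \ZZ^2_N$ around the \emph{true} parameter via the Kronecker identity $\YY_k\otimes\int_{t_k}^{t_{k+1}}\rmQ\YY_s\,ds = \big(\int_{t_k}^{t_{k+1}}\YY_k\YY_s^\top ds\otimes\Id\big)\boldsymbol{\psi}$ (so that your $R_N$ equals $T_N^{-1/2}\Sn_N\ZZ^1_N$), kills $\ZZ^1_N$ with the same $L^1$ estimate, replaces the discrete sum $\sum_k\YY_k\otimes\Delta_k\WW$ by the continuous martingale $\Mt_{T_N} = \int_0^{T_N}\YY_s\otimes d\WW_s$, and then proves the stable limit from scratch by applying the multivariate stable martingale CLT of \cite{crimaldi2005multimgclt} to $\Mt_t$, using $[\Mt]_t = {\Kt}_t\otimes\Si$, the normaliser $t^{-1/2}\Id\otimes\Si^{-1/2}$ and ergodicity, before concluding with Slutsky. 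Your route is shorter and cleanly isolates the logic as ``continuous-time CLT plus negligible discretisation error'', but it is only as strong as the imported statement: one must check that the cited result is established there as $\mathcal{F}$-stable convergence (not mere convergence in distribution) under exactly the present second-moment hypotheses, since stability is what the paper exploits later (Adaptive Lasso, random adjacency matrices); the paper's inline Crimaldi--Pratelli argument buys self-containedness and delivers stability directly. Two small points of bookkeeping: the drift part of $\Abar_N - \At_{T_N}$ is $\int_0^{T_N}(\YY_s - \YY_{\lfloor s\rfloor_N})\otimes\rmQ\YY_{s-}\,ds$ up to sign, not your $R_N$, which is instead the error relative to $\Sn_N\boldsymbol{\psi}$ --- both satisfy the same $O_{L^1}(T_N\Delta_N^{1/2})$ bound, so nothing breaks, but the term should match the decomposition you announced; and your closing remark is essentially accurate, as Assumption \ref{assumption:cv-rate-beta} is never used in the paper's proof of this lemma, while the condition $\boldB = \boldsymbol{0}_d$ is invoked there only to write $\LL = \WW + \JJ$ and would indeed be harmless for the unfiltered estimator, since $\YY^c$ excludes the drift by definition.
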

\begin{proof}
	See Appendix \ref{proof:lemma:cv-estimator-continuous-component}.
\end{proof}

From this, we obtain the consistency property:
\begin{corollary}
	\label{corollary:consistency-estimator-continuous-component}
	Assume Assumptions \ref{assumption:strong-solution-vector}--\ref{assumption:A-known} \& \ref{assumption:psi-grou}-\ref{assumption:cv-rate-beta} hold. Then, the discretised unfiltered estimator $\boldsymbol{\overline{\psi}}_N$ is consistent: we have $\boldsymbol{\overline{\psi}}_N \probconv \boldPsi$ as $N\rightarrow\infty$.
\end{corollary}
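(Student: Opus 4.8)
The plan is to deduce consistency directly from the stable central limit theorem established in Lemma \ref{lemma:cv-estimator-continuous-component}, exploiting the fact that the error is scaled by the diverging factor $T_N^{1/2}$ while $T_N \rightarrow \infty$ by Assumption \ref{assumption:high-frequency-asymptotics}. Concretely, I would write the estimation error as the product $\boldsymbol{\overline{\psi}}_N - \boldPsi = T_N^{-1/2}\cdot\big[T_N^{1/2}(\boldsymbol{\overline{\psi}}_N - \boldPsi)\big]$, so that consistency reduces to showing that a deterministic null sequence multiplied by a stochastically bounded sequence vanishes in probability.

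First I would note that $\mathcal{F}$-stable convergence in distribution is strictly stronger than ordinary convergence in distribution; hence Lemma \ref{lemma:cv-estimator-continuous-component} in particular gives $T_N^{1/2}(\boldsymbol{\overline{\psi}}_N - \boldPsi) \xrightarrow{\ \mathcal{D} \ } \boldsymbol{Z}$, where $\boldsymbol{Z} \sim \mathcal{N}(\boldsymbol{0}_{d^2}, \E(\YY_\infty \YY_\infty^\top)^{-1}\otimes\Si)$. Since the limiting law is a proper (almost-surely finite) distribution, the rescaled errors are uniformly tight, i.e.\ $T_N^{1/2}(\boldsymbol{\overline{\psi}}_N - \boldPsi) = O_P(1)$.

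Next I would apply the elementary Slutsky-type fact that if $\boldsymbol{X}_N = O_P(1)$ and $a_N \rightarrow 0$ is deterministic, then $a_N\boldsymbol{X}_N \probconv \boldsymbol{0}$; this follows at once from the definition of boundedness in probability, since for any $\varepsilon>0$ the event $\{\|a_N\boldsymbol{X}_N\|>\varepsilon\}$ equals $\{\|\boldsymbol{X}_N\|>\varepsilon/|a_N|\}$ with threshold $\varepsilon/|a_N|\rightarrow\infty$. Taking $a_N = T_N^{-1/2} \rightarrow 0$ and $\boldsymbol{X}_N = T_N^{1/2}(\boldsymbol{\overline{\psi}}_N - \boldPsi)$ then yields $\boldsymbol{\overline{\psi}}_N - \boldPsi \probconv \boldsymbol{0}_{d^2}$, that is $\boldsymbol{\overline{\psi}}_N \probconv \boldPsi$.

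I do not anticipate any genuine obstacle: all the analytic work is carried out in Lemma \ref{lemma:cv-estimator-continuous-component}, and the corollary is a routine consequence of asymptotic normality at rate $T_N^{1/2}$. The only step deserving explicit mention is the passage from stable convergence to stochastic boundedness of the normalised errors, which is immediate because stable convergence implies convergence in distribution and the Gaussian limit is finite almost surely.
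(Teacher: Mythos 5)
Your proof is correct, and it rests on the same foundation as the paper's: both deduce consistency from Lemma \ref{lemma:cv-estimator-continuous-component} together with the fact that $T_N \rightarrow \infty$ under Assumption \ref{assumption:high-frequency-asymptotics}. The mechanism you use to pass from the CLT to consistency is, however, slightly different and arguably cleaner. The paper argues that since $T_N^{1/2}(\boldsymbol{\overline{\psi}}_N - \boldPsi)$ converges stably to a Gaussian law with finite constant covariance, the unscaled error $\boldsymbol{\overline{\psi}}_N - \boldPsi$ has an ``$O(T_N^{-1/2})$ vanishing covariance matrix'' and hence converges to the Dirac mass at zero, then invokes Corollary 3.6 of \cite{hausler2015whystableconvergence} (stable convergence to a constant implies convergence in probability) to conclude. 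Your route downgrades stable convergence to ordinary convergence in distribution, extracts stochastic boundedness $T_N^{1/2}(\boldsymbol{\overline{\psi}}_N - \boldPsi) = O_P(1)$ from tightness of a weakly convergent sequence, and finishes with the elementary fact that a deterministic null sequence times an $O_P(1)$ sequence is $o_P(1)$. What your approach buys: it needs no result from the stable-convergence literature, and it sidesteps the paper's loose covariance statement (convergence in distribution alone does not give convergence of second moments without uniform integrability, so the paper's phrasing is heuristic where yours is airtight). What the paper's approach buys is essentially nothing extra here beyond staying within the stable-convergence framework it uses throughout; for this corollary the statement only concerns convergence in probability, so your weaker hypotheses suffice.
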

\begin{proof}
	Lemma \ref{lemma:cv-estimator-continuous-component} yields the stable convergence of $T_N^{1/2}(\boldsymbol{\overline{\psi}}_N -\boldPsi)$ towards of Gaussian distribution with a finite and constant covariance matrix. Thus, $\boldsymbol{\overline{\psi}}_N -\boldPsi$ has a $O(T_N^{-1/2})$ vanishing covariance matrix which becomes the Dirac delta distribution at zero as $N\rightarrow\infty$. From Corollary 3.6, \cite{hausler2015whystableconvergence}, this ensures the convergence in probability to zero of $\boldsymbol{\overline{\psi}}_N -\boldPsi$.
\end{proof}

\subsubsection{Jump thresholds}
\label{section:jump-thresholds}
The jump thresholds play an important role in controlling how much of the noise increments are attributed to the continuous martingale part or as jumps (Remark \ref{remark:jump-threshold-delta_n}). We use the thresholds defined by $$v_N^{(j)} := \Delta_N^{\beta^{(j)}}, \quad \text{for $\beta^{(j)} \in (0, 1/2)$ and $j \in \{1\dots, d\}$},$$
for which the parameter $\beta^{(j)}$ allows for larger increments (as $\beta^{(j)} \rightarrow 0$) or narrows down to the usual Gaussian scale (as $\beta^{(j)} \rightarrow 1/2$) as a power of the mesh size. In a sense, the asymptotic behaviour relies on $\Delta_N$ in the theoretical study ($\beta^{(j)}$ being given as a parameter) and on $\beta^{(j)}$ in practice as it needs to be estimated ($\Delta_N$ being imposed by the data at hand). The data-based aspects of $\beta^{(j)}$ and the mesh size $\Delta_N$ are discussed in Sections \ref{section:fitting-recovered-increments} \& \ref{section:comparing-least-squares} respectively.

 We generalise the central limit theorem above to jump-filtered discretised estimators in both the finite and infinite jump activity cases. The technical lemmas are presented in Appendix \ref{section:proofs} and proved in Appendix \ref{appendix:proofs-finite}.

\subsubsection{Finite jump activity case}
\label{section:finite-activity}

In this section, we consider the finite jump activity case, i.e.
$\int_{\R^d} \nu(d\boldsymbol{z}) < \infty.$
The finite jump activity case is the case where the \Levy noise corresponds to the sum of a Brownian Motion and a compound Poisson process, see Appendix \ref{appendix:levy-ito-decomposition}.
An additional assumption regarding the distribution of smaller jumps is required and reads as follows: 
\begin{assumption}
	\label{assumption:jump-height-finite-activity} Assume that $$\int_{\R^d} \nu(d\boldsymbol{z}) < \infty.$$
	In addition, suppose that the marginal distributions of the jump heights are such that $$F^{(i)}\left(2\Delta_N^{\beta^{(i)}}\right) - F^{(i)}\left(-2\Delta_N^{\beta^{(i)}}\right) = o(T_N^{-1}),$$
    where $F^{(i)}$ is the $i$-th node jump marginal cumulative distribution function for $i \in \{1,\dots,d\}$,
\end{assumption}

To obtain confidence intervals on the GrOU parameter estimation, we prove a central limit theorem for the discretised jump-filtered $\boldPsi$-GrOU estimator under finite jump activity as follows:
\begin{theorem}
\label{theorem:discrete-vec-finite-activity}
Assume Assumptions \ref{assumption:strong-solution-vector}, \ref{assumption:A-known} \& \ref{assumption:psi-grou}--\ref{assumption:jump-height-finite-activity} hold for a GrOU process $\YY$ and $(\beta^{(i)}, \ i \in \{1,\dots,d\}) \subset (0,1/2)$, then for $v^{(i)}_N = \Delta^{\beta^{(i)}}_N$, $i \in \{1,\dots,d\}$, the discretised jump-filtered estimator $\boldsymbol{\widetilde{\psi}}_N$ satisfies 
$$T_N^{1/2}(\boldsymbol{\widetilde{\psi}}_N - \boldsymbol{\psi}) \stableconv  \mathcal{N}\left(\boldsymbol{0}_{d^2},\ \E\left(\YY_\infty \YY_\infty^\top \right)^{-1} \otimes \Si \right), \quad \text{as $N \rightarrow \infty$.}$$
\end{theorem}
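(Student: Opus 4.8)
The plan is to transfer the stable central limit theorem already established for the unfiltered estimator in Lemma~\ref{lemma:cv-estimator-continuous-component} to the jump-filtered estimator, by showing that the two estimators differ by a term that is negligible after rescaling by $T_N^{1/2}$. Since $\boldsymbol{\widetilde{\psi}}_N$ and $\boldsymbol{\overline{\psi}}_N$ share the same normalisation $\Sn_N$, I would write
$$T_N^{1/2}\big(\boldsymbol{\widetilde{\psi}}_N - \boldsymbol{\psi}\big) = T_N^{1/2}\big(\boldsymbol{\overline{\psi}}_N - \boldsymbol{\psi}\big) + T_N^{1/2}\,\Sn_N^{-1}\big(\Atilde_N - \Abar_N\big).$$
Because $\mathcal{F}$-stable convergence in distribution is preserved under the addition of a sequence converging to zero in probability, it suffices to prove that the second summand tends to zero in probability; the first summand already converges stably to the target law by Lemma~\ref{lemma:cv-estimator-continuous-component}.

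To control the remainder I would first use the stationarity and square integrability of $\YY$ (Assumption~\ref{assumption:mle-convergence}), together with the vanishing mesh, to obtain the ergodic limit $T_N^{-1}\Kn_N \probconv \E\big(\YY_\infty \YY_\infty^\top\big)$, and hence $T_N\,\Sn_N^{-1} \probconv \E\big(\YY_\infty \YY_\infty^\top\big)^{-1} \otimes \Id$. By Slutsky's lemma the problem then reduces to showing $T_N^{-1/2}\big(\Atilde_N - \Abar_N\big) \probconv \boldsymbol{0}_{d^2}$. Writing $\Delta_k Y^{(j),J} := \Delta_k Y^{(j)} - \Delta_k Y^{(j),c}$ for the pure-jump increment, the $(d(j-1)+i)$-th component of the difference splits as
$$\big(\Atilde_N - \Abar_N\big)_{d(j-1)+i} = \sum_{k=0}^{N-1} Y^{(i)}_{k}\, \Delta_k Y^{(j),c}\, \mathbb{I}\{|\Delta Y^{(j)}_{k}| > v_N^{(j)}\} - \sum_{k=0}^{N-1} Y^{(i)}_{k}\, \Delta_k Y^{(j),J}\, \mathbb{I}\{|\Delta Y^{(j)}_{k}| \leq v_N^{(j)}\},$$
isolating the two error sources: continuous increments that are wrongly discarded, and jumps that wrongly survive the filter.

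For the first sum, the threshold $v_N^{(j)} = \Delta_N^{\beta^{(j)}}$ with $\beta^{(j)} < 1/2$ dominates the Gaussian scale $O(\Delta_N^{1/2})$ of a continuous increment, so the indicator is active essentially only on the $O_P(T_N)$ intervals that contain a jump; each such term is of order $\Delta_N^{1/2}$, giving a total of order $T_N \Delta_N^{1/2} = o(T_N^{1/2})$ by the relative growth $\Delta_N^{1/2} T_N = o(1)$ of Assumption~\ref{assumption:high-frequency-asymptotics}. For the second sum, Assumption~\ref{assumption:jump-height-finite-activity} ensures that the probability of a single jump height lying in $[-2\Delta_N^{\beta^{(i)}}, 2\Delta_N^{\beta^{(i)}}]$ is $o(T_N^{-1})$; since there are $O_P(T_N)$ jumps on $[0,T_N]$, a first-moment argument shows that with probability tending to one no jump combines with its accompanying continuous part to remain below the threshold, so this sum is eventually identically zero. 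Together these bounds give $T_N^{-1/2}(\Atilde_N - \Abar_N) \probconv \boldsymbol{0}_{d^2}$.

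The main obstacle is the second error source: what must be controlled is not the raw jump height but the magnitude of the \emph{total} increment $\Delta Y^{(j)}_k$, in which a genuine jump is perturbed by a Brownian contribution of order $v_N^{(j)}$. The factor $2$ in Assumption~\ref{assumption:jump-height-finite-activity} is precisely what absorbs this perturbation, guaranteeing that a true jump cannot be misread as a continuous fluctuation; turning this heuristic into a uniform $o(T_N^{-1})$ estimate over all jump epochs, rather than the comparatively routine first-source bound, is where the technical lemmas of Appendix~\ref{section:proofs} carry the argument.
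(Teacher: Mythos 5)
Your proposal is correct and follows essentially the same route as the paper: Theorem \ref{theorem:discrete-vec-finite-activity} is proved there by combining the stable CLT for the unfiltered estimator (Lemma \ref{lemma:cv-estimator-continuous-component}) with the negligibility $T_N^{1/2}(\boldsymbol{\widetilde{\psi}}_N - \boldsymbol{\overline{\psi}}_N) \probconv \boldsymbol{0}_{d^2}$ (Lemma \ref{lemma:consistency-conservation-jump-filtering-finite}) via Slutsky's lemma, exactly as you outline. Your two error sources are precisely what the paper controls through the consensus event of Lemma \ref{lemma:equivalence-filtering-no-jump} (jumps wrongly surviving the filter, killed by the $o(T_N^{-1})$ small-jump probability and the $O_P(T_N)$ jump count) and the $O(\Delta_N^{1/2}T_N)$ bound of Lemma \ref{lemma:cts_filtered_approx} (continuous increments discarded on jump intervals).
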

\begin{proof}
By Slutsky's lemma and Lemmas \ref{lemma:cv-estimator-continuous-component} \& \ref{lemma:consistency-conservation-jump-filtering-finite}.
\end{proof}

Next, from its definition, the $\boldTheta$-GrOU parameters follows a similar central limit theorem.
\begin{theorem}
\label{theorem:discrete-vec-finite-activity-2-params}
Assume Assumptions \ref{assumption:strong-solution-vector}--\ref{assumption:theta-grou} \& \ref{assumption:high-frequency-asymptotics}--\ref{assumption:jump-height-finite-activity} hold for a $\boldTheta$-GrOU process $\YY$ and $(\beta^{(i)}, \ i \in \{1,\dots,d\}) \subset (0,1/2)$, then for $v^{(i)}_N = \Delta^{\beta^{(i)}}_N$, $i \in \{1,\dots,d\}$, the discretised jump-filtered estimator $\boldsymbol{\widetilde{\psi}}_N$ satisfies 
$$T_N^{1/2}(\widetilde{\boldTheta}_N - \boldTheta) \stableconv \mathcal{N}\left(\boldsymbol{0}_{2},\ \boldsymbol{\widetilde{G}}_\infty \right), \quad \text{as $N \rightarrow \infty$,}$$
where $\boldsymbol{\widetilde{G}}_\infty$ is defined as
$$
d^{-2}\begin{pmatrix}
	\rho(\An) \cdot \vectorise(\A)^\top \\
	\vectorise(\Id)^\top 
\end{pmatrix}  \cdot \E\left(\YY_\infty \YY_\infty^\top \right)^{-1} \otimes \Si \cdot 
\begin{pmatrix}
	\rho(\An) \cdot \vectorise(\A)^\top \\
	\vectorise(\Id)^\top
\end{pmatrix}^\top,
$$
where $\rho(\An) := d \cdot (\sum_{i,j}a_{ij}/n_i)^{-1}\cdot \indicator_{\{\|\boldsymbol{X}\| \neq 0\}}(\A)$.

\end{theorem}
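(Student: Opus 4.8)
The plan is to exhibit $\widetilde{\boldTheta}_N$ as a \emph{deterministic} linear image of the $\boldPsi$-estimator $\widetilde{\boldPsi}_N = \Sn_N^{-1}\Atilde_N$ and then to transport the stable central limit theorem of Theorem \ref{theorem:discrete-vec-finite-activity} through that map. Writing
$$\boldsymbol{M} := \begin{pmatrix} \rho(\An)\,\vectorise(\A)^\top \\ \vectorise(\Id)^\top \end{pmatrix},$$
Definition \ref{definition:theta-discretised-estimator} reads $\widetilde{\boldTheta}_N = d^{-1}\boldsymbol{M}\,\widetilde{\boldPsi}_N$. By Assumption \ref{assumption:A-known} the adjacency matrix $\A$, hence $\An$ and $\rho(\An)$, are deterministic, so $\boldsymbol{M}$ is a constant $2 \times d^2$ matrix. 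Since $\mathcal{F}$-stable convergence is preserved under multiplication by a deterministic matrix (the transformation $\vx \mapsto d^{-1}\boldsymbol{M}\vx$ is continuous with constant coefficients), it will suffice to (i) verify the algebraic identity $d^{-1}\boldsymbol{M}\boldPsi = \boldTheta$ so that the centering matches, and (ii) compute the image covariance.

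For the applicability of the $\boldPsi$-level result I would first recall, as noted after Definition \ref{definition:theta-discretised-estimator}, that a $\boldTheta$-GrOU process is a $\boldPsi$-GrOU process with $\boldPsi = \vectorise(\Q(\boldTheta)) = \theta_2\vectorise(\Id) + \theta_1\vectorise(\An)$, and that in this case Assumption \ref{assumption:psi-grou} collapses to Assumption \ref{assumption:theta-grou}; hence Theorem \ref{theorem:discrete-vec-finite-activity} applies verbatim to $\widetilde{\boldPsi}_N$. Step (i) then reduces to four inner products. The two cross terms vanish because $a_{ii}=0$: $\vectorise(\A)^\top\vectorise(\Id) = \sum_i a_{ii} = 0$ and $\vectorise(\Id)^\top\vectorise(\An) = \trace(\An) = 0$. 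The two surviving terms are $\vectorise(\Id)^\top\vectorise(\Id) = \trace(\Id) = d$ and, since $a_{ij}\in\{0,1\}$ and $[\An]_{ij}=n_i^{-1}a_{ij}$, $\vectorise(\A)^\top\vectorise(\An) = \sum_{i,j} a_{ij}^2/n_i = \sum_{i,j} a_{ij}/n_i$. Substituting the definition of $\rho(\An)$ (which is exactly $d\,(\sum_{i,j} a_{ij}/n_i)^{-1}$ on the event $\{\An\neq\boldsymbol{0}\}$, guaranteed whenever the graph has at least one edge) yields the first coordinate $\theta_1$ and the second $\theta_2$, that is $d^{-1}\boldsymbol{M}\boldPsi = \boldTheta$.

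For step (ii), linearity gives $T_N^{1/2}(\widetilde{\boldTheta}_N - \boldTheta) = d^{-1}\boldsymbol{M}\,T_N^{1/2}(\widetilde{\boldPsi}_N - \boldPsi)$, and the stable CLT of Theorem \ref{theorem:discrete-vec-finite-activity} together with the linear-image rule for (stable) Gaussian limits produces a centered Gaussian law with covariance $(d^{-1}\boldsymbol{M})\bigl(\E(\YY_\infty\YY_\infty^\top)^{-1}\otimes\Si\bigr)(d^{-1}\boldsymbol{M})^\top = d^{-2}\boldsymbol{M}\bigl(\E(\YY_\infty\YY_\infty^\top)^{-1}\otimes\Si\bigr)\boldsymbol{M}^\top$, which is precisely $\boldsymbol{\widetilde{G}}_\infty$. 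I expect the only genuine obstacle to be the bookkeeping in step (i)—checking that the normalisation $\rho(\An)$ is exactly what is required to invert the linear parametrisation $\boldTheta\mapsto\boldPsi$; the convergence itself is an immediate consequence of the already-established $\boldPsi$-level theorem and the stability of $\mathcal{F}$-stable convergence under deterministic linear maps.
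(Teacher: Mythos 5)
Your proposal is correct and follows exactly the paper's own (one-line) argument: the paper proves this result by invoking Theorem \ref{theorem:discrete-vec-finite-activity} together with the definition of $\widetilde{\boldTheta}_N$ as a deterministic linear image of $\widetilde{\boldPsi}_N$, which is precisely your route. Your added verifications---that $d^{-1}\boldsymbol{M}\boldPsi = \boldTheta$ via $\vectorise(\A)^\top\vectorise(\Id)=0$, $\trace(\An)=0$, $\vectorise(\A)^\top\vectorise(\An)=\sum_{i,j}a_{ij}/n_i$, and that $\mathcal{F}$-stable convergence passes through constant linear maps with the covariance transforming to $d^{-2}\boldsymbol{M}\bigl(\E(\YY_\infty\YY_\infty^\top)^{-1}\otimes\Si\bigr)\boldsymbol{M}^\top$---are exactly the bookkeeping the paper leaves implicit.
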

\begin{proof}
	This follows from Theorem \ref{theorem:discrete-vec-finite-activity} and the definition of $\widetilde{\boldTheta}_N$.
\end{proof}

\begin{remark}
\label{remark:finite-compound-poisson}
Under Assumptions \ref{assumption:mle-convergence} and \ref{assumption:jump-height-finite-activity}, the \levy-\Ito decomposition (see Appendix \ref{appendix:levy-ito-decomposition}) yields that there exists a centred Gaussian \Levy process $\WW_t$ with covariance matrix $\Si$ and almost-surely continuous paths as well as a compound Poisson process $\JJ_t$ independent of $\WW_t$ such that
$$\LL_t = \WW_t + \JJ_t.$$
 Indeed, there exists a family of Poisson processes $N_t^{(j)}$ with elementwise intensity $\boldsymbol{\lambda} = (\lambda^{(1)},\dots,\lambda^{(d)}) := \nu(\R^d) < \infty$ such that $\JJ_t^{(j)}=\sum_{k=0}^{N_t^{(j)}}Z^{(j)}_k$ where $Z^{(j)}_k$ are i.i.d.\ jump heights with distribution $F^{(i)}$ as in Assumption \ref{assumption:jump-height-finite-activity}.	
\end{remark}

To prove those afore-mentioned results, we rely on the consistency of the jump-filtered estimator $\widetilde{\boldPsi}_N$ with respect to the unfiltered estimator $\boldsymbol{\overline{\boldPsi}}_N$ given in the following lemma:

\begin{lemma}{(Consistency of the jump filtering with finite jump activity)\\}
\label{lemma:consistency-conservation-jump-filtering-finite}
Assume Assumptions \ref{assumption:strong-solution-vector}--\ref{assumption:A-known} \& \ref{assumption:psi-grou}--\ref{assumption:jump-height-finite-activity} hold for a GrOU process $(\YY_t,\ t \geq 0)$ and $(\beta^{(i)}: i \in \{1,\dots,d\})\subset (0,1/2)$.
If $v_N^{(i)} := \Delta_N^{\beta^{(i)}}$, then we have as $N \rightarrow \infty$:
$$T_N^{1/2}(\boldsymbol{\widetilde{\psi}}_N - \boldsymbol{\overline{\psi}}_N) \probconv \boldsymbol{0}_{d^2}, \quad \text{as $N\rightarrow\infty$.}
$$
\end{lemma}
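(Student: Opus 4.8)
The plan is to factor the common matrix out of both estimators and reduce the claim to a statement about the numerators. Since $\boldsymbol{\widetilde{\psi}}_N - \boldsymbol{\overline{\psi}}_N = \Sn_N^{-1}(\Atilde_N - \Abar_N)$, I would write $T_N^{1/2}(\boldsymbol{\widetilde{\psi}}_N - \boldsymbol{\overline{\psi}}_N) = (T_N\Sn_N^{-1})\cdot T_N^{-1/2}(\Atilde_N - \Abar_N)$. The ergodic argument already invoked for Lemma \ref{lemma:cv-estimator-continuous-component} gives $T_N^{-1}\Kn_N\probconv\E(\YY_\infty\YY_\infty^\top)$, a positive definite limit because $\Si\in\Ss^{++}$ makes the stationary law non-degenerate; hence $T_N\Sn_N^{-1} = (T_N^{-1}\Kn_N)^{-1}\otimes\Id$ is $O_P(1)$, and by Slutsky it suffices to prove $T_N^{-1/2}(\Atilde_N - \Abar_N)\probconv\boldsymbol{0}_{d^2}$. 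Using the \levy-\Ito decomposition $\LL = \WW + \JJ$ of Remark \ref{remark:finite-compound-poisson} (licit since $\boldB=\boldsymbol{0}_d$ by Assumption \ref{assumption:mle-convergence}), one has $\Delta_k Y^{(j)} - \Delta_k Y^{(j),c} = \Delta_k J^{(j)}$, so the $(d(j-1)+i)$-th component of $\Atilde_N - \Abar_N$ splits as $\Xi^{(1)}_N - \Xi^{(2)}_N$, where $\Xi^{(1)}_N := \sum_k Y^{(i)}_k\,\Delta_k Y^{(j),c}\,\mathbb{I}\{|\Delta_k Y^{(j)}|>v_N^{(j)}\}$ is the continuous increment retained although the filter rejects the interval, and $\Xi^{(2)}_N := \sum_k Y^{(i)}_k\,\Delta_k J^{(j)}\,\mathbb{I}\{|\Delta_k Y^{(j)}|\leq v_N^{(j)}\}$ is a jump that survives the filter. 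I would show $T_N^{-1/2}\Xi^{(1)}_N\probconv 0$ and $T_N^{-1/2}\Xi^{(2)}_N\probconv 0$ for each $i,j$.

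The backbone of both estimates is the \emph{good event} $G_N := \bigcap_k\{|\Delta_k Y^{(j),c}|\leq v_N^{(j)}\}$, on which no continuous increment crosses the threshold; I will argue $\proba_0(G_N^c)\to 0$ below. Granting this, $\Xi^{(2)}_N$ is controlled as follows: on $G_N$ a surviving jump forces $|\Delta_k J^{(j)}| = |\Delta_k Y^{(j)} - \Delta_k Y^{(j),c}|\leq 2v_N^{(j)} = 2\Delta_N^{\beta^{(j)}}$, while intervals carrying two or more jumps have probability $O(\Delta_N^2)$ each and are globally negligible since $N\Delta_N^2\asymp T_N\Delta_N\to 0$. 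By Remark \ref{remark:finite-compound-poisson} the expected number of intervals carrying a single jump of height at most $2\Delta_N^{\beta^{(j)}}$ is of order $\lambda^{(j)} T_N[F^{(j)}(2\Delta_N^{\beta^{(j)}}) - F^{(j)}(-2\Delta_N^{\beta^{(j)}})] = \lambda^{(j)}T_N\cdot o(T_N^{-1}) = o(1)$ by Assumption \ref{assumption:jump-height-finite-activity}; a Markov bound then gives $\proba_0(\{\Xi^{(2)}_N\neq 0\}\cap G_N)\to 0$, so $\Xi^{(2)}_N$, and a fortiori $T_N^{-1/2}\Xi^{(2)}_N$, vanishes in probability.

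For $\Xi^{(1)}_N$, note that on $G_N$ the indicator is zero on every jump-free interval, so only intervals carrying a jump (event $A^{(j)}_k$) contribute; bounding the survival indicator by $1$ leaves $|\Xi^{(1)}_N|\leq\big|\sum_k Y^{(i)}_k\Delta_k Y^{(j),c}\mathbb{I}\{A^{(j)}_k\}\big|$. Splitting $\Delta_k Y^{(j),c}$ into its Brownian part $\Delta_k W^{(j)}$ and the drift $-\int_{t_k}^{t_{k+1}}(\Q\YY_{s-})^{(j)}\,ds$, I would condition on the (independent) jump path: the Brownian contribution is then a martingale transform with respect to the Brownian filtration, with conditional variance at most $\Si_{jj}\Delta_N\sum_k (Y^{(i)}_k)^2\mathbb{I}\{A^{(j)}_k\} = O_P(T_N\Delta_N)$, because there are $O_P(T_N)$ jump intervals and the empirical second moments of $\YY$ are $O_P(1)$ by stationarity and Assumption \ref{assumption:mle-convergence}; hence after multiplication by $T_N^{-1/2}$ it is $O_P(\Delta_N^{1/2})\to 0$. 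The drift contribution is bounded in modulus by $\sum_k|Y^{(i)}_k|\,\Delta_N\sup_{[t_k,t_{k+1}]}|(\Q\YY_{s-})^{(j)}|\,\mathbb{I}\{A^{(j)}_k\} = O_P(T_N\Delta_N)$, and $T_N^{-1/2}\cdot O_P(T_N\Delta_N) = O_P(T_N^{1/2}\Delta_N)\to 0$ since Assumption \ref{assumption:high-frequency-asymptotics} gives $T_N\Delta_N^{1/2}=o(1)$ and therefore $T_N^{1/2}\Delta_N\to 0$.

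The hard part will be the drift estimate inside $\proba_0(G_N^c)\to 0$. By a union bound I split $\{|\Delta_k Y^{(j),c}|>v_N^{(j)}\}$ into a Brownian crossing and a drift crossing. The Brownian part has a Gaussian tail $\exp(-c(v_N^{(j)})^2/\Delta_N)=\exp(-c\Delta_N^{2\beta^{(j)}-1})$, whose sum over the $N\asymp T_N/\Delta_N$ intervals vanishes because $2\beta^{(j)}-1<0$ makes the exponent blow up super-polynomially. The drift part is delicate precisely because $\YY$ is only assumed square integrable (Assumption \ref{assumption:mle-convergence}), so no exponential tail is available and I must use Chebyshev: Cauchy--Schwarz and stationarity give $\E[(\int_{t_k}^{t_{k+1}}(\Q\YY_{s-})^{(j)}ds)^2]=O(\Delta_N^2)$, whence the drift exceeds $v_N^{(j)}/2$ with probability $O(\Delta_N^{2-2\beta^{(j)}})$, and summing over the $N\asymp T_N/\Delta_N$ intervals produces the rate $T_N\Delta_N^{1-2\beta^{(j)}}$. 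This is exactly what Assumption \ref{assumption:cv-rate-beta} forces to be $o(1)$, the $\wedge\,1/2$ absorbing the regime $\beta^{(j)}\leq 1/4$ where the baseline $T_N\Delta_N^{1/2}=o(1)$ already dominates. This is the only step where the precise threshold $v_N^{(j)}=\Delta_N^{\beta^{(j)}}$ together with the growth condition \ref{assumption:cv-rate-beta} are indispensable; everything else is bookkeeping and Slutsky.
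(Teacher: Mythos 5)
Your proposal is correct, and its two pillars are the same as the paper's: the ergodic limit $T_N\Sn_N^{-1}\rightarrow\E(\YY_\infty\YY_\infty^\top)^{-1}\otimes\Id$ reduces the claim to $T_N^{-1/2}(\Atilde_N-\Abar_N)\probconv\boldsymbol{0}_{d^2}$, and the threshold $v_N^{(j)}=\Delta_N^{\beta^{(j)}}$ is calibrated so that continuous increments essentially never cross it (Gaussian tail plus Chebyshev on the drift, where Assumption \ref{assumption:cv-rate-beta} enters) while any jump that slips under it must have height at most $2v_N^{(j)}$ and is therefore rare by Assumption \ref{assumption:jump-height-finite-activity}. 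The bookkeeping is genuinely different, though. The paper conditions on the single coincidence event of Definition \ref{definition:events-small-increments-no-jumps} (filter $\equiv$ no-jump indicator, Lemma \ref{lemma:equivalence-filtering-no-jump}, cited from Mai), after which the error collapses to continuous increments on jump intervals, bounded purely in $L^1$ in Lemma \ref{lemma:cts_filtered_approx} at rate $O(\Delta_N^{1/2}T_N)$. You instead split the error into falsely rejected continuous increments $\Xi^{(1)}_N$ and falsely retained jumps $\Xi^{(2)}_N$, prove that $\Xi^{(2)}_N=0$ with probability tending to one (slightly stronger than the paper's control of that piece), handle the Brownian part of $\Xi^{(1)}_N$ by a martingale-variance argument rather than the paper's independence-based $L^1$ bound, and re-derive inline the content of the cited Lemmas \ref{lemma:lemma-3.7}--\ref{lemma:equivalence-filtering-no-jump}. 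The paper's route is shorter because it delegates to Mai's lemmas; yours is self-contained and makes the role of each assumption (in particular the $\wedge\,1/2$ in Assumption \ref{assumption:cv-rate-beta}) more transparent.

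Two justification slips, both harmless to the conclusion. First, your claim that the drift contribution to $\Xi^{(1)}_N$ is $O_P(T_N\Delta_N)$ is not established as stated: $\sup_{[t_k,t_{k+1}]}|(\Q\YY_{s-})^{(j)}|$ is \emph{not} independent of the jump-occurrence indicator, and with only the second moments granted by Assumption \ref{assumption:mle-convergence} you cannot Cauchy--Schwarz the product $|Y^{(i)}_k|\sup_s|(\Q\YY_{s-})^{(j)}|$. The clean estimate (the paper's) pairs $|Y^{(i)}_k|\indicator\{A^{(j)}_k\}$ against $|\Delta_k D^{(j)}|$, using $\E(|\Delta_k D^{(j)}|^2)^{1/2}=O(\Delta_N)$ and $\E\bigl((Y^{(i)}_k)^2\indicator\{A^{(j)}_k\}\bigr)=\E\bigl((Y^{(i)}_k)^2\bigr)\proba(A^{(j)}_k)=O(\Delta_N)$, giving $O(T_N\Delta_N^{1/2})$ in $L^1$; after scaling by $T_N^{-1/2}$ this is $O((T_N\Delta_N)^{1/2})\rightarrow 0$, so your argument goes through with this weaker rate. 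Second, "$O_P(T_N)$ jump intervals and empirical second moments $O_P(1)$" does not by itself give $\sum_k (Y^{(i)}_k)^2\indicator\{A^{(j)}_k\}=O_P(T_N)$; the correct one-line justification is again independence of $Y^{(i)}_k$ from the future jump occurrence, whence $\E\bigl(\sum_k (Y^{(i)}_k)^2\indicator\{A^{(j)}_k\}\bigr)=O(N\Delta_N)=O(T_N)$ and Markov's inequality.
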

\begin{proof}
	See Appendix \ref{proof:lemma:consistency-conservation-jump-filtering-finite}.
\end{proof}

From Corollary \ref{corollary:consistency-estimator-continuous-component}, this means that $\boldsymbol{\widetilde{\psi}}_N$ is also consistent.

\subsubsection{Infinite jump activity case}
\label{section:infinite-activity}
The \emph{infinite activity models} have been investigated numerous times as they provide realistic scenarios in financial applications, avoiding for instance to introduce a Brownian component as empirically determined in \cite{carr2002}. More precisely, it entails assuming that $\int_{\R^d} \nu(d\boldsymbol{z}) = \infty$.

The \levy-\Ito decomposition with respect to $\tau(\boldsymbol{z}) := \mathbb{I}_{\{\boldsymbol{x} \in \R^d : \|\boldsymbol{x}\| \leq 1\}}(\boldsymbol{z})$ describes the pure-jump process $\JJ$ as the sum of jumps larger than one $\JJ^1$ and jumps smaller than one $\JJ^2$, such that $\JJ \overset{d}{=} \JJ^1 + \JJ^2$. Moreover, the infinite jump activity case features an infinite number of those small jumps in any time interval. In this context, an additional compensation on the \Levy measure $\nu$ is provided to control $\JJ^{2}$, see Appendix \ref{appendix:levy-ito-decomposition}. We also extend Assumptions 4.1, \citep{mai2014efficient} to our framework as follows:

\begin{assumption}
\label{assumption:infinite_activity}
For a GrOU process $\YY$, assume that
$$\int_{\R^d}\nu(d\boldsymbol{z}) = \infty,$$
as well as the following properties:
\begin{enumerate}[(i)]
    \item \label{assumption:infinite_activity:cross-moments} $\YY$ has bounded fourth-order moments, i.e.\ for any $t \geq 0$, $\E\left(\|\YY_t\|^4\right) < \infty.$

    \item \label{assumption:infinite_activity:squared_activity} The squared magnitude of small jumps remains controlled in the sense that there exists $\alpha \in (0,2)$ such that, as $v \rightarrow 0$, we have
    $\int_{[-v,v]^d} \|\boldsymbol{x}\|^2 \nu(d\boldsymbol{x}) = O(v^{2-\alpha}).$ \label{assumption:enum-infinite-blumenthal}
    \item \label{assumption:infinite_activity:jumps_2_epsilon}there exist $\epsilon_0 > 0$ and $n_0 \in \mathbb{N}$ such that for all $\epsilon \leq \epsilon_0$ and $n \geq n_0$, one has
    $$\E\left(\Delta_k\JJ^{2,(i)}\indicator\{|\Delta_k\JJ^{2,(i)}| < \epsilon\}\right) = 0,$$
    for all $k \in \{0,\dots,N-1\}$ and $i\in\{1,\dots,d\}$.
\end{enumerate}
\end{assumption}
The relationship between the small jumps behaviour between the finite and infinite cases is given by Remark 4.3, \cite{mai2014efficient}: if the \Levy measure $\nu$ is finite, then Assumption \ref{assumption:jump-height-finite-activity} implies Assumption \ref{assumption:infinite_activity}-\eqref{assumption:enum-infinite-blumenthal} if $$F^{(i)}\left((-\Delta_N^{\beta^{(i)}},\Delta_N^{\beta^{(i)}})\right) = O(\Delta_N^{\beta^{(i)}}), \quad \text{for all }i \in \{1,\dots,d\}.$$ 
This means that a small-jump density that is regular enough around zero is sufficient to imply the jump-related regularity conditions. In addition, Assumption \ref{assumption:infinite_activity}-\eqref{assumption:infinite_activity:jumps_2_epsilon} yields that small jumps described by the \Levy measure of $\JJ^2$ (i.e.\ $\nu$ restricted to $[-1,1]$) are centred around zero. A related result is that if the \Levy measure of $\JJ^{2,(j)}$ is \emph{symmetric} (potentially restricted to an interval $(-\epsilon_0, \epsilon_0)$) then Assumption \ref{assumption:infinite_activity}-\eqref{assumption:infinite_activity:jumps_2_epsilon} holds \citep[Lemma 4.4]{mai2014efficient}. This includes symmetric $\alpha$-stable distributions or zero-mean generalised hyperbolic distributions \citep{Eberlein2001Ghyp} such as the Student's $t$-distribution, the Variance-Gamma distribution or zero-mean Normal-inverse Gaussian distribution (see Section \ref{section:applications}).

We hereby propose to adapt the proofs of Section 4, \cite{mai2014efficient} to the multivariate framework. As a stepping stone for the consistency and asymptotic normality of the jump-filtered discretised estimators, we prove the consistency of the jump-filtered estimator with respect to the unfiltered one similar to Lemma \ref{lemma:consistency-conservation-jump-filtering-finite} as follows:
\begin{lemma}{(Consistency of the jump filtering with infinite jump activity)\\}
\label{lemma:consistency-conservation-jump-filtering-infinite}
Suppose Assumptions \ref{assumption:strong-solution-vector}, \ref{assumption:A-known}, \ref{assumption:psi-grou}--\ref{assumption:cv-rate-beta} \& \ref{assumption:infinite_activity} hold for a $\boldPsi$-GrOU process $\YY$ and $(\beta^{(i)}, i\in \{1,\dots,d\})\subset(0,1/2)$, then for $v_N^{(i)}=\Delta_N^{\beta^{(i)}}$, $i\in \{1,\dots,d\}$, we have
$$T_N^{1/2}(\boldsymbol{\widetilde{\psi}}_N - \boldsymbol{\overline{\psi}}_N) \probconv \boldsymbol{0}_{d^2}, \quad \text{as $N\rightarrow\infty$.}
$$
\end{lemma}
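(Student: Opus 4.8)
The plan is to follow the architecture of the finite-activity argument (Lemma~\ref{lemma:consistency-conservation-jump-filtering-finite}), but to replace the ``count the isolated jumps'' step by a second-moment control of the accumulated small jumps, which is where Assumptions~\ref{assumption:infinite_activity}-\eqref{assumption:enum-infinite-blumenthal} and \eqref{assumption:infinite_activity:jumps_2_epsilon} enter. First I would reduce the normalisation. Writing $\Sn_N=\Kn_N\otimes\Id$ and $T_N^{1/2}\Sn_N^{-1}=(T_N^{-1}\Sn_N)^{-1}\,T_N^{-1/2}$, the ergodic theorem for the strictly stationary solution of Assumption~\ref{assumption:strong-solution-vector} (with square integrability) gives $T_N^{-1}\Kn_N\probconv\E(\YY_\infty\YY_\infty^\top)\in\Ss^{++}$, so $(T_N^{-1}\Sn_N)^{-1}\probconv\E(\YY_\infty\YY_\infty^\top)^{-1}\otimes\Id$. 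By Slutsky's lemma it then suffices to prove $T_N^{-1/2}(\Atilde_N-\Abar_N)\probconv\boldsymbol{0}_{d^2}$.

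Since the mean-reversion term is common to $\YY$ and $\YY^c$ and $\boldB=\boldsymbol{0}_d$ (Assumption~\ref{assumption:mle-convergence}), one has $\Delta_k Y^{(j)}-\Delta_k Y^{(j),c}=\Delta_k J^{(j)}$, so componentwise
\begin{equation*}
(\Atilde_N-\Abar_N)_{d(j-1)+i}=\sum_{k}Y_k^{(i)}\Delta_k Y^{(j),c}\,\indicator\{|\Delta_k Y^{(j)}|>v_N^{(j)}\}-\sum_{k}Y_k^{(i)}\Delta_k J^{(j)}\,\indicator\{|\Delta_k Y^{(j)}|\le v_N^{(j)}\}.
\end{equation*}
Call these the \emph{continuous-misfiltered} and \emph{jump-kept} parts. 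I would split the jump-kept part along $\JJ=\JJ^1+\JJ^2$ and further split $\JJ^2$ by jump size into a very small piece (jumps of norm $\le v_N^{(j)}$) and a medium piece (jumps of norm in $(v_N^{(j)},1]$). The large jumps $\JJ^1$ and the medium piece together have finite activity $O((v_N^{(j)})^{-\alpha})=O(\Delta_N^{-\alpha\beta^{(j)}})$ by the Blumenthal--Getoor bound (Assumption~\ref{assumption:infinite_activity}-\eqref{assumption:enum-infinite-blumenthal}), so the probability that such a jump falls in a given mesh interval is $O(\Delta_N^{1-\alpha\beta^{(j)}})$, and on its complement the corresponding increments exceed the threshold and are correctly discarded. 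Their net contribution is therefore carried by a rare event whose accumulated $L^1$-norm I would bound exactly as in the finite case, the rate being supplied by Assumption~\ref{assumption:cv-rate-beta}; the attendant compensator drift $\int_{v_N^{(j)}<\|\boldsymbol{x}\|\le1}x_j\,\nu(d\boldsymbol{x})$ is controlled by the centring hypothesis (Assumption~\ref{assumption:infinite_activity}-\eqref{assumption:infinite_activity:jumps_2_epsilon}).

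The decisive and most delicate contribution is that of the infinitely many very small jumps, and it is here that a plain $L^1$ estimate fails: a Cauchy--Schwarz bound on $\sum_k Y_k^{(i)}\Delta_k J^{2,\le,(j)}$ (where $J^{2,\le,(j)}$ collects the compensated jumps of norm at most $v_N^{(j)}$) would leave a factor of order $T_N^{1/2}\Delta_N^{(\beta^{(j)}(2-\alpha)-1)/2}$, which diverges. The remedy is to exploit the martingale structure: because these jumps are compensated and centred (Assumption~\ref{assumption:infinite_activity}-\eqref{assumption:infinite_activity:jumps_2_epsilon}), the summand $Y_k^{(i)}\Delta_k J^{2,\le,(j)}$ has vanishing $\mathcal{F}_{t_k}$-conditional mean, so I would first remove the threshold indicator (the discarded correction lives on the rare event $\{|\Delta_k Y^{(j)}|>v_N^{(j)}\}$ and is absorbed into the bounds above) and then estimate the resulting martingale by its predictable quadratic variation. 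The second moment factorises, and using $\E[(\Delta_k J^{2,\le,(j)})^2]\le\Delta_N\int_{[-v_N^{(j)},v_N^{(j)}]^d}\|\boldsymbol{x}\|^2\nu(d\boldsymbol{x})=O(\Delta_N^{\,1+\beta^{(j)}(2-\alpha)})$ together with $\E\|\YY\|^4<\infty$ (Assumption~\ref{assumption:infinite_activity}-\eqref{assumption:infinite_activity:cross-moments}) gives
\begin{equation*}
\Var\Big(T_N^{-1/2}\sum_k Y_k^{(i)}\Delta_k J^{2,\le,(j)}\Big)=O\big(T_N^{-1}\,N\Delta_N\,\Delta_N^{\beta^{(j)}(2-\alpha)}\big)=O\big(\Delta_N^{\beta^{(j)}(2-\alpha)}\big)=o(1),
\end{equation*}
since $N\Delta_N T_N^{-1}=O(1)$.

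Finally, the continuous-misfiltered part is handled by splitting $\{|\Delta_k Y^{(j)}|>v_N^{(j)}\}$ according to whether $|\Delta_k Y^{(j),c}|>v_N^{(j)}/2$ --- a Gaussian-tail event of super-polynomially small probability, as $v_N^{(j)}=\Delta_N^{\beta^{(j)}}\gg\Delta_N^{1/2}$ for $\beta^{(j)}<1/2$ --- or whether a jump of norm $\gtrsim v_N^{(j)}$ occurs, again of probability $O(\Delta_N^{1-\alpha\beta^{(j)}})$ per interval. Independence of the Brownian part from the jumps lets me factor $\E[|\Delta_k Y^{(j),c}|\,\indicator\{\cdots\}]$, and the fourth-moment bound controls the weights $Y_k^{(i)}$, producing an $L^1$-estimate that vanishes under Assumption~\ref{assumption:cv-rate-beta}. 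Collecting the three $o(1)$ contributions yields $T_N^{-1/2}(\Atilde_N-\Abar_N)\probconv\boldsymbol{0}_{d^2}$, and the claim follows from the first step. I expect the very-small-jump term to be the main obstacle, since it is precisely there that the centring assumption is indispensable to upgrade a failing first-moment bound into a convergent second-moment bound.
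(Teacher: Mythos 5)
Your opening reduction (ergodicity giving $T_N\Sn_N^{-1}\probconv\E(\YY_\infty\YY_\infty^\top)^{-1}\otimes\Id$, then Slutsky, reducing the claim to $T_N^{-1/2}(\Atilde_N-\Abar_N)\probconv\boldsymbol{0}_{d^2}$) and your identification of the three difficulties are exactly the paper's strategy: its own proof of this lemma is that same reduction followed by an appeal to Lemma \ref{lemma:continuous-martingale-part-infinite}. The gap lies in your execution of the jump estimates, and it sits precisely where you flag the delicacy. You truncate $\JJ^2$ by \emph{jump size} at $v_N^{(j)}$, then (a) treat the medium and large jumps by rare-event $L^1$ bounds ``as in the finite case'', and (b) for the very small jumps strip off the filter indicator and run a martingale variance bound, claiming the discarded correction ``is absorbed into the bounds above''. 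Neither step survives quantitative scrutiny. For (b), the correction is $T_N^{-1/2}\sum_k Y_k^{(i)}\Delta_kJ^{2,\le,(j)}\indicator\{|\Delta_kY^{(j)}|>v_N^{(j)}\}$, and the event in the indicator is typically triggered by $|\Delta_kJ^{2,\le,(j)}|$ itself being of order $v_N^{(j)}$, which happens with probability of order $\Delta_N^{1-\alpha\beta^{(j)}}$ per interval (cf.\ Lemma \ref{lemma:small_jump_negligeable}); the triangle-inequality/$L^1$ bound you propose is therefore of order $T_N^{-1/2}N\,v_N^{(j)}\Delta_N^{1-\alpha\beta^{(j)}}=T_N^{1/2}\Delta_N^{\beta^{(j)}(1-\alpha)}$, which diverges whenever $\alpha>1$ (and is not controlled by Assumption \ref{assumption:cv-rate-beta} even when $\alpha\leq1$). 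The same arithmetic kills (a): the medium jumps kept by the filter have size $\sim v_N^{(j)}$ but per-interval probability $\sim\Delta_N^{1-\alpha\beta^{(j)}}$, not $O(\Delta_N)$ as in the finite-activity case, so first-moment bounds diverge. These terms can only be closed in $L^2$, and the cross terms in that computation vanish only because of the \emph{exact} centring $\E(\Delta_k\JJ^{2,(j)}\indicator\{|\Delta_k\JJ^{2,(j)}|\leq2v_N^{(j)}\})=0$ supplied by Assumption \ref{assumption:infinite_activity}-\eqref{assumption:infinite_activity:jumps_2_epsilon}, which is a statement about increments truncated by \emph{increment} size and does not transfer to your size-truncated processes once the filter indicator has been discarded.

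The same mismatch undermines your other appeal to the centring hypothesis: de-compensating the medium piece leaves a drift $\Delta_N\int_{v_N^{(j)}<\|\boldsymbol{x}\|\leq1}x_j\,\nu(d\boldsymbol{x})$ per interval, and Assumption \ref{assumption:infinite_activity}-\eqref{assumption:infinite_activity:jumps_2_epsilon} says nothing about Lévy-measure integrals; for $\alpha>1$ and non-symmetric $\nu$ this integral may grow like $(v_N^{(j)})^{1-\alpha}$, so the drift contribution is not negligible. The paper's route avoids both problems by never truncating by jump size: the chain of Lemmas \ref{lemma:lemma-4.13-go-big-jump}, \ref{lemma:lemma-4.15-small-jump-sandwich} and \ref{lemma:lemma-4.12-small-inc-are-small-jumps} swaps the estimator's filter $\indicator\{|\Delta_kY^{(j)}|\leq v_N^{(j)}\}$ for $\indicator\{|\Delta_kJ^{2,(j)}|\leq2v_N^{(j)}\}$, so every dangerous term carries a jump factor that is simultaneously bounded pointwise by $2v_N^{(j)}$ and exactly centred, and the second-moment computations then close; Lemma \ref{lemma:continuous-martingale-part-infinite} assembles these pieces into the statement you need. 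To repair your argument you would have to reproduce that indicator-swapping machinery; the jump-size truncation does not let you bypass it.
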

\begin{proof}
	See Appendix \ref{proof:lemma:consistency-conservation-jump-filtering-infinite}.
\end{proof}

\begin{theorem}
\label{th:discrete_clt_infinite}
Suppose Assumptions \ref{assumption:strong-solution-vector}, \ref{assumption:A-known}, \ref{assumption:psi-grou}--\ref{assumption:cv-rate-beta} \& \ref{assumption:infinite_activity} hold for a $\boldPsi$-GrOU process $\YY$ and $(\beta^{(i)}, i\in \{1,\dots,d\})\subset(0,1/2)$, then for $v_N^{(i)}=\Delta_N^{\beta^{(i)}}$, $i\in \{1,\dots,d\}$, the discretised jump-filtered estimator $\boldsymbol{\widetilde{\psi}}_N$ satisfies
$$T_N^{1/2}(\boldsymbol{\widetilde{\psi}}_N - \boldsymbol{\psi}) \stableconv \mathcal{N}\left(\boldsymbol{0}_{d^2},\ \E\left(\YY_\infty \YY_\infty^\top \right)^{-1} \otimes \Si \right), \quad \text{as $N\rightarrow\infty$.}
$$
\end{theorem}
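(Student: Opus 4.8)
The plan is to prove Theorem \ref{th:discrete_clt_infinite} in exactly the same manner as its finite-activity counterpart Theorem \ref{theorem:discrete-vec-finite-activity}: combine the stable central limit theorem already available for the \emph{unfiltered} estimator with the consistency statement that controls the error introduced by jump filtering, and then invoke a stable-convergence version of Slutsky's lemma. The decomposition I would use is
$$T_N^{1/2}(\boldsymbol{\widetilde{\psi}}_N - \boldsymbol{\psi}) = T_N^{1/2}(\boldsymbol{\overline{\psi}}_N - \boldsymbol{\psi}) + T_N^{1/2}(\boldsymbol{\widetilde{\psi}}_N - \boldsymbol{\overline{\psi}}_N),$$
which isolates the well-understood unfiltered fluctuation from the discrepancy due to discarding increments above the thresholds $v_N^{(i)} = \Delta_N^{\beta^{(i)}}$.

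First I would invoke Lemma \ref{lemma:cv-estimator-continuous-component}, whose hypotheses (Assumptions \ref{assumption:strong-solution-vector}--\ref{assumption:A-known} \& \ref{assumption:psi-grou}--\ref{assumption:cv-rate-beta}) are all subsumed by those of the present theorem, to conclude that the first summand $T_N^{1/2}(\boldsymbol{\overline{\psi}}_N - \boldsymbol{\psi})$ converges $\mathcal{F}$-stably to $\mathcal{N}\left(\boldsymbol{0}_{d^2},\ \E\left(\YY_\infty \YY_\infty^\top \right)^{-1} \otimes \Si \right)$. Second, I would apply Lemma \ref{lemma:consistency-conservation-jump-filtering-infinite}, which holds precisely under the additional infinite-activity Assumption \ref{assumption:infinite_activity} imposed here, to conclude that the second summand $T_N^{1/2}(\boldsymbol{\widetilde{\psi}}_N - \boldsymbol{\overline{\psi}}_N)$ converges to $\boldsymbol{0}_{d^2}$ in probability. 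Since stable convergence is preserved under additive perturbations that vanish in probability, the sum then converges $\mathcal{F}$-stably to the same Gaussian limit, which is the claim.

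The one subtlety at this level is the justification that Slutsky's lemma transfers to the stable mode of convergence, i.e.\ that $X_N \stableconv X$ and $Y_N \probconv \boldsymbol{0}$ imply $X_N + Y_N \stableconv X$. I would verify this through the defining property of $\mathcal{F}$-stable convergence, namely that $(X_N, Z)$ converges in distribution to $(X, Z)$ for every bounded $\mathcal{F}$-measurable $Z$; adding the term $Y_N$, which vanishes in probability, leaves this joint limit unchanged by the ordinary Slutsky argument applied to the pair $(X_N + Y_N, Z)$ (see Corollary 3.6, \cite{hausler2015whystableconvergence}).

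I should stress that, at the level of the theorem itself, there is essentially no obstacle: all the analytic work is quarantined inside Lemma \ref{lemma:consistency-conservation-jump-filtering-infinite}. The genuine difficulty—and the reason the infinite-activity case warrants a separate treatment—lies in establishing that lemma, where one must show that the $T_N^{1/2}$-scaled sum over $k$ of the filtered-out contributions is negligible despite there being infinitely many small jumps in every interval. There the fourth-moment bound (Assumption \ref{assumption:infinite_activity}-\eqref{assumption:infinite_activity:cross-moments}), the Blumenthal--Getoor control $\int_{[-v,v]^d}\|\boldsymbol{x}\|^2\nu(d\boldsymbol{x}) = O(v^{2-\alpha})$ (Assumption \ref{assumption:infinite_activity}-\eqref{assumption:enum-infinite-blumenthal}), and the small-jump centring (Assumption \ref{assumption:infinite_activity}-\eqref{assumption:infinite_activity:jumps_2_epsilon}) all enter; but those are precisely the ingredients already packaged into the lemma I am entitled to assume here.
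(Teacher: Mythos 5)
Your proposal is correct and follows exactly the paper's own argument: the paper likewise decomposes the error into the unfiltered fluctuation (handled by Lemma \ref{lemma:cv-estimator-continuous-component}) plus the filtering discrepancy (handled by Lemma \ref{lemma:consistency-conservation-jump-filtering-infinite}), and concludes by Slutsky's lemma. Your additional remark justifying Slutsky in the stable mode of convergence is a sound elaboration of a step the paper leaves implicit.
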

\begin{proof}
By Lemma \ref{lemma:cv-estimator-continuous-component}, we know that the discretised unfiltered estimator $\boldsymbol{\overline{\psi}}_N$ converges stably in distribution as follows:
$$T_N^{1/2}(\boldsymbol{\overline{\psi}}_N - \boldsymbol{\psi}) \stableconv \mathcal{N}\left(\boldsymbol{0}_{d^2},\ \mathbb{E}\left\{\YY_\infty \YY_\infty^\top \right\}^{-1} \otimes \Sigma \right).$$
By Lemma \ref{lemma:consistency-conservation-jump-filtering-infinite} we have that
$T_N^{1/2}(\boldsymbol{\overline{\psi}}_N- \boldsymbol{\widetilde{\psi}}_N) \probconv 0, \quad \text{as $N\rightarrow\infty$.}$
We conclude by Slutsky's lemma.	
\end{proof}

Similarly, for the $\boldTheta$-GrOU estimator, we obtain the following theorem
\begin{theorem}
\label{theorem:discrete-vec-infinite-activity-2-params}
Suppose Assumptions \ref{assumption:strong-solution-vector}--\ref{assumption:theta-grou}, \ref{assumption:high-frequency-asymptotics}--\ref{assumption:cv-rate-beta} \& \ref{assumption:infinite_activity} hold for a $\boldTheta$-GrOU process $\YY$ and $(\beta^{(i)}, \ i \in \{1,\dots,d\}) \subset (0,1/2)$, then for $v^{(i)}_N = \Delta^{\beta^{(i)}}_N$, $i \in \{1,\dots,d\}$, the $\boldTheta$-GrOU discretised jump-filtered estimator $\widetilde{\boldTheta}_N$ satisfies 
$$T_N^{1/2}(\widetilde{\boldTheta}_N - \boldTheta) \stableconv \mathcal{N}\left(\boldsymbol{0}_{2},\ \boldsymbol{\widetilde{G}}_\infty \right), \quad \text{as $N \rightarrow \infty$,}$$
where $\boldsymbol{\widetilde{G}}_\infty$ is defined as
$$
d^{-2}\begin{pmatrix}
	\rho(\An) \cdot \vectorise(\A)^\top \\
 \vectorise(\Id)^\top 
\end{pmatrix}  \cdot \E\left(\YY_\infty \YY_\infty^\top \right)^{-1} \otimes \Si \cdot 
\begin{pmatrix}
	\rho(\An) \cdot \vectorise(\A)^\top \\
	\vectorise(\Id)^\top
\end{pmatrix}^\top,
$$
where $\rho(\An) := d \cdot (\sum_{i,j}a_{ij}/n_i)^{-1}\cdot \indicator_{\{\|\boldsymbol{X}\| \neq 0\}}(\A)$.
\end{theorem}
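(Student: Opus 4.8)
The plan is to obtain this result as an immediate consequence of Theorem~\ref{th:discrete_clt_infinite}, in exactly the same way that Theorem~\ref{theorem:discrete-vec-finite-activity-2-params} is deduced from Theorem~\ref{theorem:discrete-vec-finite-activity} in the finite activity case: the $\boldTheta$-GrOU estimator is a fixed (deterministic) linear image of the $\boldsymbol{\psi}$-GrOU estimator, so its stable limit is the image of the stable Gaussian limit already established for $\boldsymbol{\widetilde{\psi}}_N$.

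First I would record that, since $\boldsymbol{\widetilde{\psi}}_N = \Sn_N^{-1}\Atilde_N$, Definition~\ref{definition:theta-discretised-estimator} reads $\widetilde{\boldTheta}_N = \boldsymbol{M}\,\boldsymbol{\widetilde{\psi}}_N$ with the $2\times d^2$ matrix
$$\boldsymbol{M} := d^{-1}\begin{pmatrix} \rho(\An)\,\vectorise(\A)^\top \\ \vectorise(\Id)^\top \end{pmatrix}.$$
By Assumption~\ref{assumption:A-known} the matrix $\A$, and hence $\boldsymbol{M}$, is deterministic and independent of $N$. I would then check the algebraic identity $\boldsymbol{M}\,\boldsymbol{\psi} = \boldTheta$, recalling from the discussion following Definition~\ref{definition:theta-discretised-estimator} that $\boldsymbol{\psi} = \vectorise(\Q(\boldTheta)) = \theta_2\vectorise(\Id) + \theta_1\vectorise(\An)$. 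This reduces to the orthogonality relations $\vectorise(\A)^\top\vectorise(\Id) = \sum_i a_{ii} = 0$ and $\vectorise(\Id)^\top\vectorise(\An) = \sum_i n_i^{-1}a_{ii} = 0$, combined with $\vectorise(\A)^\top\vectorise(\An) = \sum_{i,j} a_{ij}/n_i$ and the normalisation built into $\rho(\An)$, which together yield $\boldsymbol{M}\boldsymbol{\psi} = (\theta_1,\theta_2)^\top$. Consequently $T_N^{1/2}(\widetilde{\boldTheta}_N - \boldTheta) = \boldsymbol{M}\,T_N^{1/2}(\boldsymbol{\widetilde{\psi}}_N - \boldsymbol{\psi})$.

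Finally, Theorem~\ref{th:discrete_clt_infinite} supplies the $\mathcal{F}$-stable convergence $T_N^{1/2}(\boldsymbol{\widetilde{\psi}}_N - \boldsymbol{\psi}) \stableconv \mathcal{N}(\boldsymbol{0}_{d^2},\, \E(\YY_\infty\YY_\infty^\top)^{-1}\otimes\Si)$. Since $\boldsymbol{M}$ is a fixed continuous linear map, I would push this limit through $\boldsymbol{M}$ and conclude $T_N^{1/2}(\widetilde{\boldTheta}_N - \boldTheta) \stableconv \mathcal{N}(\boldsymbol{0}_2,\, \boldsymbol{M}(\E(\YY_\infty\YY_\infty^\top)^{-1}\otimes\Si)\boldsymbol{M}^\top)$, the covariance being $\boldsymbol{\widetilde{G}}_\infty$ once the two factors $d^{-1}$ in $\boldsymbol{M}$ and $\boldsymbol{M}^\top$ are combined into the $d^{-2}$ prefactor. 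The only delicate point is that the limit is \emph{stable}, not merely weak; however, because $\boldsymbol{M}$ is non-random, preservation of $\mathcal{F}$-stable convergence under multiplication by $\boldsymbol{M}$ follows directly from the definition of stable convergence (a deterministic continuous map acts on the stable limit exactly as in ordinary weak convergence), so no probabilistic input beyond Theorem~\ref{th:discrete_clt_infinite} is required.
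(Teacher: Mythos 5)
Your proposal is correct and follows exactly the route of the paper, whose proof of Theorem~\ref{theorem:discrete-vec-infinite-activity-2-params} is the one-liner ``By Theorem~\ref{th:discrete_clt_infinite} and the definition of $\widetilde{\boldTheta}_N$''; you have merely made explicit the details the paper leaves implicit, namely writing $\widetilde{\boldTheta}_N = \boldsymbol{M}\boldsymbol{\widetilde{\psi}}_N$ for the deterministic $2\times d^2$ matrix $\boldsymbol{M}$, verifying $\boldsymbol{M}\boldsymbol{\psi} = \boldTheta$ via the vanishing diagonal of $\A$ and the normalisation $\rho(\An)$, and invoking preservation of $\mathcal{F}$-stable convergence under a fixed linear map to land on $\boldsymbol{\widetilde{G}}_\infty = \boldsymbol{M}\left(\E\left(\YY_\infty\YY_\infty^\top\right)^{-1}\otimes\Si\right)\boldsymbol{M}^\top$. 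These checks are sound, so the proposal is a correct (and more self-contained) rendering of the paper's argument.
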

\begin{proof}
	By Theorem \ref{th:discrete_clt_infinite} and the definition of $\widetilde{\boldTheta}_N$.
\end{proof}

As noted above in the finite activity case, from Corollary \ref{corollary:consistency-estimator-continuous-component}, this means that $\boldsymbol{\widetilde{\psi}}_N$ is also consistent.

We have now defined the discretised estimators in Section \ref{section:network-ou} and provided central limit theorems for both the $\boldTheta$-GrOU and $\boldPsi$-GrOU estimators in Section \ref{section:discretised-high-frequency-data} for \Levy driving noises for both finite and infinite jump activities. In the next section, we describe the different methodologies used to infer the GrOU parameters as well as recover the \Levy increments and estimate their distribution as means to perform parametric bootstrap.

\section{Asymptotic convergence of the Adaptive Lasso regularisation}
\label{section:adaptive-lasso}

Both the $\boldTheta$-GrOU and $\boldPsi$-GrOU demand the graph adjacency matrix $\A$ to be fully specified (Eqs.\ \eqref{eq:Q-using-theta} and \eqref{eq:Q-using-psi}), which restricts the applicability of those processes to data that is based solely on graphs. Sparse inference has risen as a way to control the dimensionality of the parameter vector by providing lean graph-like structure in high-dimensional problems \citep{chen2020community, ma2021sparseAutoregressions}.

In this section, we extend the $L^1$-penalised likelihood function from \cite{Gaiffas2019SparseOUProcess, courgeau2020likelihood} to discretely-observed data to estimate a candidate adjacency matrix: (i) which is fully determined by the information given in the dataset; (ii) whose sparsity can be tuned through an explicit parameter. This Lasso-type regularisation scheme is shown to estimate the adjacency matrix consistently and we provide stable asymptotic normality of the dynamics matrix estimator.
\subsection{Adaptive Lasso}
For a \levy-driven OU process which satisfies Eq.\ \eqref{eq:sde} with a true but \emph{unknown} dynamics matrix $\Q_0 \in \Ss^{++}$, a Lasso regularisation which penalises the likelihood with the $L^1$-norm of the dynamics matrix cannot achieve both afore-mentioned properties in, say, the Gaussian case \citep{zou2006adaptive}. An Adaptive Lasso (AL) procedure provides superior theoretical properties as shown in \citep{Gaiffas2019SparseOUProcess} with a Brownian motion as the driving noise and in \citep{courgeau2020likelihood} with a driving \Levy process with continuous-time observations. We solely extend the latter to the high-frequency framework of Section \ref{section:high-frequency-setup}. For fixed parameters $\lambda \geq 0$ and $\gamma > 0$, the AL scheme is defined as
$$\widetilde{\Q}_{\AL, N} := \argmax_{\Q} \ell_N(\Q) - \lambda \|\rmQ \odot |\widetilde{\Q}_N|^{-\gamma}\|_1,$$
where $\odot$ denotes the Hadamard product and where the absolute value in the denominator of the penalty is evaluated elementwise. The discretised log-likelihood $\ell_N(\cdot)$ (equivalent to Eq.\ 6, \cite{courgeau2020likelihood}, which is in continuous time) is given by 
$$\ell_N(\Q) := - \vectorise(\Q)^\top \cdot (\Id \otimes \Si^{-1}) \cdot \Atilde_N  - \frac{1}{2} \vectorise(\Q)^\top \cdot (\widebar{\Kt}_N \otimes \Si^{-1}) \cdot \vectorise(\Q),$$
where $\Kn_N := \sum_{k=0}^{N-1} \YY_k \YY^\top_k \cdot (t_{k+1} - t_{k})$ and the corresponding $d \times d$ jump-filtered MLE matrix $\widetilde{\rmQ}_{N} := - {\widebar{\Kt}_N}^{-1} \cdot \vectorise(\Atilde_N)$ is almost-surely non-zero entries which penalises more those that are expected to be zero. We also recall that for $i,j \in \{1,\dots,d\}$, we have:
$$(\Atilde_N)_{d(j-1)+i} = \vectorise(\Atilde_N)_{ij} = \sum_{k=0}^{N-1} Y^{(i)}_{k} \cdot \Delta_{k}  Y^{(j)} \cdot \mathbb{I}\{|\Delta Y^{(j)}_{k}| \leq v_N^{(j)}\}.$$

\subsection{Asymptotic properties}
We prove that the AL scheme is consistent in variable selection as well as asymptotically normal under the existence of an oracle that knows the true dynamics matrix $\rmQ_0$. 

Note that the penalty parameter $\lambda$ is implicitly a function of the time horizon $T_N$ which we write $\lambda = \lambda(T_N)$. The support of vector or matrix is defined as follows:
\begin{notation}
	The support of a vector or a matrix $x$ is denoted $\supp(x)$ and is defined as the set of indices of non-null coordinates of $x$. For a given set of indices $\gI$, we denote by $x_{|\gI}$, the restriction of $x$ to the indices in $\gI$ and $x_{|\gI\times\gI}$ to the indices in $\gI\times\gI$.
\end{notation}

We formalise the oracle asymptotic properties in the following theorem:

\begin{theorem}{(Adapted from Th.\ 5.3.1, \cite{courgeau2020likelihood})}
\label{theorem:adaptive-lasso}
	Suppose that Assumptions \ref{assumption:strong-solution-vector}, \ref{assumption:A-known}, 
\ref{assumption:psi-grou}--\ref{assumption:cv-rate-beta} 
	hold for a \levy-driven OU process $\YY$ with a true but unknown dynamics matrix $\rmQ_0 \in \Ss^{++}$.  Also, assume that either Assumption \ref{assumption:jump-height-finite-activity} or that Assumption \ref{assumption:infinite_activity} hold. For a fixed $\gamma > 0$, assume that $\lambda = \lambda(T_N)$ satisfies $\lambda(T_N) N^{1/2} \rightarrow 0$ and $\lambda(T_N) N^{(1+\gamma)/2}\rightarrow \infty$ as $N \rightarrow \infty$. Then, we obtain the following properties: 
	\begin{enumerate}
		\item Consistency of the variable selection: $\proba\left(\supp(\widehat{\rmQ}_{\AL,N}) = \supp(\rmQ_0) \right) \rightarrow 1$ as $N \rightarrow \infty$.
		\item Asymptotic normality: 
		 $$T_N^{1/2}\left(\vectorise(\widetilde{\rmQ}_{\AL,N}) - \vectorise(\rmQ_0)\right)_{|\gQ_0} \xrightarrow{\ \mathcal{D} \ } \mathcal{N}\left(\boldsymbol{0}_{d^2}, \E\left( \YY_\infty \YY_\infty^\top\right)^{-1}_{|\gQ_0 \times \gQ_0} \otimes \Si_{|\gQ_0 \times \gQ_0}  \right),$$
		 as $N \rightarrow \infty$ and where $\gQ_0 := \supp(\rmQ_0)$.
	\end{enumerate}
\end{theorem}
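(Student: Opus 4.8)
The plan is to follow the standard adaptive-lasso ``oracle'' argument in the spirit of \citep{zou2006adaptive}, transplanting the continuous-time proof of Theorem 5.3.1 in \cite{courgeau2020likelihood} to the high-frequency setting, the sole structural input being the central limit theorem already established for the \emph{unpenalised} jump-filtered estimator. Under the stated hypotheses exactly one of Assumption \ref{assumption:jump-height-finite-activity} or Assumption \ref{assumption:infinite_activity} holds, so either Theorem \ref{theorem:discrete-vec-finite-activity} or Theorem \ref{th:discrete_clt_infinite} applies and gives that the unpenalised MLE $\widetilde{\rmQ}_N$ is $T_N^{1/2}$-consistent with $T_N^{1/2}(\vectorise(\widetilde{\rmQ}_N)-\vectorise(\rmQ_0))\stableconv\mathcal{N}(\boldsymbol{0}_{d^2},C^{-1})$, where $C:=\E(\YY_\infty\YY_\infty^\top)\otimes\Si^{-1}$; Corollary \ref{corollary:consistency-estimator-continuous-component} supplies the accompanying consistency $\widetilde{\rmQ}_N\probconv\rmQ_0$. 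I would also record the ergodic normalisation $T_N^{-1}\Kn_N\probconv\E(\YY_\infty\YY_\infty^\top)$, which is produced en route to Lemma \ref{lemma:cv-estimator-continuous-component} and governs the (exactly quadratic) Hessian $T_N^{-1}(\Kn_N\otimes\Si^{-1})$ of $\ell_N$.

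For the asymptotic normality (claim 2) I would run a convexity/argmax argument. Writing $\vw=\vectorise(\rmQ)$, $\vw_0=\vectorise(\rmQ_0)$ and reparametrising by $\vu=T_N^{1/2}(\vw-\vw_0)$, the rescaled, recentred objective $V_N(\vu)$ (the likelihood increment minus the penalty increment) is analysed term by term. Since $\ell_N$ is quadratic the expansion is exact: using the ergodic limit together with the stable CLT, the likelihood part converges, as a sequence of concave random functions, to $\vu^\top C\,\vW-\tfrac12\vu^\top C\,\vu$ with $\vW\sim\mathcal{N}(\boldsymbol{0}_{d^2},C^{-1})$. The penalty increment I would treat coordinatewise through the adaptive weights $|\widetilde{\rmQ}_{N,ij}|^{-\gamma}$: on $\gQ_0$ the weights converge to the finite constants $|(\rmQ_0)_{ij}|^{-\gamma}$ and, because $\lambda N^{1/2}\to0$, their contribution vanishes; off $\gQ_0$ the weights diverge at the rate dictated by $T_N^{1/2}\widetilde{\rmQ}_{N,ij}=O_\proba(1)$ and, because $\lambda N^{(1+\gamma)/2}\to\infty$, their contribution tends to $+\infty$ when $u_{ij}\neq0$ and to $0$ when $u_{ij}=0$. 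Hence $V_N$ converges to a limit finite only on the subspace $\{u_{ij}=0:\ ij\notin\gQ_0\}$, where it is the nondegenerate quadratic; the convexity (argmax) lemma then yields convergence of the maximiser $\hat\vu_N$ to the \emph{oracle} limit, the centred Gaussian vector obtained by maximising the limiting quadratic over that subspace, whose covariance is the restricted matrix displayed in the statement.

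Claim 1 (selection consistency) I would obtain from two inclusions. The inclusion $\supp(\rmQ_0)\subseteq\supp(\widetilde{\rmQ}_{\AL,N})$ with probability tending to one is immediate from claim 2, since the on-support coordinates converge to the nonzero entries $(\rmQ_0)_{ij}$ and are thus eventually nonzero. For the reverse inclusion I would use the Karush--Kuhn--Tucker (subgradient) conditions of the penalised problem: for $ij\notin\gQ_0$ the event $\{\widetilde{\rmQ}_{\AL,N,ij}\neq0\}$ forces the corresponding score coordinate, which is $O_\proba(T_N^{1/2})$, to reach the penalty level $\lambda|\widetilde{\rmQ}_{N,ij}|^{-\gamma}$; but under $\lambda N^{(1+\gamma)/2}\to\infty$ the latter diverges faster, so that event has vanishing probability. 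Combining the two inclusions gives $\proba(\supp(\widetilde{\rmQ}_{\AL,N})=\supp(\rmQ_0))\to1$.

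The main obstacle is making the penalty/weight interplay rigorous \emph{jointly} with the stable convergence: the adaptive weights are themselves functions of the random $\widetilde{\rmQ}_N$, so score and penalty must be controlled on the same probability space, which is exactly where the $T_N^{1/2}$-consistency and the in-probability ergodic limit (rather than bare distributional statements) are indispensable, and where one invokes the Slutsky-type stability of $\stableconv$ under multiplication by $T_N^{-1}\Kn_N\probconv\E(\YY_\infty\YY_\infty^\top)$. The remaining delicate point is the rate bookkeeping: one must verify that $\lambda N^{1/2}\to0$ and $\lambda N^{(1+\gamma)/2}\to\infty$ calibrate, respectively, the vanishing of the on-support penalty (preserving the oracle Gaussian law) and the domination of the off-support penalty (forcing exact zeros), under the high-frequency relation $N\Delta_N T_N^{-1}=O(1)$ that ties $N$ to the effective horizon $T_N$. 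Everything else is the quadratic, convex machinery of \cite{courgeau2020likelihood}, now fed by the discretised central limit theorems of Section \ref{section:discretised-high-frequency-data} in place of the continuous-time one.
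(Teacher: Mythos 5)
Your proposal is correct and follows essentially the same route as the paper's proof: reparametrise at scale $T_N^{-1/2}$, expand the (exactly quadratic) penalised objective, show the likelihood part converges stably to the Gaussian-plus-quadratic limit while the adaptive penalty vanishes on $\supp(\rmQ_0)$ (via $\lambda N^{1/2}\to 0$) and diverges off it (via $\lambda N^{(1+\gamma)/2}\to\infty$ and $T_N^{1/2}$-boundedness of the weights), conclude by the argmax argument, and obtain selection consistency from the oracle normality plus a KKT/first-order-condition contradiction off the support. The only cosmetic difference is packaging: you feed in the CLT for the jump-filtered estimator directly, whereas the paper decomposes through the unfiltered estimator and carries the jump-filtering/discretisation error as a separate, asymptotically negligible term ($\kappa^3_N$) in the objective — the same lemmas underlie both.
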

\begin{proof}
	See Appendix \ref{proof:theorem:adaptive-lasso}.
\end{proof}

\section{Application to wind capacity factor measurements}
\label{section:applications}

In this section, we present the RE-Europe data set on which we apply the discretised GrOU process whose inference serves as a foundation for the simulation study.

\subsection{Data set and motivation}
The \emph{RE-Europe dataset} \citep{jensen2017re} offers an approximation of the hourly wind capacity factor (between 0 and 1) across 1,500 locations using the COSMO-REA6 data set \citep{bollmeyer2015CosmoRea6} for the years 2012-2014. This data set consists of a grid of wind speed measurements linearly interpolated to obtain wind speed at turbine height (80 m) with a $7\times 7$ $\text{kms}^2$ spatial resolution. It is then further processed to obtain wind capacity factors and projected on the 1,494 locations of the European electricity network \cite[page 7]{jensen2017re} of which we consider a small portion. Wind capacity factors represent the proportion of the maximum power generation that is generated at a given time. We focus on a subset of \emph{50} locations where 24 are located in Portugal and 26 are in Northern Spain along the Atlantic coast (see LHS, Figure \ref{figure:map-violin-node-level}).

This 50-node network reflects how distributed power generation capabilities would behave in reality with operational constraints: four subnetworks are disconnected from one another and power demand needs to be fulfilled by the local generation of power. The GrOU process describes the geographical dependencies between the nodes while maintaining the graphical information of neighbours and disconnected subset of nodes.

The 50-node network and the related wind capacity factors data will be referred to as the \texttt{RE-Europe 50} dataset which comprises of 25,000 hourly data points.

\begin{figure}[htp]
\centering
\begin{tabular}{cc}
	\includegraphics[height=8cm]{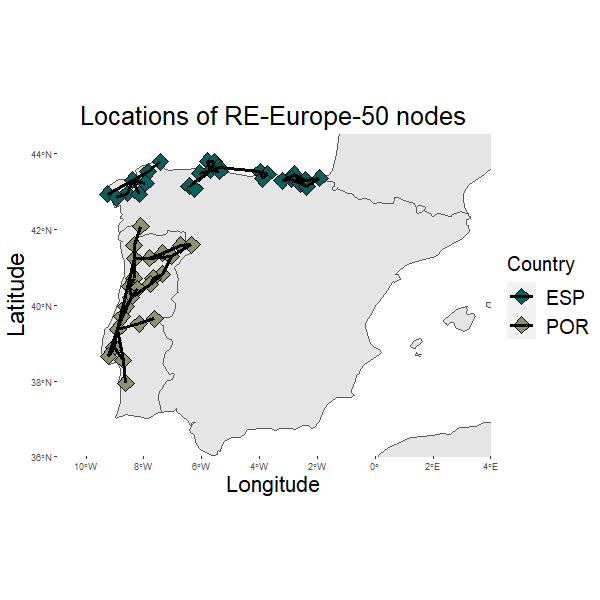} 
	\includegraphics[height=7cm]{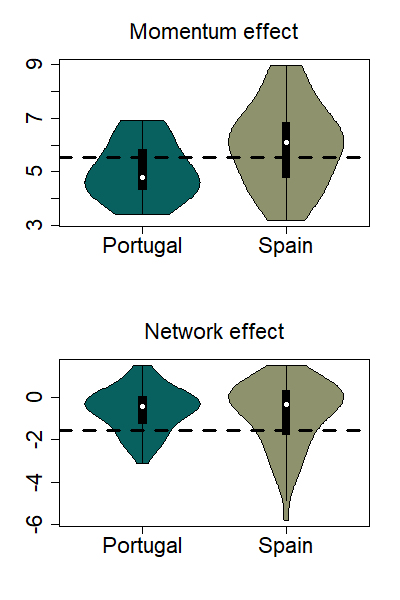}
\end{tabular}  	
  	\caption{LHS: Distribution of the wind power locations in Spain (ESP) and Portugal (POR). RHS: Violin plots of $\boldPsi$-GrOU parameters across Portugal and Spain with $\boldTheta$-GrOU parameters given by the dash lines.}
\label{figure:map-violin-node-level}
\end{figure}

\subsection{Methodology}
\label{section:methodology}

For the application on the wind capacity factors data set in this section and the simulation study presented respectively in Section \ref{section:simulation-study}, estimating the model parameters is only one of the necessary steps. We devote this section to clarifying the steps involved in preparing the data, obtaining the \Levy increments recovery and inferring the increments' distribution for parametric bootstrapping.

\subsubsection{Algorithm}
\label{section:algorithm}
For applications and the parametric bootstrap procedure, we proceed through the following steps
\begin{enumerate}
	\item Separate trend, seasonal and remainder time series using a LOESS regression or alternative methods \citep{Knight2016ModellingDA};
	\item Fit either the $\boldTheta$-GrOU or $\boldPsi$-GrOU model with the MLE to the remainder time series; \label{algo:step:grou}
	\item Recover the \Levy increments which we denote $(\Delta_k \widehat{\LL}: \ k \in \{1,\dots,N-1\})$ (Section \ref{section:levy-increments-recovery});
	\item Infer the \Levy driving noise distribution from those increments (Section \ref{section:levy-driving-noise}).
\end{enumerate}

In the following sections, we give details on the steps $2$ through $4$. Note that Step 4 relies on the knowledge of adjacency matrix $\A$. 

\subsubsection{With an unknown adjacency matrix}
If $\A$ is not known, one can proceed in two different ways: (i) by inferring it on the same dataset using a particular scheme, such as Lasso regularisation (see Section \ref{section:adaptive-lasso}) in lieu of Step \ref{algo:step:grou} above; (ii) by estimating a sparse adjacency matrix on a potentially different dataset with a consistent estimator $\widehat{\A}_N$ such that $\widehat{\A}_N \probconv \A$ as $N \rightarrow \infty$ \citep{Shojaie2010EstimationAcyclicGraph} where $\A$ is either a deterministic or random matrix. Then, we estimate the dynamics as follows
$$\widetilde{\Q}^\boldTheta_N = \boldsymbol{\widetilde{\boldTheta}}_{1,N} \Id + \boldsymbol{\widetilde{\boldTheta}}_{2,N} \widebar{\widehat{\A}}_N , \quad \text{and} \quad \widetilde{\Q}^\boldPsi_N = (\Id + \widebar{\widehat{\A}}_N) \odot \vectorise^{-1}\left(\boldsymbol{\widetilde{\psi}}_N\right),$$
where $(\widebar{\widehat{\A}}_N)_{ij}=\widehat{n}^{-1}_i \indicator_{\{x \neq 0\}}\left((\widehat{\A}_N)_{ij}\right)$ with $\widehat{n}_i = 1\vee \sum_{k \neq i} (\widehat{\A}_N)_{ik}$ the equivalent to the node degree defined in Section \ref{section:graphical-interpretation}.

Then, by \emph{stable convergence} if $\A$ and $(\widehat{\A}_N)$ are on $(\Omega, \mathcal{F})$, then we have $(\widehat{A}_N, \widetilde{\boldTheta}_N)\stableconv (\A, \mathcal{N}(\boldsymbol{0}_{2},\ \boldsymbol{\widetilde{G}}_\infty ))$ or $(\widehat{A}_N, \widetilde{\boldPsi}_N)\stableconv \left(\A, \mathcal{N}(\boldsymbol{0}_{d^2},\ \E(\YY_\infty \YY_\infty^\top )^{-1} \otimes \Si)\right)$, $\text{as } N \rightarrow \infty$ \citep[Eq.\ 2.3]{jacod1997continuous}. 
If $\A$ is deterministic then the pairwise convergences described above would hold in distribution by a standard argument \citep[p.\ 6--7]{hausler2015whystableconvergence}. On the other hand, if $\A$ is a random variable, the pairwise convergence in distribution will not even hold in general without the stability \citep[Example 1.2]{hausler2015whystableconvergence}. 

 In this sense, the stable convergence in distribution paves the way towards an extension to a random and/or time-dependent adjacency matrix (e.g.\ as an additional, potentially non-ergodic, process) \citep{zambon2019autoregressiveGraphs, ambroise2012new}. 

\subsubsection{\Levy increments recovery}
\label{section:levy-increments-recovery}
In this section, we describe the recovery of \Levy increments to infer their distribution. \cite{brockwell2013carma-recovery} approximate the noise increments from the data and the estimated matrix $\Q$, i.e.\
$$\Delta_k \LL = \Delta_k \YY + \rmQ \int_{t_k}^{t_{k+1}}\YY_{s-}ds, \quad \text{for $k \in \{1,\dots, N-1\}$.}$$
Since $\Q$ is not available in practice, we use the estimates $\Q(\widehat{\boldTheta}_N)$ or $\Q(\widehat{\boldPsi}_N)$ and, for the integral, the $k$-th absolute moment of the approximation error is asymptotically $O(\Delta_N^{2k})$ provided that the process is observed on a uniformly-spaced time grid with increment $\Delta_N$ and that the $k$-th moment of the \Levy process is finite \citep[Proposition 5.4]{brockwell2013carma-recovery}. We adapt to non-uniformly-spaced time intervals given our assumption of an asymptotically zero mesh size $\Delta_N$ as $N \rightarrow \infty$ and to estimate the $1$-step increment, we approximate it crudely by $\int_{t_k}^{t_{k+1}}\YY_{s} ds \approx \frac{1}{2}(t_{k+1} - t_{k}) \cdot \left(\YY_{t_k} + \YY_{t_{k+1}}\right)$, element-wise. Also, note that $\int_{t_k}^{t_{k+1}}\YY_{s}\YY_{s}^\top ds \approx \frac{1}{2}(t_{k+1} - t_{k}) \cdot \left(\YY_{t_k}\YY_{t_k}^{\top} + \YY_{t_{k+1}}\YY_{t_{k+1}}^{\top}\right)$. Those estimated, or \emph{recovered}, \Levy increments are denoted $\Delta_k \widehat{\LL}$, for $k \in \{1,\dots,N-1\}$.

\subsubsection{\Levy driving noise inference}
\label{section:levy-driving-noise}
Once the increments are recovered, we proceed to fit those to a finite and an infinite jump activity distribution. For the former, we first leverage a generalised method of moments inference, as inspired by \cite{Melanson2019DataDrivenStationaryJumpDiffusions}, to estimate the Brownian motion covariance matrix. Jumps are either modelled using a $d$-dimensional Gaussian distribution or a generalised hyperbolic distribution (GHYP), see below. For the latter, we directly model the noise as generalised hyperbolic \Levy motions \citep{Eberlein2001Ghyp}.
\paragraph{Generalised hyperbolic distribution} It is a Normal mean-variance mixture distribution which features semi-heavy tails along with an explicit likelihood function. Thus, it is particularly suitable for jump distribution modelling. We assume that the jumps can be decomposed as follows
$$ \JJ_t \overset{d}{=} \ermI\ermG_t \cdot \boldsymbol{\gamma} + \sqrt{\ermI\ermG_t}\cdot B \cdot \rmZ_t,$$
where $\gamma \in \R^d$, $B \in \MdR$, $\ermI\ermG_t$ is a scalar inverse Gaussian process such that $\E(\ermI\ermG_t) = 1$ independent from $\rmZ_t$---a $d$-dimensional zero-mean Gaussian process with covariance matrix $\Si^J \in \Ss^{+}$. Note that $\JJ_t\ |\ \ermI\ermG_t = \ervi \ervg_t \sim \mathcal{N}(\ervi \ervg_t \cdot \boldsymbol{\gamma}, \ervi \ervg_t \cdot \Si^J)$. We use an Expectation Maximisation algorithm called the \emph{Multi-Cycle Expectation Conditional Maximization (MCECM)} algorithm \citep{McNeil2005QuantRiskManagementEM}, as implemented in the R package \texttt{ghyp}, to fit the full 6-parameter GHYP density directly on increments.

\paragraph{Finite jump activity case}

Following Remark \ref{remark:finite-compound-poisson}, we assume that $\LL$ is the sum of a correlated d-dimensional Wiener process $\WW$ with covariance matrix $\Si$ and a $d$-dimensional compound Poisson process with finite intensity $\lambda > 0$ and zero-mean Gaussian or GHYP jumps. Gaussian jumps are modelled by a $d$-dimensional Gaussian random vector $\mathcal{N}(\boldsymbol{0}_d, \Si^J)$ where $\Si^J \in \Ss^{+}$, similarly to the GHYP notation. In the spirit of \cite{Melanson2019DataDrivenStationaryJumpDiffusions, Hanser2005RealizedVarianceHighFrequency, aitsahalia2011testingFiniteInfinite},   
we calibrate both the correlation matrix of the Wiener process and the compound Poisson process with Gaussian jumps using the realised (co)variance matrices. In this case, for $i,j \in \{1,\dots,d\}$, we define the $(i,j)th$ component of the realised covariance by $$\RV(\widehat{\LL})_{N,ij} := \sum_{0 \leq k \leq N-1} |\Delta_k \widehat{L}^{(i)}| \cdot |\Delta_k \widehat{L}^{(j)}|.$$ 
The jump-filtered equivalent is defined for a unique fixed $\beta \in (0, 1/2)$ as follows
\begin{equation}
	\label{eq:rv_c}
	\RV(\widehat{\LL})^C_{N,ij} := \sum_{0 \leq k \leq N-1} |\Delta_k \widehat{L}^{(i)}|  \cdot |\Delta_k \widehat{L}^{(j)}| \cdot \indicator\{||\Delta_k \widehat{\LL}\|_2 \leq \eta \cdot \Delta_N^{\beta}\},
\end{equation}
for some $\eta > 0$. That is, increments vectors are considered to be jumps if the $L^2$-norm of the recovered increments are above $\eta\cdot \Delta_N^{\beta}$: for instance, $\eta = 1$ implies that jumps include all increment vectors such that $|\Delta_k \widehat{L}^{(j)}| > \Delta^\beta$ for some $j\in\{1,\dots,d\}$.
\begin{remark}
	Note that \eqref{eq:rv_c} relies on the assumption of a fixed volatility (i.e.\ a fixed $\eta > 0$), e.g.\ a fixed Gaussian covariance structure. An extension of this estimator to a variable volatility (similarly to \citep{barndorff2004BiPowerVariation}) would be given by
		$$ \sum_{0 \leq k \leq N-1} |\Delta_k \widehat{L}^{(i)}| \indicator\{|\Delta_k \widehat{L}^{(i)}| \leq \widehat{\sigma}_k\cdot \Delta_N^{\beta^{(i)}}\} \cdot |\Delta_k \widehat{L}^{(j)}| \indicator\{|\Delta_k \widehat{L}^{(j)}| \leq \widehat{\sigma}_k \cdot \Delta_N^{\beta^{(j)}}\},
$$
where $\widehat{\sigma}_k$ is a spot volatility estimate at time $t_k$ for $k \in \{0,\dots,N-1\}$.
\end{remark}
 \begin{remark}
 	 Alternatively to estimate $\RV(\LL)^C_N$, one could have used the bipower variation \citep{barndorff2004BiPowerVariation}
$$\RV(\widehat{\LL})^C_{N,ij} := \sum_{0 \leq k \leq N-2} \frac{\pi}{2}|\Delta_k \widehat{L}^{(i)}| \cdot |\Delta_{k+1} \widehat{L}^{(j)}|.$$ 
 \end{remark}
Finally, the realised (co)variance of the jump part defined as $$\RV(\widehat{\LL})^J_N := \RV(\widehat{\LL})_{N} - \RV(\widehat{\LL})^C_{N}.$$
Since $\Delta_k \widehat{\LL} \probconv \Delta_k \LL$ as $\Delta_N \rightarrow 0$, those quantities approximate quadratic covariation matrices as $\Delta_N \rightarrow 0$ stably in distribution \citep{Hanser2005RealizedVarianceHighFrequency, Ait2008OutSampleQV}. Hence, we have for $N$ large enough
$$\Si \approx \RV(\widehat{\LL})^C_N/T_N,$$
whilst the jump intensity $\lambda$ is estimated by likelihood maximisation given that the probability of $n$ jumps in $[0, T_N]$ is proportional to $$\lambda^n e^{-\lambda T_N}.$$
 Recall that we assume that jumps occur when $\|\Delta_k \widehat{\LL}\|_2 > \eta \cdot \Delta_N^\beta$. Finally, for Gaussian jumps, we obtain directly
 \begin{equation}
 	\label{eq:sigma-jump}
 	\Si^J \approx\lambda^{-1}\RV(\widehat{\LL})^J_N/T_N.
 \end{equation}

\paragraph{Infinite jump activity case}
We consider the generalised hyperbolic \Levy motions (GHYP) for their important distributional flexibility to accommodate for fat tails \citep{Eberlein2001Ghyp} and use the likelihood maximisation directly on the recovered \Levy increments without filtering for jumps.
\begin{remark}
	The parametric bootstrap used in the following sections is supported by the theoretical guarantees presented in \cite{stute1993paramBootstrap, abdelrazeq2018paramboostrap}: consistent parameter estimators for the driving \Levy noise distribution are enough to replicate the true dynamics of the GrOU process. Although those results are proved for distributions characterised by their first two moments \citep[Th.\ 3.1]{abdelrazeq2018paramboostrap}, we assume it also holds for GHYP distributions. 
\end{remark}

\subsection{Exploratory analysis and driving noise inference}
We clean our data set using a LOESS regression (R package \texttt{stl}) with daily seasonality and a daily trend rolling window. The output has no unit root (via the  Dicker-Fuller test with $99.9\%$ significance on each marginal) and has zero mean. 
\begin{remark}
	The wind capacity factor values are restricted to $[0,1]$ whilst the processed time series are unrestricted real values. 
\end{remark}

Using the RE-Europe 50 data set, the $\boldTheta$-GrOU parameters is estimated by 
\begin{equation}
\label{eq:theta-exploratory-analysis}
\widehat{\boldTheta}_N =(\widehat{\theta}_{1,N}, \widehat{\theta}_{2,N})^\top= (-1.549 \ (.029), \ 5.525 \ (.015))^\top	
\end{equation}
 with (infinite activity) standard deviations in parenthesis on 500 samples of 12,500 observations each through a parametric bootstrap. Note that the finite activity standard deviations are similar up to the second decimal hence omitted. Equation \eqref{eq:theta-exploratory-analysis} yields that, \emph{on average}, each node's value acts negatively on its increments, i.e.\ they have a negative momentum, whilst the neighbouring nodes act positively on the increments as they sort of aggregate the neighbours' capacity towards the node. That is, if the capacity of a node spikes, it may indicate that neighbouring nodes might increase soon after that. This said, the momentum term in absolute value is four times as big as the network term which indicates the strong autoregressive nature of wind flows. The $\boldPsi$-GrOU parameters given on the right-hand side in Figure \ref{figure:map-violin-node-level} show that Portuguese locations seem to have lower momentum than Spanish ones (top violin plot) whilst the negative Spanish network effects can be larger in absolute values than the Portuguese counter parts. This said, both the momentum and networks effect means cannot be distinguished between the two countries with a two-sample t-test with a p-value larger than $90\%$.

\subsection{Fitting the recovered increments}
\label{section:fitting-recovered-increments}
 Noise marginals are symmetric and have fatter tails than a single Gaussian distributions (see Figure \ref{figure:noise-distribution}) hence the sum of two Gaussian marginals with a standard deviation larger for jumps or GHYP distributions are adapted to this data set. 
 
 \begin{figure}[htp]
  \begin{center}
\includegraphics[width=1.0\textwidth]{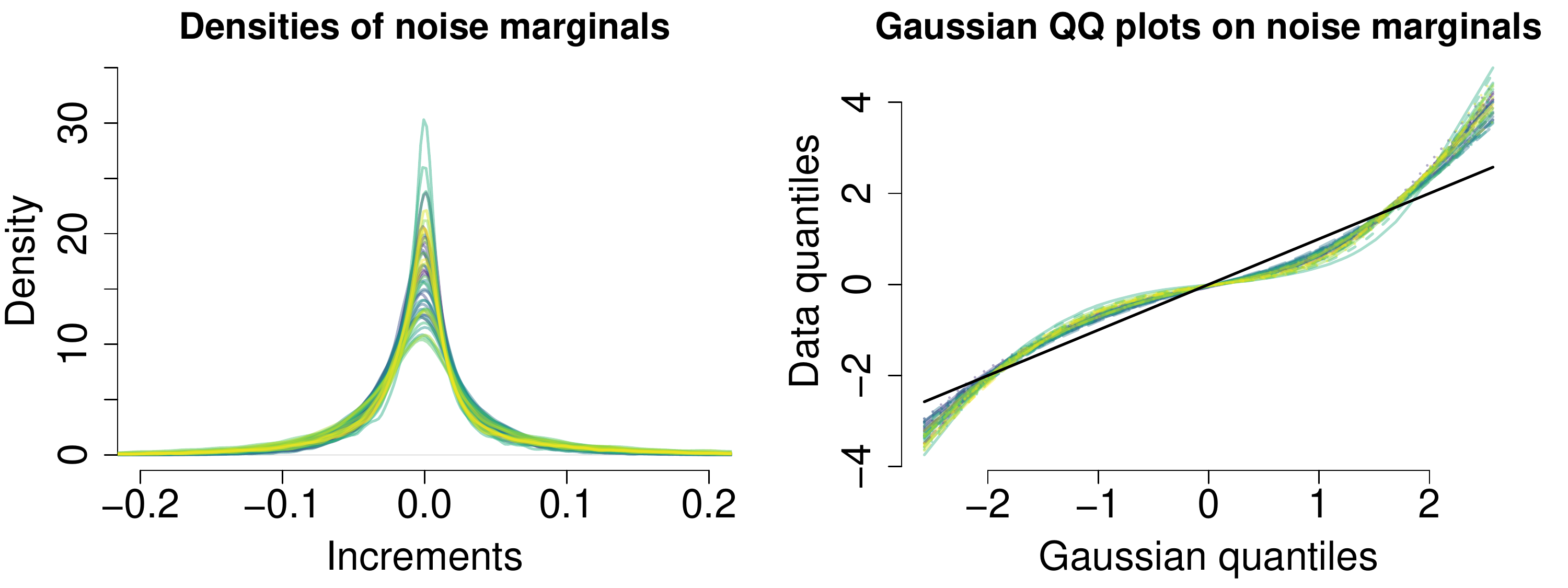}
\caption{RE-Europe 50 \Levy increments recovery. LHS: \Levy noise increment marginals. RHS: Q-Q plot of the standardised \Levy noise marginals against Gaussian quantiles.}
\label{figure:noise-distribution}
\end{center}
\end{figure}   

Regarding the finite activity modelling, we first ought to specific the jump cutoff value $\eta^*$ such that the recovery increments vectors $\Delta_k \widehat{\LL}$ are considered jumps if $\|\Delta_k \widehat{\LL} \|_2 > \eta^* \Delta_N^\beta$. We discuss the value of $\beta$ in Section \ref{section:beta} and use $\beta=0.499$. In Figure \ref{figure:jumps_stats}, we present the empirical probability of having a jump for different values of $\eta$ along with the variance captured by either the continuous part (through $\Si$) or the jump part (through $\lambda \cdot \Si^J$) under Gaussian jumps. Covariance matrices are estimated  as in Section \ref{section:levy-driving-noise} and we compute the average ratio between the diagonal elements of $\widehat{\Si}$ and $\widehat{\lambda} \cdot \widehat{\Si}^J$ and that of the empirical covariance matrix of $(\Delta_k \widehat{\LL}: k \in \{1,\dots, N\})$, respectively, for various values of $\eta$. We observe that we obtain the coverage equilibrium for $\eta = 1.7$, which corresponds to the $80\%$ quantile of the $L^2$-norm of the increments, and we set $\eta^* = 1.7$.

\begin{figure}[htp]
\begin{center}
\includegraphics[width=1.0\textwidth]{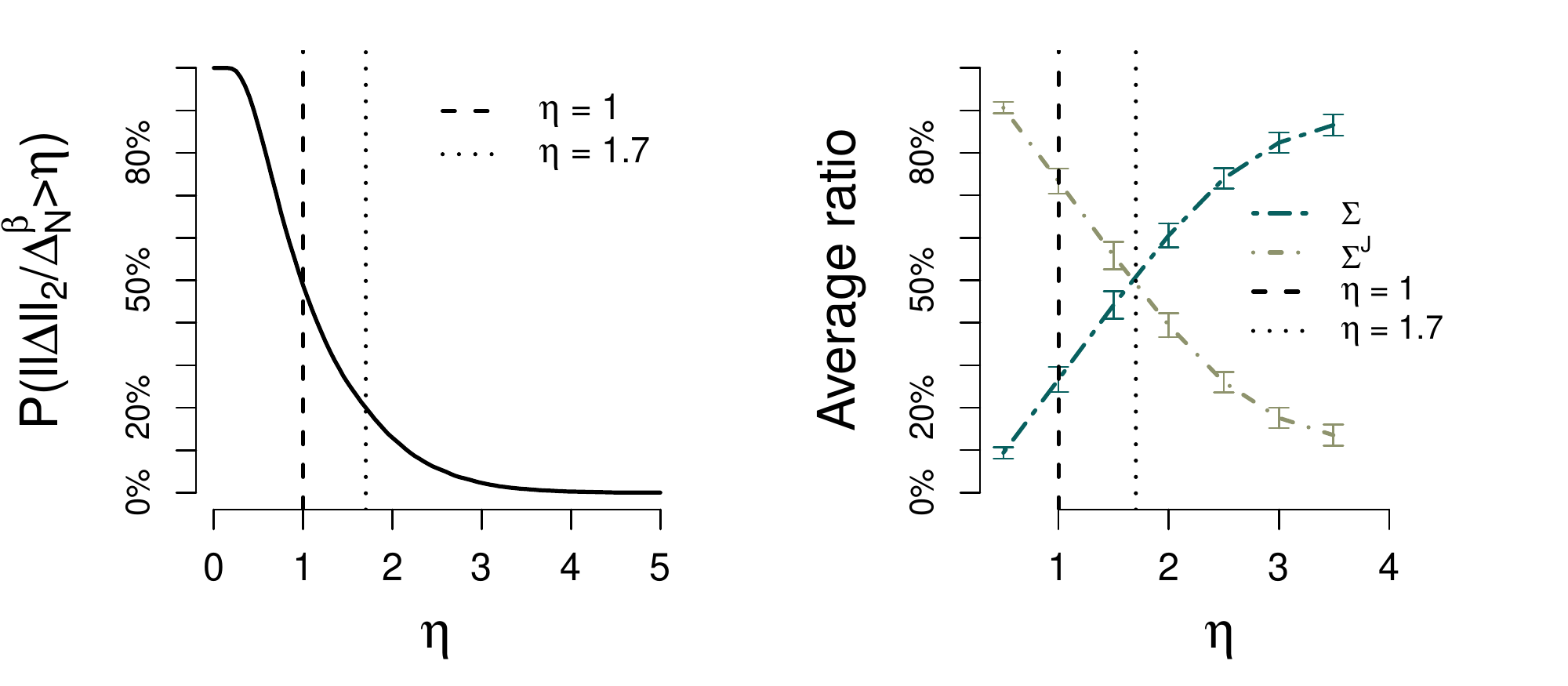}
\vspace*{-.8cm}
\caption{Choosing the jump cutoff value $\eta$: (left) empirical probability that $\|\Delta_k \widehat{\LL} \|_2 / \Delta_N^\beta > \eta$; (right) Variance coverage of the continuous part and jump parts with 95\% error bars. Lines corresponding to $\eta = 1$ (50\% quantile of jump norms) and $\eta =1.7$ (80\% quantile of jump norms) are drawn.}
\label{figure:jumps_stats}
\end{center}
\end{figure}  

We find that noise is mostly clustered by country (Figure \ref{figure:noise_corr}): we have higher correlations between nodes of the same country (top-left and bottom-right quadrants) and lower correlations for pairs of cross-country nodes. Correlations are positively skewed with minima of $-40\%$ and mostly positive (between $+20\%$ and $+60\%$) or close to zero on the second and third quadrants. The different topologies between Spain (multiple subgraphs) and Portugal (a single 24-node connected graph, see Fig. \ref{figure:network-configurations} below) may explain the higher density of the correlation matrix in the first quadrant (Portugal). 

\begin{figure}[htp]
\begin{center}
\includegraphics[width=1.0\textwidth]{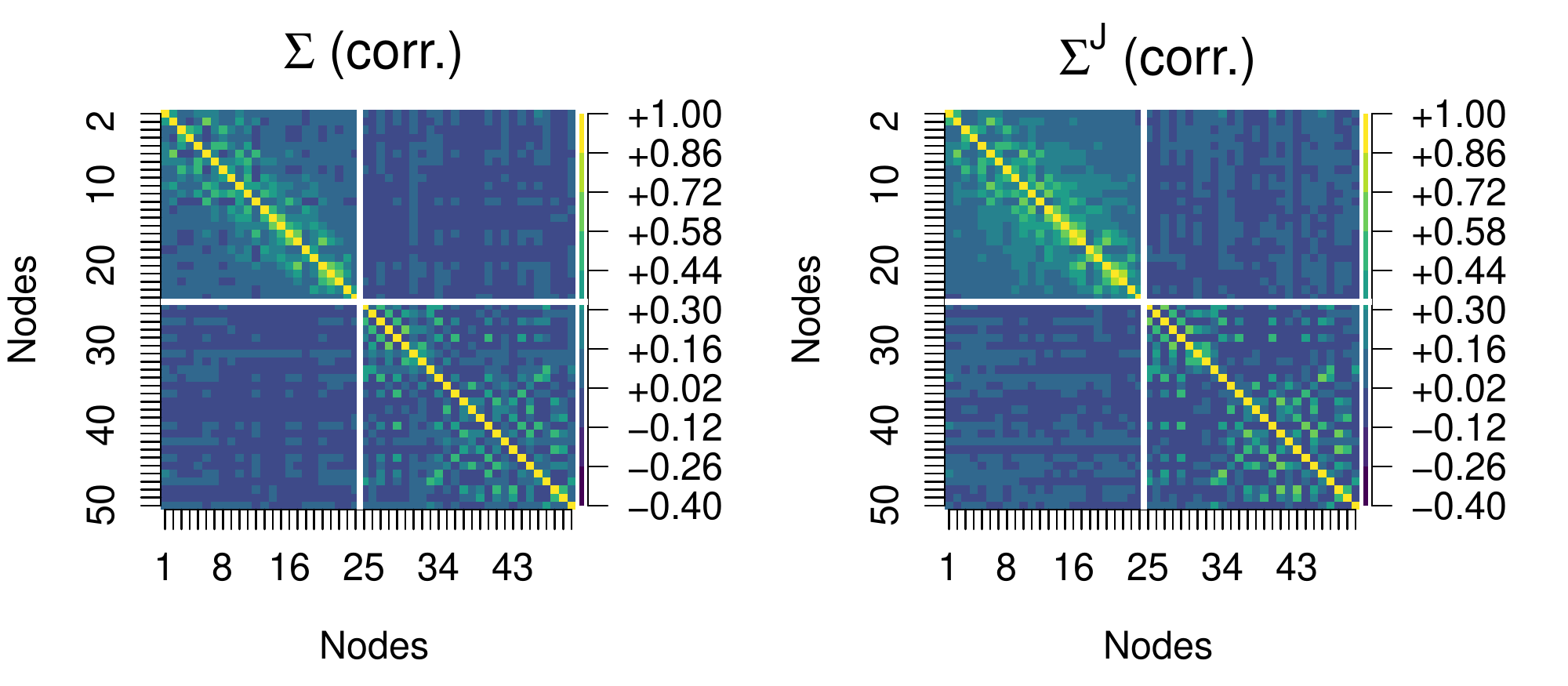}
\vspace*{-.8cm}
\caption{Estimates for $\Si$ and $\Si^J$ as correlation matrices: first and fourth quadrants are Portugal (POR) and Spain (ESP) respectively while the second and third quadrants are correlations across countries.}
\label{figure:noise_corr}
\end{center}
\end{figure}  
\newpage

As a check, we generate samples from the fitted distributions and compare their quantiles to the standardised empirical quantiles in Figure \ref{figure:qq_recovery}. It features the Q-Q plots of finite activity jump distributions of a compound Poisson with Gaussian or GHYP jumps, and the infinite activity GHYP distribution. Gaussian jumps fall short in terms of tail heaviness as shown by the $S$-shaped Q-Q plots. The GHYP distribution perform better in that respect as shown by the other two the GHYP-based samples. We choose to only use the compound Poisson with GHYP jumps in the rest of the study. We note that across the three plots, a slight bump is observed for quantiles between $0$ and $1$ which may indicate that the skewness of the fitted distributions is not sufficient to represent the data correctly. 

\begin{figure}[htp]
\begin{center}
\includegraphics[width=1.0\textwidth]{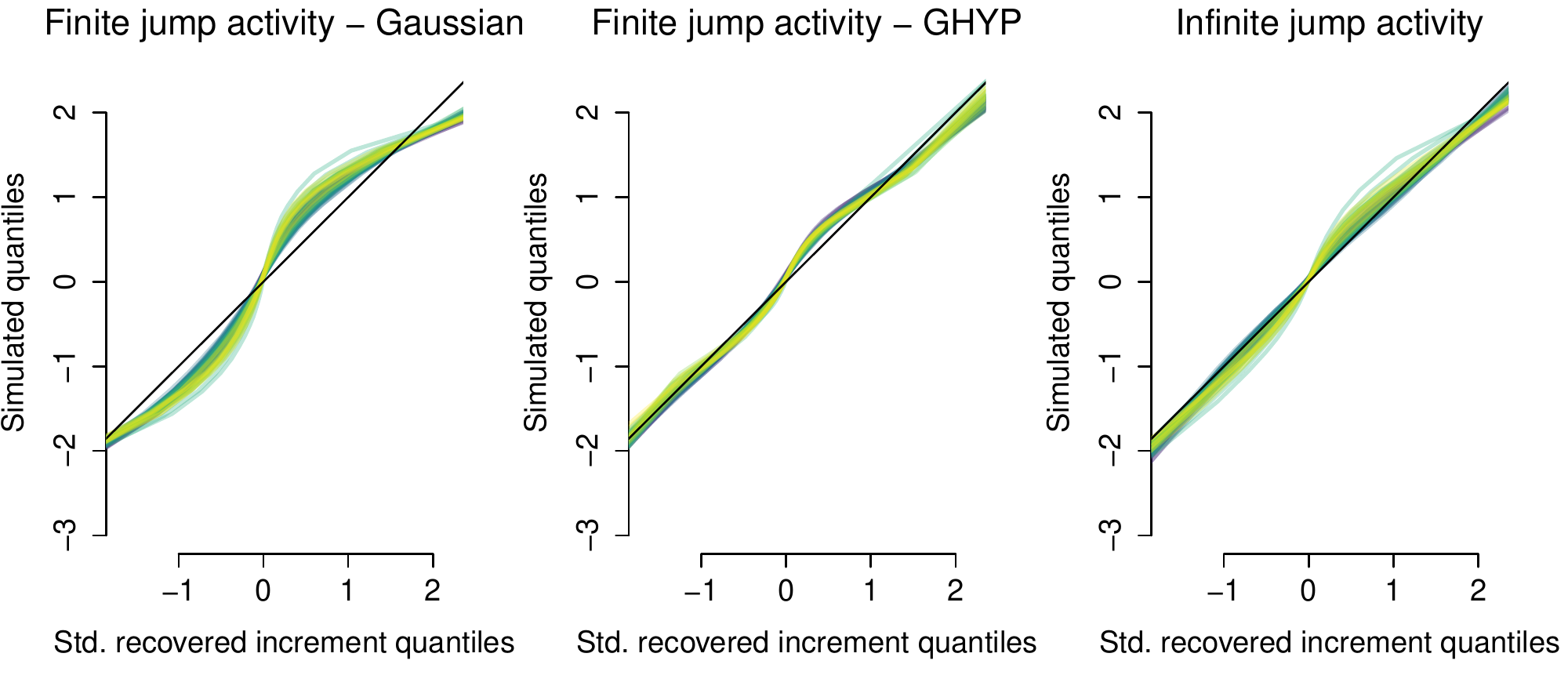}
\caption{Q-Q plot of the distributions fitted on the recovered increments: (left) Compound Poisson with Gaussian jumps, (middle) Compound Poisson with GHYP jumps and (right) GHYP-distributed increments.}
\label{figure:qq_recovery}
\end{center}
\end{figure}

\newpage
\newpage
\section{Simulation study}
\label{section:simulation-study}
In this section, we assess the reliability of the inference approach on data simulated from the model itself. By avoiding the potential model misspecification, we study the impact of the graph topology, the jump threshold parameters and we compare the MLE to the least squares (LS) estimator. 

We use the empirical parameters values  $\widehat{\boldsymbol{\theta}}_N$ found in Section \ref{section:applications} as true. We perform parametric bootstraps of our model on paths of $N=12,500$ observations each which corresponds to 50\% of the original data set length and on a uniform time grid $(k\Delta_N: \ k\in\{0,\dots,N-1\})$ where $\Delta_N = 1/12$. For the finite activity driving noise, we generate the median number of jumps across dimensions in each dimension given we work on a fixed time interval $[0,T_N]$ to ensure a reliable number of jumps are being drawn.

\subsection{Empirical validation}
We check that the $\boldTheta$-GrOU estimators converge the true value as the sample size grows and that they each follow a univariate Gaussian distribution by performing the inference on only the first $N=500; 4,500; 8,500; 12,500$ observations of each path. We restrict ourselves to the infinite jump activity case since it is the only example with a non-Gaussian driving noise; we also only work with the RE-Europe 50 network for simplicity. Gaussian Q-Q plots and histograms over the collection of $500$ independent fitted values for $N=500$ and $N=12,500$  are presented in Figure \ref{figure:estimators-gaussian}.  
\begin{figure}[htp]
  \begin{center}
\includegraphics[width=1.0\textwidth]{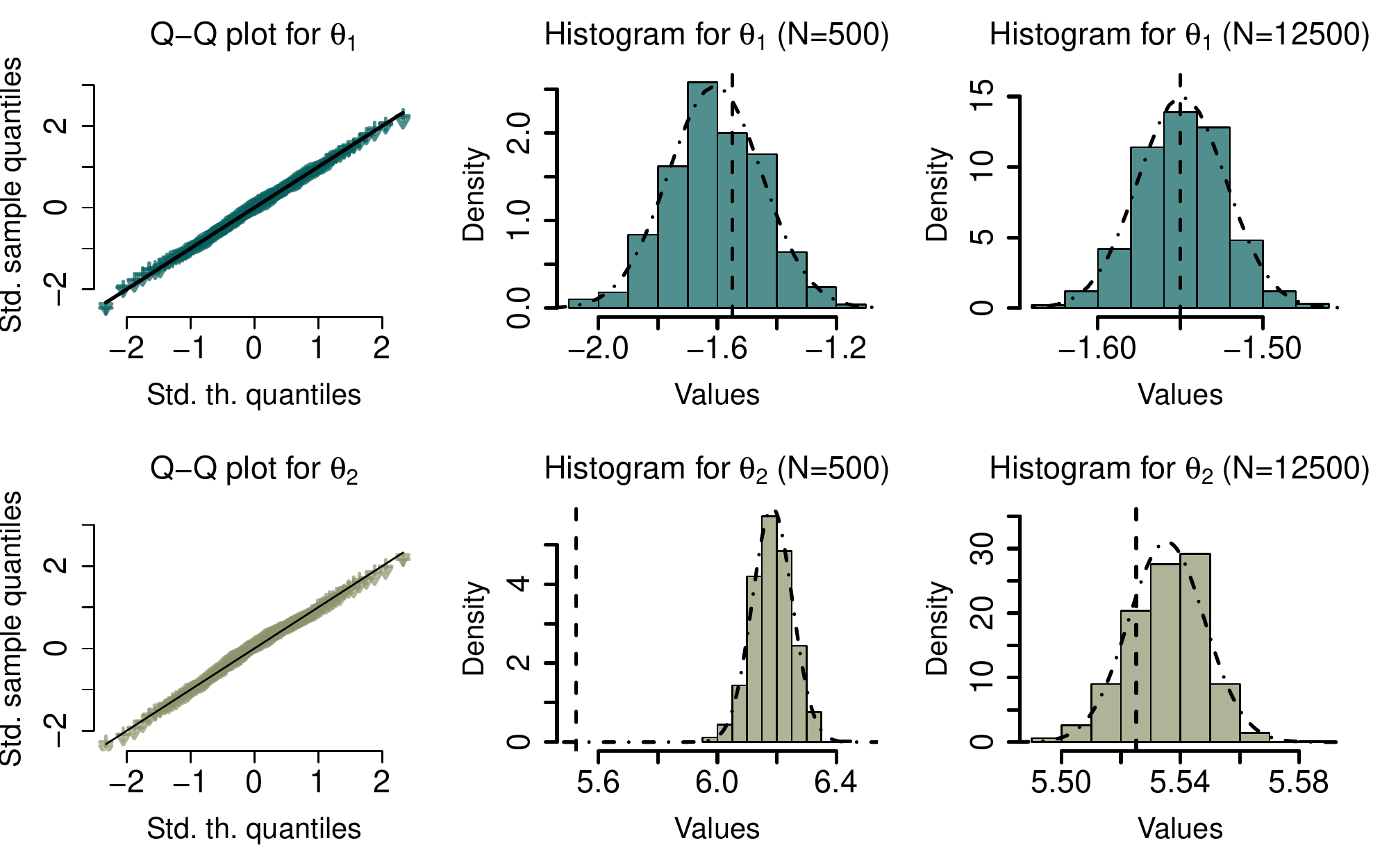}
\includegraphics[width=1.0\textwidth]{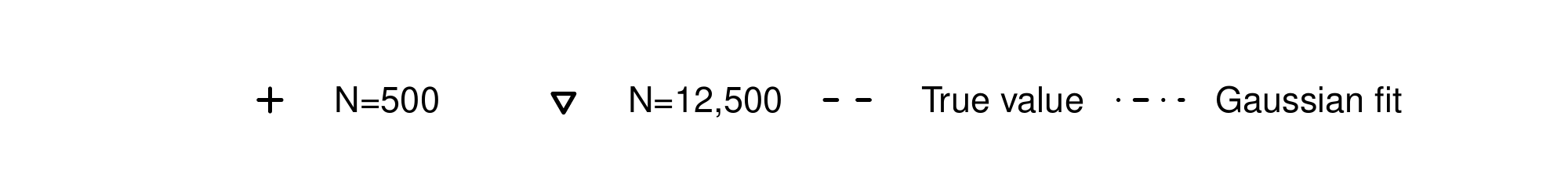}
\caption{Verification that estimators are approx.\ Gaussian for $N=500$ and $N=12,500$.}
\label{figure:estimators-gaussian}
\end{center}
\end{figure} 
The Q-Q plots for both $\theta_1$ and $\theta_2$ are straight and the Gaussian fit on the histogram has the right width. This said, with limited data $(N=500)$, the Gaussian mode is lower than the true value $\theta_1=-1.549$ and greater than the true value $\theta_2=5.525$ (by a number of standard deviations for the latter). The Gaussian fit is particularly suited for $\theta_1$ with $N=12,500$.

Table \ref{table:bootstrap} features the numerical values of the experiment: as $N$ grows, the bias (and percent bias) reduces as expected although the fit with only $500$ samples is already satisfactory with $3.73\%$ and $11.94\%$ biases respectively for $\theta_1$ and $\theta_2$. Those biases boils down to a few tenths of a percent for $N=12,500$. The bias and standard deviations both shrink by a factor of 50 between the cases $N=500$ and $N=12,500$. Their rates of convergence against the sample size $N$ are estimated to be $1.23 (.106)$ and $0.55 (.006)$, respectively, using a standard linear regression on the $\log$-$\log$ scale; note that both are significant at the $5\%$ level. This corresponds to the expected rate of $1/2$ described in Section \ref{section:discretised-high-frequency-data}.
\begin{table}[htp]
\begin{tabular}{clccclccc}
\textbf{N} &  & \textbf{$\widehat{\theta}_1$} & \textbf{Bias} & \textbf{\% Bias} &  & \textbf{$\widehat{\theta}_2$} & \textbf{Bias} & \textbf{\% Bias} \\ \cline{1-1} \cline{3-5} \cline{7-9} 
500 &  & $-$1.608 (.157) & $-$0.058 & 3.73    &  & 6.185 (.067) & 0.66 & 11.94 \\
4,500 &  & $-$1.555 (.045) & $-$0.005 & 0.31    &  & 5.582 (.022) & 0.05 & 1.022 \\
8,500 &  & $-$1.551 (.032) & $-$0.001 & 0.08    &  & 5.547 (.016) & 0.02 & 0.399 \\
12,500 &  & $-$1.549 (.027) & 0.001    & $-$0.07 &  & 5.535 (.013) & 0.01 & 0.185 \\ \cline{1-1} \cline{3-5} \hline
\end{tabular}
\label{table:bootstrap}
\caption{Monte Carlo mean and standard deviations in parenthesis along with the bias and percent bias for $N=500; 4,500; 8,500; 12,500$.}
\end{table}

\subsection{Impact of the graph topology}
\label{section:impact-network-topology}
In this section, we study the impact of the adjacency matrix, or graph topology, $\A$. From Equation \eqref{eq:sde}, it seems reasonable to postulate that this said topology impacts the inference performance of the estimators---especially for the off-diagonal parameter $\theta_1$. The data corroborates this effect which we interpret as a consequence of the value dissipation across a higher number of neighbours; formally, its expression features the quadratic impact of synchronised peaks between two variables.

Figure \ref{figure:network-configurations} presents three artificial network configuration by increasing order of connectivity (here shown by some key statistics of the node degrees in parenthesis):  \texttt{Polymer} (50 nodes in a straight line, $\text{mean} = 2.96, \text{median} = 3, \min = 1,\ \max = 2$), \texttt{Lattice} (a grid of $9 \times 9$ nodes where nodes are connected to their closest neighbours and one additional node linked to only one node of the grid, $\text{mean} = 3.36, \text{median} = 4, \min = 2,\ \max = 4$), \texttt{Complete} (a 50-node complete graph, $\text{mean} = 50, \text{median} = 50, \min = 50,\ \max = 50$), along with the \texttt{RE-Europe 50} network configuration ($\text{mean} = 1.12, \text{median} = 1, \min = 0,\ \max = 4$).

\begin{figure}[htp]
  \begin{center}
\includegraphics[width=1.0\textwidth]{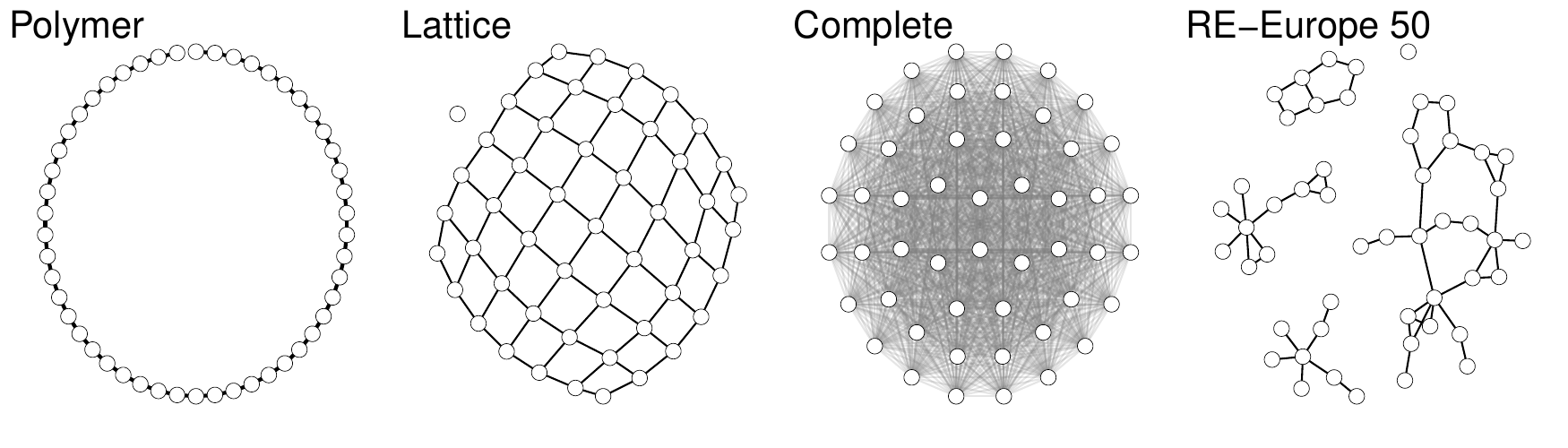}
\caption{Different network configurations used in the simulation study. Note that the Polymer configuration is not connected at the top of the circle.}
\label{figure:network-configurations}
\end{center}
\end{figure}

In Figure \ref{figure:violin-layout-configuration}, violin plots of the inference on 100 paths with 12,500 samples each are represented for those four configurations.

\begin{figure}[htp]
  \begin{center}
\includegraphics[width=1.0\textwidth]{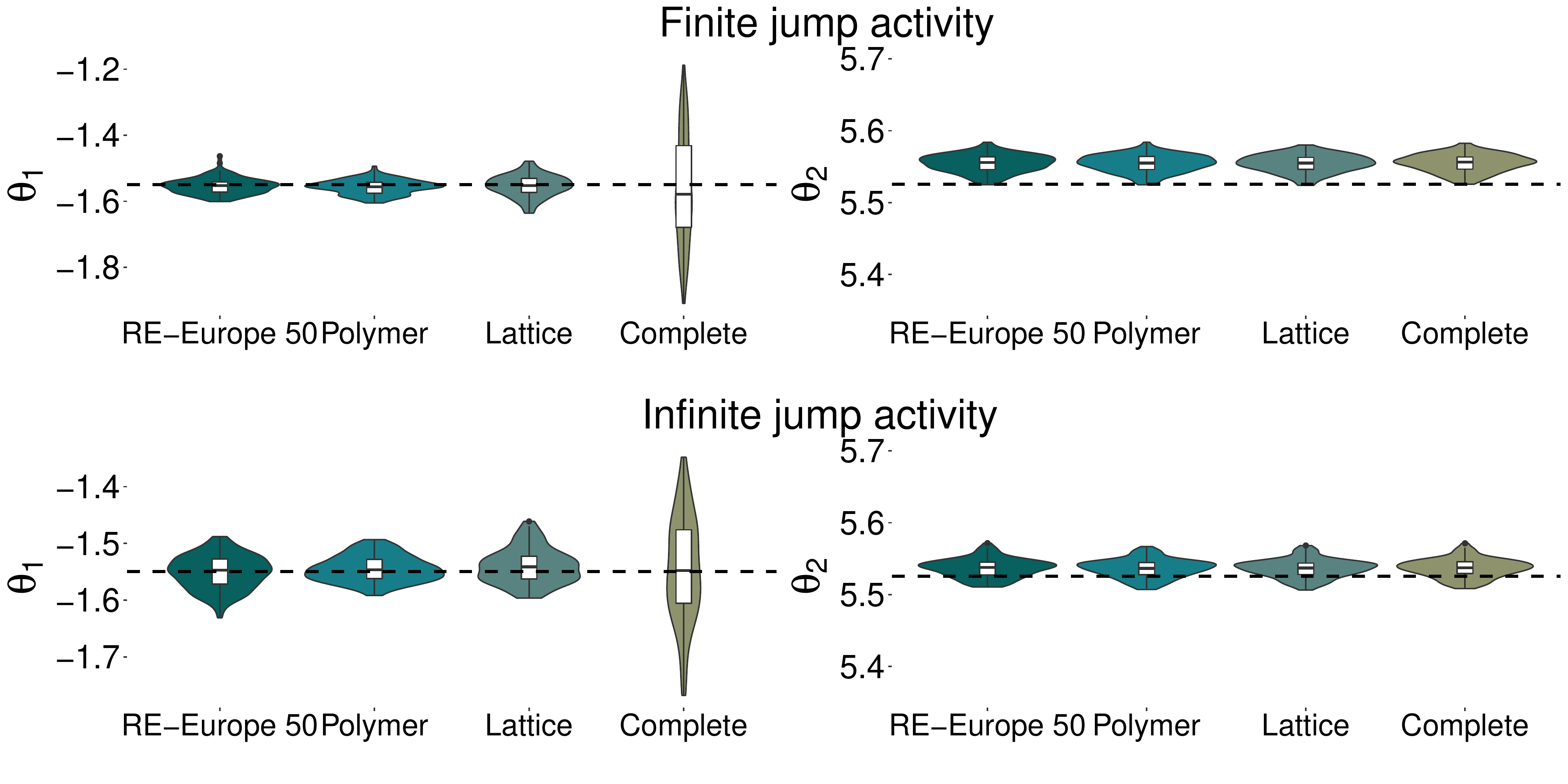}
\caption{Violon plots of the $\boldTheta$-GrOU estimation on 100 bootstrapped paths for both the finite and infinite jump activities across four graph configurations.}
\label{figure:violin-layout-configuration}
\end{center}
\end{figure}  

The MLE estimator performs well across all graph configurations except for the off-diagonal parameter $\theta_1$ for the complete graph which is substantially noisier. Quantitatively, the row-normalised parameter $\theta_1/n_i$ is low compared to $\theta_2$ where $n_i = 50$ which induces some numerical instability. Qualitatively,  the complete graph dissipates the signal at a given node to all its neighbours whilst receiving a small portion from each of its neighbours' signal which causes some instability.
 The performance for $\theta_2$ is similar across the examples since each node is equipped with a self-loop in all graph configurations.

\subsection{Tuning the jump threshold exponent $\beta$}
\label{section:beta}
An important set of parameters are the jump thresholds $(\beta^{(i)}:\ i \in \{1,\dots,d\}) \subset (0,1/2)$ which we assume to be equal across the marginals given they are centred and with similar standard deviations ($\approx 5\%$). A jump threshold too large, i.e.\ $\beta \rightarrow 0$, (resp.\ too small, i.e.\ $\beta\rightarrow 1/2$) would lead to underestimate (resp.\ overestimate) the frequency of jumps and overestimate (resp.\ underestimate) their amplitudes (Remark \ref{remark:jump-threshold-delta_n}).

Parametrically bootstrapping 500 paths using the estimated infinite activity driving noise distribution led to the bias study presented in Figure \ref{figure:noise-marginals}. As $\beta \rightarrow 1/2$, the bias for the off-diagonal term $\theta_1$ decreases in all four cases with a maximum change of $0.005$ in absolute value. It becomes negligible for \texttt{RE-Europe 50} and \texttt{Complete} topologies whilst it gets even more negative for the \texttt{Complete} and \texttt{Lattice} topologies. However, the changes in bias for $\theta_1$ are substantially smaller than the biases for $\theta_2$: the biases change by a higher order of magnitude from approximately $-0.025$ to $-0.010$ for all four topologies.

This contrasts with the increased variability in the estimation for $\theta_1$ compared to $\theta_2$ depending on the topology (Section \ref{section:impact-network-topology}). Therefore, we take a value of $\beta$ close to $1/2$ ($\beta = 0.4999$) for two reasons: (a) to limit the estimation bias for both $\theta_1$ and $\theta_2$ with the EU-Europe 50 topology (b) since $\widehat{\theta}_2$ benefits the most from this change in $\beta$ while $\widehat{\theta}_1$ is already more volatile. This is in accordance with the literature \citep[Section 4, p.\ 1749]{bollerslev2011estimation}.
\begin{figure}[htp]
  \begin{center}
\includegraphics[width=1.0\textwidth]{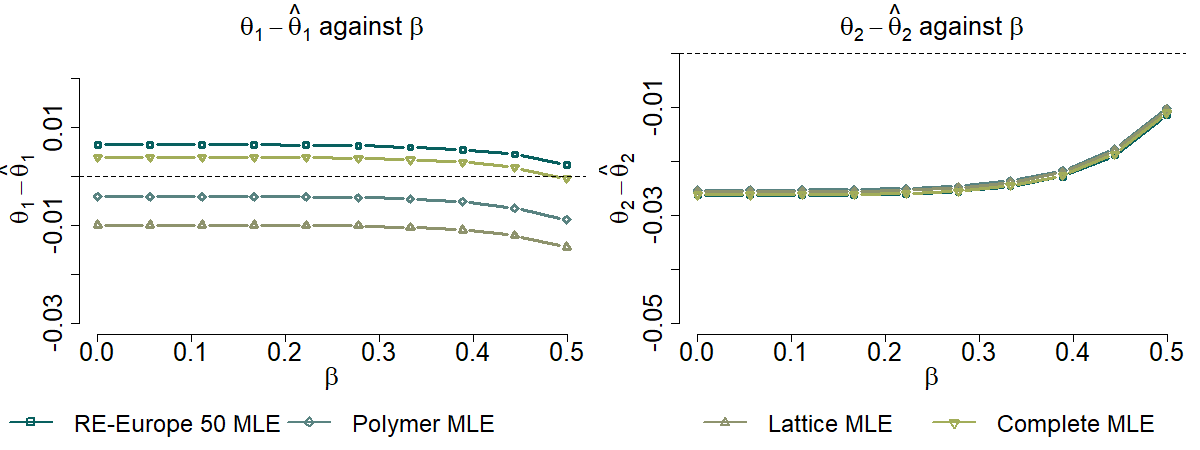}
\caption{Estimation bias of the $\boldTheta$-GrOU estimator against $\beta \in (0,1/2)$ for different graph topologies on 50 paths. Standard deviations are not included for clarity as they are smaller than $.0001$.}
\label{figure:noise-marginals}
\end{center}
\end{figure}


\subsection{Comparing with the least squares estimator}
\label{section:comparing-least-squares}
In this section, we compare the high-frequency \levy-driven OU MLE estimator to the LS estimator as studied in \cite{fasen2013} which estimates $\exp(-\Q\Delta_N)$ and we extend this comparison to the multivariate context. We report the Relative $L^2$-Error Metric (REM) with respect to $Q(\boldTheta)$ which is a map from $\MdR$ onto $[0,\infty)$ defined by
$$\rmX \mapsto \|\exp(-\Q(\boldTheta)\Delta_N)\|^{-1}\cdot \|\exp(-\rmX\Delta_N) - \exp(-\Q(\boldTheta)\Delta_N)\|,$$
where $\boldTheta$ is given in Eq.\ \eqref{eq:theta-exploratory-analysis} as the true parameters for the paths simulation. The smaller the REM, the better the inference strategy is at recovery the true parameters whilst allowing to compare both the MLE and the LS estimators.

\subsubsection{Noise amplitude robustness}
To measure the robustness of estimators against the noise amplitude, we multiply our fitted driving noise distribution by a \emph{noise multiplier} scalar $\sigma \in \{0.5, 1, 5, 10, 100, 1000\}$: we generate our paths with the noise increments $(\sigma \Delta_k \LL: \ k \in\{1,\dots,N\})$. This puts the robustness of the $\boldTheta$-GrOU MLE to the test as the noise amplitude gets larger across graph topologies with both finite and infinity jumps activities. In Figure \ref{figure:fasen-sigma}, we vary the noise multiplier parameter $\sigma \in \{0.5, 1, 5, 10, 100, 1000\}$ to generate paths driven by a noise with either finite (\textnormal{MLE (fin)}) or infinite (\textnormal{MLE (inf)}) jump activity.

\begin{figure}[htp]
  \begin{center}
\includegraphics[width=1.0\textwidth]{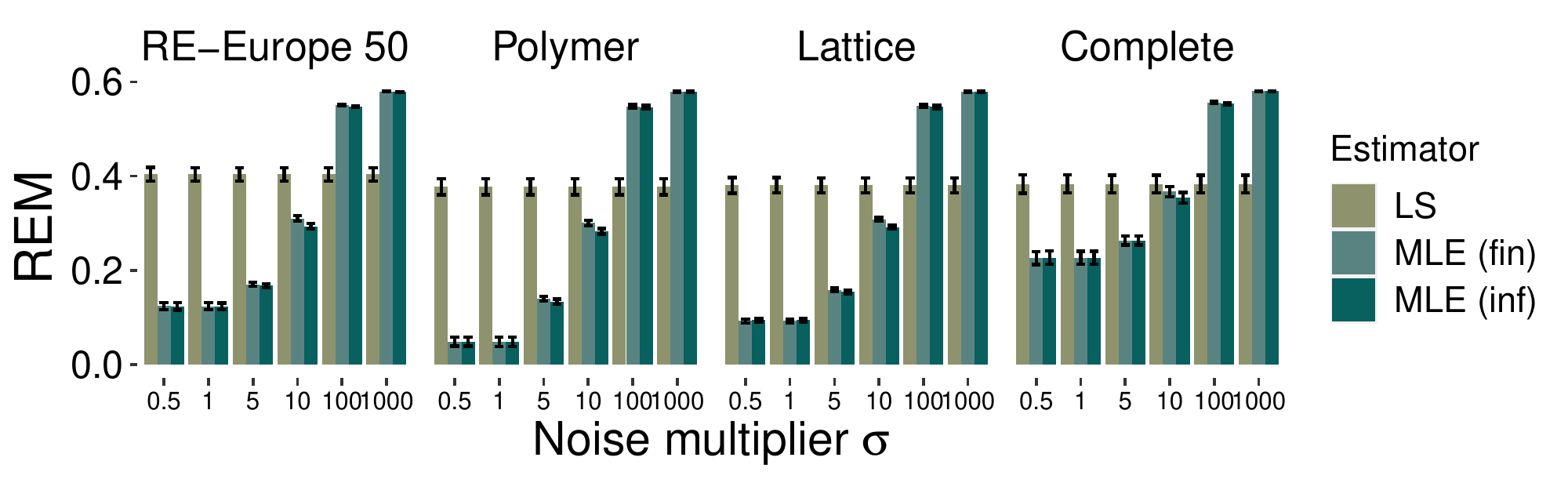}
\caption{Relative $L^2$-Error Metric (REM) comparing the $\boldTheta$-GrOU MLE estimator and the \textnormal{LS} estimator to the true parameter values (95\% standard deviations in black).}
\label{figure:fasen-sigma}
\end{center}
\end{figure} 
The LS estimator has a constant performance across the jump activities and graph topologies with the REM ranging from $0.37$ to $0.41$ in the finite and infinite cases. Thus, we only plot the finite cases. Similarly, the MLE performs equally under finite and infinite activities. 

As shown \cite{mai2014efficient} in the univariate case, the MLE estimator is superior to the LS estimator with the REM ranging from $0.05$ to $0.20$ for $\sigma \in \{0.5,1,5,10\}$. As $\sigma$ gets larger, the MLE performance gets poorer with the REM above $0.5$ for $\sigma \in \{100,1000\}$ in both the finite and infinite cases whilst the LS estimator's performance remains constant. This is due to the jump filtering term that absorbs most of the increments as the amplitude grows. Also, a finite-sample bias is present both estimators converge anyway as the REM performance is surprisingly consistent across paths for all estimators with 95\% errors equal to a thousandth of a REM unit, even for $\sigma = 1,000$.

For a given estimator, the behaviour is similar across different topologies. The MLE performs better for sparser graphs: for $\sigma = 1$, the REM is ten times as small for the MLE on \texttt{Polymer} compared the LS whilst it is only twice as small on \texttt{Complete}. 

The fact the performance of the LS estimator is not related to the underlying graph structure is difficult to interpret. We conjecture this is a side-effect of the row-normalisation applied to the dynamics matrix. However, the constant performance even for large noise amplitudes was expected as it converges even with infinite variance \citep{fasen2013}.

\subsubsection{Robustness to changes in mesh size}
\label{section:robustness-mesh-size}
	The availability of data sets quantifying the same system but different mesh sizes are common (e.g.\ meteorological and financial data) and being able to compare the GrOU parameters across data sets with different mesh sizes becomes crucial.	
	
Recall that the MLE converges as $T_N \rightarrow \infty$ and $\Delta_N \rightarrow 0$ as detailed in Assumption \ref{assumption:high-frequency-asymptotics}. The LS estimator encodes the mesh size in the estimate of $\exp(-\rmQ \Delta_N)$ as opposed to the MLE. However, for a fixed sample size $N=25,000$, we compare the estimators's performances by varying $\Delta_N$ from $0.001$ to $0.200$ whilst the true parameter values were generated with $\Delta_N=1/12$ as pictured in Figure \ref{figure:horizon}.

We observe that the parameter values are linked to a particular mesh size, may it be the MLE or LS estimators as hinted by the drop in REM around the original mesh size of $\Delta_N = 1/12$. With a larger mesh size, we see that the REM explodes exponentially (due to the log-log scale) whilst a smaller mesh size leads a larger REM which remains controlled and never exceeds $0.6$. 

We conclude that we ought to interpolate data to upsample the lower frequency data set rather than downsample the higher frequency one to make two sets of parameters comparable.

\begin{figure}[htp]
  \begin{center}
\includegraphics[width=1.0\textwidth]{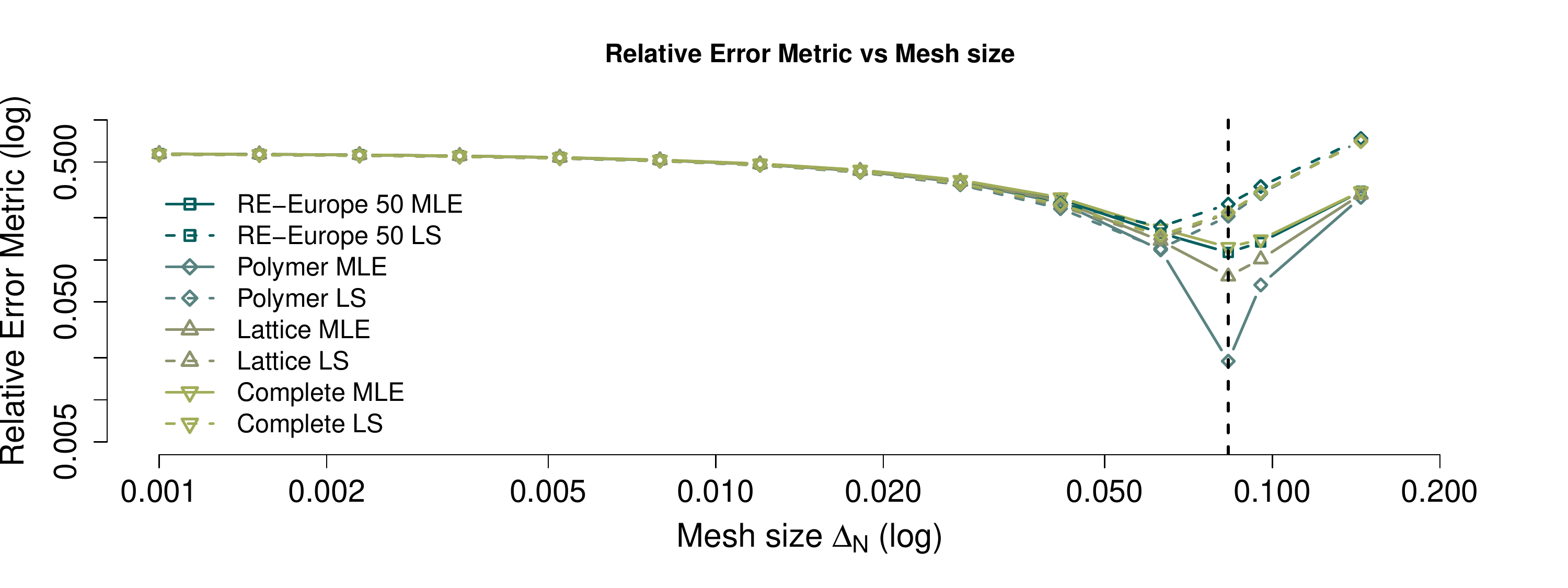}
\caption{Relative $L^2$-Error Metric comparing the $\boldTheta$-GrOU MLE and LS estimators against the true parameter values $\boldTheta =(-1.549,\ 5.525)^\top$ by varying $\Delta_N \in [0.001, 0.200]$ on a $\log$-$\log$ scale. The vertical dashed line indicated the true mesh size.} 
\label{figure:horizon}
\end{center}
\end{figure}

\newpage

\section{Conclusion}
In this article, we considered the discretely-observed Graph Ornstein-Uhlenbeck process and carried out the translations of the continuous-time estimators and asymptotic results given in \cite{courgeau2020likelihood} to a non-uniform high-frequency discrete time framework. This was done by adapting the (high-frequency) double asymptotic assumptions from \cite{mai2014efficient} to the multivariate context. Jump-filtered and discretised maximum likelihood estimators are shown to converge stably to the same distribution as in the continuously-observed case under a set of standard assumptions. This yields two different scales for interpretation: on a $d$-node graph, two-parameter $\boldTheta$-GrOU or $d$-parameter $\boldPsi$-GrOU parametrisations are available; the former detailing the average behaviour of a node whilst the latter could be useful for prediction purposes. In addition, the non-uniformity of the observation grid extends the range of applicable data sets: e.g.\ event-based data sets closer to real-time data aggregation \citep{Simonov2017EventBased, hol2018estimation}. In addition to providing the consistency of the estimators, the stable convergence form a theoretical framework to perform edge pruning in conjunction to the GrOU inference \citep{matulewicz2017statistical} if the network structure is to be stochastic or time-dependent \cite{zaccarin2010expRandomGraph}. Alternatively, if the structure is deterministic but unknown, the Adaptive Lasso scheme from \cite{courgeau2020likelihood, Gaiffas2019SparseOUProcess} extended to high-frequency observations is shown to be asymptotically normal (stably) and consistent in variable selection.

Future research comprises the assessment of the forecasting performance of such a model on high-frequency data sources (e.g.\ on tick-by-tick financial time series) for a large collection of systems or financial assets \citep{hol2018estimation}. Note that including the impact of the neighbours' neighbours, or more generally $n$-th degree neighbours as in \cite{Knight2016ModellingDA}, could prove useful in improving the model's performance. Another direction would be the extension of the discrete-time asymptotic convergence results to \levy-driven \emph{spatio-temporal} OU processes as defined in \cite{NguyenMichele2016SpatioTemporalOU} and comparing them with the coupled sparse inference on the adjacency matrix along with the GrOU inference as mentioned above.  Additionally, extending the results for the studied OU-type structure for graphs to the more general continuous time autoregressive moving average (CARMA) processes \citep{Marquardt2007MultivariateCarma, brockwell2013carma-recovery} under a high-frequency sampling scheme is a promising extension. Alternatively, the application of stochastic volatility on GrOU-type process  \cite[Section 5]{courgeau2020likelihood} would alleviate the fixed structure of the Gaussian component and introduce empirical properties such as volatility clustering. Finally, the stable convergence results may allow to extend the GrOU process to stochastic graphs, potentially by borrowing ideas from the random graph modelling literature \citep{zaccarin2010expRandomGraph, robins2011exponential}.

\appendix

\section{\levy-\Ito decomposition}
\label{appendix:levy-ito-decomposition}
In this section, we provide standard results and properties of the \levy-\Ito decomposition for \Levy processes.  We say a truncation function is any $\R^d$-valued nonnegative function. The \Levy process $\LL$ has the following characteristic function for $\LL$ at time $t \geq 0$:
\begin{align*}
	& \E\left(\exp\left(i \boldsymbol{u}^\top \LL_t \right)\right)\\
	&= \exp\left\{
t\left( i\boldsymbol{u}^\top \boldB - \frac{1}{2}\boldsymbol{u}^\top \Si \boldsymbol{u}+ \int_{\R^d\symbol{92}\{\boldsymbol{0}_d\}}\left( \exp\left[i \boldsymbol{u}^\top \boldsymbol{z} \right] - 1 - i \boldsymbol{u}^\top \boldsymbol{z} \tau(\boldsymbol{z})\right)d\nu(\boldsymbol{z})\right)\right\},
\end{align*}
 where $\boldsymbol{u},\ \boldB \in \R^d$, $\Si \in \Ss^{++}$ and $\nu$ is a \Levy measure on $\R^d$ satisfying $\int_{\R^d\symbol{92}\{\boldsymbol{0}\}} (1 \wedge \|\boldsymbol{z}\|^2)\nu(d\boldsymbol{z}) < \infty$. 
The decomposition formulated in \cite{bretagnolle2006ecole} unfolds as follows:
\begin{theorem}[adapted from Theorem 3.12, \cite{bretagnolle2006ecole}]
\label{th:levy-ito-decomposition}
Let $\boldsymbol{x} \mapsto \tau(\boldsymbol{x})$ be a truncation function (e.g. $\tau(\boldsymbol{z}) := \indicator_{\{\boldsymbol{x}\in\R^d: \|\boldsymbol{x}\| \leq 1\}}(\boldsymbol{z})$ where $\boldsymbol{z} \in \R^d$). The $\LL$ be a $d$-dimensional \Levy process with characteristic triplet $(\boldB,\Si, \nu)$ with respect to the truncation function $\tau$, then there exist
\begin{itemize}
    \item A centred Gaussian \Levy process $(\WW_t,\ t \geq 0)$ with respect to $(\mathcal{F}_t,\ t \geq 0)$ with covariance matrix $\Si$ and almost-surely continuous paths;
    \item A family $(N_t(d\boldsymbol{z}),\ t \geq 0)$ of Poisson processes, independent of $(\WW_t,\ t \geq 0)$, with $N_t(A)$ independent of $N_t(B)$ for any $t \geq 0$ if $A \cap B = \emptyset$, and with $\nu(d\boldsymbol{z}) = \E\left(N_1(d\boldsymbol{z})\right)$;
\end{itemize}
such that we have uniquely for any $t \geq 0$
$$\LL_t \overset{d}{=} t \boldB + \WW_t  + \JJ^{1}_t + \JJ^{2}_t,$$
where
\begin{align*}
	\boldB &:= \E\left(\LL_1 - \int_{\tau(\boldsymbol{z})=0}\boldsymbol{z} N_1(d\boldsymbol{z})\right)\\
		\JJ^{1}_t &:= \int_{\tau(\boldsymbol{z})=0}z N_t(d\boldsymbol{z})\\
		\JJ^{2}_t  &:= \int_{\tau(\boldsymbol{z})=1}\boldsymbol{z}\big(N_t(d\boldsymbol{z})-t\nu(d\boldsymbol{z})\big),\\
\end{align*}
and we define the process $\JJ_t := \JJ^{1}_t + \JJ^{2}_t$.
%
In addition, $\nu(d\boldsymbol{z})$ is a positive measure on $\R^d\symbol{92} \{\boldsymbol{0}_{d}\}$ with $\int \min(\|\boldsymbol{z}\|^2,1) \nu (d\boldsymbol{z}) < \infty$.
\end{theorem}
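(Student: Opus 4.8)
The plan is to construct each component directly from the sample paths of $\LL$ and then show that what remains after subtracting the drift and the jump parts is a continuous Gaussian \Levy process. First I would build the jump-counting random measure: for each Borel set $A \subset \R^d \setminus \{\boldsymbol{0}_d\}$ bounded away from the origin, define $N_t(A) := \#\{s \leq t : \LL_s - \LL_{s-} \in A\}$. Using the stationarity and independence of the increments of $\LL$, together with c\`adl\`ag paths, one verifies that each $N_t(A)$ is a Poisson process, that $N_t(A)$ and $N_t(B)$ are independent whenever $A \cap B = \emptyset$, and that the intensity measure is $\nu(d\boldsymbol{z})\,dt$; this simultaneously identifies $\nu$ as the \Levy measure and yields $\nu(d\boldsymbol{z}) = \E\left(N_1(d\boldsymbol{z})\right)$.

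Next I would split the jumps according to the truncation function $\tau$. On the region $\{\tau = 0\}$ (here $\|\boldsymbol{z}\| > 1$) the measure $\nu$ is finite, so $\JJ^1_t = \int_{\tau(\boldsymbol{z})=0}\boldsymbol{z}\, N_t(d\boldsymbol{z})$ is an almost-surely finite sum of jumps, i.e.\ a well-defined compound Poisson process with no integrability concerns. On the region $\{\tau = 1\}$ ($\|\boldsymbol{z}\| \leq 1$) the jumps may accumulate, so I would introduce the truncated compensated integrals $M^{\eps}_t := \int_{\eps < \|\boldsymbol{z}\| \leq 1}\boldsymbol{z}\,(N_t(d\boldsymbol{z}) - t\,\nu(d\boldsymbol{z}))$ for $\eps \downarrow 0$. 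Each $M^{\eps}$ is a centred square-integrable martingale, and invoking the integrability condition $\int_{\R^d\setminus\{\boldsymbol{0}\}} (1 \wedge \|\boldsymbol{z}\|^2)\,\nu(d\boldsymbol{z}) < \infty$ one shows via an isometry and Doob's inequality that $(M^{\eps})_{\eps}$ is Cauchy in $L^2$, uniformly on compact time intervals, hence converges to a c\`adl\`ag $L^2$-martingale $\JJ^2$.

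Finally, I would set $\WW_t := \LL_t - t\boldB - \JJ^1_t - \JJ^2_t$. By construction $\WW$ inherits stationary independent increments and continuity in probability, while all its jumps have been removed, so it is a continuous \Levy process; a classical characterisation then identifies $\WW$ as a centred Gaussian process with some covariance matrix $\Si \in \Ss^{++}$ and almost-surely continuous paths. Independence of $\WW$ from the Poisson family $(N_t)$—equivalently the independence of the Gaussian and pure-jump components—follows because the jumps are measurable with respect to the random measure $N$ whereas $\WW$ is jump-free, and can be made rigorous by checking that the joint characteristic function factorises in accordance with the \levy-Khintchine formula displayed above. The main obstacle is the $L^2$-convergence of the compensated small-jump integrals $M^{\eps}$ and the careful verification of the independence between the continuous and jump parts; the remaining identifications are routine once the Poisson random measure has been constructed.
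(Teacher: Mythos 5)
The paper never proves this statement: it is imported verbatim (``adapted from Theorem 3.12'' of the cited Bretagnolle lecture notes) and used as a black box, so there is no internal proof to compare yours against. Your sketch is the classical It\^{o} construction of the decomposition, and it is correct in outline --- building the jump-counting Poisson random measure $N_t(A) = \#\{s \leq t : \LL_s - \LL_{s-} \in A\}$ from the path, treating the region $\{\tau = 0\}$ (finitely many jumps per compact interval) as a compound Poisson process $\JJ^1$, obtaining $\JJ^2$ as the $L^2$-limit of compensated small-jump integrals via the isometry $\E\bigl(\|M^{\eps}_t - M^{\eps'}_t\|^2\bigr) = t\int_{\eps' < \|\boldsymbol{z}\| \leq \eps}\|\boldsymbol{z}\|^2\,\nu(d\boldsymbol{z})$ together with Doob's inequality, and identifying the continuous remainder as Gaussian by the characterisation of continuous \Levy processes; this is essentially the proof contained in the reference the paper defers to, so each approach buys what you would expect: the citation buys brevity for a foundational classical result, your reconstruction buys self-containedness. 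Two points would need more care in a full write-up. First, independence: the remark that the jumps are measurable with respect to $N$ while $\WW$ is jump-free is not by itself an argument (two functionals of the same path can certainly be dependent); the actual content is the factorisation of the joint characteristic function of $\WW$ and finitely many $N_t(A_j)$ (or an exponential-martingale computation), which you correctly flag as the rigorous route rather than claiming the heuristic suffices. Second, identification: you construct a measure $\E\left(N_1(\cdot)\right)$ and a covariance for the continuous part out of the path, but the theorem asserts the decomposition relative to the \emph{given} triplet $(\boldB, \Si, \nu)$; matching your constructed objects to the ones in the triplet requires invoking the uniqueness of the \levy-Khintchine representation explicitly, and the same uniqueness is what justifies the word ``uniquely'' in the statement.
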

\begin{remark}
	In this article, we use $\tau(\boldsymbol{z}) := \mathbb{I}_{\{\|\boldsymbol{x}\| \leq 1\}}(\boldsymbol{z})$. That is, $\tau$ is positive when $\boldsymbol{z} \leq 1$ allowing to capture small jumps in $\JJ^2$ whilst $\tau$ is zero for larger jumps (as in the definition of $\boldB$ and $\JJ^1$).
\end{remark}
Note that $N_t(d\boldsymbol{z}) = \int_0^t N(ds,d\boldsymbol{z})$ where $N(ds,d\boldsymbol{z})$ is a Poisson random measure on $[0,\infty) \times \R^d$. Under Assumption \ref{assumption:mle-convergence}, one has $\LL_t = \WW_t + \JJ_t$. Also, the stochastic integral of $\JJ^1$  is in the sense of $L^0$ and the second one, of $\JJ^2_t$, in the $L^2$ sense. As noted in \cite{ContRama2004levyitodecomp}, $(\WW_t)$, $(\JJ^1_t)$ and $(\JJ^2_t)$ are independent and the convergence in the last term is almost sure and uniform in $t$ on compact sets. 

%
%

\section{Proof for the discretised unfiltered estimator}
\label{proof:lemma:cv-estimator-continuous-component}
\subsection{Proof of Lemma \ref{lemma:cv-estimator-continuous-component}}
\begin{proof}[Proof of Lemma \ref{lemma:cv-estimator-continuous-component}]
From L\'evy-\ito's decomposition, there exists a d-dimensional Brownian Motion $\WW_t$ with respect to $\proba_{\YY}$ and covariance matrix $\Si$ and a pure jump L\'evy process $\mathbb{J}_t$ such that $\LL_t = \WW_t + \mathbb{J}_t$ (without drift by assumption). The continuous $\proba_{0}$-martingale part is then expressed as $\YY^{c}_t = \YY^{c}_0  + \WW_t - \int_0^t \rmQ \YY_sds$ hence $\Delta_k Y^{(i),c} = \Delta_k W^{(i)} - \int_{t_k}^{t_{k+1}}(\Q\YY)^{(i)}_sds = \Delta_k W^{(i)} - \sum_{l=1}^d\int_{t_k}^{t_{k+1}}\Q_{il}Y^{(l)}_sds$. Recall that $\boldsymbol{\psi} = \text{vec}(\Q)$ and that $$\boldsymbol{\overline{\psi}}_{N} = \Sn_N^{-1} \cdot\left(\sum_{k=0}^{N-1} \YY_k \otimes \int_{t_k}^{t_{k+1}}\Q\YY_sds\right).$$
Remark that the $d(i-1)+j$-th element of $\YY_k \otimes \int_{t_k}^{t_{k+1}}\Q\YY_sds$ is given by $Y^{(i)}_k\int_{t_k}^{t_{k+1}}\sum_{l=1}Q_{jl}Y^{(l)}_sds = \sum_{l=1}Q_{jl}\int_{t_k}^{t_{k+1}}Y^{(i)}_kY^{(l)}_sds$. The same element in $\left(\int_{t_k}^{t_{k+1}}\YY_k  \YY_{s}^\top ds \otimes \Id \right) \boldsymbol{\psi}$ is 
$$\sum_{l=1}^{d}\int_{t_k}^{t_{k+1}}Y^{(i)}_s  Y^{(l)}_sdsQ_{jl} = \sum_{l=1}^{d}Q_{jl}\int_{t_k}^{t_{k+1}}Y^{(i)}_k  Y^{(l)}_sds.$$
Since this is true for any $i,j\in\{1,\dots,d\}$, we conclude that
\begin{equation}
	\label{eq:trick-An}
	\YY_k \otimes \int_{t_k}^{t_{k+1}}\Q\YY_sds = \left(\int_{t_k}^{t_{k+1}}\YY_k  \YY^\top_{s}ds \otimes \Id \right) \cdot \boldsymbol{\psi}.
\end{equation}
Using Equation \eqref{eq:trick-An}, we obtain:
\begin{align*}
    &T_N^{1/2}(\boldsymbol{\overline{\psi}}_{N}-\boldsymbol{\psi})\\
    &=T_N^{1/2}\Sn_N^{-1} \bigg(\sum_{k=0}^{N-1} \YY_k \otimes \int_{t_k}^{t_{k+1}}\Q\YY_sds - \Sn_N \boldsymbol{\psi} \bigg) - T_N^{1/2}\Sn_N^{-1}\sum_{k=0}^{N-1}\YY_k \otimes \Delta_k\WW \\
    &= T_N^{1/2}\Sn_N^{-1} \bigg[\sum_{k=0}^{N-1} \left(\int_{t_k}^{t_{k+1}}\YY_k  \YY^\top_{s}ds \otimes \Id \right) \boldsymbol{\psi} - \Sn_N \boldsymbol{\psi} \bigg] - T_N^{1/2}\Sn_N^{-1}\sum_{k=0}^{N-1}\YY_k \otimes \Delta_k\WW\\
    &= T_N^{1/2}\Sn_N^{-1} \bigg(\int_{0}^{T_N}\YY_{\lfloor s \rfloor_N} \YY_s^\top ds \otimes \Id - \Sn_N  \bigg)\boldsymbol{\psi} - T_N^{1/2}\Sn_N^{-1}\sum_{k=0}^{N-1}\YY_k \otimes \Delta_k\WW,
    \end{align*} 
    where $\lfloor s \rfloor_N := t_k$ such that $t_k \leq s < t_{k+1}$ for $ k \in \{1,\dots, N\}$.
    Therefore, we write respectively the terms above as
    \begin{align*}
    T_N^{1/2}(\boldsymbol{\overline{\psi}}_{N}-\boldsymbol{\psi})&= \ZZ^1_N + \ZZ^2_N.
\end{align*}
As in the univariate case \citep[Section 3, p.\ 929]{mai2014efficient}, we will prove that $\ZZ^1_N \probconv 0$ and $\ZZ^2_N \xrightarrow{\ \mathcal{D} \ } \mathcal{N}\left(\boldsymbol{0}_{d^2}, 
\E\left(\YY_\infty \YY^\top_\infty\right)^{-1} \otimes \Si \right)$ as $N \rightarrow \infty$. First, we can rewrite $\ZZ^1_N$ as follows
$$\ZZ^1_N = T_N \Sn_N^{-1} \cdot T_N^{-1/2}\left(\widehat{\boldsymbol{K}}_N\otimes \Id - \Sn_N \right) \boldsymbol{\psi},$$
where $\widehat{\boldsymbol{K}}_N := \int_{0}^{T_N}\YY_{\lfloor s \rfloor_N} \YY_s^\top ds$.
Recall that $\Sn_N = \Kn_N \otimes \Id$ where we remark that $\Kn_N = \int_{0}^{T_N}\YY_{\lfloor s \rfloor_N} \YY_{\lfloor s \rfloor_N}^\top ds$. 
Taking the $L^1$ difference between $(\Kn_N)_{ij}$ and $(\widehat{\boldsymbol{K}}_N)_{ij}$ leads to
\begin{align*}
	T_N^{-1/2}& \E\left(\left|(\Kn_{N})_{ij} -(\widehat{\boldsymbol{K}}_N)_{ij}\right|\right)\\
&= T_N^{-1/2}\E\left[\int_{0}^{T_N}\left|Y^{(i)}_{\lfloor s \rfloor_N}Y^{(j)}_{\lfloor s \rfloor_N} - Y^{(i)}_{\lfloor s \rfloor_N}Y^{(j)}_s\right|ds\right]     \\
&=T_N^{-1/2}\int_{0}^{T_N}\E\left(\left|Y^{(i)}_{\lfloor s \rfloor_N}\right|\cdot\left|Y^{(j)}_{\lfloor s \rfloor_N} - Y^{(j)}_s\right|\right)  ds, \quad \text{by Fubini's theorem,}   \\
&\leq T_N^{-1/2}\int_{0}^{T_N}\E\left(\left|Y^{(i)}_{\lfloor s \rfloor_N}\right|^2\right)^{1/2}\cdot\E\left(\left|Y^{(j)}_{\lfloor s \rfloor_N} - Y^{(j)}_s\right|^2\right)^{1/2}  ds \\
&\leq T_N^{1/2}\E\left(\left|Y^{(i)}_{0}\right|^2\right)^{1/2}\sup_{u\in[0,\Delta_N]}\E\left(\left|Y^{(j)}_{u} - Y^{(j)}_{0}\right|^2\right)^{1/2}, \quad \text{by stationarity.}\\
&= O(T_N^{1/2}\Delta_N^{1/2}), \quad \text{by the \levy-\Ito decomposition.}
\end{align*}

We prove that $\Kn_{N}$ approximates its continuously-observed counterpart $(\boldsymbol{K}_{T_N})$ in the $L^1$ sense as follows:
\begin{align*}
T_N^{-1}& \E\left[|(\Kn_{N})_{ij} -({\boldsymbol{K}_{T_N}})_{ij}|\right]\\
&= T_N^{-1}\E\left[\left|\int_{0}^{T_N}\left(Y^{(i)}_{\lfloor s \rfloor_N}Y^{(j)}_{\lfloor s \rfloor_N} - Y^{(i)}_sY^{(j)}_s\right)ds\right|\right]     \\
&\leq T_N^{-1}\E\left(\int_{0}^{T_N}\left|Y^{(i)}_{\lfloor s \rfloor_N}Y^{(j)}_{\lfloor s \rfloor_N} - Y^{(i)}_sY^{(j)}_s\right|ds\right)     \\
&\leq T_N^{-1}\int_{0}^{T_N}\E\left(\left|Y^{(i)}_{\lfloor s \rfloor_N}Y^{(j)}_{\lfloor s \rfloor_N} - Y^{(i)}_sY^{(j)}_s\right|\right) ds,    \quad \text{by Fubini's theorem,}\\
&\leq \sup_{u\in[0,\Delta_N]}\E\left(\left|Y^{(i)}_{u}Y^{(j)}_{u} - Y^{(i)}_0 Y^{(j)}_0\right|\right), \quad \text{by stationarity,}\\
&\leq \sup_{u \in [0, \Delta_N]}\E\left(\left|(Y^{(i)}_{u} - Y^{(i)}_{0})(Y^{(j)}_{u}-Y^{(j)}_{0}) + Y^{(j)}_{0}(Y^{(i)}_{u} - Y^{(i)}_{0})  + Y^{(i)}_0(Y^{(j)}_{u}-Y^{(j)}_0)\right|\right].
\end{align*}
Again, using both the triangle and Cauchy--Schwarz's inequalities and finally the  \levy-\Ito decomposition, we see that we have 
$$
T_N^{-1}\E\left(|(\Kn_{N})_{ij} -({\boldsymbol{K}_{T_N}})_{ij}|\right)= O(\Delta_N^{1/2}), \quad \text{as $N \rightarrow \infty$.}
$$
We have $t^{-1}{\Kt}_t \rightarrow \E(\YY_\infty\YY_\infty^\top)$ $\proba_0-a.s.$ as $t\rightarrow \infty$ by ergodicity and therefore $T_N^{-1}\Kn_N \rightarrow \E(\YY_\infty\YY_\infty^\top) < \infty$ $\proba_0-a.s.$ as $N \rightarrow \infty$. 

Next, by the continuous mapping theorem, $T_N\Kn_N^{-1} \xrightarrow{\  \ } \E\left(\YY_\infty \YY_\infty^\top \right)^{-1}$ $\proba_0-a.s.$ as $N \rightarrow \infty$. We conclude that $T_N\Sn_N^{-1} \xrightarrow{\  \ } \E\left(\YY_\infty \YY_\infty^\top \right)^{-1} \otimes \Id < \infty$ $\proba_0-a.s.$ and hence, $\ZZ^1_N \probconv 0$ as $N\rightarrow \infty$.

It remains to show that $\ZZ^2_N \xrightarrow{\ \mathcal{D} \ } \mathcal{N}\left(0, \E\left(\YY_\infty \YY^\top_\infty \right)^{-1} \otimes \Si \right)$ as $N \rightarrow \infty$, which is proved in a similar fashion albeit using a multivariate martingale central limit theorem, namely Theorem 2.2, \cite{crimaldi2005multimgclt}. Again, we have that $T_N \Sn_N^{-1} \probconv  \E\left(\YY_\infty \YY_\infty^\top \right)^{-1} \otimes \Id$. We shall now focus on the remainder term $T_N^{-1/2}\sum_{k=0}^{N-1} \YY_k \otimes \Delta_k \WW$, of which the continuously-integrated equivalent is  $T_N^{-1/2} \Mt_{T_N}$ where $\Mt_t := \int_0^{t} \YY_s \otimes d\WW_s$ for $t \geq 0$. Similarly, we find the $L^1$ convergence of those quantities as $N\rightarrow \infty$. For $i,j,k,l \in\{1,\dots,d\}$, we have that (similarly as above)
$$([\Mt]_t)_{d(i-1)+k,d(j-1)+l} = \int_0^t Y^{(i)}_s Y^{(j)}_s \sum_{m=1}^d (\Si^{1/2})_{km} (\Si^{1/2})_{ml}ds = \int_0^t Y^{(i)}_s Y^{(j)}_s ds \Si_{kl} ,$$ 
therefore $[\Mt]_t = {\Kt}_t \otimes \Si$. Since $\YY$ has finite second moments, $\Mt_t$ is a $d^2$-dimensional continuous martingale under $P_{t,\YY}$, hence $\Mt_t - \lim_{s \rightarrow t^-}\Mt_s = 0$ for any $t \geq 0$. Define the matrix family $\Ubf_t := t^{-1/2}\Id \otimes \Si^{-1/2}$, we have as $t \rightarrow \infty$:
\begin{equation*}
    \begin{cases}
    \|\Ubf_t\| \rightarrow 0,\\
    \Ubf_t \cdot [\Mt]_t \cdot \Ubf_t^\top  \xrightarrow{\proba_{0}-a.s.}  \E\left(\YY_\infty \YY_\infty^\top \right) \otimes \Id,\ \text{which is positive definite,}\\
    \E\left(\sup_{0 \leq s \leq t}\left\|\Ubf_s \cdot (\Mt_s - \lim_{u\rightarrow s^-}\Mt_u)\right\|\right) = 0.
    \end{cases}
\end{equation*}
This is justified by the ergodicity of $(\YY_t, \ t \geq 0)$ and path continuity of $(\Mt_t, \ t \geq 0)$.
Therefore, by Theorem 2.2, \cite{crimaldi2005multimgclt}, $\Ubf_t \cdot  \Mt_t$ converges $\mathcal{F}$-stably 
to $\mathcal{N}\left(\boldsymbol{0}_{d^2},\E\left(\YY_\infty \YY_\infty^\top \right)\otimes \Id\right)$. Thus, $T_N^{-1/2}\Mt_{T_N}$ converges in distribution to $\mathcal{N}\left(\boldsymbol{0}_{d^2},  \E\left(\YY_\infty \YY_\infty^\top \right) \otimes \Si \right)$. Similarly to $\Kn_N$ and ${\Kt}_{T_N}$, we obtain that 
$T_N^{-1/2}\int_{0}^{T_N}\YY_{\lfloor s \rfloor_N} d\WW_s$ converges $\mathcal{F}$-stably to $\mathcal{N}\left(\boldsymbol{0}_{d^2},  \E\left(\YY_\infty \YY_\infty^\top \right) \otimes \Si \right)$.

Recall that $T_N\Sn_N^{-1} \probconv \E\left(\YY_\infty \YY_\infty^\top \right)^{-1}\otimes \Id < \infty$ as $N \rightarrow \infty$. We obtain the $\mathcal{F}$-stable convergence by Slutsky's lemma:
$$\ZZ^2_N \stableconv \mathcal{N}\left(\boldsymbol{0}_{d^2}, \E\left(\YY_\infty \YY_\infty^\top \right)^{-1} \otimes \Si \right), \quad \text{as $N\rightarrow \infty$,}$$
which concludes the proof.
\end{proof}

\section{Proof for the Adaptive Lasso}
\label{proof:theorem:adaptive-lasso}
\subsection{Proof of Theorem \ref{theorem:adaptive-lasso}}
\begin{proof}
We first prove the asymptotic normality of the estimator before proving the property of consistency in variable selection. The likelihood part of the objective function is to be written as
\begin{align*}
	\ell_N(\rmQ) - \ell_N(\rmQ_0) &= - \left(\vectorise(\Q) - \vectorise(\Q_0)\right)^\top \cdot (\Id \otimes \Si^{-1}) \cdot \Atilde_N \\ 
	&\qquad - \frac{1}{2} \left(\vectorise(\Q) - \vectorise(\Q_0)\right)^\top \cdot (\widebar{\Kt}_N \otimes \Si^{-1}) \cdot \left(\vectorise(\Q) - \vectorise(\Q_0)\right)\\
	&\qquad - \left(\vectorise(\Q) - \vectorise(\Q_0)\right)^\top \cdot (\widebar{\Kt}_N \otimes \Si^{-1}) \cdot \vectorise(\Q_0),
\end{align*}
We then reformulate further the likelihood.
	In the proof of Lemma \ref{lemma:cv-estimator-continuous-component} (Appendix \ref{proof:lemma:cv-estimator-continuous-component}), we show that $T_N^{-1}\Kn_N \xrightarrow{\  \ } \E\left(\YY_\infty \YY_\infty^\top \right) < \infty$, $\proba_0-a.s.$ as $N \rightarrow \infty$ and write $\Sn_N = \Kn_N \otimes \Id$. In Theorems \ref{lemma:consistency-conservation-jump-filtering-finite} \& \ref{lemma:consistency-conservation-jump-filtering-infinite}, we show that $ \Sn_N^{-1} \cdot \left(\Atilde_N - \Abar_N\right) \probconv \boldsymbol{0}_{d^2}$ as $N\rightarrow\infty$, where $\Abar_N = \sum_{k=0}^{N-1} \YY_k \otimes \Delta_{k}  \YY^{c}$. Since $\YY^c$ satisfies $d\YY^c_t  = - \rmQ_0 \YY_{t-}dt + d\WW_t$, then $\Abar_N = - \sum_{k=0}^{N-1} \int_{t_k}^{t_{k+1}}\YY_{\lfloor s \rfloor_N} \YY_{s}^\top ds \otimes \Id \cdot \vectorise(\rmQ_0) + \sum_{k=0}^{N-1} \YY_k \otimes \Delta_{k}  \WW$, where $\lfloor s \rfloor_N := t_k$ for any $s \in [t_k, t_{k+1})$ for $k \in \{1,\dots, N\}$.

	Let $\widebar{\rmQ}_{N} := - {\widebar{\Kt}_N}^{-1} \cdot \vectorise(\Abar_N)$. We recall that $\widetilde{\rmQ}_{N} - \widebar{\rmQ}_{N} \probconv 0_{d^2}$ and that, from that same we also obtain
\begin{align*}
   T_N^{1/2}\left(\vectorise(\widebar{\rmQ}_{N})-\vectorise(\rmQ_0)\right) = \ZZ^1_N + \ZZ^2_N,
\end{align*}
where
\begin{align*}
\begin{cases}
	\ZZ^1_N &= \quad T_N^{1/2}\Sn_N^{-1} \bigg[\left(\int_{0}^{T_N}\YY_{\lfloor s \rfloor_N} \YY_s^\top ds - \Kn_N\right) \otimes \Id  \bigg]\cdot \vectorise(\rmQ_0),\\
	\ZZ^2_N &= \quad T_N^{1/2}\Sn_N^{-1}\sum_{k=0}^{N-1}\YY_k \otimes \Delta_k\WW,
\end{cases}
\end{align*}
such that $\ZZ^1_N \probconv 0_{d^2}$ and $\ZZ^2_N \stableconv \mathcal{N}(0_{d^2}, \E(\YY_\infty \YY^\top_\infty )^{-1} \otimes \Si)$ as $N\rightarrow\infty$. Using those notations, the log-likelihood is reformulated as
\begin{align*}
\ell_N(\rmQ) - \ell_N(\rmQ_0) &= - T_N^{-1/2}\left(\vectorise(\Q) - \vectorise(\Q_0)\right)^\top \cdot (\Kn_N \otimes \Si^{-1}) \cdot \ZZ^2_N\\ 
	&\quad - \frac{1}{2} \left(\vectorise(\Q) - \vectorise(\Q_0)\right)^\top \cdot (\widebar{\Kt}_N \otimes \Si^{-1}) \cdot \left(\vectorise(\Q) - \vectorise(\Q_0)\right) \\
	&\quad + T_N^{-1/2}\left(\vectorise(\Q) - \vectorise(\Q_0)\right)^\top \cdot (\Kn_N \otimes \Si^{-1}) \cdot \ZZ^1_N\\
	&\quad - \left(\vectorise(\Q) - \vectorise(\Q_0)\right)^\top \cdot (\Id\otimes \Si^{-1}) \cdot \left(\Atilde_N - \Abar_N\right).
\end{align*}
We then proceed similarly to the proof of Th.\ 5.3.1, \cite{courgeau2020likelihood} (see App.\ A.10 therein). First, without loss of generality, we change the penalty rate from $\lambda$  to $\lambda T_N$ since it does not depend on $\rmQ$. Then, by writing $\rmQ = \rmQ_0 + T_N^{-1/2}\rmM$ for some $\rmM \in \MdR$, we have
\begin{align*}
	T_N^{1/2}\left(\widetilde{\rmQ}_{\AL,N} - \rmQ_0\right)_{|\gQ_0} &= \argmax_{\rmM \in \MdR} \kappa^1_N(\rmM) + \kappa^2_N(\rmM) + \kappa^3_N(\rmM),
\end{align*}
where
\begin{align*}
	\begin{cases}
		\kappa^1_N(\rmM) &= \ -\vectorise(\rmM)^\top \cdot (T_N^{-1}\Kn_N \otimes \Si^{-1}) \cdot \ZZ^2_N - \frac{1}{2}  \vectorise(\rmM)^\top \cdot (T_N^{-1}\widebar{\Kt}_N \otimes \Si^{-1}) \cdot \vectorise(\rmM),\\ 
		\kappa^2_N(\rmM) &= \ \lambda T_N \|\rmQ_0 \odot |\widetilde{\rmQ}_N|^{-\gamma} \|_1 - \lambda T_N \|(\rmQ_0 + T_N^{-1/2}\rmM) \odot |\widetilde{\rmQ}_N|^{-\gamma} \|_1,\\
		\kappa^3_N(\rmM) &=  \vectorise(\rmM)^\top \cdot (T_N^{-1}\Kn_N \otimes \Si^{-1}) \cdot \ZZ^1_N  - T_N^{-1/2} \vectorise(\rmM)^\top \cdot (\Id\otimes \Si^{-1}) \cdot \left(\Atilde_N - \Abar_N\right),
	\end{cases}
\end{align*}
where the third term $\kappa^3_N$ represents the discretisation error which will become asymptotically negligible as shown next. We denote and define the complement of $\gQ_0$ as follows $\widebar{\gQ}_0 := \{(k,l)\not\in\gQ_0: 1\leq k,l \leq d\}$.  Then, we treat each term $\kappa^i_N$, $i\in\{1,2,3\}$, separately:
\begin{itemize}
	\item \textbf{For the first function $\kappa^1_N$:} by Slutsky's lemma (for stable convergence \citep[Th.\ 1.1]{hausler2015whystableconvergence}), the first term converges stably to a standard (centred) Gaussian distribution with covariance $K^1_\infty := \vectorise(\rmM)^\top \cdot \E(\YY_\infty \YY^\top_\infty ) \otimes \Si^{-1}\cdot \vectorise(\rmM)$. Both \cite{courgeau2020likelihood} and \cite{Gaiffas2019SparseOUProcess} use a $d \times d$ Gaussian matrix $\rmZ$ with mean zero and covariance $\Cov(\vectorise(\mZ), \vectorise(\mZ))= \E(\YY_\infty \YY^\top_\infty ) \otimes \Si^{-1}$, that is $\Cov(\mZ_{ij},\mZ_{kl}) = \E(\YY_\infty \YY^\top_\infty )_{ij} \cdot \Si^{-1}_{kl}$. Therefore, for any $\rmM \in \MdR$, $\trace(\rmM \mZ)$ is a Gaussian random variable with mean zero and variance
\begin{align*}
	\Var\left(\trace(\rmM \mZ)\right) &= \sum_{ijkl}\rmM_{ji} \rmM_{lk} \Cov(\mZ_{ij}, \mZ_{kl}) = \sum_{ijkl}\rmM_{ji} \cdot (\mK_\infty)_{ik} \cdot \Si^{-1}_{jl} \cdot \rmM_{lk} = K^1_\infty.
\end{align*}
Also, the second term converges almost-surely to $-K^1_\infty/2$ such that by Slutsky's lemma, we have for any $\rmM \in \MdR$ that $\kappa^1_N(\rmM) \stableconv -K^1_\infty/2 - \trace(\rmM \mZ)$ as $N\rightarrow \infty$.
	\item \textbf{For the second function $\kappa^2_N$:} the argument from Appendix A.10, \cite{courgeau2020likelihood} can be replicated exactly and we do not present it for brevity. In essence, by consistency of the MLE (Corollary \ref{corollary:consistency-estimator-continuous-component}) and since $\lambda(N) N^{1/2} \rightarrow 0$ and $\lambda(N) N^{(\gamma+1)/2} \rightarrow \infty$ as $N \rightarrow \infty$, the objective function diverges to $-\infty$ if $\supp(\rmM) \cap \widebar{\gG}_0 \neq \emptyset$ and \emph{zero} otherwise for any $\rmM \in \MdR$.
	\item \textbf{For the third function $\kappa^3_N$:} each term converges to zero in probability for any $\rmM \in \MdR$ as $N \rightarrow \infty$.
\end{itemize}
We summarise the resulting asymptotic properties as follows:
$$
\kappa^1_N(\rmM) + \kappa^2_N(\rmM)  + \kappa^3_N(\rmM)  \xrightarrow{\ \mathcal{D}\ } \begin{cases}
 	-\infty, & \text{if } \supp(\rmM) \cap \widebar{\gG}_0 \neq \emptyset,\\
 	- K^1_\infty/2 -  \trace\left(\rmM \mZ\right), & \text{otherwise}.
 \end{cases}
$$
This is the same asymptotic property is in \cite{courgeau2020likelihood} and therefore, we borrow their conclusion to show that the maximum in the limit $t\rightarrow \infty$, denoted $\widehat{\rmM}$, is given by
$$\vectorise(\widehat{\rmM}_{|\gG_0}):= - \left(\E(\YY_\infty \YY_\infty^\top )^{-1} \otimes \Si \right)_{|\gG_0 \times \gG_0} \cdot \vectorise(\mZ^\top_{|\gG_0}) , \quad \text{and} \quad \widehat{\rmM}_{|\widebar{\gG}_0} := 0.$$
Given the covariance structure of $\mZ$, $\vectorise(\widehat{\rmM}_{|\gG_0})$ is a centred Gaussian random vector with covariance $(\E(\YY_\infty \YY_\infty^\top )^{-1} \otimes \Si)_{|\gG_0 \times \gG_0}$ which concludes the proof of asymptotic normality. 

The consistency in variable selection is shown similarly to the continuous-time observations case \citep{courgeau2020likelihood}: the asymptotic normality of $\widetilde{\rmQ}_{\AL,N}$ on $\gG_0$  yields that $\proba\left((\widetilde{\rmQ}_{\AL,N})_{ij} \neq 0 \right) \rightarrow 1$ if $(i,j) \in \gG_0$. We then show that $\proba\left((\widetilde{\rmQ}_{\AL,N})_{ij} = 0 \right) \rightarrow 1$ if $(i,j) \in \widebar{\gG}_0$. We suppose that $(\widetilde{\rmQ}_{\AL,N})_{ij} \neq 0$ and then, by taking the derivative of the objective function with respect to the $(i,j)$-th parameter of $\widetilde{\rmQ}_{\AL,N}$. We multiply by $T_N^{-1/2}$, set it to zero before taking the absolute value on both sides. We obtain that
\begin{align*}
	&\left| \vone_{d^2}^\top \cdot (T_N^{-1}\Kn_N \otimes \Si^{-1}) \cdot \ZZ^2_N + \left(\vectorise(\widehat{\rmQ}_{\AL,N})^\top \cdot (T_N^{-1}\Kn_N) \otimes \Si^{-1}\right)_{d(j-1)+i} +o_p(1)\right|\\
	&\qquad \qquad \qquad \qquad  =  \lambda T_N^{(\gamma+1)/2}\cdot |T_N^{1/2}\widetilde{\rmQ}_{N}|^{-\gamma}_{ij}.
\end{align*}
For the right-hand side: from the first part of the proof, we have $T_N^{1/2}(\widetilde{\rmQ}_{\AL,N})_{|\widebar{\gG}_0} = o_p(1)$. Since $\lambda T_N^{(\gamma+1)/2}= O(1)$ and $\gamma > 0$, this side diverges to $\infty$ in probability as $N \rightarrow \infty$. 

For the left-hand side: since $T_N^{-1}\Kn_N \xrightarrow{\ \ } \E(\YY_\infty \YY_\infty^\top ),\ \proba_0-a.s.$, the first term is normally-distributed with finite variance as $N \rightarrow \infty$. By the asymptotic normality of $T_N^{1/2}(\widetilde{\rmQ}_{\AL,N})_{|\widebar{\gG}_0}$, the second term is also asymptotically normal with finite variance. That is, the left-hand side is the absolute value of the sum of two Gaussian random variables whose probability to be larger or equal to the right-hand side (which diverges to $\infty$) tends to zero with probability one. The right-hand side was computed under the assumption that $(\widehat{\rmQ}_{\AL,t})_{ij} \neq 0$, and the probability that event happening is upper-bounded by zero, hence we have that $\proba\left((\widehat{\rmQ}_{\AL,t})_{ij} = 0 \right) \rightarrow 1$ if $(i,j) \in \widebar{\gG}_0$. This concludes the proof of the consistency in variable selection.
\end{proof}

\section{Technical results}
\label{section:proofs}
In this section, we present a collection of intermediate results leading up to Theorems \ref{theorem:discrete-vec-finite-activity}--\ref{theorem:discrete-vec-finite-activity-2-params} \& \ref{th:discrete_clt_infinite}--\ref{theorem:discrete-vec-infinite-activity-2-params} whilst the proofs are presented in Appendix \ref{appendix:proofs-finite} and \ref{appendix:proofs-infinite}, respectively, for the finite and infinite jump activity case. 

\subsection{Intuition}
From Lemma \ref{lemma:cv-estimator-continuous-component} in Section \ref{section:jump-filtered-quantities}, the discretised unfiltered estimator $\overline{\boldPsi}_N$ converges (stably) in distribution to the limit of the continuous-time estimator with time horizon $T_N$, namely $\widehat{\boldPsi}_{T_N}$, in the sense that
$$T_N^{1/2}(\boldsymbol{\overline{\psi}}_N - \boldsymbol{\psi}) \stableconv \mathcal{N}\left(\boldsymbol{0}_{d^2},\ \mathbb{E}\left(\YY_\infty \YY_\infty^\top \right)^{-1} \otimes \Si \right), \quad \text{as $N\rightarrow\infty$.}
$$
The purpose of the following sections is to prove that we can extend this result to the jump-filtered estimator $\widetilde{\boldPsi}_N$. We show that filtering the jumps with a particular threshold gives way to a consistent estimation in both the finite and infinite jump activity cases. This is carried out by treating the different terms given by the \levy-\Ito decomposition of $\LL$ separately. Note that the decomposition differs significantly in terms of the components' properties be it in the finite or the infinite jump activity case. The latter requires intermediate filtering steps to relate the discretised jump-filtered estimator and its unfiltered counterpart.

\subsection{Finite jump activity}
\label{section:technical-lemmas}


In this section, we prove Theorems \ref{theorem:discrete-vec-finite-activity} and \ref{theorem:discrete-vec-finite-activity-2-params}.
Recall that, under Assumption \ref{assumption:mle-convergence}, the \levy-\Ito decomposition (see Appendix \ref{appendix:levy-ito-decomposition}) yields that there exists a centred Gaussian \Levy process $\WW_t$ with covariance matrix $\Si$ and almost-surely continuous paths as well as a pure-jump \Levy process $\JJ_t$ independent of $\WW_t$ such that
$$\LL_t = \WW_t + \JJ_t.$$
Finally, under the finite jump activity assumption (Assumption \ref{assumption:jump-height-finite-activity}), remark that $\JJ_t$ is a compound Poisson process. Indeed, there are Poisson processes $N_t^{(j)}$ with elementwise \emph{finite} intensities $\lambda^{(j)}$ such that $(\lambda^{(1)},\dots,\lambda^{(d)}) := \nu(\R^d) < \infty$ such that $\JJ_t^{(j)}=\sum_{k=0}^{N_t^{(j)}}Z^{(j)}_k$ where $Z^{(j)}_k$ are i.i.d.\ jump heights with distribution $F^{(i)}$ as in Assumption \ref{assumption:jump-height-finite-activity}.

\begin{corollary}
For $\YY$ solving Equation \eqref{eq:sde}, it can be written as $\YY_t = \YY_0 -\int_0^t\Q \YY_{s-} ds + \WW_t + \JJ_t$ and its continuous part $\YY^c$ is given by $\YY^c_t = \YY^c_0 -\int_0^t\Q \YY_{s-} ds + \WW_t$ for any $t \geq 0$.
\end{corollary}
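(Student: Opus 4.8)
The plan is to obtain both identities by integrating the two defining stochastic differential equations and inserting the \levy-\Ito decomposition; there is no substantive obstacle, only some bookkeeping to identify the continuous and jump components correctly. First I would integrate Equation~\eqref{eq:sde} over $[0,t]$: since $(\YY_t)$ solves $d\YY_t = -\Q\YY_{t-}dt + d\LL_t$ and $\LL_0 = \boldsymbol{0}_d$ almost surely, the very definition of a solution gives the integral form
$$\YY_t = \YY_0 - \int_0^t \Q\YY_{s-}\,ds + \LL_t, \qquad t \geq 0.$$
The one point worth flagging is that inside the Lebesgue integral one may replace $\YY_{s-}$ by $\YY_s$ without changing its value, because the c{\`a}dl{\`a}g path has at most countably many jump times, a set of $\lambda^{leb}$-measure zero; I keep the notation $\YY_{s-}$ to match the statement.

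Next I would invoke the \levy-\Ito decomposition of Theorem~\ref{th:levy-ito-decomposition}. Under Assumption~\ref{assumption:mle-convergence}, and in particular its zero-drift condition $\boldB = \boldsymbol{0}_d$, the truncation drift term $t\boldB$ vanishes, so the decomposition reduces on the same probability space to $\LL_t = \WW_t + \JJ_t$, where $\WW$ is the centred Gaussian \Levy process with covariance matrix $\Si$ and almost-surely continuous paths, and $\JJ = \JJ^1 + \JJ^2$ is the pure-jump part independent of $\WW$ (this is exactly the identity recorded in the text following Theorem~\ref{th:levy-ito-decomposition}). Substituting $\LL_t = \WW_t + \JJ_t$ into the integral form above yields the first claimed identity
$$\YY_t = \YY_0 - \int_0^t \Q\YY_{s-}\,ds + \WW_t + \JJ_t, \qquad t \geq 0.$$

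For the second identity, I would recall that the continuous $\proba_0$-martingale part $\YY^c$ was defined through $d\YY^c_t = -\Q\YY_{t-}dt + d\WW_t$. Integrating this defining equation over $[0,t]$ and using $\WW_0 = \boldsymbol{0}_d$ (a property of the \Levy process $\WW$) gives directly
$$\YY^c_t = \YY^c_0 - \int_0^t \Q\YY_{s-}\,ds + \WW_t, \qquad t \geq 0,$$
which is the assertion. The result is thus a direct consequence of integrating the two defining SDEs together with the zero-drift \levy-\Ito decomposition; the only care required is the identification of the Gaussian component $\WW$ and the pure-jump component $\JJ$ via Assumption~\ref{assumption:mle-convergence} and Theorem~\ref{th:levy-ito-decomposition}.
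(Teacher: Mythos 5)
Your proof is correct and follows essentially the same route as the paper, which states this corollary without a separate proof precisely because it is the immediate consequence you describe: integrating Equation~\eqref{eq:sde}, applying the \levy-\Ito decomposition $\LL_t = \WW_t + \JJ_t$ under the zero-drift condition of Assumption~\ref{assumption:mle-convergence}, and integrating the defining equation $d\YY^c_t = -\Q\YY_{t-}dt + d\WW_t$ of the continuous martingale part. Your additional remark that $\YY_{s-}$ may be replaced by $\YY_s$ inside the Lebesgue integral is a harmless clarification not needed by the paper.
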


For simplicity, we introduce some notation for the drift term:
\begin{notation}
\label{notation:drift-term}
We denote by $\Delta_k\DD$ the drift term between $t_k$ and $t_{k+1}$ as expressed by
 $$\Delta_k \DD  = (\Delta_k D^{(1)},\dots, \Delta_k D^{(d)})^\top:=-\int_{t_k}^{t_{k+1}} \Q\YY_{s-} ds. $$
\end{notation}

We first introduce results proved in \cite{mai2014efficient} but adapted to multiple dimensions. A crucial result is the upper bound on the probability of the continuous part increments $\{|\Delta_k W^{(j)} + \Delta_k D^{(j)}| > \Delta_N^{1/2-\delta}\}$ for any $\delta \in (0,1/2)$ as formulated below:
\begin{lemma}{(adapted from Lemma 3.7, \cite{mai2014efficient})\\}
\label{lemma:lemma-3.7}
Suppose that for any $i\in\{1,\dots,d\}$, $$\sup_{s \geq 0}\E\left(|Y^{(i)}_s|^{l^{(i)}}\right) < \infty, \quad \text{ for some $l^{(i)} \geq 1$}.$$
Then, for any $\delta \in (0,1/2)$ and $k\in\{1,\dots,N\}$, we have:
$$ \proba\left(|\Delta_k W^{(i)} + \Delta_k D^{(i)}| > \Delta_N^{1/2-\delta} \right) = O(\Delta_N^{l^{(i)} (1/2+\delta)}), \quad \text{as $N\rightarrow\infty$.}$$
\end{lemma}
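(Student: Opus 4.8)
The plan is to decompose the event according to which summand triggers the exceedance and then treat the two summands by genuinely different tools: Gaussian concentration for the diffusive increment $\Delta_k W^{(i)}$ and a moment (Markov) bound for the drift increment $\Delta_k D^{(i)}$. The point to keep in mind throughout is that the threshold $\Delta_N^{1/2-\delta}$ lies strictly \emph{above} the natural scale $\Delta_N^{1/2}$ of the Brownian increment (because $\delta>0$), so the Brownian contribution is in fact super-polynomially small, and the advertised polynomial rate $O(\Delta_N^{l^{(i)}(1/2+\delta)})$ is produced entirely by the drift term. Concretely, first I would use the union bound: on the event $\{|\Delta_k W^{(i)} + \Delta_k D^{(i)}| > \Delta_N^{1/2-\delta}\}$ at least one of $|\Delta_k W^{(i)}| > \tfrac12\Delta_N^{1/2-\delta}$ or $|\Delta_k D^{(i)}| > \tfrac12\Delta_N^{1/2-\delta}$ must occur, so it suffices to bound the two probabilities separately.

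For the diffusive part, $\Delta_k W^{(i)}$ is centred Gaussian with variance $(t_{k+1}-t_k)\Si_{ii} \le \Si_{ii}\Delta_N$, so $\proba(|\Delta_k W^{(i)}| > \tfrac12\Delta_N^{1/2-\delta}) = \proba(|Z| > x_N)$ with $x_N \ge \tfrac12 \Si_{ii}^{-1/2}\Delta_N^{-\delta} \to \infty$ and $Z$ standard normal. The elementary tail bound $\proba(|Z|>x)\le 2e^{-x^2/2}$ then gives a term of order $\exp(-c\,\Delta_N^{-2\delta})$ for some $c>0$, which decays faster than any power of $\Delta_N$ and is in particular $o(\Delta_N^{l^{(i)}(1/2+\delta)})$. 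This is the key place where one must \emph{not} fall back on a crude moment bound, since Markov applied to the Gaussian term would only deliver the inferior rate $O(\Delta_N^{l^{(i)}\delta})$.

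For the drift part I would apply Markov's inequality with the $l^{(i)}$-th moment, reducing the task to estimating $\E(|\Delta_k D^{(i)}|^{l^{(i)}})$. Writing $\Delta_k D^{(i)} = -\int_{t_k}^{t_{k+1}}(\Q\YY_{s-})^{(i)}\,ds$ (Notation \ref{notation:drift-term}) and applying Jensen's inequality against the uniform measure on $[t_k,t_{k+1}]$ gives $|\Delta_k D^{(i)}|^{l^{(i)}} \le (t_{k+1}-t_k)^{l^{(i)}-1}\int_{t_k}^{t_{k+1}}|(\Q\YY_{s-})^{(i)}|^{l^{(i)}}\,ds$. Taking expectations and using stationarity together with the standing moment bound---so that $\sup_{s\ge 0}\E(|(\Q\YY_{s-})^{(i)}|^{l^{(i)}})<\infty$, which holds because this coordinate is a finite linear combination of the $Y^{(l)}$ whose $l^{(i)}$-th moments are finite under the ambient integrability (Assumption \ref{assumption:mle-convergence}\eqref{assumption:mle-convergence:sq-integrability} in the finite case, Assumption \ref{assumption:infinite_activity}\eqref{assumption:infinite_activity:cross-moments} in the infinite case)---yields $\E(|\Delta_k D^{(i)}|^{l^{(i)}}) = O(\Delta_N^{l^{(i)}})$. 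Dividing by $(\tfrac12\Delta_N^{1/2-\delta})^{l^{(i)}}$ produces precisely $O(\Delta_N^{l^{(i)} - l^{(i)}(1/2-\delta)}) = O(\Delta_N^{l^{(i)}(1/2+\delta)})$, and summing the two contributions completes the bound.

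I expect the main obstacle to be conceptual rather than computational: realising that the Gaussian and drift terms must be bounded by \emph{different} methods, and that the target rate is dictated by the drift while the Brownian term is negligible. A secondary technical point is transferring the coordinatewise moment hypothesis on $Y^{(i)}$ to the mixed drift coordinate $(\Q\YY)^{(i)}$, which is exactly why the ambient integrability assumption (finite $l^{(i)}$-th moment of every coordinate, equivalently of $\|\YY\|$) is needed rather than the isolated bound on $Y^{(i)}$ alone.
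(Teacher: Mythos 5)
Your proposal is correct and follows essentially the same route as the proof the paper relies on (Lemma 3.7 of \cite{mai2014efficient}, to which the paper defers): split via a union bound, kill the Brownian term with the exponential Gaussian tail at scale $\Delta_N^{1/2-\delta}$, and extract the rate $O(\Delta_N^{l^{(i)}(1/2+\delta)})$ from the drift term via Jensen plus Markov with the $l^{(i)}$-th moment. Your additional remark that the coordinatewise hypothesis must be upgraded to a bound on the mixed coordinate $(\rmQ\YY)^{(i)}$ via the ambient integrability assumptions is a legitimate refinement needed in the multivariate setting and does not change the argument.
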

\begin{proof}
	The proof in presented on page 925, \cite{mai2014efficient}. 
\end{proof}


\cite{mai2014efficient} introduced a family of events when a consensus between the filtering and the absence of jumps takes place. We extend to our multivariate framework by introducing equivalent events componentwise.
\begin{definition}
\label{definition:events-small-increments-no-jumps}
	Let $(\YY, \ t \geq 0)$ be a GrOU process such that Assumptions \ref{assumption:strong-solution-vector}--\ref{assumption:jump-height-finite-activity} hold. We define the sequence of events $(A^{(i)}_{k,N}: k \in \{1,\dots,N-1\})$ for the $i$-th node where a small increments corresponds to the absence of jumps, i.e.\ define
$$A^{(i)}_{k,N} := \left\{\omega \in \Omega : \indicator_{\{|\Delta_k Y^{(i)}| \leq v_N^{(i)}\}}(\omega) = \indicator_{\{\Delta_k N^{(i)} = 0\}}(\omega)\right\}, \quad k \in \{0,\dots,N-1\}.$$
\end{definition}
\begin{notation}
	We sometimes write $\indicator \{S\}$ for $\omega \mapsto \indicator_S(\omega)$ where $S \in \mathcal{F}$.
\end{notation}
They proved that the filtering and the absence of jumps coincide all the time with probability \emph{one} as $N\rightarrow \infty$ in the following sense:
\begin{lemma}{(Lemma 3.8, \cite{mai2014efficient})\\}
\label{lemma:equivalence-filtering-no-jump}
	Suppose Assumptions \ref{assumption:strong-solution-vector}--\ref{assumption:jump-height-finite-activity} hold for  a GrOU process $(\YY, \ t \geq 0)$ and $(\beta^{(i)}: i \in \{1,\dots,d\})\subset (0,1/2)$. Let $i \in \{1,\dots,d\}$ and suppose that $v_N^{(i)} := \Delta_N^{\beta^{(i)}}$, then for $A^{(i)}_{N} := \cap_{k=1}^{N}A^{(i)}_{k,N}$, one has
$$\proba_{\YY}\left(A^{(i)}_N\right) \rightarrow 1, \quad \text{as $N\rightarrow \infty$.}$$
\end{lemma}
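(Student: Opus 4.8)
The plan is to bound the probability of the complementary event and show it vanishes. Since $A^{(i)}_N = \cap_{k=1}^N A^{(i)}_{k,N}$, a union bound gives $\proba_{\YY}\big((A^{(i)}_N)^c\big) \leq \sum_{k=1}^N \proba_{\YY}\big((A^{(i)}_{k,N})^c\big)$, so it suffices to control each per-step disagreement probability and show that the sum tends to zero. I would first observe that $(A^{(i)}_{k,N})^c$ is the disjoint union of two error types: a \emph{false positive} $\mathrm{FP}_k := \{\Delta_k N^{(i)} = 0\} \cap \{|\Delta_k Y^{(i)}| > v_N^{(i)}\}$, where no jump occurred yet the increment exceeds the threshold, and a \emph{false negative} $\mathrm{FN}_k := \{\Delta_k N^{(i)} \geq 1\} \cap \{|\Delta_k Y^{(i)}| \leq v_N^{(i)}\}$, where a jump occurred but was filtered out. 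Using the finite-activity decomposition $\LL = \WW + \JJ$ with $\JJ$ compound Poisson (Remark \ref{remark:finite-compound-poisson}) and Notation \ref{notation:drift-term}, on $\{\Delta_k N^{(i)} = 0\}$ the jump increment vanishes, so $\Delta_k Y^{(i)} = \Delta_k W^{(i)} + \Delta_k D^{(i)}$, which reduces $\mathrm{FP}_k$ to a tail event of the continuous part.

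For the false positive term I would apply Lemma \ref{lemma:lemma-3.7} with $\delta := 1/2 - \beta^{(i)} \in (0,1/2)$, so that $v_N^{(i)} = \Delta_N^{\beta^{(i)}} = \Delta_N^{1/2-\delta}$; under the square integrability of Assumption \ref{assumption:mle-convergence} (taking $l^{(i)} = 2$ via stationarity) this yields $\proba_{\YY}(\mathrm{FP}_k) \leq \proba_{\YY}\big(|\Delta_k W^{(i)} + \Delta_k D^{(i)}| > v_N^{(i)}\big) = O\big(\Delta_N^{2(1-\beta^{(i)})}\big)$. Summing over the $N$ steps and using $N \asymp T_N \Delta_N^{-1}$ from Assumption \ref{assumption:high-frequency-asymptotics}, the contribution is $O\big(T_N \Delta_N^{1-2\beta^{(i)}}\big)$, which vanishes because $\Delta_N^{1-2\beta^{(i)}} \leq \Delta_N^{(1-2\beta^{(i)}) \wedge 1/2}$ for $\Delta_N < 1$, and Assumption \ref{assumption:cv-rate-beta} gives $T_N\Delta_N^{(1-2\beta^{(i)})\wedge 1/2} = o(1)$.

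For the false negative term, the plan is to condition on the number of jumps in the interval. The event of two or more jumps has probability $O(\Delta_N^2)$, so its total contribution is $O(N\Delta_N^2) = O(T_N\Delta_N) = o(1)$ by Assumption \ref{assumption:high-frequency-asymptotics}. On the event of exactly one jump, writing $\Delta_k Y^{(i)} = \Delta_k W^{(i)} + \Delta_k D^{(i)} + Z$ with jump height $Z \sim F^{(i)}$, I would split according to whether $|\Delta_k W^{(i)} + \Delta_k D^{(i)}|$ exceeds $v_N^{(i)}$: if it does, Lemma \ref{lemma:lemma-3.7} again bounds the probability by $O\big(\Delta_N^{2(1-\beta^{(i)})}\big)$, summing to $o(1)$ exactly as above; if it does not, then $|\Delta_k Y^{(i)}| \leq v_N^{(i)}$ forces $|Z| \leq 2\Delta_N^{\beta^{(i)}}$, an event of probability $F^{(i)}\big(2\Delta_N^{\beta^{(i)}}\big) - F^{(i)}\big(-2\Delta_N^{\beta^{(i)}}\big) = o(T_N^{-1})$ by Assumption \ref{assumption:jump-height-finite-activity}. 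Since in a compound Poisson process the jump heights are independent of the jump counts, the one-jump probability $O(\Delta_N)$ factorises against the small-jump probability, giving a per-step bound $O(\Delta_N)\cdot o(T_N^{-1})$ whose sum is $o\big(N\Delta_N T_N^{-1}\big) = o(1)$ because $N\Delta_N T_N^{-1} = O(1)$.

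I expect the main obstacle to be the false-negative analysis: in particular, ensuring that the dependence between the drift $\Delta_k D^{(i)}$ and the jump process does not obstruct the argument, which is why I bound the large-continuous-part case through Lemma \ref{lemma:lemma-3.7} alone rather than invoking independence there, and then carefully match the jump-height smallness rate $o(T_N^{-1})$ against the one-jump probability $O(\Delta_N)$ so that the double-asymptotic constraints of Assumptions \ref{assumption:high-frequency-asymptotics} and \ref{assumption:cv-rate-beta} precisely absorb the $N$-fold summation. Collecting the three contributions then gives $\proba_{\YY}\big((A^{(i)}_N)^c\big) \to 0$, equivalently $\proba_{\YY}(A^{(i)}_N) \to 1$ as $N \to \infty$.
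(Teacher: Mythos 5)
Your proof is correct and follows essentially the same route as the argument this paper relies on: the paper does not reproduce a proof but defers to Section 3.3.1 of \cite{mai2014efficient}, whose argument is exactly your union bound, the false-positive/false-negative decomposition, Lemma \ref{lemma:lemma-3.7} with $\delta = 1/2-\beta^{(i)}$ and $l^{(i)}=2$ for the continuous part, and the matching of the jump-height rate $o(T_N^{-1})$ against the one-jump probability $O(\Delta_N)$ under the constraint $N\Delta_N T_N^{-1}=O(1)$. Your explicit care with the dependence of the drift $\Delta_k D^{(i)}$ on the jump path (bounding through Lemma \ref{lemma:lemma-3.7} alone instead of invoking independence) is precisely the point that makes the multivariate adaptation go through, so there is no gap.
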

\begin{proof}
See Section 3.3.1, pages 925--927, \cite{mai2014efficient}.
\end{proof}

We now extend Lemma 3.9, \cite{mai2014efficient} to the multivariate setting. Let $i,j\in\{1,\dots,d\}$. We show that we can approximate (in the $L^1$ sense) the discretised integral of $Y^{(i)}$ with respect to the continuous part of $Y^{(j)}_t$ by jump-filtering $Y^{(j)}_t$ itself with a threshold $v_N^{(j)} = \Delta_N^{\beta^{(j)}}$  as follows:

\begin{lemma}{(Extension of Lemma 3.9, \cite{mai2014efficient})\\}
\label{lemma:cts_filtered_approx}
Assume Assumptions \ref{assumption:strong-solution-vector}--\ref{assumption:A-known} \& \ref{assumption:psi-grou}--\ref{assumption:jump-height-finite-activity} hold for a GrOU process $(\YY_t,\ t \geq 0)$ and $(\beta^{(i)}: i \in \{1,\dots,d\})\subset (0,1/2)$.
If $v_N^{(j)} := \Delta_N^{\beta^{(j)}}$, then we have as $N \rightarrow \infty$:
\begin{equation*}
    \E\left(\left| \sum_{k=0}^{N-1} Y^{(i)}_k \left( \Delta_k Y^{(j)} \indicator{\{|\Delta_kY^{(j)}\leq v_N^{(j)}\}} - \Delta_kY^{(j),c}\right)\right|\right) = O(\Delta_N^{1/2}T_N),
\end{equation*}
for any $i,j\in\{1,\dots,d\}$.
\end{lemma}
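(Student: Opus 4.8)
The plan is to exploit the finite-activity \levy-\Ito decomposition $\LL = \WW + \JJ$ of Appendix \ref{appendix:levy-ito-decomposition}, under which $\JJ$ is a compound Poisson process, together with the componentwise identity $\Delta_k Y^{(j)} = \Delta_k Y^{(j),c} + \Delta_k J^{(j)}$ where $\Delta_k Y^{(j),c} = \Delta_k W^{(j)} + \Delta_k D^{(j)}$ and $\Delta_k D^{(j)}$ is the drift increment of Notation \ref{notation:drift-term}. First I would rewrite the bracket appearing in the sum as
$$\Delta_k Y^{(j)}\indicator\{|\Delta_k Y^{(j)}| \leq v_N^{(j)}\} - \Delta_k Y^{(j),c} = -\Delta_k Y^{(j),c}\indicator\{|\Delta_k Y^{(j)}| > v_N^{(j)}\} + \Delta_k J^{(j)}\indicator\{|\Delta_k Y^{(j)}| \leq v_N^{(j)}\},$$
so that, by the triangle inequality, it suffices to bound the $L^1$-norms of $S_1 := \sum_k Y^{(i)}_k\Delta_k Y^{(j),c}\indicator\{|\Delta_k Y^{(j)}| > v_N^{(j)}\}$ and of $S_2 := \sum_k Y^{(i)}_k\Delta_k J^{(j)}\indicator\{|\Delta_k Y^{(j)}| \leq v_N^{(j)}\}$ separately.

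The conceptual engine is the event $A_{k,N}^{(j)}$ of Definition \ref{definition:events-small-increments-no-jumps}, on which the jump filter and the no-jump indicator coincide. On $A_{k,N}^{(j)}$ one has $\indicator\{|\Delta_k Y^{(j)}| > v_N^{(j)}\} = \indicator\{\Delta_k N^{(j)} \neq 0\}$ and $\Delta_k J^{(j)}\indicator\{|\Delta_k Y^{(j)}| \leq v_N^{(j)}\} = 0$, since the absence of jumps forces $\Delta_k J^{(j)} = 0$. Using $\indicator\{|\Delta_k Y^{(j)}| > v_N^{(j)}\} \leq \indicator\{\Delta_k N^{(j)} \neq 0\} + \indicator_{(A_{k,N}^{(j)})^c}$, I would split $\E|S_1|$ into a \emph{main} term $\sum_k \E(|Y^{(i)}_k|\,|\Delta_k Y^{(j),c}|\,\indicator\{\Delta_k N^{(j)} \neq 0\})$ and an \emph{error} term supported on the bad events $(A_{k,N}^{(j)})^c$; likewise $S_2$ is supported entirely on $\cup_k(A_{k,N}^{(j)})^c$.

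For the main term I would decompose $\Delta_k Y^{(j),c} = \Delta_k W^{(j)} + \Delta_k D^{(j)}$ and treat the pieces by independence. Since $Y^{(i)}_k$ is $\mathcal{F}_{t_k}$-measurable while $\Delta_k W^{(j)}$ and $\indicator\{\Delta_k N^{(j)} \neq 0\}$ are built from increments of the independent processes $\WW$ and $\JJ$ over $[t_k,t_{k+1}]$, these three factors are mutually independent; hence the Brownian contribution factorises as $\E|Y^{(i)}_k|\cdot\E|\Delta_k W^{(j)}|\cdot\proba(\Delta_k N^{(j)}\neq 0) = O(1)\cdot O(\Delta_N^{1/2})\cdot O(\Delta_N)$, summing to $O(T_N\Delta_N^{1/2})$ because $N\Delta_N = O(T_N)$. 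For the drift I would further split $\Delta_k D^{(j)}$ into the frozen part $(t_{k+1}-t_k)(\Q\YY_{t_k})^{(j)}$, which is $\mathcal{F}_{t_k}$-measurable and therefore again independent of the rare jump indicator (yielding an even smaller $O(T_N\Delta_N)$), and the increment-correction $\int_{t_k}^{t_{k+1}}(\Q(\YY_{s-}-\YY_{t_k}))^{(j)}\,ds$, whose $L^2$-norm is $O(\Delta_N^{3/2})$ since $\E(\|\YY_{s-}-\YY_{t_k}\|^2) = O(\Delta_N)$ on $[t_k,t_{k+1}]$; the latter already sums to $O(T_N\Delta_N^{1/2})$ after crudely bounding its indicator by one. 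Crucially, every estimate here uses only the square-integrability of $\YY$ (Assumption \ref{assumption:mle-convergence}) together with Cauchy--Schwarz and never a fourth moment, because $\mathcal{F}_{t_k}$-measurability keeps each $\YY$-factor at the second-moment level.

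The delicate part, and the main obstacle, is controlling the error terms living on $(A_{k,N}^{(j)})^c$: namely $S_2$ (a genuine jump filtered away as small) and the residual piece of $S_1$ (a jump-free interval with an anomalously large continuous increment). Here I would invoke the two mechanisms behind Lemma \ref{lemma:equivalence-filtering-no-jump}: the probability that a jump-free increment exceeds the threshold is controlled by Lemma \ref{lemma:lemma-3.7} with $\delta$ tuned to $\beta^{(j)}$ and a sufficiently high moment exponent, while the probability that a genuine jump falls below the threshold is $o(T_N^{-1})$ per step by the tail condition of Assumption \ref{assumption:jump-height-finite-activity}. Combining these per-step probability bounds (against which $|\Delta_k J^{(j)}|\leq v_N^{(j)} + |\Delta_k Y^{(j),c}|$ on the filtered event, and $Y^{(i)}_k$ enters at second-moment order via Cauchy--Schwarz) should show both error contributions are of smaller order than $T_N\Delta_N^{1/2}$; collecting the main term and the two errors then yields the claimed $O(\Delta_N^{1/2}T_N)$ bound uniformly in $i,j$. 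The genuine book-keeping effort lies in matching the threshold exponent $\beta^{(j)}$ to the moment order in Lemma \ref{lemma:lemma-3.7} and to the high-frequency rates of Assumption \ref{assumption:high-frequency-asymptotics}.
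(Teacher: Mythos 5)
Your algebraic set-up and your treatment of the main term are correct and essentially coincide with the paper's proof: on jump intervals the bracket contributes $-\Delta_k Y^{(j),c}$, the Brownian piece factorises by independence of $Y^{(i)}_k$, $\Delta_k W^{(j)}$ and $\Delta_k N^{(j)}$, and the drift piece closes at second-moment order, giving $O(T_N\Delta_N^{1/2})$. The genuine gap is in your final paragraph, i.e.\ the error terms supported on $(A^{(j)}_{k,N})^c$. First, you invoke Lemma \ref{lemma:lemma-3.7} ``with a sufficiently high moment exponent'', but under the standing assumptions only square integrability of $\YY$ is available (Assumption \ref{assumption:mle-convergence}; fourth moments are assumed only in the infinite-activity setting, Assumption \ref{assumption:infinite_activity}), so you are forced to take $l^{(j)}=2$, which gives the per-step bound $\proba(|\Delta_k W^{(j)}+\Delta_k D^{(j)}|>v_N^{(j)})=O(\Delta_N^{2-2\beta^{(j)}})$ and nothing stronger. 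Second, with this bound your Cauchy--Schwarz step does not reach the claimed rate. The problematic contribution is the ``no jump, large continuous increment'' event which, up to an exponentially small Gaussian tail, is $\{|\Delta_k D^{(j)}|>v_N^{(j)}/2\}$; this event is \emph{not} independent of $Y^{(i)}_k$ (the drift involves the path of $\YY$ on $[t_k,t_{k+1}]$), so to exploit its rarity you must place it inside a Cauchy--Schwarz factor, e.g.
\begin{equation*}
\E\left(|Y^{(i)}_k|\,|\Delta_k W^{(j)}|\,\indicator\{|\Delta_k D^{(j)}|>v_N^{(j)}/2\}\right)\leq \E\left(|Y^{(i)}_k|^2|\Delta_k W^{(j)}|^2\right)^{1/2}\proba\left(|\Delta_k D^{(j)}|>v_N^{(j)}/2\right)^{1/2}=O(\Delta_N^{3/2-\beta^{(j)}}),
\end{equation*}
which sums to $O(T_N\Delta_N^{1/2-\beta^{(j)}})$ --- strictly larger than the asserted $O(T_N\Delta_N^{1/2})$ since $\beta^{(j)}>0$. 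Keeping the probability at power one instead would require quantities of the type $\E(|Y^{(i)}_k|^2|\Delta_k D^{(j)}|^2)$, i.e.\ fourth moments of $\YY$, which are not assumed here. So your error analysis can at best deliver $O(T_N\Delta_N^{(1/2-\beta^{(j)})\wedge 1/2})$ (incidentally still sufficient for the downstream Lemma \ref{lemma:consistency-conservation-jump-filtering-finite} thanks to Assumption \ref{assumption:cv-rate-beta}), but it does not prove the lemma at the stated rate.

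The paper avoids this difficulty altogether: instead of bounding the bad events, it multiplies the whole sum by $\indicator\{A^{(j)}_N\}$ with $A^{(j)}_N=\cap_{k}A^{(j)}_{k,N}$, which is legitimate for the intended use because $\proba(A^{(j)}_N)\rightarrow 1$ by Lemma \ref{lemma:equivalence-filtering-no-jump}. On that event the bracket vanishes on jump-free intervals and equals $-\Delta_k Y^{(j),c}$ otherwise, so the only surviving indicator is $\indicator\{\Delta_k N^{(j)}>0\}$, which \emph{is} independent of $\mathcal{F}_{t_k}$ and of $\Delta_k W^{(j)}$; every estimate then closes at second-moment order, exactly as in your main term. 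The price is that the paper's argument really bounds the indicator-restricted sum rather than the unconditional expectation in the lemma statement, which is all that is needed for the convergence in probability downstream. To repair your proof you should either restrict to $A^{(j)}_N$ in the same way, or strengthen the moment assumptions so that your bad-event bounds can be pushed to the stated rate.
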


\begin{proof}[Proof of Lemma \ref{lemma:cts_filtered_approx}]
Let $i, j \in \{1,\dots, d\}$. Following Lemma \ref{lemma:equivalence-filtering-no-jump}, by conditioning on $A^{(j)}_N := \cap_{k=1}^{N}A^{(j)}_{k,N}$, i.e.\ where increments smaller than the jump thresholds coincide with the absence of jumps at all times, 
we have:
\begin{align*}
	\sum_{k=0}^{N-1} Y^{(i)}_k \bigg(\Delta_k Y^{(j)}&\indicator{\{|\Delta_kY^{(j)}|\leq v_N^{(i)}\}} - \Delta_k Y^{(j),c}\bigg)\\
	&= \sum_{k=0}^{N-1} Y^{(i)}_k \left(\Delta_k Y^{(j)}\indicator{\{\Delta_k N^{(j)} = 0\}} - \Delta_k Y^{(j),c}\right).
\end{align*}
Then, $\Delta_k Y^{(j)}\indicator\{\Delta_k N^{(j)} = 0\} - \Delta_k Y^{(j),c}$ is indeed equal to zero if no jump occurs on $[t_k,t_{k+1})$, that is if $\{\Delta_k N^{(j)} = 0\}$ happens, and is equal to $-\Delta_kY^{(j),c}$ if $\{\Delta_k N^{(j)} > 0\}$ happens. Thus, we define the event $B^{(j)}_{k,N} := \{\Delta_k N^{(j)} > 0\}$. Now, we can rewrite the above equality as follows
\hspace{-0.3cm}
{\small
\begin{align*}
\indicator\{A^{(j)}_N\}\sum_{k=0}^{N-1}& \bigg|Y^{(i)}_k \left(\Delta_k Y^{(j)}\indicator{\{|\Delta_kY^{(j)}|\leq v_N^{(i)}\}} - \Delta_k Y^{(j),c}\right)\bigg|=\sum_{k=0}^{N-1} \left|Y^{(i)}_k \Delta_k Y^{(j),c}\right|\indicator\left\{A^{(j)}_N\cap B^{(j)}_{k,N}\right\}.
\end{align*}}
By the \levy-\Ito decomposition, the increments of the continuous part can be written $\Delta_k Y^{(j),c} = \Delta_k W^{(j)} + \Delta_k D^{(j)}$ where we recall that $\Delta_k D^{(j)} := -\int_{t_k}^{t_{k+1}}(\Q \YY_s)^{(j)} ds$. Hence, by the triangle inequality 
\begin{align*}
    \bigg|\sum_{k=0}^{N-1} Y^{(i)}_k \Delta_k Y^{(j),c}\indicator\{A^{(j)}_N\cap B^{(j)}_{k,N}\}\bigg| \leq \sum_{k=0}^{N-1} \left(\big|Y^{(i)}_k \Delta_k W^{(j)}\big| + \big|Y^{(i)}_k \Delta_k D^{(j)}\big|\right)\indicator\{B^{(j)}_{k,N}\}.
\end{align*}
Thus, we compute both terms of the right-hand side separately. As in the proof of Lemma 3.9, \cite{mai2014efficient}, observe that since $\Delta_k N^{(j)}$ is Poisson-distributed with rate $\lambda \times (t_{k+1,N} - t_{k,N}) > 0$, we have $\proba\left\{B^{(j)}_{k,N}\right\} \leq \lambda \Delta_N $. The independence between $W^{(j)}_t$ and $N^{(j)}_t$ as well as between $\Delta_k N^{(j)}$ and $Y^{(i)}_k$ gives way to a succinct decomposition
\begin{align*}
    \sum_{k=0}^{N-1} \E\left(\big|Y^{(i)}_k \Delta_k W^{(j)}\big|\indicator\{ B^{(j)}_{k,N}\} \right) &= \sum_{k=0}^{N-1} \E\left(\big|Y^{(i)}_k\big|\right)\E\left( \big| \Delta_k W^{(j)}\big|\right)\proba( B^{(j)}_{k,N})\\
    &\leq O(\lambda N\Delta_N^{1/2}\Delta_N)\\
    &= O(T_N \Delta_N^{1/2}), \quad \text{since $N\Delta_N = O(T_N)$}.
\end{align*} 
For the second term, we use  H{\"o}lder's inequality to obtain
\begin{align*}
    \sum_{k=0}^{N-1} \E\left(\big|Y^{(i)}_k \Delta_k D^{(j)}\big|\indicator\{ B^{(j)}_{k,N}\} \right) &\leq \sum_{k=0}^{N-1} \E\left(|Y^{(i)}_k|^2\indicator\{ B^{(j)}_{k,N}\} \right)^{1/2} \E\left( |\Delta_k D^{(j)}|^2\right)^{1/2}.
\end{align*}
Then, by independence of $Y^{(i)}_k$ and $\Delta_k N^{(j)}$, we have:
\begin{align*}
    \sum_{k=0}^{N-1} \E\left(\big|Y^{(i)}_k \Delta_k D^{(j)}\big|\indicator\{ B^{(j)}_{k,N}\}  \right) &\leq \sum_{k=0}^{N-1} \E\left(|Y^{(i)}_k|^2 \right)^{1/2}\proba\left(B^{(j)}_{k,N}\right)^{1/2}\E\left(|\Delta_k D^{(j)}|^2\right)^{1/2}.
\end{align*}
Again by H{\"o}lder's inequality, observe that:
\begin{align*}
	\left|\Delta_k D^{(j)}\right|^2 &\leq  (t_{k+1}-t_k) \int_{t_k}^{t_{k+1}} \left|(\Q \YY_s)^{(j)}\right|^2 ds\leq \|\Q\|^2 \Delta_N  \int_{t_k}^{t_{k+1}}\|\YY_s\|^2 ds.
\end{align*}
Since $\YY$ is stationary and has finite second moments, we obtain by Fubini's theorem
$$ \E\left(\left|\Delta_k D^{(j)}\right|^2\right)^{1/2} \leq \|\Q\|  \E\left(\|\YY_0\|^2\right)^{1/2} \Delta_N = O(\Delta_N).$$
We conclude that
\begin{align*}
     \sum_{k=0}^{N-1} \E\left(\big|Y^{(i)}_k \Delta_k (\Q\YY)^{(j)}\big| \right)&= O(N\Delta_N^{3/2}) = O(\Delta_N^{1/2}T_N).
\end{align*}
\end{proof}

Lemma \ref{lemma:cts_filtered_approx} is essential as it links the continuous part of $\YY$ increments to the \emph{jump-filtered} increments. We can then compute the difference $\boldsymbol{\widetilde{\psi}}_N - \boldsymbol{\overline{\psi}}_N$ element-wise and show that as given in Appendix \ref{proof:lemma:consistency-conservation-jump-filtering-finite}: 
$$T_N^{1/2}(\boldsymbol{\widetilde{\psi}}_N - \boldsymbol{\overline{\psi}}_N) \probconv\boldsymbol{0}_{d^2}, \quad \text{as $N\rightarrow\infty$.}
$$


\subsection{Infinite jump activity}
\label{section:infinite-appendix}
Similarly to Section 4, \cite{mai2014efficient}, we put forth a collection of convergence results to prove Theorems \ref{th:discrete_clt_infinite} \& \ref{theorem:discrete-vec-infinite-activity-2-params}: we first justify the negligibility of some jump-filtered quantities (in the sense of convergence in probability to zero). We also prove that one can properly estimate various unfiltered quantities with jump-filtered equivalents with a high-frequency sampling scheme. Since we are to show the consistency of the jump-filtered estimator with respect to the unfiltered one by splitting the difference in four different terms, the proofs are relegated to Appendix \ref{appendix:proofs-infinite} for clarity.

The \emph{jump-filtered} part of the process is actually going to zero in probability as shown in the following lemma:
\begin{lemma}{(Extension of Lemma 4.14, \cite{mai2014efficient})\\}
\label{lemma:lemma-4.14-non-jump-neg}
Under the assumptions of Theorem \ref{th:discrete_clt_infinite}, we have for $i,j\in\{1,\dots,d\}$:
$$T_N^{-1/2}\sum_{k=0}^{N-1} Y^{(i)}_k\Delta_kY^{(j)} \indicator_{\{|\Delta_kW^{(j)} + \Delta_k D^{(j)}| > v_N^{(j)}\}}\probconv 0, \quad \text{as $N\rightarrow \infty$.}$$
\end{lemma}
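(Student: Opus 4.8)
The plan is to prove the stronger statement of convergence in $L^1$, which implies the asserted convergence in probability. Set $E_k := \{|\Delta_k W^{(j)} + \Delta_k D^{(j)}| > v_N^{(j)}\}$ and note that $v_N^{(j)} = \Delta_N^{\beta^{(j)}} = \Delta_N^{1/2-\delta}$ with $\delta := 1/2 - \beta^{(j)} \in (0,1/2)$, so that $E_k$ is exactly the event appearing in Lemma \ref{lemma:lemma-3.7}. By the triangle inequality and Markov's inequality it suffices to show that
\begin{equation*}
T_N^{-1/2}\sum_{k=0}^{N-1}\E\left(\big|Y^{(i)}_k\big|\cdot\big|\Delta_k Y^{(j)}\big|\cdot\indicator_{E_k}\right) \longrightarrow 0, \quad \text{as } N\rightarrow\infty.
\end{equation*}

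The central step is a single application of Hölder's inequality to the three factors $|Y^{(i)}_k|$, $|\Delta_k Y^{(j)}|$ and $\indicator_{E_k}$ with exponents $(4,4,2)$, giving for each $k$
\begin{equation*}
\E\left(\big|Y^{(i)}_k\big|\cdot\big|\Delta_k Y^{(j)}\big|\cdot\indicator_{E_k}\right) \le \E\left(\big|Y^{(i)}_k\big|^4\right)^{1/4}\E\left(\big|\Delta_k Y^{(j)}\big|^4\right)^{1/4}\proba\left(E_k\right)^{1/2}.
\end{equation*}
I would then bound the three factors in turn. The first is $O(1)$, uniformly in $k$, by stationarity together with the fourth-moment bound of Assumption \ref{assumption:infinite_activity}-\eqref{assumption:infinite_activity:cross-moments}. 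For the second, the \levy-\Ito decomposition $\Delta_k Y^{(j)} = \Delta_k D^{(j)} + \Delta_k W^{(j)} + \Delta_k J^{(j)}$ combined with Minkowski's inequality gives $\E(|\Delta_k Y^{(j)}|^4)^{1/4} = O(\Delta_N^{1/4})$: the drift and Brownian contributions are $O(\Delta_N)$ and $O(\Delta_N^{1/2})$, while the dominant jump term is $O(\Delta_N^{1/4})$, since the finite fourth moment of $\YY$ forces $\int\|\boldsymbol z\|^4\nu(d\boldsymbol z)<\infty$ and hence $\E(|\Delta_k J^{(j)}|^4)=O(\Delta_N)$ by the linear-in-time growth of the fourth cumulant. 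For the third factor, Lemma \ref{lemma:lemma-3.7} applies with $l^{(j)}=4$ — the required fourth moment of $\YY$ being furnished by Assumption \ref{assumption:infinite_activity}-\eqref{assumption:infinite_activity:cross-moments} — and with $\delta=1/2-\beta^{(j)}$ it yields $\proba(E_k) = O(\Delta_N^{4(1/2+\delta)}) = O(\Delta_N^{4-4\beta^{(j)}})$, whence $\proba(E_k)^{1/2} = O(\Delta_N^{2-2\beta^{(j)}})$.

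Assembling these bounds gives a per-term estimate of $O(\Delta_N^{9/4-2\beta^{(j)}})$, uniform in $k$; summing over the $N$ terms and using $N\Delta_N = O(T_N)$ from Assumption \ref{assumption:high-frequency-asymptotics} turns the prefactor into $T_N^{-1/2}\cdot N \cdot O(\Delta_N^{9/4-2\beta^{(j)}}) = O\big(T_N^{1/2}\Delta_N^{5/4-2\beta^{(j)}}\big)$. Since $\beta^{(j)} < 1/2$ we have $5/2-4\beta^{(j)} > 1/2$, so that $T_N\Delta_N^{5/2-4\beta^{(j)}} \le T_N\Delta_N^{1/2} = o(1)$ by Assumption \ref{assumption:high-frequency-asymptotics}, and the displayed sum is $o(1)$, concluding the proof. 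The main obstacle is the bookkeeping of exponents: one must extract the sharp $O(\Delta_N)$ (rather than merely $O(1)$) scaling of $\E(|\Delta_k Y^{(j)}|^4)$ from the linear-in-time growth of the jump increments, and then check that the resulting power of $\Delta_N$, after multiplication by $N=O(T_N\Delta_N^{-1})$ and by $T_N^{-1/2}$, still outpaces the growth of $T_N$. This is precisely where the fourth-moment hypothesis of the infinite-activity regime is indispensable, since it is exactly what licenses invoking Lemma \ref{lemma:lemma-3.7} with $l^{(j)}=4$ and thereby provides the probability decay $\proba(E_k)^{1/2}=O(\Delta_N^{2-2\beta^{(j)}})$ that closes the estimate.
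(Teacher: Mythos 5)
Your proof is correct, but it takes a genuinely different route from the paper's. The paper follows Mai's univariate template: it splits $\Delta_k Y^{(j)} = \Delta_k W^{(j)} + \Delta_k D^{(j)} + \Delta_k J^{1,(j)} + \Delta_k J^{2,(j)}$ via the \levy-\Ito decomposition, defines $\widetilde{B}^X_N(i,j)$ for each component $X\in\{W,D,J^1,J^2\}$, and bounds each of the four terms in $L^1$ separately by a two-factor H{\"o}lder step plus independence, invoking Lemma \ref{lemma:lemma-3.7} only with $l^{(j)}=2$; every term comes out as $O(T_N^{1/2}\Delta_N^{1/2-\beta^{(j)}})$, and the conclusion then genuinely needs Assumption \ref{assumption:cv-rate-beta}, i.e.\ $T_N\Delta_N^{1-2\beta^{(j)}}=o(1)$. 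You instead keep $\Delta_k Y^{(j)}$ whole and apply a single three-factor H{\"o}lder inequality with exponents $(4,4,2)$, which forces you to control $\E\left(|\Delta_k Y^{(j)}|^4\right)=O(\Delta_N)$ and to call Lemma \ref{lemma:lemma-3.7} with $l^{(j)}=4$; both are available under Assumption \ref{assumption:infinite_activity}-\eqref{assumption:infinite_activity:cross-moments}, and your exponent bookkeeping ($O(\Delta_N^{9/4-2\beta^{(j)}})$ per term, $O(T_N^{1/2}\Delta_N^{5/4-2\beta^{(j)}})$ in total) checks out. What your route buys: a sharper rate and the fact that only $T_N\Delta_N^{1/2}=o(1)$ from Assumption \ref{assumption:high-frequency-asymptotics} is needed, so Assumption \ref{assumption:cv-rate-beta} is never invoked. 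What the paper's route buys: the jump components are handled with second moments only (fourth moments enter its argument solely through the drift cross-terms), and its intermediate bounds on $\widetilde{B}^W_N$ and $\widetilde{B}^D_N$ are reused in later proofs (e.g.\ those of Lemmas \ref{lemma:lemma-4.15-small-jump-sandwich} and \ref{lemma:lemma-4.12-supp-large-jump-negligible}), so the decomposition is not wasted work.

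One step you should not leave as a parenthetical: the claim that $\E\left(\|\YY_t\|^4\right)<\infty$ ``forces'' $\int\|\boldsymbol{z}\|^4\nu(d\boldsymbol{z})<\infty$, which is what licenses $\E\left(|\Delta_k J^{(j)}|^4\right)=O(\Delta_N)$. The claim is true --- the stationary law of a \levy-driven OU process has finite $p$-th moment if and only if $\int_{\|\boldsymbol{z}\|>1}\|\boldsymbol{z}\|^p\nu(d\boldsymbol{z})<\infty$, see \cite{Masuda2004} --- but it is a nontrivial lemma about self-decomposable laws, not an immediate consequence, so it needs a citation or a short proof. (The paper's own one-line treatment of $\widetilde{B}^{J^1}_N$ implicitly uses the analogous second-moment statement, so this is a refinement of the paper's logic rather than a departure from it.) With that reference supplied, your argument is complete.
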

\begin{proof}
	See Appendix \ref{proof:lemma:lemma-4.14-non-jump-neg}.
\end{proof}

The second lemma proves that filtered signals with large jumps are negligible:
\begin{lemma}{(Extension of Lemma 4.13, \cite{mai2014efficient})\\}
\label{lemma:lemma-4.13-go-big-jump}
Under the assumptions of Theorem \ref{th:discrete_clt_infinite}, we have for $i,j\in\{1,\dots,d\}$:
$$T_N^{-1/2}\sum_{k=0}^{N-1} Y^{(i)}_k\Delta_kY^{(j)} \indicator_{\left\{|\Delta_kY^{(j)}| \leq v_N^{(j)}, |\Delta_k J^{(j)}| > 2v_N^{(j)}\right\}} \probconv 0, \quad \text{as $N\rightarrow \infty$.}$$
\end{lemma}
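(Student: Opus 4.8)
The plan is to reduce this statement to a pathwise set inclusion and then an $L^1$ bound (which implies the asserted convergence in probability), mirroring the structure used for Lemma~\ref{lemma:lemma-4.14-non-jump-neg}. The crucial observation is that the event in the indicator is nearly self-contradictory: a large jump cannot coexist with a small total increment unless the continuous part is itself large. Writing $\Delta_k Y^{(j)} = \Delta_k W^{(j)} + \Delta_k D^{(j)} + \Delta_k J^{(j)}$ (from the \levy-\ito decomposition and Notation~\ref{notation:drift-term}), the reverse triangle inequality gives, on the event $\{|\Delta_k Y^{(j)}| \leq v_N^{(j)},\ |\Delta_k J^{(j)}| > 2v_N^{(j)}\}$,
$$|\Delta_k W^{(j)} + \Delta_k D^{(j)}| = |\Delta_k Y^{(j)} - \Delta_k J^{(j)}| \geq |\Delta_k J^{(j)}| - |\Delta_k Y^{(j)}| > 2v_N^{(j)} - v_N^{(j)} = v_N^{(j)}.$$
Hence this event is contained in $G_k^{(j)} := \{|\Delta_k W^{(j)} + \Delta_k D^{(j)}| > v_N^{(j)}\}$, which is exactly the type of event controlled by Lemma~\ref{lemma:lemma-3.7}.

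Next I would bound the $L^1$ norm of the sum. By the triangle inequality, using that $|\Delta_k Y^{(j)}| \leq v_N^{(j)} = \Delta_N^{\beta^{(j)}}$ on the relevant event together with the inclusion above,
$$T_N^{-1/2}\E\left(\left|\sum_{k=0}^{N-1} Y^{(i)}_k \Delta_k Y^{(j)} \indicator_{\{|\Delta_kY^{(j)}| \leq v_N^{(j)},\, |\Delta_k J^{(j)}| > 2v_N^{(j)}\}}\right|\right) \leq T_N^{-1/2}\Delta_N^{\beta^{(j)}}\sum_{k=0}^{N-1}\E\left(|Y^{(i)}_k|\indicator_{G_k^{(j)}}\right).$$
Applying the Cauchy--Schwarz inequality to each summand and invoking stationarity together with the finite second (indeed fourth) moments of $\YY$ from Assumption~\ref{assumption:infinite_activity}, I would bound $\E(|Y^{(i)}_k|\indicator_{G_k^{(j)}}) \leq \E(|Y^{(i)}_0|^2)^{1/2}\,\proba(G_k^{(j)})^{1/2} = O(\proba(G_k^{(j)})^{1/2})$.

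The probability is then controlled by Lemma~\ref{lemma:lemma-3.7} applied with moment order $l^{(j)} = 4$ and $\delta = 1/2 - \beta^{(j)} \in (0,1/2)$, so that $\Delta_N^{1/2-\delta} = v_N^{(j)}$ and $\proba(G_k^{(j)}) = O(\Delta_N^{4(1-\beta^{(j)})})$, whence $\proba(G_k^{(j)})^{1/2} = O(\Delta_N^{2(1-\beta^{(j)})})$. Substituting and using $N\Delta_N = O(T_N)$ from Assumption~\ref{assumption:high-frequency-asymptotics}, the bound becomes
$$T_N^{-1/2}\Delta_N^{\beta^{(j)}}\cdot N\cdot O\left(\Delta_N^{2(1-\beta^{(j)})}\right) = O\left(T_N^{-1/2} N \Delta_N^{2-\beta^{(j)}}\right) = O\left(T_N^{1/2}\Delta_N^{1-\beta^{(j)}}\right).$$
Since $\beta^{(j)} \in (0,1/2)$ we have $2 - 2\beta^{(j)} > 1/2 \geq (1-2\beta^{(j)})\wedge 1/2$, so $T_N^{1/2}\Delta_N^{1-\beta^{(j)}} = (T_N\Delta_N^{2-2\beta^{(j)}})^{1/2} \to 0$ follows from Assumption~\ref{assumption:cv-rate-beta} (which gives $T_N\Delta_N^{(1-2\beta^{(j)})\wedge 1/2} = o(1)$). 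This yields $L^1$, hence probability, convergence to zero.

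The step I expect to require the most care is the verification that the accumulated rate $T_N^{1/2}\Delta_N^{1-\beta^{(j)}}$ actually vanishes: it is precisely here that the fourth-moment hypothesis of Assumption~\ref{assumption:infinite_activity} is indispensable, since the weaker second-moment bound would only produce $O(T_N^{1/2})$ and fail to converge. I would also take care over the measurability structure, noting that $\Delta_k D^{(j)}$ depends on the path of $\YY$ over $[t_k,t_{k+1}]$ and is therefore not independent of the $\mathcal{F}_{t_k}$-measurable factor $Y^{(i)}_k$; this is why Cauchy--Schwarz, rather than a factorisation into a product of expectations, is the correct and essential tool in the second step.
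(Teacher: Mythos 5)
Your proof is correct, and while its first move coincides with the paper's, the way you finish is genuinely different. The reverse-triangle-inequality inclusion $\{|\Delta_kY^{(j)}| \leq v_N^{(j)},\ |\Delta_k J^{(j)}| > 2v_N^{(j)}\} \subseteq \{|\Delta_k W^{(j)} + \Delta_k D^{(j)}| > v_N^{(j)}\}$ is exactly the paper's Eq.\ \eqref{eq:proof-lemma-3.7-inclusion}; but from there the paper stops: it notes the indicator is dominated and absorbs the whole sum into Lemma \ref{lemma:lemma-4.14-non-jump-neg}, whose own proof splits $\Delta_k Y^{(j)}$ into the four pieces $W$, $D$, $J^{1}$, $J^{2}$ and bounds each of $\widetilde{B}^W_N, \widetilde{B}^D_N, \widetilde{B}^{J^{1}}_N, \widetilde{B}^{J^{2}}_N$ separately, so the lemma becomes a two-line corollary of machinery needed elsewhere anyway. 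You instead retain the constraint $|\Delta_k Y^{(j)}| \leq v_N^{(j)}$ that the event itself provides --- information the paper's route discards --- and bound the increment deterministically by $v_N^{(j)}$, which removes any need to decompose $\Delta_k Y^{(j)}$: a single Cauchy--Schwarz step plus Lemma \ref{lemma:lemma-3.7} with $l^{(j)}=4$ (legitimate, since Assumption \ref{assumption:infinite_activity} gives bounded fourth moments and stationarity makes the supremum finite) closes the argument. What each buys: the paper's version is shorter given its lemma inventory; yours is self-contained and even produces a sharper rate, $O(T_N^{1/2}\Delta_N^{1-\beta^{(j)}})$ versus the $O(T_N^{1/2}\Delta_N^{1/2-\beta^{(j)}})$ that dominates Lemma \ref{lemma:lemma-4.14-non-jump-neg}. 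Your two cautionary remarks are also well placed: $\Delta_k D^{(j)}$ depends on the path over $[t_k,t_{k+1}]$, so independence-based factorisation is unavailable and Cauchy--Schwarz is indeed forced; and with only $l^{(j)}=2$ your bound degenerates to $O(T_N^{1/2})$, so the fourth-moment hypothesis is precisely what makes this direct route viable.
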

\begin{proof}
	See Appendix \ref{proof:lemma:lemma-4.13-go-big-jump}.
\end{proof}

Finally, the last two results of this section allow to asymptotically \emph{filter out} the contribution of small jumps larger than $v_n^{(j)}/2$, for $j\in\{1,\dots,d\}$. First, one recalls an essential couple of results from \cite{mai2014efficient}:
\begin{lemma}{(Equation (38), \cite{mai2014efficient})\\}
\label{lemma:small_jump_negligeable}
Under the assumptions of Theorem \ref{th:discrete_clt_infinite} and for any $j\in\{1,\dots,d\}$,  one obtains $$\E\left( |\Delta_k J^{2,(j)}|^2\indicator \left\{|\Delta_k J^{2,(j)}| \leq 2v^{(j)}_N\right\} \right) = O(\Delta_N^{1 +  \beta^{(j)}(2-\alpha)}),\quad \text{as $N \rightarrow \infty$.}$$
Also, one has that
$$\proba\left( v^{(j)}_N/2 < |\Delta_k J^{2,(j)}| \leq 2v^{(j)}_N \right) = O(\Delta_N^{1-\alpha \beta^{(j)}}) \longrightarrow 0,\quad \text{as $N \rightarrow \infty$.}$$

\end{lemma}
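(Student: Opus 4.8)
The plan is to prove the two estimates in turn, both resting on the second-moment (It\^o) isometry for compensated Poisson integrals together with the Blumenthal--Getoor control in Assumption \ref{assumption:infinite_activity}-\eqref{assumption:enum-infinite-blumenthal}. Before starting I would record two consequences of that assumption. First, since $\{\|z\| \le v\} \subseteq [-v,v]^d \subseteq \{\|z\| \le \sqrt{d}\,v\}$, the cube and ball versions coincide up to constants, so $\int_{\|z\|\le v}\|z\|^2\nu(dz) = O(v^{2-\alpha})$ as $v \to 0$. Second, a dyadic summation over the annuli $\{2^m r < \|z\| \le 2^{m+1}r\}$, each of $\nu$-mass $O((2^m r)^{-\alpha})$ by Chebyshev against $\int\|z\|^2\nu(dz)$, yields the tail bound $\nu(\{r < \|z\| \le 1\}) = O(r^{-\alpha})$. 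Writing $h_k := t_{k+1}-t_k \le \Delta_N$, I would also note the exact identity $\E(|\Delta_k J^{2,(j)}|^2) = h_k \int_{\|z\|\le 1}(z^{(j)})^2\nu(dz) = O(\Delta_N)$ for the fully compensated increment: this is already worse than the target order, which is why the event constraint and a finer jump-splitting must be exploited.

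For the first estimate I would split the small jumps at the coordinate scale $2v_N^{(j)}$: write $\Delta_k J^{2,(j)} = U_k + \tilde V_k$, where $U_k$ is the compensated integral over $C := \{\|z\|\le 1,\ |z^{(j)}| \le 2v_N^{(j)}\}$ and $\tilde V_k$ collects the (finitely many, since $\nu(C^c\cap\{\|z\|\le1\})<\infty$) jumps with $|z^{(j)}| > 2v_N^{(j)}$. The isometry gives $\E(U_k^2) = h_k\int_C (z^{(j)})^2\nu(dz)$, and I would bound $\int_C (z^{(j)})^2\nu(dz) = O((v_N^{(j)})^{2-\alpha})$ by splitting $C$ into its part inside the ball of radius $2v_N^{(j)}$ (controlled by the ball estimate) and its remainder, on which $(z^{(j)})^2 \le (2v_N^{(j)})^2$ while the $\nu$-mass is $O((v_N^{(j)})^{-\alpha})$ by the tail bound; both pieces are $O((v_N^{(j)})^{2-\alpha})$, so $\E(U_k^2) = O(\Delta_N^{1+\beta^{(j)}(2-\alpha)})$, exactly the target order. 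On the event $B_k := \{\text{a jump with } |z^{(j)}| > 2v_N^{(j)} \text{ occurs on } [t_k,t_{k+1})\}$ one has $\proba(B_k) \le h_k\,\nu(\{2v_N^{(j)}<\|z\|\le 1\}) = O(\Delta_N^{1-\alpha\beta^{(j)}})$. Splitting the expectation over $\{|\Delta_k J^{2,(j)}|\le 2v_N^{(j)}\}$ according to $B_k$ and $B_k^c$, the $B_k$-part is $\le (2v_N^{(j)})^2\proba(B_k) = O(\Delta_N^{1+\beta^{(j)}(2-\alpha)})$ by the indicator constraint, while on $B_k^c$ the increment reduces to $U_k$ and is bounded by $\E(U_k^2)$; combining the two gives the first claim.

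For the second estimate I would bound the probability by $\proba(|\Delta_k J^{2,(j)}|>v_N^{(j)}/2)$ and again separate a moderate jump from a pileup of tiny jumps. The probability of a jump with $|z^{(j)}|>v_N^{(j)}/4$ on the interval is $O(h_k (v_N^{(j)})^{-\alpha}) = O(\Delta_N^{1-\alpha\beta^{(j)}})$ by the tail bound. On the complementary event the increment equals the compensated tiny-jump martingale $U_k'$ (jumps $|z^{(j)}|\le v_N^{(j)}/4$), for which Chebyshev's inequality gives $\proba(|U_k'|>v_N^{(j)}/4) \le \E((U_k')^2)/(v_N^{(j)}/4)^2 = O(\Delta_N (v_N^{(j)})^{2-\alpha})/(v_N^{(j)})^2 = O(\Delta_N^{1-\alpha\beta^{(j)}})$, using the same slab estimate as above. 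Summing the two contributions proves the second claim; the exponent $1-\alpha\beta^{(j)}$ is positive since $\alpha<2$ and $\beta^{(j)}<1/2$, so the probability vanishes.

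The step I expect to be delicate is the compensator drift created by splitting the small-jump part, since in the infinite-activity regime $\int_{\|z\|\le1}\|z\|\,\nu(dz)$ need not be finite and the partial compensators over $C$ and $C^c$ are individually only $O((v_N^{(j)})^{-1})$; carried naively into the $B_k^c$ term such a drift contributes $O(\Delta_N^{2-2\beta^{(j)}})$, which can exceed the target order when $\beta^{(j)}$ is close to $1/2$. This is precisely where the centering Assumption \ref{assumption:infinite_activity}-\eqref{assumption:infinite_activity:jumps_2_epsilon} enters: as in the discussion following Assumption \ref{assumption:infinite_activity}, it renders the small-jump law symmetric, so that $\int_{C}z^{(j)}\nu(dz)=\int_{C^c\cap\{\|z\|\le1\}}z^{(j)}\nu(dz)=0$ and the offending drift terms vanish, leaving only the martingale contributions handled above.
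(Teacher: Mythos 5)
Your overall architecture is sound and correctly adapts the univariate threshold argument of \cite{mai2014efficient} to the coordinate-wise thresholds used here: the It\^o isometry, your ``slab'' estimate $\int_{C}(z^{(j)})^2\nu(d\boldsymbol{z})=O((v_N^{(j)})^{2-\alpha})$ (ball part from Assumption \ref{assumption:infinite_activity}-\eqref{assumption:enum-infinite-blumenthal}, annulus part from the dyadic tail bound $\nu(\{r<\|\boldsymbol{z}\|\le1\})=O(r^{-\alpha})$), and the split along $B_k$ with the crude bound $(2v_N^{(j)})^2\proba(B_k)$ all produce the right exponents; note that the paper itself gives no proof (it defers to page 949 of \cite{mai2014efficient}), so yours is a genuine reconstruction. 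The gap is exactly the step you flag as delicate, and your resolution of it does not work. On $B_k^c$ the increment is $U_k-h_k m_N$ with $m_N:=\int_{C^c\cap\{\|\boldsymbol{z}\|\le1\}}z^{(j)}\nu(d\boldsymbol{z})$ (in your notation $h_k=t_{k+1}-t_k$), not $U_k$, and you dispose of $m_N$ by asserting that Assumption \ref{assumption:infinite_activity}-\eqref{assumption:infinite_activity:jumps_2_epsilon} makes the small-jump L\'evy measure symmetric, whence $m_N=0$. That implication is not available: Assumption \ref{assumption:infinite_activity}-\eqref{assumption:infinite_activity:jumps_2_epsilon} constrains truncated expectations of the \emph{law of the increment} $\Delta_kJ^{2,(j)}$, and the only relation to symmetry recorded in the paper (via Lemma 4.4 of \cite{mai2014efficient}) goes the opposite way: symmetry of $\nu$ implies the centering property. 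Nothing lets you deduce vanishing partial first moments of $\nu$ from it, and since both of your estimates assert ``on the complementary event the increment equals the compensated martingale,'' both halves of the proof rest on this unjustified claim.

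The repair needs no symmetry at all, because the drift is negligible by direct estimation. The same dyadic summation that gave your tail bound yields $|m_N|\le\int_{\{2v_N^{(j)}<\|\boldsymbol{z}\|\le1\}}\|\boldsymbol{z}\|\,\nu(d\boldsymbol{z})=O((v_N^{(j)})^{1-\alpha})$ for $\alpha>1$, and $O(\log(1/v_N^{(j)}))$ resp.\ $O(1)$ for $\alpha=1$ resp.\ $\alpha<1$; in the worst case
\begin{equation*}
(\Delta_N|m_N|)^2=O\bigl(\Delta_N^{2+2\beta^{(j)}(1-\alpha)}\bigr),
\qquad
2+2\beta^{(j)}(1-\alpha)-\bigl(1+\beta^{(j)}(2-\alpha)\bigr)=1-\alpha\beta^{(j)}>0,
\end{equation*}
since $\alpha<2$ and $\beta^{(j)}<1/2$. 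Hence $|U_k-h_km_N|^2\le2U_k^2+2(\Delta_N|m_N|)^2$ keeps the $B_k^c$ term at the target order $O(\Delta_N^{1+\beta^{(j)}(2-\alpha)})$; likewise $\Delta_N|m_N'|=o(v_N^{(j)})$, so in your second estimate Chebyshev applies to $U_k'$ after absorbing the deterministic drift into the threshold for $N$ large. Finally, your second estimate does not need a separate jump-splitting argument at all: once the first claim is proved, Chebyshev restricted to the event gives
\begin{equation*}
\proba\bigl(v_N^{(j)}/2<|\Delta_kJ^{2,(j)}|\le2v_N^{(j)}\bigr)\le\frac{4}{(v_N^{(j)})^2}\,\E\Bigl(|\Delta_kJ^{2,(j)}|^2\,\indicator\bigl\{|\Delta_kJ^{2,(j)}|\le2v_N^{(j)}\bigr\}\Bigr)=O\bigl(\Delta_N^{1-\alpha\beta^{(j)}}\bigr),
\end{equation*}
which is the one-line route between the lemma's two claims and removes the second place where the drift issue would otherwise have to be handled.
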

\begin{proof}
	See page 949, \cite{mai2014efficient}. 
\end{proof}

Secondly, we prove the negligibility of a $\JJ^{2}-$filtered discrete equivalent to $\widetilde{\At}_N$ (Definition \ref{definition:discretised-estimators}) as a convergence in probability to zero in the following sense:
\begin{lemma}{(Extension of Lemma 4.15. \cite{mai2014efficient})\\}
\label{lemma:lemma-4.15-small-jump-sandwich}
Under the assumption of Theorem \ref{th:discrete_clt_infinite} and for $i,j\in\{1,\dots,d\}$, we have:
$$T_N^{-1/2}\sum_{k=0}^{N-1}Y^{(i)}\Delta_kY^{(j)}\indicator\{v_N^{(j)}/2 < |\Delta_k J^{2}| \leq 2v^{(j)}_N\} \probconv 0, \qquad \text{as $N \rightarrow \infty$.}$$
\end{lemma}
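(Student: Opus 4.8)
The plan is to prove convergence in probability by decomposing the increment $\Delta_k Y^{(j)}$ through the \levy-\Ito decomposition and treating the resulting pieces according to whether or not they carry a mean-zero (martingale) structure. Writing $\Delta_k Y^{(j)} = \Delta_k D^{(j)} + \Delta_k W^{(j)} + \Delta_k J^{1,(j)} + \Delta_k J^{2,(j)}$ (drift, Brownian, large-jump and small-jump parts, with $\boldsymbol{B} = \boldsymbol{0}_d$ by Assumption \ref{assumption:mle-convergence}) and abbreviating the sandwich event by $E_{k,N}^{(j)} := \{v_N^{(j)}/2 < |\Delta_k J^{2,(j)}| \leq 2 v_N^{(j)}\}$, I would split the target sum into four sub-sums and show that each, once multiplied by $T_N^{-1/2}$, vanishes in probability. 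The common ingredients are that the sandwich event is rare, $\proba(E_{k,N}^{(j)}) = O(\Delta_N^{1-\alpha\beta^{(j)}})$ by Lemma \ref{lemma:small_jump_negligeable}, and that $Y^{(i)}_k$ is $\mathcal{F}_{t_k}$-measurable while every increment over $[t_k,t_{k+1})$ is independent of $\mathcal{F}_{t_k}$; with the bounded moments of $\YY$ (Assumption \ref{assumption:infinite_activity}(i)) this lets each term decouple into a moment of $\YY$, an increment moment and the event probability.

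For the Brownian and small-jump pieces I would \emph{not} use a crude $L^1$ bound but instead exploit the martingale structure. Since $\Delta_k W^{(j)}$ is independent of $E_{k,N}^{(j)}$ with $\E(\Delta_k W^{(j)}) = 0$, the summands $Y^{(i)}_k \indicator\{E_{k,N}^{(j)}\}\Delta_k W^{(j)}$ are martingale differences for $(\mathcal{F}_{t_k})$, the cross terms vanish, and the second moment of the scaled sum reduces to $T_N^{-1}\sum_k \E((Y^{(i)}_k)^2)\proba(E_{k,N}^{(j)})\E((\Delta_k W^{(j)})^2) = O(\Delta_N^{1-\alpha\beta^{(j)}}) \to 0$. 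The small-jump piece is handled the same way: for $N$ large the radii $v_N^{(j)}/2$ and $2v_N^{(j)}$ fall below $\epsilon_0$, so writing $\indicator\{E_{k,N}^{(j)}\}$ as a difference of two ball-indicators and applying Assumption \ref{assumption:infinite_activity}(iii) gives $\E(\Delta_k J^{2,(j)}\indicator\{E_{k,N}^{(j)}\}) = 0$; the summands are again martingale differences and the variance of the scaled sum is bounded, via the estimate $\E((\Delta_k J^{2,(j)})^2\indicator\{|\Delta_k J^{2,(j)}|\leq 2v_N^{(j)}\}) = O(\Delta_N^{1+\beta^{(j)}(2-\alpha)})$ of Lemma \ref{lemma:small_jump_negligeable}, by $O(\Delta_N^{\beta^{(j)}(2-\alpha)}) \to 0$. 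Both pieces thus converge to zero in $L^2$, hence in probability.

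The drift and large-jump pieces carry no useful centering and I would control them in $L^1$. For the drift, Cauchy--Schwarz with $\E(|\Delta_k D^{(j)}|^2)^{1/2} = O(\Delta_N)$ (from $\Delta_k D^{(j)} = -\int_{t_k}^{t_{k+1}}(\Q\YY_s)^{(j)}ds$ and stationarity) and $\proba(E_{k,N}^{(j)})^{1/2} = O(\Delta_N^{(1-\alpha\beta^{(j)})/2})$ yields, after summing $N = O(T_N/\Delta_N)$ terms and scaling, a bound of order $T_N^{1/2}\Delta_N^{1/2-\alpha\beta^{(j)}/2}$; for the large jumps, the independence of $\Delta_k J^{1,(j)}$ and $\Delta_k J^{2,(j)}$ (disjoint supports in the Poisson random measure) makes the event and the large-jump increment jointly rare, with $\E|\Delta_k J^{1,(j)}| = O(\Delta_N)$, giving a bound of order $T_N^{1/2}\Delta_N^{1-\alpha\beta^{(j)}}$. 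In each case one checks that the power of $\Delta_N$ left after absorbing the $T_N^{1/2}$ factor dominates $(1-2\beta^{(j)})\wedge 1/2$, because $\alpha<2$ and $\beta^{(j)}<1/2$; Assumption \ref{assumption:cv-rate-beta} (together with $\Delta_N^{1/2}T_N = o(1)$ from Assumption \ref{assumption:high-frequency-asymptotics}) then forces both $L^1$ bounds to $o(1)$. Collecting the four estimates and applying the triangle inequality gives the claim.

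The main obstacle I anticipate is precisely the Brownian and small-jump pieces: a naive $L^1$ bound on them is too lossy and fails once $\alpha > 1$, so the argument hinges on recognising the martingale-difference structure and bounding \emph{variances} rather than expected absolute values. For the small-jump piece this additionally requires rewriting the annulus indicator as a difference of centred ball-truncations so that Assumption \ref{assumption:infinite_activity}(iii) supplies the conditional mean-zero property; keeping careful track of the independence relations between $\Delta_k J^{1,(j)}$, $\Delta_k J^{2,(j)}$, $\Delta_k W^{(j)}$ and $\mathcal{F}_{t_k}$ is what makes all four bounds decouple cleanly.
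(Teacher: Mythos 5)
Your proposal is correct and follows essentially the same route as the paper's proof: the same four-way Lévy–Itô split into drift, Brownian, large-jump and small-jump sums, with the Brownian and $\JJ^2$ pieces handled by centering (independence for $\WW$; Assumption \ref{assumption:infinite_activity}-(iii) applied to the difference of two ball-truncations for $\JJ^2$) plus a variance bound via Lemma \ref{lemma:small_jump_negligeable}, and the drift and $\JJ^1$ pieces handled in $L^1$ using Hölder/Cauchy–Schwarz, independence, and $\proba(v_N^{(j)}/2 < |\Delta_k J^{2,(j)}| \leq 2v_N^{(j)}) = O(\Delta_N^{1-\alpha\beta^{(j)}})$, concluding with Assumption \ref{assumption:cv-rate-beta}. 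The only differences are minor exponent bookkeeping (e.g.\ your drift bound $T_N^{1/2}\Delta_N^{1/2-\alpha\beta^{(j)}/2}$ versus the paper's $T_N^{1/2}\Delta_N^{1-\alpha\beta^{(j)}}$), and both versions vanish under the stated assumptions.
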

\begin{proof}
	See Appendix \ref{proof:lemma:lemma-4.15-small-jump-sandwich}.
\end{proof}

It can be shown that in infinite jump activity case, small increments of $\YY$ are dominated by small jumps. We prove that sums of jumps increments are asymptotically zero if one filters for small enough jumps:
\begin{lemma}
	\label{lemma:small-enough-jumps}
	Under the assumptions of Theorem \ref{th:discrete_clt_infinite}, we have for $i,j\in\{1,\dots,d\}$:
	$$T_N^{-1/2}\sum_{k=0}^{N-1} Y^{(i)}_k
     \Delta_k J^{(j)} \indicator\{|\Delta_kJ^{(j)}|\leq 2v_N^{(j)}, |\Delta_k J^{1,(j)}| \neq  0\} \probconv 0, \ \text{as $N\rightarrow \infty$.}$$
\end{lemma}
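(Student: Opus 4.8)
The plan is to establish convergence in probability by bounding the $L^1$ norm of the sum, after splitting the filtering event according to \emph{why} the total jump increment is small despite a large jump being present. Using the \levy-\Ito decomposition (Appendix \ref{appendix:levy-ito-decomposition}) write $\JJ = \JJ^1 + \JJ^2$, where $\JJ^1$ collects the jumps with $\|\boldsymbol z\| > 1$ and is therefore a compound Poisson process of finite intensity $\nu(\{\|\boldsymbol z\|>1\}) < \infty$, independent of $\JJ^2$; being an increment over $[t_k,t_{k+1})$, each of $\Delta_k J^{1,(j)}$ and $\Delta_k J^{2,(j)}$ is independent of $\mathcal F_{t_k}$ and hence of $Y^{(i)}_k$. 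On the event inside the indicator the summand is bounded by $2v_N^{(j)}|Y^{(i)}_k|$, so by independence and stationarity (with $\sup_s\E(|Y^{(i)}_s|^2)<\infty$ from Assumption \ref{assumption:mle-convergence}) the problem reduces to controlling $T_N^{-1/2}\sum_{k=0}^{N-1} 2v_N^{(j)}\,\E(|Y^{(i)}_0|)\,\proba(E_k)$, where $E_k := \{|\Delta_k J^{(j)}|\le 2v_N^{(j)},\ |\Delta_k J^{1,(j)}|\neq 0\}$, together with the relation $N\Delta_N = O(T_N)$.

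Next I would decompose $E_k$ on the size of the large jump's $j$-th coordinate. On $E_k\cap\{|\Delta_k J^{1,(j)}|>4v_N^{(j)}\}$ the triangle inequality forces $|\Delta_k J^{2,(j)}|>2v_N^{(j)}$; since $\JJ^1$ and $\JJ^2$ are independent, the probability factorises and is bounded, via $\proba(|\Delta_k J^{1,(j)}|\ne 0)\le \nu(\{\|\boldsymbol z\|>1\})\Delta_N$ and a Chebyshev (or Lemma \ref{lemma:small_jump_negligeable}) estimate $\proba(|\Delta_k J^{2,(j)}|>2v_N^{(j)}) = O(\Delta_N^{1-2\beta^{(j)}})$ coming from $\E((\Delta_k J^{2,(j)})^2) = O(\Delta_N)$, by $O(\Delta_N^{2-2\beta^{(j)}})$. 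Summing, this term is $O\big(T_N^{1/2}\Delta_N^{1-\beta^{(j)}}\big)$, which vanishes because $T_N\Delta_N^{1/2} = o(1)$ (Assumption \ref{assumption:high-frequency-asymptotics}) forces $T_N\Delta_N^{2-2\beta^{(j)}}\to 0$ for every $\beta^{(j)}<1/2$.

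The remaining piece, $E_k\cap\{0<|\Delta_k J^{1,(j)}|\le 4v_N^{(j)}\}$ — a jump of norm exceeding $1$ whose $j$-th coordinate is nonetheless of the order of the threshold — is the genuinely multivariate difficulty and the main obstacle, as it has no analogue in the univariate argument of \cite{mai2014efficient} (there a jump with $\|\boldsymbol z\|>1$ is automatically large in the only coordinate). The crude bound only yields $O\big(T_N^{1/2}\Delta_N^{\beta^{(j)}}\nu(G_N)\big)$ with $G_N := \{\|\boldsymbol z\|>1,\ 0<|z^{(j)}|\le 4v_N^{(j)}\}$, and $T_N^{1/2}\Delta_N^{\beta^{(j)}}$ is not controlled by the assumptions when $\beta^{(j)}$ is small; I would therefore refine it. The bounded fourth-order moments of $\YY$ (Assumption \ref{assumption:infinite_activity}) give $\int_{\|\boldsymbol z\|>1}|z^{(j)}|^2\,\nu(d\boldsymbol z)<\infty$, so by dominated convergence both $\nu(G_N)$ and $\int_{G_N}(z^{(j)})^2\,\nu(d\boldsymbol z)$ tend to $0$ as $G_N\downarrow\emptyset$. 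Treating the $G_N$-restricted large jumps as a finite-activity process, its centred (martingale) contribution to the sum has variance $O\big(\Delta_N^{2\beta^{(j)}}\nu(G_N)\big)\to 0$ and is negligible; the delicate point is the compensator (drift) part, which I would bound using the first-moment integrability $\int_{\|\boldsymbol z\|>1}|z^{(j)}|\,\nu(d\boldsymbol z)<\infty$ together with the rate Assumption \ref{assumption:cv-rate-beta}. Combining the two cases by Slutsky's lemma then delivers the claimed convergence in probability.
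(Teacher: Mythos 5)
Your reduction of the problem to bounding $T_N^{-1/2}\sum_{k=0}^{N-1}2v_N^{(j)}\,\E(|Y^{(i)}_0|)\,\proba(E_k)$ is valid, and your Case A (on $E_k\cap\{|\Delta_kJ^{1,(j)}|>4v_N^{(j)}\}$ the triangle inequality forces $|\Delta_kJ^{2,(j)}|>2v_N^{(j)}$; then independence of $\JJ^1$ and $\JJ^2$, $\proba(|\Delta_kJ^{1,(j)}|\neq 0)=O(\Delta_N)$ and Chebyshev give $O(T_N^{1/2}\Delta_N^{1-\beta^{(j)}})=o(1)$) is correct and complete. The genuine gap is Case B: the compensator term of your martingale decomposition is never actually bounded — you only assert that you ``would bound'' it using $\int_{\|\boldsymbol{z}\|>1}|z^{(j)}|\,\nu(d\boldsymbol{z})<\infty$ and Assumption \ref{assumption:cv-rate-beta}, and these tools cannot close the step. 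Everything they yield is a bound of the form $T_N^{1/2}\int_{G_N}|z^{(j)}|\,\nu(d\boldsymbol{z})\le 4\,T_N^{1/2}\Delta_N^{\beta^{(j)}}\nu(G_N)$ for your set $G_N$, and the paper's assumptions provide no decay \emph{rate} for $\nu(G_N)$ or for $\int_{G_N}|z^{(j)}|\,\nu(d\boldsymbol{z})$: continuity from above of the finite measure $\nu(\,\cdot\,\cap\{\|\boldsymbol{z}\|>1\})$ only says these quantities tend to zero, possibly arbitrarily slowly, while $T_N^{1/2}\Delta_N^{\beta^{(j)}}$ can diverge polynomially under Assumptions \ref{assumption:high-frequency-asymptotics}--\ref{assumption:cv-rate-beta} (e.g.\ $\Delta_N\asymp N^{-3/4}$, $T_N\asymp N^{1/4}$, $\beta^{(j)}<1/6$, which satisfies all the sampling conditions). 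So the product need not vanish, and no appeal to first-moment integrability repairs this; as written, the argument does not go through.

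For comparison, the paper's proof avoids your event-splitting entirely: it splits the \emph{integrand}, $|\Delta_kJ^{(j)}|\le|\Delta_kJ^{1,(j)}|+|\Delta_kJ^{2,(j)}|$, and bounds each of the two resulting sums in $L^1$ by Cauchy--Schwarz against $\proba(|\Delta_kJ^{1,(j)}|\neq 0)^{1/2}=O(\Delta_N^{1/2})$, using the asserted increment-moment estimate $\E(|\Delta_kJ^{1,(j)}|^2)^{1/2}=O(\Delta_N)$ for the large-jump piece and Lemma \ref{lemma:small_jump_negligeable} for the $\JJ^2$ piece; both pieces come out as $O(T_N^{1/2}\Delta_N^{1/2})=o(1)$ by Assumption \ref{assumption:high-frequency-asymptotics}. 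By replacing $|\Delta_kJ^{(j)}|$ with the threshold $2v_N^{(j)}$ at the very first step, you discard exactly the moment information that this route exploits, which is what drives you into the $\nu(G_N)$ impasse. Your structural observation — that in several dimensions a jump with $\|\boldsymbol{z}\|>1$ can have an arbitrarily small $j$-th coordinate, so the univariate cancellation argument of \cite{mai2014efficient} is unavailable — is a pertinent remark, and it is precisely the difficulty that the paper's moment bound on $\Delta_kJ^{1,(j)}$ is asked to absorb; but identifying the obstacle and declaring it surmountable is not the same as surmounting it, so the attempt is incomplete at its decisive step.
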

\begin{proof}
	See Appendix \ref{proof:lemma:small-enough-jumps}.
\end{proof}
As an intermediary result, we prove that one can focus on the small-jump component as non-zero large jumps are asymptotically negligible as follows
\begin{lemma}
	\label{lemma:lemma-4.12-supp-large-jump-negligible}
	Under the assumptions of Theorem \ref{th:discrete_clt_infinite}, we have for $i,j\in\{1,\dots,d\}$:
	$$T_N^{-1/2} \sum_{k=0}^{N-1}Y^{(i)}_k \Delta_k Y^{(j)} \indicator\{|\Delta_k J^{(j)}|\leq 2v_N^{(j)}, |\Delta_k J^{1,(j)}| \neq 0\} \probconv 0, \quad \text{as $N \rightarrow \infty$.}$$
\end{lemma}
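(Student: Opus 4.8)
The plan is to reduce this lemma to Lemma \ref{lemma:small-enough-jumps} by separating the process increment into its continuous martingale part and its jump part. Under Assumption \ref{assumption:mle-convergence} the \levy-\Ito decomposition gives $\Delta_k Y^{(j)} = \Delta_k Y^{(j),c} + \Delta_k J^{(j)}$ with $\Delta_k Y^{(j),c} = \Delta_k W^{(j)} + \Delta_k D^{(j)}$ (Notation \ref{notation:drift-term}). Writing $E_{k,N}^{(j)} := \{|\Delta_k J^{(j)}| \leq 2v_N^{(j)},\ |\Delta_k J^{1,(j)}| \neq 0\}$ for the common indicator set, I would substitute this decomposition to obtain
\begin{align*}
T_N^{-1/2}\sum_{k=0}^{N-1} Y^{(i)}_k \Delta_k Y^{(j)} \indicator\{E_{k,N}^{(j)}\} &= T_N^{-1/2}\sum_{k=0}^{N-1} Y^{(i)}_k (\Delta_k W^{(j)}+\Delta_k D^{(j)}) \indicator\{E_{k,N}^{(j)}\} \\
&\quad + T_N^{-1/2}\sum_{k=0}^{N-1} Y^{(i)}_k \Delta_k J^{(j)} \indicator\{E_{k,N}^{(j)}\}.
\end{align*}
Since the indicator set is identical to that of Lemma \ref{lemma:small-enough-jumps}, the second sum converges to zero in probability by that lemma, and it remains to control the continuous-part sum.

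For the first sum I would bound it in $L^1$ and exploit the rarity of large jumps. Since $\indicator\{E_{k,N}^{(j)}\} \leq \indicator\{|\Delta_k J^{1,(j)}| \neq 0\} =: \indicator\{B_{k,N}^{1,(j)}\}$ and $\JJ^1$ is a finite-intensity compound Poisson process (the large-jump intensity $\nu(\{\|\boldsymbol{z}\|>1\})$ is finite because $\int (1\wedge\|\boldsymbol{z}\|^2)\nu(d\boldsymbol{z})<\infty$), the Poisson tail bound gives $\proba(B_{k,N}^{1,(j)}) = O(\Delta_N)$ uniformly in $k$. The triangle inequality then yields
$$\E\left| T_N^{-1/2}\sum_{k=0}^{N-1} Y^{(i)}_k (\Delta_k W^{(j)}+\Delta_k D^{(j)}) \indicator\{E_{k,N}^{(j)}\}\right| \leq T_N^{-1/2}\sum_{k=0}^{N-1} \E\left(|Y^{(i)}_k|(|\Delta_k W^{(j)}| + |\Delta_k D^{(j)}|)\indicator\{B_{k,N}^{1,(j)}\}\right).$$
For the Brownian contribution I would use that $B_{k,N}^{1,(j)}$ is determined by $\JJ^1$ on $[t_k,t_{k+1})$, hence independent both of the $\mathcal{F}_{t_k}$-measurable $Y^{(i)}_k$ and of $\Delta_k W^{(j)}$; factorising gives $\E(|Y^{(i)}_k|)\,\E(|\Delta_k W^{(j)}|)\,\proba(B_{k,N}^{1,(j)}) = O(1)\cdot O(\Delta_N^{1/2})\cdot O(\Delta_N)$ by stationarity and finite moments. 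For the drift contribution I would apply Cauchy--Schwarz as $\E(|Y^{(i)}_k|^2\indicator\{B_{k,N}^{1,(j)}\})^{1/2}\,\E(|\Delta_k D^{(j)}|^2)^{1/2}$, use independence of $B_{k,N}^{1,(j)}$ from $\mathcal{F}_{t_k}$ to get $\E(|Y^{(i)}_k|^2)^{1/2}\proba(B_{k,N}^{1,(j)})^{1/2} = O(\Delta_N^{1/2})$, and recall the estimate $\E(|\Delta_k D^{(j)}|^2)^{1/2} = O(\Delta_N)$ already established in the proof of Lemma \ref{lemma:cts_filtered_approx}. Both contributions are thus $O(\Delta_N^{3/2})$ per term.

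Summing over the $N$ terms and using $N\Delta_N = O(T_N)$ from Assumption \ref{assumption:high-frequency-asymptotics}, the whole expression is $O(T_N^{-1/2} N \Delta_N^{3/2}) = O(T_N^{1/2}\Delta_N^{1/2}) = O((T_N\Delta_N)^{1/2})$. Since $\Delta_N^{1/2}T_N = o(1)$ forces $T_N\Delta_N \to 0$, this bound vanishes, so the continuous-part sum converges to zero in $L^1$ and hence in probability; combining with Lemma \ref{lemma:small-enough-jumps} via the decomposition above completes the proof. I expect the only delicate point to be the independence bookkeeping: one must notice that the large-jump indicator $B_{k,N}^{1,(j)}$, although correlated with $\Delta_k D^{(j)}$ through the path of $\YY$ inside the interval, is still independent of the past value $Y^{(i)}_k$ and of the Brownian increment, so that the drift term is handled by Cauchy--Schwarz while the Brownian term factorises exactly.
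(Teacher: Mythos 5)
Your proof is correct and follows essentially the same route as the paper's: decompose $\Delta_k Y^{(j)}$ into Brownian, drift and jump parts, dispatch the jump part to Lemma \ref{lemma:small-enough-jumps}, and kill the Brownian and drift parts using $\proba\left(|\Delta_k J^{1,(j)}|\neq 0\right)=O(\Delta_N)$ together with independence of the large-jump indicator from $\mathcal{F}_{t_k}$ and from $\Delta_k W^{(j)}$, plus Cauchy--Schwarz. The only cosmetic difference is that you bound the drift contribution directly via $\E\left(|\Delta_k D^{(j)}|^2\right)^{1/2}=O(\Delta_N)$ rather than replicating the $\widetilde{U}^D_N$-type treatment the paper cites, which yields the bound $O(T_N^{1/2}\Delta_N^{1/2})$ in place of the paper's $O(T_N^{1/2}\Delta_N)$; both vanish under Assumption \ref{assumption:high-frequency-asymptotics}.
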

\begin{proof}
	See Appendix \ref{proof:lemma:lemma-4.12-supp-large-jump-negligible}.
\end{proof}

Then, we prove that filtering for small \emph{increments} is equivalent to filtering for small \emph{jumps} in the sense of the following lemma:
\begin{lemma}{(Extension of Lemma 4.12, \cite{mai2014efficient})\\}
\label{lemma:lemma-4.12-small-inc-are-small-jumps}
Under the assumptions of Theorem \ref{th:discrete_clt_infinite}, we have for $i,j\in\{1,\dots,d\}$:
$$T_N^{-1/2}\sum_{k=0}^{N-1} Y^{(i)}_k\Delta_kY^{(j)} \left(\indicator\{|\Delta_kY^{(j)}| \leq v_N^{(j)}\} - \indicator\{|\Delta_k J^{2,(j)}| \leq 2v_N^{(j)}\}\right) \probconv 0, \quad \text{as $N\rightarrow \infty$.}$$
Also, we prove that more generally
$$T_N^{-1/2}\sum_{k=0}^{N-1} Y^{(i)}_k\Delta_kY^{(j)} \left(\indicator\{|\Delta_kY^{(j)}| \leq v_N^{(j)}\} - \indicator\{|\Delta_k J^{(j)}| \leq 2v_N^{(j)}\}\right) \probconv 0, \quad \text{as $N\rightarrow \infty$.}$$
\end{lemma}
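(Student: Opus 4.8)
The plan is to write each difference of indicators as a signed sum over the symmetric difference of the two filtering events and to match each resulting piece to one of the negligibility lemmas already established. Fix $i,j\in\{1,\dots,d\}$ and abbreviate $E_1:=\{|\Delta_kY^{(j)}|\le v_N^{(j)}\}$, $E_2:=\{|\Delta_kJ^{2,(j)}|\le 2v_N^{(j)}\}$ and $E_2':=\{|\Delta_kJ^{(j)}|\le 2v_N^{(j)}\}$, recalling the \levy-\Ito splitting $\Delta_kY^{(j)}=\Delta_kW^{(j)}+\Delta_kD^{(j)}+\Delta_kJ^{1,(j)}+\Delta_kJ^{2,(j)}$ with continuous part $\Delta_kY^{(j),c}=\Delta_kW^{(j)}+\Delta_kD^{(j)}$. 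Since $\indicator_{E_1}-\indicator_{E_2'}=\indicator_{E_1\setminus E_2'}-\indicator_{E_2'\setminus E_1}$, once the sum is weighted by $T_N^{-1/2}\sum_k Y^{(i)}_k\Delta_kY^{(j)}$ it suffices to show that each of the two one-sided pieces vanishes in probability.

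I would establish the more robust second display (the $J$-version) first. The piece carried by $E_1\setminus E_2'=\{|\Delta_kY^{(j)}|\le v_N^{(j)},\,|\Delta_kJ^{(j)}|>2v_N^{(j)}\}$ is exactly the event of Lemma \ref{lemma:lemma-4.13-go-big-jump}, hence negligible. For the reverse piece $E_2'\setminus E_1=\{|\Delta_kY^{(j)}|>v_N^{(j)},\,|\Delta_kJ^{(j)}|\le 2v_N^{(j)}\}$ I would split according to the size of the continuous increment. On $\{|\Delta_kY^{(j),c}|>v_N^{(j)}/2\}$ the contribution is negligible by Lemma \ref{lemma:lemma-4.14-non-jump-neg}, whose proof is unaffected by replacing $v_N^{(j)}$ with $v_N^{(j)}/2$, the factor $1/2$ only rescaling the threshold $\Delta_N^{\beta^{(j)}}$ in the moment estimate of Lemma \ref{lemma:lemma-3.7}. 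On the complement $\{|\Delta_kY^{(j),c}|\le v_N^{(j)}/2\}$ the reverse triangle inequality forces $|\Delta_kJ^{(j)}|\ge|\Delta_kY^{(j)}|-|\Delta_kY^{(j),c}|>v_N^{(j)}/2$, so the jump lands in the band $v_N^{(j)}/2<|\Delta_kJ^{(j)}|\le 2v_N^{(j)}$; splitting once more by whether $\Delta_kJ^{1,(j)}=0$ sends the large-jump case to Lemma \ref{lemma:lemma-4.12-supp-large-jump-negligible} and the pure small-jump case, where $\Delta_kJ^{(j)}=\Delta_kJ^{2,(j)}$ lies in that band, to the sandwich Lemma \ref{lemma:lemma-4.15-small-jump-sandwich}. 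This yields the $J$-version.

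For the first display (the $J^2$-version) I would reduce to the $J$-version by comparing $\indicator_{E_2}$ and $\indicator_{E_2'}$. On $\{\Delta_kJ^{1,(j)}=0\}$ one has $\Delta_kJ^{(j)}=\Delta_kJ^{2,(j)}$, so $\indicator_{E_2}=\indicator_{E_2'}$ and the two displays coincide; their difference is therefore supported on $\{\Delta_kJ^{1,(j)}\ne0\}$ and equals $T_N^{-1/2}\sum_k Y^{(i)}_k\Delta_kY^{(j)}\,(\indicator_{E_2'}-\indicator_{E_2})\,\indicator\{\Delta_kJ^{1,(j)}\ne0\}$. The term carrying $\indicator_{E_2'}\indicator\{\Delta_kJ^{1,(j)}\ne0\}$ is precisely the event of Lemma \ref{lemma:lemma-4.12-supp-large-jump-negligible}, while the residual term is estimated using the rarity bound $\proba(\Delta_kJ^{1,(j)}\ne0)=O(\Delta_N)$ coming from the finite intensity $\nu(\{\|\boldsymbol{z}\|>1\})<\infty$, the fourth-moment control of Assumption \ref{assumption:infinite_activity}-\eqref{assumption:infinite_activity:cross-moments}, and the small-jump magnitude bounds of Lemma \ref{lemma:small_jump_negligeable}.

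The main obstacle is exactly this reconciliation of the $J$- and $J^2$-filtrations on intervals carrying a genuine large jump. Unlike the $J$-version, where both indicators switch off as soon as the total jump exceeds $2v_N^{(j)}$, the small-jump indicator $\indicator_{E_2}$ is blind to a large jump, so on those intervals the weight $\Delta_kY^{(j)}$ is of order one rather than of order $v_N^{(j)}$; a crude $L^1$ bound summed over the $O(T_N)$ such intervals would scale like $T_N^{1/2}$ and fail. The argument must therefore exploit that the relevant configurations are simultaneously pinned down by $|\Delta_kJ^{(j)}|\le 2v_N^{(j)}$, i.e.\ the cancellation events of Lemmas \ref{lemma:small-enough-jumps} and \ref{lemma:lemma-4.12-supp-large-jump-negligible}, where the high-frequency rate $T_N\Delta_N^{1/2}=o(1)$ of Assumption \ref{assumption:high-frequency-asymptotics} and the Blumenthal--Getoor bound of Assumption \ref{assumption:infinite_activity}-\eqref{assumption:enum-infinite-blumenthal} do the work; propagating the constants $1/2$ and $2$ consistently through the band estimates is the delicate bookkeeping that this step demands.
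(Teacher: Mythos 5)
Your proof of the second display (filtering by $|\Delta_kJ^{(j)}|$) is correct and is essentially the paper's own argument: the same splitting of the indicator difference into the two one-sided pieces, with $\{|\Delta_kY^{(j)}|\le v_N^{(j)},\,|\Delta_kJ^{(j)}|>2v_N^{(j)}\}$ sent to Lemma \ref{lemma:lemma-4.13-go-big-jump} and the reverse piece dispatched to Lemmas \ref{lemma:lemma-4.12-supp-large-jump-negligible}, \ref{lemma:lemma-4.14-non-jump-neg} (at threshold $v_N^{(j)}/2$, which indeed only costs a constant in the Markov-type bound of Lemma \ref{lemma:lemma-3.7}) and \ref{lemma:lemma-4.15-small-jump-sandwich}. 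The paper splits on $\{\Delta_kJ^{1,(j)}\ne 0\}$ first and then uses the inclusion $\{|\Delta_kY^{(j)}|>v_N^{(j)},\,|\Delta_kJ^{2,(j)}|\le 2v_N^{(j)},\,\Delta_kJ^{1,(j)}=0\}\subset\{|\Delta_kW^{(j)}+\Delta_kD^{(j)}|>v_N^{(j)}/2\}\cup\{v_N^{(j)}/2<|\Delta_kJ^{2,(j)}|\le 2v_N^{(j)}\}$, whereas you split on the size of the continuous increment first; the bookkeeping is equivalent.

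The first display (filtering by $|\Delta_kJ^{2,(j)}|$) is not established, and the obstacle you flag in your closing paragraph is a genuine gap, not delicate bookkeeping. Your reduction leaves the residual
\begin{equation*}
R_N:=T_N^{-1/2}\sum_{k=0}^{N-1}Y^{(i)}_k\,\Delta_kY^{(j)}\,\indicator\{|\Delta_kJ^{2,(j)}|\le 2v_N^{(j)},\ \Delta_kJ^{1,(j)}\ne 0\},
\end{equation*}
and this residual lives exactly where the configurations are \emph{not} pinned down by $|\Delta_kJ^{(j)}|\le 2v_N^{(j)}$: the two indicators disagree precisely when the large jump pushes $|\Delta_kJ^{(j)}|$ above $2v_N^{(j)}$ while $|\Delta_kJ^{2,(j)}|$ stays below it, so Lemmas \ref{lemma:small-enough-jumps} and \ref{lemma:lemma-4.12-supp-large-jump-negligible} do not apply (both require the weight $|\Delta_kJ^{(j)}|$ to be small on the event). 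Moreover, no combination of the rarity bound $\proba(\Delta_kJ^{1,(j)}\ne 0)=O(\Delta_N)$ with moment estimates can close this step, because the term genuinely diverges under the stated assumptions: by independence of $Y^{(i)}_k$ from the increments over $[t_k,t_{k+1})$, stationarity, and $\proba(|\Delta_kJ^{2,(j)}|\le 2v_N^{(j)})=1-o(1)$,
\begin{equation*}
\E\left(Y^{(i)}_k\,\Delta_kJ^{1,(j)}\,\indicator\{|\Delta_kJ^{2,(j)}|\le 2v_N^{(j)}\}\right)=\E\left(Y^{(i)}_\infty\right)(t_{k+1}-t_k)\int_{\|\boldsymbol{z}\|>1}z^{(j)}\,\nu(d\boldsymbol{z})\,(1+o(1)),
\end{equation*}
while the remaining parts of $\Delta_kY^{(j)}$ contribute only $O(T_N^{1/2}\Delta_N^{1/2})$ in $L^1$ on $\{\Delta_kJ^{1,(j)}\ne 0\}$. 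Summing over $k$, $\E(R_N)$ is of exact order $T_N^{1/2}\,\E(Y^{(i)}_\infty)\int_{\|\boldsymbol{z}\|>1}z^{(j)}\nu(d\boldsymbol{z})$, and the hypotheses in force (zero drift $\boldB=\boldsymbol{0}_d$, centred \emph{small} jumps via Assumption \ref{assumption:infinite_activity}) say nothing about the large jumps: $\E(\YY_\infty)=\rmQ^{-1}\int_{\|\boldsymbol{z}\|>1}\boldsymbol{z}\,\nu(d\boldsymbol{z})$ is generically non-zero. A variance estimate of the same kind as in the quoted lemmas shows the fluctuations around this mean are $O_P(1)$, so $R_N$ diverges in probability whenever the large jumps are not centred.

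For what it is worth, the paper's own proof is no stronger at this point: it establishes exactly the three pieces of the $\JJ$-filtered version and then asserts the $\JJ^2$-filtered version ``given this convergence result and \eqref{eq:neq-co-integrated}'', never controlling the event $\{|\Delta_kY^{(j)}|>v_N^{(j)},\,|\Delta_kJ^{2,(j)}|\le 2v_N^{(j)},\,\Delta_kJ^{1,(j)}\ne 0\}$, which is the same missing term as your $R_N$. Your last paragraph has therefore located a real soft spot of the paper (the first display appears to need an extra centring or symmetry hypothesis on $\JJ^1$, or should only ever be invoked through the second display); but identifying the obstacle is not the same as repairing it, and as a standalone argument your proposal proves only the second display.
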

\begin{proof}
	See Appendix \ref{proof:lemma:lemma-4.12-small-inc-are-small-jumps}.
\end{proof}

\subsubsection{Filtering the drift}
The following lemma shows that under jump-filtering with threshold $v_N^{(j)} =\Delta_N^{\beta^{(j)}}$, the discrete integration with respect to the drift can be approximated regardless if the filtering is applied.
\begin{lemma}{(Extension of Lemma 4.11, \cite{mai2014efficient})\\}
\label{lemma:lemma-mai-4.11-drift}
Under the assumptions of Theorem \ref{th:discrete_clt_infinite}, then
$$T_N^{-1/2}\sum_{k=0}^{N-1} Y^{(i)}_k \cdot \left( \Delta_k D^{(j)} \cdot \indicator\{|\Delta_k Y^{(j)}| \leq v_N^{(j)}\} - \Delta_k D^{(j)} \right) \probconv 0. $$
or more explicitly:
$$T_N^{-1/2}\sum_{k=0}^{N-1} Y^{(i)}_k \cdot  \left(\int_{t_{k}}^{t_{k+1}}(\Q\YY_s)^{(j)}ds \cdot \indicator\{|\Delta_k Y^{(j)}| \leq v_N^{(j)}\} - \int_{t_{k}}^{t_{k+1}}(\Q\YY_s)^{(j)}ds \right) \probconv 0.$$
\end{lemma}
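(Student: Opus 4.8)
The plan is to establish the stronger statement of convergence in $L^1$, which implies the asserted convergence in probability. Since $\indicator\{|\Delta_k Y^{(j)}| \le v_N^{(j)}\} - 1 = -\indicator\{|\Delta_k Y^{(j)}| > v_N^{(j)}\}$, the summand equals $-Y^{(i)}_k\,\Delta_k D^{(j)}\,\indicator\{|\Delta_k Y^{(j)}| > v_N^{(j)}\}$, so it suffices to show
\[
T_N^{-1/2}\sum_{k=0}^{N-1}\E\left(|Y^{(i)}_k|\,|\Delta_k D^{(j)}|\,\indicator\{|\Delta_k Y^{(j)}| > v_N^{(j)}\}\right) \longrightarrow 0, \quad \text{as } N \to \infty.
\]
The explicit second display of the statement is merely this one written out through Notation \ref{notation:drift-term}, so the two claims coincide up to that sign convention.

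First I would record two moment bounds, both legitimate under the fourth-order hypothesis of Assumption \ref{assumption:infinite_activity}: by stationarity $\E(|Y^{(i)}_k|^4)^{1/4} = O(1)$, while from $|\Delta_k D^{(j)}| \le \|\Q\|\,\Delta_N^{1/2}(\int_{t_k}^{t_{k+1}}\|\YY_s\|^2\,ds)^{1/2}$ together with Jensen's inequality and stationarity one gets $\E(|\Delta_k D^{(j)}|^4)^{1/4} = O(\Delta_N)$. Applying H\"older's inequality with exponents $(4,4,2)$ to each expectation bounds the summand by $O(1)\cdot O(\Delta_N)\cdot \proba(|\Delta_k Y^{(j)}| > v_N^{(j)})^{1/2}$. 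Summing the $N$ terms and invoking $N\Delta_N = O(T_N)$ from Assumption \ref{assumption:high-frequency-asymptotics}, the left-hand side above is controlled by a constant multiple of $\bigl(T_N\sup_{k}\proba(|\Delta_k Y^{(j)}| > v_N^{(j)})\bigr)^{1/2}$, so the whole problem reduces to proving $T_N\sup_k\proba(|\Delta_k Y^{(j)}| > v_N^{(j)}) \to 0$.

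To estimate this probability I would split the increment into its continuous part $\Delta_k Y^{(j),c} = \Delta_k W^{(j)} + \Delta_k D^{(j)}$ and its jump part $\Delta_k J^{(j)} = \Delta_k J^{1,(j)} + \Delta_k J^{2,(j)}$, using
\[
\proba(|\Delta_k Y^{(j)}| > v_N^{(j)}) \le \proba(|\Delta_k Y^{(j),c}| > v_N^{(j)}/2) + \proba(\Delta_k J^{1,(j)} \neq 0) + \proba(|\Delta_k J^{2,(j)}| > v_N^{(j)}/2).
\]
The continuous contribution is handled by Lemma \ref{lemma:lemma-3.7} with $l^{(j)} = 4$ and $\delta$ just below $1/2-\beta^{(j)}$ (the harmless factor $1/2$ in the threshold being absorbed into $\delta$), giving order $\Delta_N^{4(1-\beta^{(j)})}$, so that $T_N$ times it vanishes because $4(1-\beta^{(j)}) > 2$ and $T_N\Delta_N^{1/2} = o(1)$. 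The large jumps $J^{1}$ arrive at the finite rate $\nu(\{\|\boldsymbol{z}\| > 1\})$, whence $\proba(\Delta_k J^{1,(j)} \neq 0) = O(\Delta_N)$ and $T_N\Delta_N = o(\Delta_N^{1/2}) \to 0$. The main obstacle is the small-jump term: a Markov bound with the finite second moment $\E(|\Delta_k J^{2,(j)}|^2) = \Delta_N\int_{\|\boldsymbol{z}\|\le1}(z^{(j)})^2\,\nu(d\boldsymbol{z}) = O(\Delta_N)$ yields only $\proba(|\Delta_k J^{2,(j)}| > v_N^{(j)}/2) = O(\Delta_N^{1-2\beta^{(j)}})$, and $T_N\Delta_N^{1-2\beta^{(j)}}$ need not vanish under Assumption \ref{assumption:high-frequency-asymptotics} alone. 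This is precisely where Assumption \ref{assumption:cv-rate-beta} enters: since $\Delta_N \le 1$ for $N$ large and $1-2\beta^{(j)} \ge (1-2\beta^{(j)})\wedge 1/2$, one has $T_N\Delta_N^{1-2\beta^{(j)}} \le T_N\Delta_N^{(1-2\beta^{(j)})\wedge 1/2} = o(1)$. Combining the three estimates delivers $T_N\sup_k\proba(|\Delta_k Y^{(j)}| > v_N^{(j)}) \to 0$, which completes the argument.
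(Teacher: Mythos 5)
Your proof is correct, but it follows a genuinely different route from the paper's. The paper first swaps the data-dependent indicator $\indicator\{|\Delta_k Y^{(j)}| > v_N^{(j)}\}$ for the jump-based indicators $\indicator\{|\Delta_k J^{(j)}| > 2v_N^{(j)}\}$ (or $\indicator\{|\Delta_k J^{2,(j)}| > 2v_N^{(j)}\}$) via Lemma \ref{lemma:lemma-4.12-small-inc-are-small-jumps}, precisely because those indicators are independent of $Y_k^{(i)}$ and $Y_k^{(l)}$; it then splits the drift into an increment part $\int_{t_k}^{t_{k+1}}(Y_s^{(l)}-Y_k^{(l)})ds$ and a level part $(t_{k+1}-t_k)Y_k^{(l)}$, and bounds each via independence, H\"older, and Markov/Chebyshev tail estimates, arriving at $O(T_N^{1/2}\Delta_N^{1/2-\beta^{(j)}})$. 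You instead keep the original indicator and decouple it with a three-factor H\"older inequality with exponents $(4,4,2)$ — which is exactly what makes independence unnecessary — thereby reducing the whole lemma to the tail bound $T_N\sup_k\proba(|\Delta_k Y^{(j)}| > v_N^{(j)}) \to 0$, which you then establish by a clean union bound: Lemma \ref{lemma:lemma-3.7} with $l^{(j)}=4$ for the continuous part, the compound-Poisson rate $O(\Delta_N)$ for $\JJ^1$, and Chebyshev giving $O(\Delta_N^{1-2\beta^{(j)}})$ for $\JJ^2$, with Assumption \ref{assumption:cv-rate-beta} invoked exactly where it is needed. Your argument is more self-contained (it bypasses Lemma \ref{lemma:lemma-4.12-small-inc-are-small-jumps} and the chain of sub-lemmas behind it) and yields the stronger $L^1$ statement for the absolute sum, while leaning somewhat harder on the fourth-moment hypothesis of Assumption \ref{assumption:infinite_activity}, both in the H\"older step and in taking $l^{(j)}=4$; note the paper's own estimates also implicitly require comparable cross-moments, so nothing extra is being assumed. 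Amusingly, the final rates coincide, since $\bigl(T_N\Delta_N^{1-2\beta^{(j)}}\bigr)^{1/2} = T_N^{1/2}\Delta_N^{1/2-\beta^{(j)}}$; what the paper's route buys in context is economy, as the indicator-swapping lemma is needed anyway for Lemma \ref{lemma:continuous-martingale-part-infinite} and the proof structure mirrors the univariate argument of \cite{mai2014efficient}.
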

\begin{proof}
	See Appendix \ref{proof:lemma:lemma-mai-4.11-drift}.
\end{proof}

\subsubsection{Filtering the continuous martingale component}
To show that filtering out jumps is a valid option in the framework of Theorem \ref{th:discrete_clt_infinite}, we propose the lemma that follows:
\begin{lemma}{(Adapted from Lemma 4.10, \cite{mai2014efficient})\\}
\label{lemma:continuous-martingale-part-infinite}
In the framework of Theorem \ref{th:discrete_clt_infinite}, for $i,j\in\{1,\dots,d\}$ and as $N\rightarrow \infty$:
$$T_N^{-1/2}\sum_{k=0}^{N-1}Y^{(i)}_k\left(\Delta_kY^{(j)}\indicator\{|\Delta_kY^{(j)}| \leq v_N^{(j)}\}-\Delta_kY^{(j),c}\right) \probconv 0.$$
\end{lemma}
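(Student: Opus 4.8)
The plan is to reduce the increment-filtered sum to a jump-filtered one, expand it through the \levy-\Ito decomposition, and then dispatch the resulting pieces with the technical lemmas of this subsection. Throughout I write $\Delta_k Y^{(j)} = \Delta_k Y^{(j),c} + \Delta_k J^{(j)}$ with $\Delta_k Y^{(j),c} = \Delta_k W^{(j)} + \Delta_k D^{(j)}$. First I would invoke the second statement of Lemma \ref{lemma:lemma-4.12-small-inc-are-small-jumps} to replace the increment filter $\indicator\{|\Delta_k Y^{(j)}| \le v_N^{(j)}\}$ by the jump filter $\indicator\{|\Delta_k J^{(j)}| \le 2 v_N^{(j)}\}$ at the cost of a $\probconv 0$ error, so that it suffices to show $T_N^{-1/2}\sum_{k=0}^{N-1} Y^{(i)}_k\big(\Delta_k Y^{(j)}\indicator\{|\Delta_k J^{(j)}| \le 2 v_N^{(j)}\} - \Delta_k Y^{(j),c}\big) \probconv 0$. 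Expanding $\Delta_k Y^{(j)}$ and writing $1 = \indicator\{|\Delta_k J^{(j)}| \le 2 v_N^{(j)}\} + \indicator\{|\Delta_k J^{(j)}| > 2 v_N^{(j)}\}$, the summand becomes $-\Delta_k Y^{(j),c}\indicator\{|\Delta_k J^{(j)}| > 2 v_N^{(j)}\} + \Delta_k J^{(j)}\indicator\{|\Delta_k J^{(j)}| \le 2 v_N^{(j)}\}$; I call the two resulting scaled sums $P_N$ and $Q_N$, so the goal is $P_N, Q_N \probconv 0$.

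For $P_N$ I would split $\Delta_k Y^{(j),c} = \Delta_k W^{(j)} + \Delta_k D^{(j)}$. The Brownian contribution is a sum of orthogonal, conditionally centred terms, since $\Delta_k W^{(j)}$ is independent of $\mathcal{F}_{t_k}$ and of the jump field; a direct second-moment bound then gives a variance of order $T_N^{-1}\, N\, \Delta_N\, \proba(|\Delta_k J^{(j)}| > 2 v_N^{(j)})$, and using the estimate $\proba(|\Delta_k J^{(j)}| > 2 v_N^{(j)}) = O(\Delta_N^{1-\alpha\beta^{(j)}})$ (obtained as in Lemma \ref{lemma:small_jump_negligeable} from Assumption \ref{assumption:infinite_activity}-\eqref{assumption:enum-infinite-blumenthal}, plus the finite intensity of large jumps) together with $N\Delta_N = O(T_N)$, this is $O(\Delta_N^{1-\alpha\beta^{(j)}}) \to 0$. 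The drift contribution is treated in $L^1$ via Cauchy--Schwarz and the fourth-moment Assumption \ref{assumption:infinite_activity}-\eqref{assumption:infinite_activity:cross-moments}, using $\E(|\Delta_k D^{(j)}|^2) = O(\Delta_N^2)$; the resulting bound is of order $T_N^{1/2}\Delta_N^{(1-\alpha\beta^{(j)})/2}$, which vanishes because $\alpha < 2$ forces $1-\alpha\beta^{(j)} \ge (1-2\beta^{(j)})\wedge 1/2$, whence $T_N\Delta_N^{1-\alpha\beta^{(j)}} = o(1)$ by Assumption \ref{assumption:cv-rate-beta}.

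For $Q_N$ I would decompose on the presence of a large jump. On $\{|\Delta_k J^{1,(j)}| \neq 0\}$ the contribution is $\probconv 0$ by Lemma \ref{lemma:small-enough-jumps}; on $\{|\Delta_k J^{1,(j)}| = 0\}$ one has $\Delta_k J^{(j)} = \Delta_k J^{2,(j)}$, leaving the small-jump sum $T_N^{-1/2}\sum_k Y^{(i)}_k \Delta_k J^{2,(j)}\indicator\{|\Delta_k J^{2,(j)}| \le 2 v_N^{(j)}\}$. This is the genuinely infinite-activity term and the \emph{main obstacle}: unlike the finite-activity setting of Lemma \ref{lemma:cts_filtered_approx}, there is no Poisson counting available since there are infinitely many small jumps per interval. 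I would control it by a martingale variance argument: for $N$ large, $2 v_N^{(j)} < \epsilon_0$, so Assumption \ref{assumption:infinite_activity}-\eqref{assumption:infinite_activity:jumps_2_epsilon} makes each truncated increment conditionally centred and the summands orthogonal, and the second-moment bound $\E(|\Delta_k J^{2,(j)}|^2\indicator\{|\Delta_k J^{2,(j)}| \le 2 v_N^{(j)}\}) = O(\Delta_N^{1+\beta^{(j)}(2-\alpha)})$ from Lemma \ref{lemma:small_jump_negligeable} yields a total variance of order $\Delta_N^{\beta^{(j)}(2-\alpha)} \to 0$. Collecting $P_N, Q_N \probconv 0$ together with the Step-1 error through the triangle inequality completes the proof. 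The delicate point throughout is matching the filtering thresholds so that the centring and the Blumenthal--Getoor control balance the $T_N^{-1/2}$ scaling, which is exactly what Assumption \ref{assumption:cv-rate-beta} guarantees.
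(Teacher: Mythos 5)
Your proposal is correct, but it follows a genuinely different route from the paper's own proof. The paper introduces the auxiliary finite-activity process $\widehat{\YY}_t := \YY_0 - \int_0^t \rmQ\YY_s\,ds + \WW_t + \JJ^1_t$ and splits the sum into three pieces: a $\widehat{Y}$-filtered piece, treated by recycling the finite-activity Lemma \ref{lemma:cts_filtered_approx}; a filter-mismatch piece involving $\indicator\{|\Delta_kY^{(j)}| \leq v_N^{(j)}\}-\indicator\{|\Delta_k\widehat{Y}^{(j)}| \leq 2v_N^{(j)}\}$, treated by a fairly long chain of event inclusions and probability estimates; and a $\JJ^2$-piece, treated by Lemmas \ref{lemma:lemma-4.12-supp-large-jump-negligible}--\ref{lemma:lemma-4.12-small-inc-are-small-jumps} together with the centring Assumption \ref{assumption:infinite_activity}-\eqref{assumption:infinite_activity:jumps_2_epsilon}, independence and Lemma \ref{lemma:small_jump_negligeable}. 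You instead perform the increment-to-jump filter swap once and for all via the second statement of Lemma \ref{lemma:lemma-4.12-small-inc-are-small-jumps} (whose proof in the paper does not use the present lemma, so there is no circularity), and then use the exact identity $\Delta_kY^{(j)}\indicator\{|\Delta_kJ^{(j)}|\leq 2v_N^{(j)}\}-\Delta_kY^{(j),c} = -\Delta_kY^{(j),c}\indicator\{|\Delta_kJ^{(j)}|> 2v_N^{(j)}\}+\Delta_kJ^{(j)}\indicator\{|\Delta_kJ^{(j)}|\leq 2v_N^{(j)}\}$, dispatching $P_N$ by orthogonality and Cauchy--Schwarz and $Q_N$ by Lemma \ref{lemma:small-enough-jumps} plus the same centring-and-variance argument the paper uses for its third piece. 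Your route avoids both the auxiliary process and the delicate mismatch analysis, at the price of leaning harder on Lemma \ref{lemma:lemma-4.12-small-inc-are-small-jumps}; the paper's route keeps that lemma's role smaller and makes the parallel with the finite-activity section more visible. Two minor points to tighten: (i) the tail rate $\proba\left(|\Delta_kJ^{(j)}|>2v_N^{(j)}\right)=O(\Delta_N^{1-\alpha\beta^{(j)}})$ does not follow verbatim from Lemma \ref{lemma:small_jump_negligeable}, which only bounds the sandwiched probability; the cruder Chebyshev bound $O(\Delta_N^{1-2\beta^{(j)}})$, which the paper itself uses, already suffices, and your observation that $1-\alpha\beta^{(j)}\geq(1-2\beta^{(j)})\wedge 1/2$ shows either rate is absorbed by Assumption \ref{assumption:cv-rate-beta}; (ii) after restricting $Q_N$ to $\{|\Delta_kJ^{1,(j)}|=0\}$, dropping that indicator discards the cross term $Y^{(i)}_k\Delta_kJ^{2,(j)}\indicator\{|\Delta_kJ^{2,(j)}|\leq 2v_N^{(j)},\,|\Delta_kJ^{1,(j)}|\neq 0\}$, which still needs the one-line $L^1$ bound (independence plus $\proba\left(|\Delta_kJ^{1,(j)}|\neq 0\right)=O(\Delta_N)$) already implicit in the proof of Lemma \ref{lemma:small-enough-jumps}; it is immediate, but should be stated.
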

\begin{proof}
	See Appendix \ref{proof:lemma:continuous-martingale-part-infinite}.
\end{proof}

We have provided two collections of lemmas laying out the proofs for the central limit theorems for both the finite and infinite jump activities. They consist of incremental filtering and decompositions gradually linking the unfiltered to the filtered estimators. In the following section, we prove that the discretised unfiltered estimator is consistent as presented in Lemma \ref{lemma:cv-estimator-continuous-component}.

\section{Proofs for the finite activity case}
\label{appendix:proofs-finite}

In this technical section, we provide the proofs of the finite activity intermediate results presented in Appendix \ref{section:technical-lemmas}.

\subsection{Proof of Lemma \ref{lemma:consistency-conservation-jump-filtering-finite}}
\label{proof:lemma:consistency-conservation-jump-filtering-finite}

\begin{proof}[Proof of Lemma \ref{lemma:consistency-conservation-jump-filtering-finite}]
Observe that
\begin{align*}
    T_N^{1/2}(\boldsymbol{\widetilde{\psi}}_N - \boldsymbol{\overline{\psi}}_N) &= T_N \Sn_N^{-1}\left(\Abar_N - \Atilde_N\right).
\end{align*}
According to Lemma \ref{lemma:cts_filtered_approx}, for any $i,j \in\{1,\dots,d\}$, we have under $\proba_{0}$
$$T_N^{-1/2}\sum_{k=0}^{N-1} \E\left[\left| Y^{(i)}_k\left(\Delta_k Y^{(j)}_k \indicator_{\left\{|\Delta_k Y^{(j)}| \leq v_N^{(j)}\right\}} - \Delta_kY^{(j),c}_k\right)\right|\right] = O(\Delta_N^{1/2}T_N^{1/2}),$$
hence 
$$T_N^{-1/2}\sum_{k=0}^{N-1} Y^{(i)}_k\left(\Delta_k Y^{(j)}_k \indicator_{\left\{|\Delta_k Y^{(j)}| \leq v_N^{(j)}\right\}} - \Delta_k Y^{(j),c}_k\right) \probconv 0, \quad \text{as $N \rightarrow \infty$}.$$
Also, by ergodicity and the continuous mapping theorem, one has that $T_N \Sn_N^{-1} \xrightarrow{} \E\left(\YY_\infty \YY^\top_\infty\right)^{-1}\otimes \Id$ $\proba_{0}-a.s.$ which is finite componentwise and this concludes the proof.
\end{proof}

\section{Proofs for the infinite activity case}
\label{appendix:proofs-infinite}

In this technical section, we provide the proofs of the infinite activity intermediate results presented in Appendix \ref{section:infinite-appendix}.

\subsection{Proof of Lemma \ref{lemma:lemma-4.14-non-jump-neg}}
\label{proof:lemma:lemma-4.14-non-jump-neg}

	To prove this result, we introduce some new notation and define
$$\widetilde{B}^X_{N}(i,j) :=  T_N^{-1/2}\sum_{k=0}^{N-1} |Y^{(i)}_k||\Delta_k X^{(j)}| \indicator\{|\Delta_k W^{(j)} + \Delta_k D^{(j)}| > v_N^{(j)}\},$$
for $X\in\{Y, W,D,J^{1},J^{2}\}$ and $i,j\in\{1,\dots,d\}$. We prove the convergence in probability to zero of $\widetilde{B}^X_{N}(i,j)$ for each $X\in\{Y, W,D,J^{1},J^{2}\}$, starting with $\widetilde{B}^W_{N}(i,j)$ and $\widetilde{B}^D_{N}(i,j)$ in the section below.

\subsubsection{Intermediate results}
\label{section:intermediate-lemma-b-tildes}
We prove two results on the limiting behaviour of, respectively, $\widetilde{B}^W_N$ and $\widetilde{B}^D_N$ as follows
\begin{lemma}
	\label{lemma:lemma-4.14-term-in-W}
	Under the assumptions of Lemma \ref{lemma:lemma-4.14-non-jump-neg}, we have for $i,j\in\{1,\dots,d\}$
	$$\E \left(\left|\widetilde{B}^W_{N}(i,j)\right|\right) \leq O(T_N^{1/2}\Delta_N^{1/2- \beta^{(j)}}),\quad \text{as $N\rightarrow \infty$,}$$
such that $\widetilde{B}^W_{N}(i,j) \probconv 0$ as $N \rightarrow \infty$.
\end{lemma}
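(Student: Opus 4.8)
The plan is to bound $\E(|\widetilde{B}^W_N(i,j)|)$ directly in $L^1$ and then invoke Markov's inequality. Since every summand defining $\widetilde{B}^W_N(i,j)$ is nonnegative, I would first pass the expectation through the sum and treat a single generic term $\E\left(|Y^{(i)}_k|\,|\Delta_k W^{(j)}|\,\indicator\{|\Delta_k W^{(j)} + \Delta_k D^{(j)}| > v_N^{(j)}\}\right)$. The key device is H\"older's inequality with three factors and exponents $(p,q,r) = (4,2,4)$, which separates the three contributions while keeping each exponent within the range of guaranteed moments.

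For the three resulting factors I would argue as follows. First, by stationarity and the fourth-moment bound of Assumption \ref{assumption:infinite_activity}-\eqref{assumption:infinite_activity:cross-moments}, $\E(|Y^{(i)}_k|^4)^{1/4} = O(1)$. Second, since $\Delta_k W^{(j)}$ is a centred Gaussian increment over an interval of length at most $\Delta_N$, one has $\E(|\Delta_k W^{(j)}|^2)^{1/2} = O(\Delta_N^{1/2})$. Third, for the indicator I would apply Lemma \ref{lemma:lemma-3.7} with $\delta := 1/2 - \beta^{(j)} \in (0,1/2)$, so that $\Delta_N^{1/2-\delta} = v_N^{(j)}$, and with $l^{(j)} = 4$ (again legitimate by the fourth-moment assumption), which yields $\proba\left(|\Delta_k W^{(j)} + \Delta_k D^{(j)}| > v_N^{(j)}\right)^{1/4} = O(\Delta_N^{1-\beta^{(j)}})$.

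Multiplying the three bounds gives $O(\Delta_N^{3/2 - \beta^{(j)}})$ per term. Summing the $N$ terms and using the relative-growth constraint $N\Delta_N = O(T_N)$ from Assumption \ref{assumption:high-frequency-asymptotics}, together with the prefactor $T_N^{-1/2}$, produces
$$\E\left(|\widetilde{B}^W_N(i,j)|\right) = T_N^{-1/2} \cdot O(T_N \Delta_N^{-1}) \cdot O(\Delta_N^{3/2 - \beta^{(j)}}) = O(T_N^{1/2}\Delta_N^{1/2-\beta^{(j)}}),$$
which is exactly the claimed bound. Finally, since $T_N^{1/2}\Delta_N^{1/2-\beta^{(j)}} = (T_N\Delta_N^{1-2\beta^{(j)}})^{1/2}$ and $1 - 2\beta^{(j)} \geq (1-2\beta^{(j)})\wedge 1/2$, Assumption \ref{assumption:cv-rate-beta} forces this quantity to vanish as $N\rightarrow\infty$; Markov's inequality then upgrades the $L^1$ bound to $\widetilde{B}^W_N(i,j) \probconv 0$.

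The main obstacle is the bookkeeping of the H\"older exponents: they must be chosen so that the moment order demanded of $\YY$ does not exceed four (the only moments guaranteed) while the exponent $r$ on the indicator satisfies $l^{(j)} \geq r$, which is precisely what makes the decay from Lemma \ref{lemma:lemma-3.7} strong enough to absorb the factor $\Delta_N^{-1}$ coming from the $N \approx T_N/\Delta_N$ summands. The choice $(4,2,4)$ is the balanced one; any admissible triple collapses to the same final exponent after cancelling the common factor $(1-\beta^{(j)}) > 0$.
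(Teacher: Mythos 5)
Your proof is correct, and its skeleton is the same as the paper's: an $L^1$ bound on each summand, Lemma \ref{lemma:lemma-3.7} applied with $\delta = 1/2-\beta^{(j)}$ so that $\Delta_N^{1/2-\delta}=v_N^{(j)}$, summation of the $N$ uniform per-term bounds via $N\Delta_N=O(T_N)$, and the final appeal to Assumption \ref{assumption:cv-rate-beta} together with Markov's inequality. The one genuine difference lies in how you split the expectation of a single summand. The paper applies Cauchy--Schwarz to separate $|Y^{(i)}_k\Delta_k W^{(j)}|$ from the indicator and then factors $\E\bigl(|Y^{(i)}_k\Delta_k W^{(j)}|^2\bigr)$ using the independence of $Y^{(i)}_k$ (which is $\mathcal{F}_{t_k}$-measurable) from the future Brownian increment $\Delta_k W^{(j)}$; this costs only second moments of $\YY$ and invokes Lemma \ref{lemma:lemma-3.7} with $l^{(j)}=2$. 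Your three-factor H\"older split $(4,2,4)$ never uses this independence, but pays for that by demanding fourth moments of $\YY$ and $l^{(j)}=4$ in Lemma \ref{lemma:lemma-3.7}. Both are legitimate here, since the lemma is stated under the infinite-activity hypotheses and Assumption \ref{assumption:infinite_activity}-\eqref{assumption:infinite_activity:cross-moments} supplies the fourth moments, and both choices of $l^{(j)}$ deliver the same factor $\Delta_N^{1-\beta^{(j)}}$ after taking the corresponding root, hence the identical per-term bound $O(\Delta_N^{3/2-\beta^{(j)}})$. What the paper's route buys is economy of assumptions (it would go through under square-integrability alone, with no fourth-moment requirement); what your route buys is robustness, in that it is a purely analytic estimate that would survive even if the measurability/independence structure between $Y^{(i)}_k$ and $\Delta_k W^{(j)}$ were unavailable.
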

\begin{proof}[Proof of Lemma \ref{lemma:lemma-4.14-term-in-W}]
	$\widetilde{B}^W_N$ can be bounded using H{\"o}lder's inequality and secondly by independence between $Y^{(j)}_k$ and $\Delta_k W^{(j)}$:
\begin{align*}
    \E&\left(|\widetilde{B}^W_N(i,j)|\right)\\
    &\leq T_N^{-1/2}\sum_{k=0}^{N-1}\E\left(\left|Y^{(j)}_k \Delta_k W^{(j)}\right|^2\right)^{1/2}\E\left(\indicator_{\left\{|\Delta_k W^{(j)} + \Delta_k D^{(j)}| > v_N^{(j)}\right\}}\right)^{1/2}\\
    &\leq T_N^{-1/2}\sum_{k=0}^{N-1}\E\left(\left|Y^{(j)}_k\right|^2\right)^{1/2} \E\left( \left|\Delta_k W^{(j)}\right|^2\right)^{1/2}\proba\left(|\Delta_k W^{(j)} + \Delta_k D^{(j)}| > v_N^{(j)}\right)^{1/2}\\
    &\leq  \E\left(\left|Y^{(j)}_0\right|^2\right)^{1/2} \times T_N^{-1/2}\Delta_N^{1/2} N \Delta_N^{1-\beta^{(j)}},\quad  \text{by Lemma \ref{lemma:lemma-3.7} with $\delta = 1/2-\beta^{(j)}$ and $l^{(j)}=2$,}\\
    &\leq O( T_N^{1/2}\Delta_N^{1/2- \beta^{(j)}}), \quad  \text{since $N=O(T_N\Delta_N^{-1})$.}
\end{align*}
As we bound the $L^1$-norm of $\widetilde{B}^W_N(i,j)$, we obtain the convergence in probability to zero by Assumption \ref{assumption:cv-rate-beta}.
\end{proof}

\begin{lemma}
	\label{lemma:lemma-4.14-term-in-D-drift}
	Under the assumptions of Lemma \ref{lemma:lemma-4.14-non-jump-neg}, we have for $i,j\in\{1,\dots,d\}$
	$$\E \left(\left|\widetilde{B}^D_{N}(i,j)\right|\right)  \leq O(T_N^{1/2}\Delta_N^{1 - \beta^{(j)}}),\quad \text{as $N\rightarrow \infty$,}$$
such that $\widetilde{B}^W_{N}(i,j) \probconv 0$ as $N \rightarrow \infty$.
\end{lemma}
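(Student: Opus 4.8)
The plan is to reproduce the structure of the proof of Lemma~\ref{lemma:lemma-4.14-term-in-W}, replacing the Wiener increment $\Delta_k W^{(j)}$ by the drift increment $\Delta_k D^{(j)} = -\int_{t_k}^{t_{k+1}}(\Q\YY_s)^{(j)}\,ds$ and exploiting that the drift is of order $\Delta_N$ rather than $\Delta_N^{1/2}$. After the triangle inequality, I would apply the Cauchy--Schwarz inequality to each summand, pairing the product $|Y^{(i)}_k\,\Delta_k D^{(j)}|$ against the filtering indicator, to obtain
$$
\E\!\left(\left|\widetilde{B}^D_{N}(i,j)\right|\right) \le T_N^{-1/2}\sum_{k=0}^{N-1}\E\!\left(|Y^{(i)}_k|^2|\Delta_k D^{(j)}|^2\right)^{1/2}\proba\!\left(|\Delta_k W^{(j)}+\Delta_k D^{(j)}|>v_N^{(j)}\right)^{1/2}.
$$

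For the probability factor I would invoke Lemma~\ref{lemma:lemma-3.7} with $\delta = 1/2-\beta^{(j)}$ and $l^{(j)}=2$: since $v_N^{(j)}=\Delta_N^{\beta^{(j)}}=\Delta_N^{1/2-\delta}$, this yields $\proba(\cdots)=O(\Delta_N^{2(1-\beta^{(j)})})$, hence a square root of order $\Delta_N^{1-\beta^{(j)}}$. The remaining factor is where the drift case genuinely departs from the Wiener case: because $\Delta_k D^{(j)}$ is path-dependent it is \emph{not} independent of $Y^{(i)}_k$, so the expectation cannot be factored as in Lemma~\ref{lemma:lemma-4.14-term-in-W}. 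Instead I would use Cauchy--Schwarz on the time integral to get $|\Delta_k D^{(j)}|^2\le \|\Q\|^2\Delta_N\int_{t_k}^{t_{k+1}}\|\YY_s\|^2\,ds$, then apply Fubini's theorem together with Cauchy--Schwarz, stationarity, and the bounded fourth-order moments granted by Assumption~\ref{assumption:infinite_activity}-\eqref{assumption:infinite_activity:cross-moments} to conclude $\E(|Y^{(i)}_k|^2|\Delta_k D^{(j)}|^2)=O(\Delta_N^2)$, i.e.\ a square root of order $\Delta_N$.

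Combining the two factors, each summand is $O(\Delta_N^{2-\beta^{(j)}})$; summing the $N$ terms, using $N=O(T_N\Delta_N^{-1})$ and the prefactor $T_N^{-1/2}$, collapses this to $O(T_N^{1/2}\Delta_N^{1-\beta^{(j)}})$, the claimed bound. Convergence then follows from Assumption~\ref{assumption:cv-rate-beta}: since $1-\beta^{(j)}>1/2-\beta^{(j)}$ and $\Delta_N<1$ eventually, one has $T_N^{1/2}\Delta_N^{1-\beta^{(j)}}\le (T_N\Delta_N^{1-2\beta^{(j)}})^{1/2}=o(1)$, and Markov's inequality gives $\widetilde{B}^D_{N}(i,j)\probconv 0$. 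The main obstacle—indeed the only non-routine point—is the loss of the independence that made the Wiener estimate immediate; handling it forces me to keep $Y^{(i)}_k$ and $\Delta_k D^{(j)}$ together inside a single $L^2$-factor and to spend the fourth-moment assumption, since the cruder splitting that pairs $Y^{(i)}_k$ against the indicator and $\Delta_k D^{(j)}$ separately would only deliver the insufficient rate $O(T_N^{1/2}\Delta_N^{(1-\beta^{(j)})/2})$.
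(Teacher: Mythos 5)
Your proof is correct and follows essentially the same route as the paper: Cauchy--Schwarz to peel the filtering indicator off each summand, Lemma \ref{lemma:lemma-3.7} with $\delta=1/2-\beta^{(j)}$ and $l^{(j)}=2$ for the probability factor, stationarity and the fourth-moment bound of Assumption \ref{assumption:infinite_activity}-\eqref{assumption:infinite_activity:cross-moments} for the remaining $L^2$-factor, and Assumption \ref{assumption:cv-rate-beta} to conclude, yielding the same $O(T_N^{1/2}\Delta_N^{1-\beta^{(j)}})$ rate. The only, immaterial, difference is that the paper first splits $\Delta_k D^{(j)}$ into the pieces $\int_{t_k}^{t_{k+1}}(Y^{(l)}_s-Y^{(l)}_k)\,ds$ and $(t_{k+1}-t_k)Y^{(l)}_k$ and bounds each separately, whereas you bound $\E\left(|Y^{(i)}_k|^2|\Delta_k D^{(j)}|^2\right)=O(\Delta_N^2)$ in one step via Jensen/Cauchy--Schwarz on the time integral.
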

\begin{proof}[Proof of \ref{lemma:lemma-4.14-term-in-D-drift}]
	We prove this result similarly to Lemma \ref{lemma:lemma-4.14-term-in-W}. First, we \emph{recreate} increments as follows:
\begin{align*}
    \E & \left(\left|\widetilde{B}^D_N(i,j)\right|\right)\\
    &\leq T_N^{-1/2}\max_{l\in\{1,\dots,d\}}\left|Q_{jl}\right| \\
    &\qquad \times \left[ \sum_{l=1}^d \sum_{k=0}^{N-1} \E\left( \left|Y^{(i)}_k  \int_{t_k}^{t_{k+1}} (Y^{(l)}_s - Y^{(l)}_k)ds \right|\indicator\{|\Delta_kW^{(j)} + \Delta_k D^{(j)}| > v_N^{(j)}\} \right) \right.\\
      &\qquad \qquad \left.  + \sum_{l=1}^d \sum_{k=0}^{N-1}(t_{k+1}-t_k) \E\left( \left|Y^{(i)}_k Y^{(l)}_k \right| \indicator\{|\Delta_kW^{(j)} + \Delta_k D^{(j)}| > v_N^{(j)}\}    \right) \right] .
\end{align*}
	

The first double sum is upper-bounded by H{\"o}lder's inequality and Fubini's theorem as follows:
\begin{align*}
	\E&\left( \left|Y^{(i)}_k  \int_{t_k}^{t_{k+1}} (Y^{(l)}_s - Y^{(l)}_k)ds \right|\indicator\{|\Delta_kW^{(j)} + \Delta_k D^{(j)}| > v_N^{(j)}\} \right) \\
	&\leq \E\left( \left|Y^{(i)}_k  \int_{t_k}^{t_{k+1}} (Y^{(l)}_s - Y^{(l)}_k)ds \right|^2\right)^{1/2} \proba\left(|\Delta_kW^{(j)} + \Delta_k D^{(j)}| > v_N^{(j)}\right)^{1/2}  \\
	&\leq \Delta_N^{1/2}\left( \int_{t_k}^{t_{k+1}} \E\left( \left|Y^{(i)}_k (Y^{(l)}_s - Y^{(l)}_k) \right|^2\right)ds\right)^{1/2} \Delta_N^{1-\beta^{(j)}}, \quad \text{since $v^{(i)}_N = \Delta^{\beta^{(i)}}_N$,}\\
	&\leq \Delta_N^{3/2-\beta^{(j)}}\left[ \int_{0}^{\Delta_N} \E\left( \left|Y^{(i)}_0 (Y^{(l)}_s - Y^{(l)}_0) \right|^2\right)ds\right]^{1/2}\\
	&\leq \Delta_N^{2-\beta^{(j)}} \sup_{s\in[0,\Delta_N]} \E\left( \left|Y^{(i)}_0 (Y^{(l)}_s - Y^{(l)}_0) \right|^2\right)^{1/2}.
\end{align*}
%
 Define
 \begin{equation*}
 		C_i(\Delta_N) := \max_{l\in\{1,\dots,d\}}\sup_{s\in(0,\Delta_N)} \E\left( \left|Y^{(i)}_0 (Y^{(l)}_s - Y^{(l)}_0) \right|^2\right)^{1/2} < \infty,
 \end{equation*}
 which is finite by Assumption \ref{assumption:infinite_activity}-\eqref{assumption:infinite_activity:cross-moments} such that $C_i(\Delta_N) = O(\Delta_N), \ \text{as $N\rightarrow \infty$.}$
Hence, we have:
\begin{align*}
    T_N^{-1/2} \sum_{l=1}^d &\sum_{k=0}^{N-1} \E\left(\left|Y^{(i)}_k \int_{t_k}^{t_{k+1}} (Y^{(l)}_s - Y^{(l)}_k)ds \right| \indicator\{|\Delta_k W^{(j)} + \Delta_k D^{(j)}| > v_N^{(j)}\} \right)\\
    &\leq C_i(\Delta_N)  T_N^{-1/2}N\Delta_N^{2-\beta^{(j)}} \\
    &\leq O(T_N^{1/2}\Delta_N^{1-\beta^{(j)}}), \quad \text{since $N\Delta_N=O(T_N)$ as $N \rightarrow \infty$.}
\end{align*}
The second term can be processed in a similar fashion. Using H{\"o}lder's inequality and Assumption \ref{assumption:infinite_activity} \eqref{assumption:infinite_activity:cross-moments}, we obtain:
\begin{align*}
     T_N^{-1/2}&\sum_{l=1}^d \sum_{k=0}^{N-1}(t_{k+1}-t_k)\cdot \E\left( \left|Y^{(i)}_k Y^{(l)}_k\right| \indicator\{|\Delta_kW^{(j)} + \Delta_k D^{(j)}| > v_N^{(j)}\}    \right)\\
     &\leq  T_N^{-1/2}\Delta_N \sum_{l=1}^d \sum_{k=0}^{N-1} \E\left(\left|Y^{(i)}_k Y^{(l)}_k\right|^2\right)^{1/2} \proba\left(|\Delta_kW^{(j)} + \Delta_k D^{(j)}| > v_N^{(j)}\right)^{1/2}\\
     &\leq \max_{l\in\{1,\dots,d\}}E\left(\left|Y^{(i)}_0 Y^{(l)}_0\right|^2\right)^{1/2} \times T_N^{-1/2}N\Delta_N^{2-\beta^{(j)}},\\
     &\leq O(T_N^{1/2} \Delta_N^{1-\beta^{(j)}})
\end{align*}
Therefore, we have that:
$$ \E\left(\left|\widetilde{B}^D_N\right|\right) \leq  O(T_N^{1/2}\Delta_N^{1-\beta^{(j)}}).$$
Hence, as we bound the $L^1$-norm of $\widetilde{B}^D_N(i,j)$, we obtain the convergence in probability to zero by Assumption \ref{assumption:cv-rate-beta}.
\end{proof}
\subsubsection{Main argument}
\label{section:intermediate-main-argument}
We now formally prove Lemma \ref{lemma:lemma-4.14-non-jump-neg}.
\begin{proof}[Proof of Lemma \ref{lemma:lemma-4.14-non-jump-neg}]
 Recall that $\Delta_k Y^{(j)} = \Delta_k W^{(j)} + \Delta_k D^{(j)} + \Delta_k J^{(j)}$ and that $v_N^{(j)} = \Delta_N^{\beta^{(j)}}$. 
By the triangle inequality, we then split the quantity of interest $\widetilde{B}^Y_{N}(i,j)$ in four different parts
\begin{align*}
    \widetilde{B}^Y_N(i,j) \leq \widetilde{B}^W_N(i,j) + \widetilde{B}^D_N(i,j) + \widetilde{B}^{J^{1}}_N(i,j) + \widetilde{B}^{J^{2}}_N(i,j).
\end{align*}
First, Lemmas \ref{lemma:lemma-4.14-term-in-W} and \ref{lemma:lemma-4.14-term-in-D-drift} yield that
$$ \begin{cases}
 	\E\left(|\widetilde{B}^W_N|\right) &\leq  \: \ O(T_N^{1/2}\Delta_N^{1/2-\beta^{(j)}}),\\
 	\E\left(|\widetilde{B}^D_N|\right) &\leq  \: \ O(T_N^{1/2}\Delta_N^{1-\beta^{(j)}}).
 \end{cases}
$$
Finally, remark that by independence and again by H{\"older}'s inequality, $\widetilde{B}^{J^{1}}_N(i,j)$ and $\widetilde{B}^{J^{2}}_N(i,j)$ can be treated similarly to $\widetilde{B}^W_N(i,j)$. All four part are then dominated by $T_N^{1/2} \Delta_N^{1/2-\beta^{(j)}}$ which converges to $0$ as $N \rightarrow \infty$ since $T_N \Delta_N^{(1-2\beta^{(j)}) \wedge 1/2}=o(1)$ by Assumption \ref{assumption:cv-rate-beta}. We conclude that $\widetilde{B}^Y_N(i,j) \probconv 0$ for any $i,j \in\{1,\dots,d\}$.
\end{proof}

\subsection{Proof of Lemma \ref{lemma:lemma-4.13-go-big-jump}}
\label{proof:lemma:lemma-4.13-go-big-jump}

\begin{proof}[Proof of Lemma \ref{lemma:lemma-4.13-go-big-jump}]
The key idea of the proof is to prove that the indicator $\indicator\{|\Delta_kY^{(j)}| \leq v_N^{(j)}, |\Delta_k J^{(j)}| > 2v_N^{(j)}\}$ is triggered at most as much as $\indicator\{|\Delta_k W^{(j)}+ \Delta_k D^{(j)}| \leq v_N^{(j)}\}$. Recall that $\Delta_k Y^{(j)} = \Delta_k W^{(j)} + \Delta_k D^{(j)} + \Delta_k J^{(j)}$. Hence, by the triangle inequality, on $\{|\Delta_kY^{(j)}| \leq v_N^{(j)}, |\Delta_k J^{(j)}| > 2v_N^{(j)}\}$ we have
\begin{equation}
	\label{eq:triangle-inverted}
	\left||\Delta_k W^{(j)} + \Delta_k D^{(j)}| - |\Delta_k J^{(j)}|\right| \leq |\Delta_k Y^{(j)}| \leq v_N^{(j)},
\end{equation}
by adding and subtracting $\Delta_k Y^{(j)}$ in the first term.
That implies that, as given in \cite{mai2014efficient}:
\begin{equation}
\label{eq:proof-lemma-3.7-inclusion}
\{|\Delta_kY^{(j)}| \leq v_N^{(j)}, |\Delta_k J^{(j)}| > 2v_N^{(j)}\} \subseteq \{|\Delta_k W^{(j)} + \Delta_k D^{(j)}| > v_N^{(j)}\}.  
\end{equation}
Indeed, if that were not the case, then one would have $|\Delta_k W^{(j)} + \Delta_k D^{(j)}| - |\Delta_k J^{(j)}| < -v_N^{(i)}$ leading to a contradiction in Equation \eqref{eq:triangle-inverted} when taking the absolute value.
Therefore,
$$ \indicator\{|\Delta_kY^{(j)}| \leq v_N^{(j)}, |\Delta_k J^{(j)}| > 2v_N^{(j)}\} \leq\indicator \{|\Delta_k W^{(j)} + \Delta_k D^{(j)}| > v_N^{(j)}\},$$
and Lemma \ref{lemma:lemma-4.14-non-jump-neg} allows to conclude directly.
\end{proof}

%
%

\subsection{Proof of Lemma \ref{lemma:lemma-4.15-small-jump-sandwich}}
\label{proof:lemma:lemma-4.15-small-jump-sandwich}

\begin{proof}[Proof of Lemma \ref{lemma:lemma-4.15-small-jump-sandwich}]
Recall that by Lemma \ref{lemma:small_jump_negligeable}, we know that 
$$\proba\left( v^{(j)}_N/2 < |\Delta_k J^{2,(j)}| \leq 2v^{(j)}_N \right) = O(\Delta_N^{1-\alpha \beta^{(j)}}).$$
Therefore, since $\beta^{(j)} \in (0,1/2)$ and $\alpha \in (0,2)$:
$$\proba\left(v^{(j)}_N/2 < |\Delta_k J^{2,(j)}| \leq 2v^{(j)}_N \right) \longrightarrow 0, \quad \text{as $N \rightarrow \infty$.}$$
For $X\in\{Y,D, W,J^{1}, J^{2}\}$, define for any $i,j\in\{1,\dots,d\}$:
$$\widetilde{U}^X_N(i,j) := T_N^{-1/2}\sum_{k=0}^{N-1}Y^{(i)}\Delta_k X^{(j)}\indicator\{v_N^{(j)}/2 < |\Delta_k J^{2}| \leq 2v^{(j)}_N\}.$$
Again, recall that $\Delta_k Y^{(j)}$ as $\Delta_k D^{(j)} + \Delta_k W^{(j)} + \Delta_k J^{1,(j)} +\Delta_k J^{2,(j)}$ such that we focus on four different terms:
\begin{align*}
    \widetilde{U}^Y_N(i,j)&= \widetilde{U}^{D}_N(i,j) + \widetilde{U}^{W}_N(i,j) + \widetilde{U}^{J^{1}}_N(i,j) + \widetilde{U}^{J^{2}}_N(i,j).
\end{align*}

By the independence between $Y^{(i)}_k$, $\Delta_k W^{(j)}$ and $\Delta_k J^{2,(j)}$, one obtains that
$$\E\left(\widetilde{U}^W_N(i,j)\right) = \sum_{k=0}^{N-1} \E\left(Y^{(i)}_k\right)\E\left(\Delta_k W^{(j)}\right)\proba\left(v^{(j)}_N/2 < |\Delta_k J^{2,(j)}| \leq 2v^{(j)}_N \right) = 0.$$
Again, by independence and by H{\"o}lder's inequality, one has:
\begin{align*}
    \E\Big[&\left.\left(\widetilde{U}^W_N(i,j)\right)^2\right]\\
    &\leq 2T_N^{-1}\sum_{k=0}^{N-1}\E\left(\left|Y^{(i)}_k\right|^2\right)^{1/2}\E\left(\left|\Delta_k W^{(i)}\right|^2\right)^{1/2}\proba\left(v^{(j)}_N/2 < |\Delta_k J^{2,(j)}| \leq 2v^{(j)}_N \right)^{1/2}\\
    &\leq O\left(T_N^{-1} N \Delta_N \Delta_N^{(1-\alpha\beta^{(j)})/2}\right), \quad \text{by stationarity and finite second moments,}\\
    &= O\left(\Delta_N^{(1-\alpha\beta^{(j)})/2}\right), \quad \text{as per Section \ref{section:high-frequency-setup}},
\end{align*}
which yields convergence in probability to zero for $U^W_N(i,j)$. 

For the drift element $\widetilde{U}^D_N(i,j)$, we have
\begin{align*}
    \E & \left(|\widetilde{U}^D_N(i,j)|\right)\\
    &\leq T_N^{-1/2}\max_{l\in\{1,\dots,d\}}|Q_{jl}| \\
    &\qquad \times \left[ \sum_{l=1}^d \sum_{k=0}^{N-1} \E\left( \left|Y^{(i)}_k  \int_{t_k}^{t_{k+1}} (Y^{(l)}_s - Y^{(l)}_k)ds \right|\indicator\left\{v^{(j)}_N/2 < |\Delta_k J^{2,(j)}| \leq 2v^{(j)}_N \right\} \right) \right.\\
      &\qquad \qquad \left.  + \sum_{l=1}^d \sum_{k=0}^{N-1}(t_{k+1}-t_k) \E\left( \left| Y^{(i)}_k Y^{(l)}_k \right| \indicator\left\{ v^{(j)}_N/2 < |\Delta_k J^{2,(j)}| \leq 2v^{(j)}_N \right\}  \right) \right] .
\end{align*}
For the first term on the right-hand side, we repeat a similar argument to the proof of Lemma \ref{lemma:lemma-4.14-term-in-D-drift}, presented in \ref{proof:lemma:lemma-4.14-non-jump-neg}: by Assumption \ref{assumption:infinite_activity}-\eqref{assumption:infinite_activity:cross-moments} and H{\"o}lder's inequality, we obtain that
$$\E\left( \left|Y^{(i)}_k  \int_{t_k}^{t_{k+1}} (Y^{(l)}_s - Y^{(l)}_k)ds \right|\indicator\left\{v^{(j)}_N/2 < |\Delta_k J^{2,(j)}| \leq 2v^{(j)}_N \right\} \right) \leq O(\Delta_N^{2-\alpha \beta^{(j)}}).
$$
Next, for the second term, the independence between $Y^{(i)}_k$ and $\Delta_k J^{2,(j)}$ yields that
$$\sum_{l=1}^d \sum_{k=0}^{N-1}(t_{k+1}-t_k) \E\left( \left|Y^{(i)}_k Y^{(l)}_k \right| \indicator\left\{ v^{(j)}_N/2 < |\Delta_k J^{2,(j)}| \leq 2v^{(j)}_N \right\}\right) \leq O(N\Delta_N^{2-\alpha\beta^{(j)}}).$$
Since $N\Delta_N = O(T_N)$, we have $\widetilde{U}^D_N(i,j) \leq O(T_N^{1/2}\Delta_N^{1-\alpha \beta^{(j)}})$.

The fourth element $\widetilde{U}_N^{J^{2}}(i,j)$ is bounded using Assumption \ref{assumption:infinite_activity}-\eqref{assumption:infinite_activity:jumps_2_epsilon} in the sense that for some $\epsilon_0 > 0$, for $N$ large enough $v_N^{(j)} = \Delta_N^{\beta^{(j)}} < \epsilon_0$, therefore
$$\E\left(\Delta_k\JJ^{2,(i)}\indicator\{|\Delta_k\JJ^{2,(i)}| < 2v_N^{(j)}\}\right) = 0.$$
Then, the independence between $Y^{(i)}_k$ and $\Delta_k\JJ^{2,(i)}\indicator\{|\Delta_k\JJ^{2,(i)}| < 2v_N^{(j)}\}$ allows to conclude that 
$$\E\left(\widetilde{U}^{J^{2}}_N\right)=0.$$
 Also, recall that Lemma \ref{lemma:small_jump_negligeable} gives that $\E\left(|\Delta_k J^{2,(j)}|^2\indicator\{|\Delta_k J^{2,(j)}| \leq 2v^{(j)}_N\} \right) = O(\Delta_N^{1 +  \beta^{(j)}(2-\alpha)})$. Therefore, we obtain:
\begin{align*}
    \E\left(\left|\widetilde{U}^{J^2}_N(i,j)\right|^2\right)&\leq 2T_N^{-1}\sum_{k=0}^{N-1} \E\left(\left|Y^{(i)}_k\right|^2\right)\E\left(|\Delta_k J^{2,(j)}|^2\indicator\{|\Delta_k J^{2,(j)}| \leq 2v^{(j)}_N\} \right)\\
    &\leq O\left(T_N^{-1}N\Delta_N^{1 + \beta^{(j)}(2-\alpha)}\right)\\
    &= O(\Delta_N^{\beta^{(j)}(2-\alpha)}).
\end{align*}
Hence, by Markov's inequality and since $\alpha \in (0,2)$, $\widetilde{U}^{J^{2}}_N(i,j)$ converges to zero in probability as $N \rightarrow \infty$. 

Finally, for the third term $\widetilde{U}^{J^{1}}_N$ we use the independence between $Y_k^{(i)}$ and $\Delta_k J^{1,(j)}$. By definition $\E\left(|\Delta_k J^{1,(i)}|\right) = O(\Delta_N)$, thus:
\begin{align*}
    \E\left(|\widetilde{U}^{J^1}_N|\right) &= T_N^{-1/2}\sum_{k=0}^{N-1}  \E\left(|Y^{(i)}_k|\right) \E\left(|\Delta_k J^{1,(j)}|\right) \proba\left(v^{(j)}_N/2 < |\Delta_k J^{2,(j)}| \leq 2v^{(j)}_N\right)\\
    &= O\left(T_N^{-1/2} N \Delta_N \Delta_N^{1-\alpha \beta^{(j)}}\right)\\
    &= O\left(T_N^{1/2} \Delta_N^{1-\alpha \beta^{(j)}}\right).
\end{align*}
Since $T_N^{1/2}\Delta_N^{1-\alpha\beta^{(i)}} \leq T_N\Delta_N^{1-2\beta^{(i)}} = o(1)$ as $N \rightarrow \infty$, all four terms converge to zero in $L^1$ hence in probability as $N \rightarrow \infty$ which concludes the proof.
\end{proof}
\subsection{Proof of Lemma \ref{lemma:small-enough-jumps}}
\label{proof:lemma:small-enough-jumps}

\begin{proof}[Proof of Lemma \ref{lemma:small-enough-jumps}]
We prove that
$$T_N^{-1/2}\sum_{k=0}^{N-1}  Y^{(i)}_k
     \Delta_k J^{(j)}\indicator\{|\Delta_kJ^{(j)}|\leq 2v_N^{(j)}, |\Delta_k J^{1,(j)}| \neq  0\} \xrightarrow{\ L^1\ } 0, \ \text{as $N\rightarrow \infty$.}$$
Since $|\Delta_k J^{(j)}| \leq |\Delta_k J^{1,(j)}| + |\Delta_k J^{2,(j)}|$, the quantity above is split into two terms. Also, recall that $\JJ^1$ is a compound Poisson process with finite intensity such that $\proba\left(|\Delta_k J^{1,(j)}| \neq  0\right)=O(\Delta_N)$ (cf.\ page 943, \cite{mai2014efficient}).

Using H{\"o}lder's inequality and the independence between $Y^{(i)}_k$, $\Delta_k J^{1,(j)}$ and $\Delta_k J^{2,(j)}$, the part with respect to $\Delta_k J^{1,(j)}$ is upper-bounded as follows:
\begin{align*}
T_N^{-1/2}&\sum_{k=0}^{N-1} \E\left( \left|Y^{(i)}_k\right| \right)
    \E\left(\left| \Delta_k J^{1,(j)}\right|\indicator\{|\Delta_kJ^{(j)}|\leq 2v_N^{(j)}, |\Delta_k J^{1,(j)}| \neq  0\}\right) \\
    &\leq T_N^{-1/2}\sum_{k=0}^{N-1} \E\left( \left|Y^{(i)}_k\right| \right)
    \E\left(\left| \Delta_k J^{1,(j)}\right|^2 \right)^{1/2} \proba\left(|\Delta_k J^{1,(j)}| \neq  0\right)^{1/2}\\
     &\leq T_N^{-1/2} \E\left( \left|Y^{(i)}_0\right| \right)
    \sum_{k=0}^{N-1} \left( \Delta_N^{2}\right)^{1/2} O(\Delta_N^{1/2}) \\
	&\leq O(T_N^{-1/2}N\Delta_N^{3/2})\\
        &\leq  O(T_N^{1/2}\Delta_N^{1/2}),
\end{align*}
where we used the assumption $N \Delta_N = O(T_N)$ as $N\rightarrow \infty$. For the term with respect to $\Delta_k J^{2}$, we proceed using Lemma \ref{lemma:small_jump_negligeable}:
\begin{align*}
	T_N^{-1/2}&\sum_{k=0}^{N-1} \E\left( \left|Y^{(i)}_k\right| \right)
    \E\left(\left| \Delta_k J^{2,(j)}\right|\indicator\{|\Delta_k J^{(j)}|\leq 2v_N^{(j)}, |\Delta_k J^{1,(j)}| \neq  0\}\right)\\
    &\leq T_N^{-1/2}\sum_{k=0}^{N-1} \E\left( \left|Y^{(i)}_k\right| \right)
    \E\left(\left| \Delta_k J^{2,(j)}\right|^2\indicator\{|\Delta_kJ^{2,(j)}|\leq 2v_N^{(j)}\}\right)^{1/2}\proba\left( |\Delta_k J^{1,(j)}| \neq  0\right)^{1/2}\\
    &\leq O\left(T_N^{-1/2} N \Delta_N^{\frac{3}{2} + \beta^{(j)}(1-\frac{\alpha}{2})}\right)\\
    &\leq O(T_N^{1/2}\Delta_N^{1/2}),
\end{align*}


which concludes the proof.
\end{proof}

\subsection{Proof of Lemma \ref{lemma:lemma-4.12-supp-large-jump-negligible}}
\label{proof:lemma:lemma-4.12-supp-large-jump-negligible}

\begin{proof}[Proof of Lemma \ref{lemma:lemma-4.12-supp-large-jump-negligible}]
We prove that the pure jump part $\JJ = \JJ^1 + \JJ^2$ is dominated by small jumps in the sense that the $\JJ^1$-jump contribution is asymptotically zero for filtered jumps as follows
$$T_N^{-1}\E\left(\left|\sum_{k=0}^{N-1}Y^{(i)}_k \Delta_k Y^{(j)} \indicator\{|\Delta_kJ^{(j)}|\leq 2v_N^{(j)}, |\Delta_k J^{1,(j)}|\neq 0\}\right|\right) \rightarrow 0, \quad \text{as $N \rightarrow \infty$.}$$
Again, using $\Delta_k Y^{(j)} = \Delta_k D^{(j)} + \Delta_k W^{(j)} + \Delta_k J^{(j)}$, one has:
\begin{align*}
    \E &\Bigg(\Big|\sum_{k=0}^{N-1}Y^{(i)}_k \Delta_k Y^{(j)} \indicator\{|\Delta_kJ^{(j)}|\leq 2v_N^{(j)}, |\Delta_k J^{1,(j)}\neq 0|\}\Big|\Bigg)\\
    &\leq \sum_{k=0}^{N-1} \E\left( \left|Y^{(i)}_k\right| \right)
    \E\left(\left| \Delta_k W^{(j)}\right|\right) \proba\left(|\Delta_k J^{1,(j)}| \neq  0\right)\\
    &\quad + \sum_{k=0}^{N-1} \E\left( \left|Y^{(i)}_k\Delta_k D^{(j)}\right|\right) \indicator\left(|\Delta_k J^{1,(j)}| \neq  0\right)\\
    &\quad + \sum_{k=0}^{N-1} \E\left( \left|Y^{(i)}_k\right| \right)
    \E\left(\left| \Delta_k J^{(j)}\right|\indicator\{|\Delta_kJ^{(j)}|\leq 2v_N^{(j)}, |\Delta_k J^{1,(j)}| \neq  0\}\right).
\end{align*}
As remarked on page 943, \cite{mai2014efficient}, $J^{1,(j)}$ is a compound Poisson with a finite intensity given by $\nu(\R \symbol{92} [-1,1]) < \infty$ such that one can write $\proba\left\{|\Delta_k J^{1,(j)}| \neq  0\right\} = O(\Delta_N)$. This implies that 
$$T_N^{-1/2}\sum_{k=0}^{N-1} \E\left( \left|Y^{(i)}_k\right| \right)
    \E\left(\left| \Delta_k W^{(j)}\right|\right) \proba\left(|\Delta_k J^{1,(j)}| \neq  0\right) = O(T_N^{1/2}\Delta_N^{1/2}) \xrightarrow{N \rightarrow \infty} 0.$$
The second item can be processed in a similar way to the term $\widetilde{U}^D_N(i,j)$ in the proof of Lemma \ref{lemma:lemma-4.14-non-jump-neg} (as in Lemma \ref{lemma:lemma-4.14-term-in-D-drift}, Appendix \ref{proof:lemma:lemma-4.14-non-jump-neg}). Hence
$$T_N^{-1/2}\sum_{k=0}^{N-1} \E\left( \left|Y^{(i)}_k\Delta_k D^{(j)}\right|\indicator\{|\Delta_k J^{1,(j)}| \neq  0\}\right) = O(T_N^{-1/2}N\Delta_N^2) = O(T_N^{1/2}\Delta_N),$$
since $P\left\{|\Delta_k J^{1,(j)}| \neq  0\right\} = O(\Delta_N)$. Finally, by Lemma \ref{lemma:small-enough-jumps}, the third term is asymptotically zero w.r.t.\ to the $L^1$-norm as $N\rightarrow \infty$. Therefore:
$$T_N^{-1/2} \E\left\{\left|\sum_{k=0}^{N-1}Y^{(i)}_k \Delta_k Y^{(j)} \indicator\{|\Delta_k J^{(j)}|\leq 2v_N^{(j)}, |\Delta_k J^{1,(j)}| \neq 0\} \right|\right\} \rightarrow 0, \quad \text{as $N \rightarrow \infty$.}$$	
\end{proof}

\subsection{Proof of Lemma \ref{lemma:lemma-4.12-small-inc-are-small-jumps}}
\label{proof:lemma:lemma-4.12-small-inc-are-small-jumps}

 \begin{proof}[Proof of Lemma \ref{lemma:lemma-4.12-small-inc-are-small-jumps}]
Consider the whole pure jump process $J^{(j)}$ instead of $J^{2,(j)}$. We split this quantity of interest into two parts as follows:
\begin{align*}
    T_N^{-1/2}&\sum_{k=0}^{N-1} Y^{(i)}_k \Delta_k Y^{(j)} \left(\indicator\{|\Delta_kY^{(j)}| \leq v_N^{(j)}\} - \indicator\{|\Delta_k J^{(j)}| \leq 2v_N^{(j)}\}\right) \\
    &=  T_N^{-1/2}\sum_{k=0}^{N-1} Y^{(i)}_k \Delta_k Y^{(j)} \left(\indicator\{|\Delta_kY^{(j)}| \leq v_N^{(j)}, |\Delta_k J^{(j)}| > 2v_N^{(j)}\}\right) \\
    &\quad -  T_N^{-1/2}\sum_{k=0}^{N-1} Y^{(i)}_k \Delta_k Y^{(j)} \left(\indicator\{|\Delta_kY^{(j)}| > v_N^{(j)}, |\Delta_k J^{(j)}| \leq 2v_N^{(j)}\}\right).
\end{align*}
We note that the first term is exactly the quantity converging in probability to zero from Lemma \ref{lemma:lemma-4.13-go-big-jump} as $N \rightarrow \infty$. By Lemma \ref{lemma:lemma-4.12-supp-large-jump-negligible}, we know that
\begin{equation}
	\label{eq:neq-co-integrated}
\end{equation}
$$T_N^{-1/2}\sum_{k=0}^{N-1} Y^{(i)}_k
  \Delta_k Y^{(j)}\indicator\left\{|\Delta_kY^{(j)}|>v_N^{(j)}, |\Delta_k J^{(j)}| \leq 2v^{(j)}_N, |\Delta_k J^{1,(j)}| \neq 0\right\} \probconv 0, \ \text{as $N\rightarrow \infty$.}$$
Next, we show that
$$ T_N^{-1/2} \sum_{k=0}^{N-1} Y^{(i)}_k \Delta_k Y^{(j)} \indicator\{ |\Delta_kY^{(j)}|>v_N^{(j)}, |\Delta_k J^{(j)}| \leq 2v^{(j)}_N, |\Delta_k J^{1,(j)}| = 0\} \xrightarrow{\ p\ } 0.$$
Indeed, observe that if $\Delta J^{1,(j)} = 0$, then $\Delta_k J^{(j)} = \Delta_k J^{2,(j)}$; thus 
\begin{align*}
 \indicator\{&|\Delta_kY^{(j)}|> v_N^{(j)}, |\Delta_k J^{(j)}| \leq 2v^{(j)}_N, |\Delta_k J^{1,(j)}|=0\}\\
 &= \indicator\{|\Delta_kY^{(j)}|>v_N^{(j)}, |\Delta_k J^{2,(j)}| \leq 2v^{(j)}_N, |\Delta_k J^{1,(j)}|=0\}.   
\end{align*}
Similarly to the proof of Lemma 4.12, \cite{mai2014efficient}, we split the event in the indicator function as follows:
\begin{align*}
\{&|\Delta_kY^{(j)}|>v_N^{(j)}, |\Delta_k J^{2,(j)}| \leq 2v^{(j)}_N, |\Delta_k J^{1,(j)}|=0\}\\
&\subset \{|\Delta_kW^{(j)} + \Delta_kD^{(j)}|>v_N^{(j)}/2\} \cup \{v^{(j)}_N/2 < |\Delta_k J^{2,(j)}| \leq 2v^{(j)}_N\},
\end{align*}
which redirects the convergence proof to the results of Lemmas \ref{lemma:lemma-4.14-non-jump-neg} and \ref{lemma:lemma-4.15-small-jump-sandwich}. This inclusion is true since either we have $|\Delta_k J^{2,(j)}| < v_N^{(j)}/2$ and in this case
$$|\Delta_k W^{(j)} + \Delta_k D^{(j)}| \geq \left| |\Delta_k Y^{(i)}| - |\Delta_k J^{2,(i)}|\right| > v_N^{(j)}/2,$$
or we have that $v_N^{(j)}/2 < \Delta_k J^{2,(j)} \leq 2v_N^{(j)}$.
Given this convergence result and \eqref{eq:neq-co-integrated}, we obtain that
\begin{align*}
 T_N^{-1/2}\sum_{k=0}^{N-1} Y^{(i)}_k\Delta_kY^{(j)} \left(\indicator\{|\Delta_kY^{(j)}| \leq v_N^{(j)}\} - \indicator\{|\Delta_k J^{2,(j)}| \leq 2v_N^{(j)}\}\right)
  \probconv 0, \quad \text{as $N\rightarrow\infty$,}
\end{align*}
which proves the desired result.
\end{proof}

\subsection{Proof of Lemma \ref{lemma:lemma-mai-4.11-drift}}
\label{proof:lemma:lemma-mai-4.11-drift}

\begin{proof}[Proof of Lemma \ref{lemma:lemma-mai-4.11-drift}] 
Remark that
\begin{align*}
T_N^{-1/2}&\sum_{k=0}^{N-1} Y^{(i)}_k \cdot  \left(\Delta_k D^{(j)} \cdot \indicator\{|\Delta_k Y^{(j)}| \leq v_N^{(j)}\} - \Delta_k D^{(j)} \right) = T_N^{-1/2}\sum_{k=0}^{N-1} Y^{(i)}_k \cdot  \Delta_k D^{(j)} \cdot \indicator\{|\Delta_k Y^{(j)}| > v_N^{(j)}\}.
\end{align*}
Lemma \ref{lemma:lemma-4.12-small-inc-are-small-jumps} yields that we can prove the result for the quantity above where $\indicator\{|\Delta_k Y^{(j)}| > v_N^{(j)}\}$ is replaced by $\indicator\{|\Delta_k J^{2,(j)}| > 2v_N^{(j)}\}$ or $\indicator\{|\Delta_k J^{(j)}| > 2v_N^{(j)}\}$. Similarly to the proof of Lemma \ref{lemma:lemma-4.14-non-jump-neg} (Lemma \ref{lemma:lemma-4.14-term-in-D-drift}, presented in Appendix \ref{proof:lemma:lemma-4.14-non-jump-neg}), we write the following up to a term that converges to zero in probability as $N\rightarrow \infty$:
\begin{align*}
      T_N^{-1/2}&\sum_{k=0}^{N-1} \left|Y^{(i)}_k \cdot  \Delta_k D^{(j)} \cdot \indicator\{|\Delta_k Y^{(j)}| > v_N^{(j)}\}\right|\\
    &\leq T_N^{-1/2}\max_{l\in\{1,\dots,d\}}|Q_{jl}| \\
    &\qquad \times \left( \sum_{l=1}^d \sum_{k=0}^{N-1} \left|Y^{(i)}_k  \int_{t_k}^{t_{k+1}} (Y^{(l)}_s - Y^{(l)}_k)ds \right|\indicator\{|\Delta_k J^{(j)}| > 2v_N^{(j)}\} \right.\\
      &\qquad \qquad \left.  + \sum_{l=1}^d \sum_{k=0}^{N-1}(t_{k+1}-t_k)  \left|Y^{(i)}_k Y^{(l)}_k \right| \indicator\{|\Delta_k J^{2,(j)}| > 2v_N^{(j)}\}   \right) + o_p(1).
\end{align*}
Since $v_N^{(j)}=\Delta_N^{\beta^{(j)}}$, Markov's inequality yields that $\proba\left(|\Delta_k J^{2,(j)}| > 2v_N^{(j)}\right) = O(\Delta_N^{1-\beta^{(j)}})$ and by Chebychev's inequality, we have $\proba\left(|\Delta_k J^{(j)}| > 2v_N^{(j)}\right) = O(\Delta_N^{1-2\beta^{(j)}})$. By independence of $Y^{(i)}_k$, $Y^{(l)}_k$ and $\Delta_k J^{(j)}$, we obtain directly for the second term:
$$ \E\left(T_N^{-1/2}\sum_{k=0}^{N-1}(t_{k+1}-t_k)  \left|Y^{(i)}_k Y^{(l)}_k \right| \indicator\{|\Delta_k J^{(j)}| > 2v_N^{(j)}\}\right) = O(T_N^{1/2} \Delta_N^{1-\beta^{(j)}}).$$
and then for the first term, according to H{\"o}lder's inequality, we obtain
\begin{align*}
	\E&\left(\left|T_N^{-1/2}\sum_{k=0}^{N-1} \left| \int_{t_k}^{t_{k+1}} (Y^{(l)}_s - Y^{(l)}_k)ds \right|\times \left|Y^{(i)}_k \right|\indicator\{|\Delta_k J^{(j)}| > 2v_N^{(j)}\}\right|\right) \\
	&= T_N^{-1/2}\sum_{k=0}^{N-1} \E\left(\left|\int_{t_k}^{t_{k+1}} (Y^{(l)}_s - Y^{(l)}_k)ds\right|^2\right)^{1/2} \E\left(|Y^{(i)}_k|^2\right)^{1/2}\proba\left(|\Delta_k J^{(j)}| > 2v_N^{(j)}\right)^{1/2}\\
	&\leq T_N^{-1/2} \E\left(|Y^{(i)}_0|^2\right)^{1/2}  N \Delta_N^{1/2} \sup_{s\in[0,\Delta_N]}\E\left(\left|Y^{(l)}_s - Y^{(l)}_0\right|^2\right)^{1/2}\proba\left(|\Delta_k J^{(j)}| > 2v_N^{(j)}\right)^{1/2}\\
	&= O(T^{1/2}_N\Delta_N^{1/2-\beta^{(j)}}).
\end{align*}
Therefore, we have proved the convergence in probability since $T_N \Delta_N^{1-2\beta^{(j)}} = o(1)$ thus $T_N^{1/2} \Delta_N^{1/2-\beta^{(j)}} = o(1)$ as $N\rightarrow \infty$ by Assumption \ref{assumption:cv-rate-beta}.
\end{proof}

\subsection{Proof of Lemma \ref{lemma:continuous-martingale-part-infinite}}
\label{proof:lemma:continuous-martingale-part-infinite}

\begin{proof}[Proof of Lemma \ref{lemma:continuous-martingale-part-infinite}]
We define the process containing all components $\YY$ except the infinite-activity pure jump process $\JJ^2$, namely define $\widehat{\YY}_t := \YY_0 - \int_0^t\Q\YY_sds + \WW_t + \JJ^1_t$. Consider splitting $\YY_t$ into $\YY_t - \widehat{\YY}_t$ and $\JJ^2_t$ as follows:
\begin{align*}
   &Y^{(i)}_k\left(\Delta_kY^{(j)}\indicator\{|\Delta_kY^{(j)}| \leq v_N^{(j)}\}-\Delta_kY^{(j),c}\right)\\
    &=Y^{(i)}_k\left(\Delta_k\widehat{Y}^{(j)}\indicator\{|\Delta_kY^{(j)}| \leq v_N^{(j)}\}-\Delta_kY^{(j),c}\right) + Y^{(i)}_k\Delta_k J^{2,(j)}\cdot\indicator\{|\Delta_kY^{(j)}| \leq v_N^{(j)}\}\\
    &=Y^{(i)}_k\left(\Delta_k\widehat{Y}^{(j)}\indicator\{|\Delta_k\widehat{Y}^{(j)}| \leq 2v_N^{(j)}\}-\Delta_kY^{(j),c}\right)\\
    & \qquad  + Y^{(i)}_k\Delta_k\widehat{Y}^{(j)}\left(\indicator\{|\Delta_kY^{(j)}| \leq v_N^{(j)}\}-\indicator\{|\Delta_k\widehat{Y}^{(j)}| \leq 2v_N^{(j)}\}\right)\\
    & \qquad  + Y^{(i)}_k\Delta_k J^{2,(j)}\cdot \indicator\{|\Delta_kY^{(j)}| \leq v_N^{(j)}\}.
\end{align*}
Define 
\begin{align*}
	\widetilde{V}_N^{1}(i,j) &:= T_N^{1/2}\sum_{k=0}^{N-1} Y^{(i)}_k\left(\Delta_k\widehat{Y}^{(j)}\indicator\{|\Delta_k\widehat{Y}^{(j)}| \leq 2v_N^{(j)}\}-\Delta_kY^{(j),c}\right)\\
	\widetilde{V}_N^{2}(i,j) &:= T_N^{1/2}\sum_{k=0}^{N-1} Y^{(i)}_k\Delta_k\widehat{Y}^{(j)}\left(\indicator\{|\Delta_kY^{(j)}| \leq v_N^{(j)}\}-\indicator\{|\Delta_k\widehat{Y}^{(j)}| \leq 2v_N^{(j)}\}\right)\\
	\widetilde{V}_N^{3}(i,j) &:= T_N^{1/2}\sum_{k=0}^{N-1} Y^{(i)}_k\Delta_k J^{2,(j)}\cdot \indicator\{|\Delta_kY^{(j)}| \leq v_N^{(j)}\}.
\end{align*}    
The first term $V^{1}_N(i,j)$ is similar quantity to Lemma \ref{lemma:cts_filtered_approx} and by definition $V_N^1(i,j)$ has finite activity jump. By applying this result, we have that $V^{1}_N(i,j) = O_p(\Delta_N^{1/2}T_N)$: it converges in probability to zero as $N \rightarrow \infty$. 

The second term can be rewritten if one remarks that $\indicator\{|\Delta_kY^{(j)}| \leq v_N^{(j)}\}-\indicator\{|\Delta_k\widehat{Y}^{(j)}| \leq 2v_N^{(j)}\}$ can be split into
$$\indicator\{|\Delta_kY^{(j)}| \leq v_N^{(j)}, |\Delta_k\widehat{Y}^{(j)}| > 2v_N^{(j)}\} - \indicator\{|\Delta_kY^{(j)}| > v_N^{(j)}, |\Delta_k\widehat{Y}^{(j)}| \leq 2v_N^{(j)}\}.$$
Hence
\begin{align*}
    \widetilde{V}^{2}_N(i,j) &= T_N^{-1/2}\sum_{k=0}^{N-1}Y^{(i)}_k\Delta_k\widehat{Y}^{(j)}\indicator\{|\Delta_kY^{(j)}| \leq v_N^{(j)}, |\Delta_k\widehat{Y}^{(j)}| > 2v_N^{(j)}\} \\
    &- T_N^{-1/2}\sum_{k=0}^{N-1}Y^{(i)}_k\Delta_k\widehat{Y}^{(j)}\indicator\{|\Delta_kY^{(j)}| > v_N^{(j)}, |\Delta_k\widehat{Y}^{(j)}| \leq 2v_N^{(j)}\}\\
    &=: \widetilde{V}^{2,1}_N(i,j) - \widetilde{V}^{2,2}_N(i,j),
\end{align*}
with obvious definitions for $\{V^{2,l}_N(i,j), : \ l\in\{1,2\}\}$.
The first term $\widetilde{V}^{2,1}_N(i,j)$ can be shown to converge to zero in probability since, informally, it implies that the process $\JJ^{2}$ is negligible overall as below $v_N^{(j)}=\Delta_N^{\beta^{(j)}}$ at every time step. Indeed, by sub-additivity:
\begin{align*}
    \proba\left(\left|\widetilde{V}^{2,1}_N(i,j)\right| > 0\right) &\leq \sum_{k=0}^{N-1}\proba\left(|Y^{(i)}_k \Delta_k\widehat{Y}^{(j)}| \indicator\{|\Delta_kY^{(j)}| \leq v_N^{(j)}, |\Delta_k\widehat{Y}^{(j)}| > 2v_N^{(j)}\} > 0\right)\\
    &\leq \sum_{k=0}^{N-1}\proba\left(|\Delta_kY^{(j)}| \leq v_N^{(j)}, |\Delta_k\widehat{Y}^{(j)}| > 2v_N^{(j)}\right)
\end{align*}
Since we now have a univariate right-hand side, we now use the proof of Lemma 4.10, \cite{mai2014efficient}. Remark that by the triangle inequality 
\begin{equation}
	\label{eq:triangle-y-hat-j1-w-drift}
	\left||\Delta_k\widehat{Y}^{(j)}| - |\Delta_k {J^{1,(j)}}|\right| \leq |\Delta_k{W^{(j)}} +\Delta_k{D^{(j)}}|
\end{equation}
If $|\Delta_k\widehat{Y}^{(j)}| > 2v_N^{(j)}$ and $|\Delta_k {J^{1,(j)}}| = 0$, then $|\Delta_k{W^{(j)}} +\Delta_k{D^{(j)}}| > 2v_N^{(j)}$, thus
$$\proba\left(|\Delta_k\widehat{Y}^{(j)}| > 2v_N^{(j)}, |\Delta_k {J^{1,(j)}}|= 0\right) \leq \proba\left(|\Delta_k{W^{(j)}} +\Delta_k{D^{(j)}} | > 2v_N^{(j)}\right).$$
Recall that, by Lemma \ref{lemma:lemma-3.7} with $\delta = 1/2 -\beta^{(j)}$ and $l^{(j)}=2$, we have
\begin{align*}
    \sum_{k=0}^{N-1}\proba\left(|\Delta_k{W^{(j)}} +\Delta_k{D^{(j)}} | > 2v_N^{(j)}\right) = O\left(N\Delta_N^{2-2\beta^{(j)}}\right) = O(T_N\Delta_N^{1-2\beta^{(j)}}).
\end{align*}
Since this sum goes to zero as $N\rightarrow\infty$, when a large increment in $\widehat{Y}$ occurs, it is highly likely that a large jump occurred (i.e. $|\Delta_k {J^{1,(j)}}| \neq 0$). Also, again by the triangle inequality $\left||\Delta_kY^{(j)}| - |\Delta_k\widehat{Y}^{(j)}|\right| \leq |\Delta_k J^{2,(j)}|$. Hence, on $\{|\Delta_kY^{(j)}| \leq v_N^{(j)}, |\Delta_k\widehat{Y}^{(j)}| > 2v_N^{(j)}\}$ we have $\{|\Delta_k J^{2,(j)}|>v_N^{(j)}\}$. Using the independence between $\Delta_k J^{1,(j)}$ and $\Delta_k J^{2,(j)}$, we prove that $\widetilde{V}_N^{2,1}(i,j)$ is negligible as follows:  
\begin{align*}
    \proba\left(\left|\widetilde{V}^{2,1}_N(i,j)\right| > 0\right) &\leq  \sum_{k=0}^{N-1} \proba\left(|\Delta_k {J^{1,(j)}}| \neq 0,  |\Delta_kY^{(j)}| \leq v_N^{(j)}, |\Delta_k\widehat{Y}^{(j)}| > 2v_N^{(j)}\right) \\
    &\  +\sum_{k=0}^{N-1}\proba\left(|\Delta_k {J^{1,(j)}}| = 0, |\Delta_kY^{(j)}| \leq v_N^{(j)}, |\Delta_k\widehat{Y}^{(j)}| > 2v_N^{(j)} \right)\\
    &\leq \sum_{k=0}^{N-1} \proba\left(|\Delta_k {J^{1,(j)}}| \neq 0, |\Delta_k {J^{2,(j)}}| > v_N^{(j)}\right) + O\left(T_N \Delta_N^{1-2\beta^{(j)}} \right)\\
    &\leq  \sum_{k=0}^{N-1} \proba\left(|\Delta_k {J^{1,(j)}}| \neq 0\right) \proba\left(|\Delta_k {J^{2,(j)}}| > v_N^{(j)}\right)  + O\left(T_N \Delta_N^{1-2\beta^{(j)}} \right)\\
     &= O\left(T_N \Delta_N^{1-2\beta^{(j)}} \right).
\end{align*}
Note $\proba\left(|\Delta_k {J^{2,(j)}}| > v_N^{(j)}\right) = O(\Delta_N^{1-2\beta^{(j)}})$ is given by the Chebychev's inequality. The term $\widetilde{V}_N^{2,2}(i,j)$ can undergo the same treatment and we have
$$\proba\left(\left|\widetilde{V}^{2,2}_N(i,j)\right|>0\right) \leq \sum_{k=0}^{N-1}\proba\left(|\Delta_kY^{(j)}| > v_N^{(j)}, |\Delta_k\widehat{Y}^{(j)}| \leq 2v_N^{(j)}\right).$$
Again, we fall back to the univariate case and prove the convergence as presented in \cite{mai2014efficient}. Recall that $\JJ^1$ is a compound Poisson process with $(N_t^{(i)}, \ i \in\{1,\dots,d\})$ as a collection of $d$ Poisson processes with counting processes $(N^{(i)}: i \in \{1,\dots,d \})$. According to \eqref{eq:triangle-y-hat-j1-w-drift}, we have that
$$\{|\Delta_k \widehat{Y}^{(j)}| \leq 2v_N^{(j)}, \Delta_k N^{(j)} = 1\} \subset \{|\Delta_k{W^{(j)}} +\Delta_k{D^{(j)}} | > 1-2v_N^{(j)}\}.$$
 With $l^{(j)}=2$ and
 $$\delta:=1/2-\ln(1-2\Delta_N^{\beta^{(j)}})/\ln(\Delta_N) < 1/2, \quad \text{for $N$ large enough,}$$
  such that $1-2v_N^{(j)} = \Delta_N^{1/2-\delta}$, according to Lemma \ref{lemma:lemma-3.7} we obtain:
  $$\proba\left(|\Delta_k{W^{(j)}} +\Delta_k{D^{(j)}} | > 1-2v_n^{(j)}\right) = O(\Delta_N^{2})$$ 
  Since $\JJ^{1}$ is defined as the \emph{large-jump} component, we show that it has a negligible impact when $\{|\Delta_k \widehat{Y}^{(j)}| \leq 2v_N^{(j)}\}$ as follows
\begin{align*}
	T_N^{-1/2}&\sum_{k=0}^{N-1}\proba\left(|\Delta_k \widehat{Y}^{(j)}| \leq 2v_N^{(j)}, \Delta_k J^{1,(j)} > 0\right)\\
	&= T_N^{-1/2}\sum_{k=0}^{N-1}\proba\left(|\Delta_k \widehat{Y}^{(j)}| \leq 2v_N^{(j)}, \Delta_k J^{1,(j)} > 0\right)\\
	&= T_N^{-1/2}\sum_{k=0}^{N-1}\left[\proba\left(|\Delta_k \widehat{Y}^{(j)} |\leq 2v_N^{(j)}, \Delta_k N^{(j)} = 1\right) + \proba\left(|\Delta_k \widehat{Y}^{(j)}| \leq 2v_N^{(j)}, \Delta_k N^{(j)} > 1\right)\right]\\
	&\leq T_N^{-1/2}\sum_{k=0}^{N-1}\left[\proba\left(|\Delta_k \widehat{Y}^{(j)}| \leq 2v_N^{(j)}, \Delta_k N^{(j)} = 1\right) + O(\Delta_N^2)\right]\\
	&\leq T_N^{-1/2}\sum_{k=0}^{N-1}\left[\proba\left(|\Delta_k{W^{(j)}} +\Delta_k{D^{(j)}} | > 1-2v_n^{(j)}\right) + O(\Delta_N^2)\right]\\
	&\leq T_N^{-1/2}\sum_{k=0}^{N-1} O(\Delta_N^2) \leq O(T_N^{1/2}\Delta_N), \quad \text{as $N\rightarrow \infty$.}
\end{align*}
which converges to zero in the limit. This shows that we have  $\Delta_k \widehat{Y}^{(j)} = \Delta_k{W^{(j)}} + \Delta_k {D^{(j)}} \ a.s.$ on $\{|\Delta_k \widehat{Y}^{(j)}| \leq 2v_N^{(j)}\}$ as $N\rightarrow \infty$. 
 
Next, according to Lemma \ref{lemma:lemma-mai-4.11-drift} (and its proof), the part of $\widetilde{V}^{2,2}_N(i,j)$ with respect to $\Delta_k D^{(j)}$ is such that
\begin{align*}
	 \E & \left(T_N^{-1/2}\sum_{k=0}^{N-1}\left|Y^{(i)}_k\right| \left|\Delta_k D^{(j)}\right|\indicator\{|\Delta_kY^{(j)}| > v_N^{(j)}, |\Delta_k\widehat{Y}^{(j)}| \leq 2v_N^{(j)}\}\right)\\
	 &\leq \E\left(T_N^{-1/2}\sum_{k=0}^{N-1}\left|Y^{(i)}_k\right| \left|\Delta_k D^{(j)}\right|\indicator\{|\Delta_k J^{2,(j)}| > v_N^{(j)}\}\right) \longrightarrow 0, \quad \text{as $N\rightarrow \infty$.}
\end{align*}
The second part of $\widetilde{V}_N^{2,2}(i,j)$ is given by 
$$T_N^{-1/2}\sum_{k=0}^{N-1}Y^{(i)}_k\Delta_k W^{(j)}\indicator\{|\Delta_kY^{(j)}| > v_N^{(j)}, |\Delta_k\widehat{Y}^{(j)}| \leq 2v_N^{(j)}\}.$$
Recall the independence between $Y^{(i)}_k$, $\Delta_k W^{(j)}$ and $\Delta_k J^{2,(j)}$. Its expectation is then zero and its $L^2$-norm we can bound as follows by :
\begin{align*}
\E &  \left(\left|T_N^{-1/2}\sum_{k=0}^{N-1} Y^{(i)}_k\Delta_k W^{(j)}\indicator\{|\Delta_kY^{(j)}| > v_N^{(j)}, |\Delta_k\widehat{Y}^{(j)}| \leq 2v_N^{(j)}\}\right|^2\right)\\
	&\leq 2 T_N^{-1}\sum_{k=0}^{N-1} \E  \left(\left|Y^{(i)}_k\right|^2\left|\Delta_k W^{(j)}\right|^2\indicator\{|\Delta_kY^{(j)}| > v_N^{(j)}, |\Delta_k\widehat{Y}^{(j)}| \leq 2v_N^{(j)}\}\right)\\
	&\leq 2T_N^{-1}\sum_{k=0}^{N-1} \E\left(\left|Y^{(i)}_k\right|^2\left|\Delta_k W^{(j)}\right|^2 \indicator\{|\Delta_k J^{2,(j)}| > v_N^{(j)}\}\right)\\
	&\leq  2T_N^{-1}\sum_{k=0}^{N-1} \E\left(\left|Y^{(i)}_k\right|^2\right)\E\left(\left|\Delta_k W^{(j)} \right|^2\right)\proba\left(|\Delta_k J^{2,(j)}| > v_N^{(j)}\right)\\
	&\leq O(T_N^{-1} N \Delta_N \Delta_N^{1/2-\beta^{(j)}}), \quad \text{by Markov's inequality,}\\
	&\leq O(\Delta_N^{1/2-\beta^{(j)}})\rightarrow 0, \ \text{as $N\rightarrow\infty$.}.
\end{align*}
Therefore, we have also proved that this second term of $\widetilde{V}^{2,2}_N(i,j)$ converges to zero in probability and, as a result, so does $\widetilde{V}^{2,2}_N(i,j)$ and, in turn, $\widetilde{V}^2_N(i,j)$ as $N\rightarrow \infty$. 

Finally, for the third term, recall Lemma \ref{lemma:lemma-4.12-supp-large-jump-negligible} such that we consider
$$\widetilde{V}_N^{3}(i,j) := T_N^{1/2}\sum_{k=0}^{N-1} Y^{(i)}_k\Delta_k J^{2,(j)}\cdot \indicator\{|\Delta_kY^{(j)}| \leq v_N^{(j)}, |\Delta_k J^{2,(j)}| \leq 2v_N^{(j)}\} + o_p(1).$$
Then, by the proof of Lemma \ref{lemma:lemma-4.12-small-inc-are-small-jumps}, we know that 
\begin{align*}
	T_N^{1/2}&\sum_{k=0}^{N-1} Y^{(i)}_k\Delta_k J^{2,(j)}\cdot \left(\indicator\{|\Delta_k J^{2,(j)}| \leq 2v_N^{(j)}\} - \indicator\{|\Delta_kY^{(j)}| \leq v_N^{(j)}, |\Delta_k J^{2,(j)}| \leq 2v_N^{(j)}\}\right)\\
	&=  T_N^{1/2}\sum_{k=0}^{N-1} Y^{(i)}_k\Delta_k J^{2,(j)}\cdot \indicator\{|\Delta_kY^{(j)}| > v_N^{(j)}, |\Delta_k J^{2,(j)}| \leq 2v_N^{(j)}\}\probconv 0, \quad \text{as $N\rightarrow\infty$.}
\end{align*}
Since $v_N^{(j)} \rightarrow 0$ as $N\rightarrow \infty$, then by Assumption \ref{assumption:infinite_activity}-\eqref{assumption:infinite_activity:jumps_2_epsilon}, we have that 
$$\E\left(\Delta_k J^{2,(j)} \indicator\{|\Delta_k J^{2,(j)}| \leq 2v_N^{(j)}\} \right) = 0, \ \text{for $N$ large enough.}$$
Then, by independence of $ Y^{(i)}_k$ and $\Delta_k J^{2,(j)}$, we obtain that 
$$\E\left(T_N^{1/2}\sum_{k=0}^{N-1} Y^{(i)}_k\Delta_k J^{2,(j)}\indicator\{|\Delta_k J^{2,(j)}| \leq 2v_N^{(j)}\} \right) = 0.$$
Finally, again by independence and by the triangle inequality, the second moment of this quantity is bounded by
\begin{align*}
	T_N^{-1} &\sum_{k=0}^{N-1} \E\left(\left|Y^{(i)}_k\right|^2\right) \E\left(\left|\Delta_k J^{2,(j)}\right|^2\indicator\{|\Delta_k J^{2,(j)}| \leq 2v_N^{(j)}\} \right)\\
	&= O(T_N^{-1}N \Delta_N^{1+\beta^{(i)}(2-\alpha)}), \qquad \text{by Lemma \ref{lemma:small_jump_negligeable},}\\
	&= O( \Delta_N^{\beta^{(i)}(2-\alpha)}) \rightarrow 0, \quad \text{as $N \rightarrow \infty$.}
\end{align*}
This concludes the proof.
\end{proof}

\subsection{Proof of Lemma \ref{lemma:consistency-conservation-jump-filtering-infinite}}
\label{proof:lemma:consistency-conservation-jump-filtering-infinite}

\begin{proof}[Proof of Lemma \ref{lemma:consistency-conservation-jump-filtering-infinite}]
	Observe that
\begin{align*}
    T_N^{1/2}(\boldsymbol{\widetilde{\psi}}_N - \boldsymbol{\overline{\psi}}_N) &= T_N\Sn_N^{-1}\left(\Abar_N - \Atilde_N\right).
\end{align*}
Now, by Lemma \ref{lemma:continuous-martingale-part-infinite},
We know that for any $i,j \in\{1,\dots,d\}$, we have that
$$T_N^{-1/2}\sum_{k=0}^{N-1}Y^{(i)}_k\left(\Delta_kY^{(j)}\indicator\{|\Delta_kY^{(j)}| \leq v_N^{(j)}\}-\Delta_kY^{(j),c}\right) \probconv 0.$$
Also, recall that $\Sn_N := \Kn_N \otimes \Id$ where $\Kn_N = \left(\sum_{k=0}^{N-1} Y^{(i)}_{k} Y^{(j)}_{k} (t_{k+1} - t_{k})\right)_{1\leq i,j \leq d}$ with $Y^{(j)}_{k} := Y^{(j)}_{t_k}$ and that $T_N \Sn_N^{-1} \xrightarrow{} \E\left(\YY_\infty \YY^\top_\infty\right)^{-1}\otimes \Id \ \proba_{0}-a.s.$ as $N \rightarrow \infty$ which is finite componentwise. 
Therefore, similarly to Lemma \ref{lemma:consistency-conservation-jump-filtering-finite}, we obtain that
$$T_N^{1/2}(\boldsymbol{\widetilde{\psi}}_N - \boldsymbol{\overline{\psi}}_N) \probconv \boldsymbol{0}_{d^2}, \qquad \text{as $N\rightarrow \infty$.}$$
\end{proof}

\subsection{Proof of Theorem \ref{th:discrete_clt_infinite}}
\label{proof:th:discrete_clt_infinite}

We can finally prove Theorem \ref{th:discrete_clt_infinite} using the lemmas above.
\begin{proof}[Proof of Theorem \ref{th:discrete_clt_infinite}]
By Lemma \ref{lemma:cv-estimator-continuous-component}, we know that the discretised unfiltered estimator $\boldsymbol{\overline{\psi}}_N$  converges in distribution as follows:
$$T_N^{1/2}(\boldsymbol{\overline{\psi}}_N - \boldsymbol{\psi}) \xrightarrow{ \ \mathcal{D} \ } \mathcal{N}\left(\boldsymbol{0}_{d^2},\ \mathbb{E}\left\{\YY_\infty \YY_\infty^\top \right\}^{-1} \otimes \Sigma \right).$$
By Lemma \ref{lemma:consistency-conservation-jump-filtering-infinite} we have that
$$T_N^{1/2}(\boldsymbol{\overline{\psi}}_N- \boldsymbol{\widetilde{\psi}}_N) \probconv 0, \quad \text{as $N\rightarrow\infty$.}$$
We conclude by Slutsky's lemma.
\end{proof}

\section*{Acknowledgments}
The authors would like to thank the Isaac Newton Institute for Mathematical Sciences for support and hospitality during the programme  \emph{The Mathematics of Energy Systems} when work on this paper was undertaken. This work was supported by: EPSRC grant number EP/R014604/1. AV would also like to acknowledge funding by the Simons Foundation.

\bibliographystyle{imsart-number} 
\bibliography{NetworkOU-hf-arxiv-v2-merge}       

\end{document}